\definecolor{darkgreen}{rgb}{0,0.5,0}
\definecolor{darkblue}{rgb}{0,0,0.6}
\definecolor{DarkRed}{rgb}{0.65,0,0}
\definecolor{Red}{rgb}{1,0,0}
\definecolor{block-gray}{gray}{0.85}
\newtcolorbox{shadequote}{colback=block-gray,grow to right by=-2mm,grow to left by=-2mm,
boxrule=0pt,boxsep=0pt,breakable}
\newcommand{\nosemic}{\renewcommand{\@endalgocfline}{\relax}}
\let\cref@old@stepcounter\stepcounter
\def\stepcounter#1{\cref@old@stepcounter{#1}\cref@constructprefix{#1}{\cref@result}\@ifundefined{cref@#1@alias}{\def\@tempa{#1}}{\def\@tempa{\csname cref@#1@alias\endcsname}}\protected@edef\cref@currentlabel{[\@tempa][\arabic{#1}][\cref@result]\csname p@#1\endcsname\csname the#1\endcsname}}
\newif\ifnotes
    \newcommand{\keren}[1]{{\ifnotes \color{blue}{Keren: #1}
                \fi}}
    \newcommand{\tomer}[1]{{\ifnotes \color{red!65!black}{Tomer: #1}
                \fi}}
    \newcommand{\virgi}[1]{{\ifnotes \color{red}{Virgi: #1}
                \fi}}
    \newcommand{\keren}[1]{}
    \newcommand{\tomer}[1]{}
    \newcommand{\virgi}[1]{}
\newtheorem{theorem}{Theorem}
\newtheorem{lemma}[theorem]{Lemma}
\newtheorem{remark}{Remark}
\newtheorem{proposition}[theorem]{Proposition}
\newtheorem{claim}[theorem]{Claim}
\newtheorem{corollary}[theorem]{Corollary}
\newtheorem{hypothesis}{Hypothesis}
\renewenvironment{proof}[1][\proofname]{\par
    \pushQED{\qed}\normalfont \topsep6\p@\@plus6\p@\relax
    \trivlist
    \item\relax
    {\bfseries\boldmath
        #1\@addpunct{.}}\hspace\labelsep\ignorespaces
}{\popQED\endtrivlist\@endpefalse
}
\newcommand{\apm}{(1\pm\eps)}
\newcommand{\apmp}{(1\pm\eps')}
\newcommand{\MM}[1]{\mathsf{MM}\brak{#1}}
\newcommand{\brak}[1]{\left(#1\right)}
\newcommand{\brk}[1]{(#1)}
\newcommand{\sbrak}[1]{\left[#1\right]}
\newcommand{\Exp}[1]{\mathbb{E}\left[ #1 \right]}
\newcommand{\Var}[1]{\mathsf{Var}\left[ #1 \right]}
\renewcommand{\Pr}[1]{{\mathrm{Pr}}\left[ #1 \right]}
\newcommand{\set}[1]{\left\{ #1 \right\}}
\newcommand{\sset}[1]{\{#1\}}
\newcommand{\polylog}[1]{\mathrm{polylog}\brak{#1}}
\newcommand{\Bin}[2]{\mathsf{Bin}\brak{{#1},{#2}}}
\newcommand{\mo}{O}
\newcommand{\BO}[1]{\mo\brak{#1}}
\newcommand{\TO}[1]{\tilde{\mo}(#1)}
\newcommand{\Omc}[1][1]{\Omega\brak{#1}}
\newcommand{\zrn}[1]{\set{0,1,\ldots,#1}}
\newcommand{\zrnone}[1]{\set{1,2,\ldots,#1}}
\newcommand{\eps}{\varepsilon}
\newcommand{\Em}[1]{\mathcal{E}_{#1}}
\newcommand{\EE}[1]{\mathcal{E}(#1)}
\newcommand{\UU}{\mathcal{U}}
\newcommand{\VV}{\mathcal{V}}
\newcommand{\PP}{\mathcal{P}}
\newcommand{\XC}{\mathcal{X}}
\newcommand{\AB}{\textbf{A}}
\newcommand{\FF}{\mathcal{F}}
\newcommand{\fpr}[2]{1-\frac{{#1}}{n^{#2}}}
\DeclareMathSymbol{\mhyphen}{\mathord}{AMSa}{"39}
\newcommand{\whp}{w.h.p.\@ifnextchar.{\@gobble}{\xspace}}
\DeclarePairedDelimiter\abs{\lvert}{\rvert}
\DeclarePairedDelimiter{\ceil}{\lceil}{\rceil}
\let\oldabs\abs
\def\abs{\@ifstar{\oldabs}{\oldabs*}}
\DeclarePairedDelimiter{\floor}{\lfloor}{\rfloor}
\crefname{claim}{Claim}{Claims} \crefname{ineq}{inequality}{inequalities} \crefname{proof}{Proof}{Proofs} \crefname{line}{Line}{Lines}
\crefname{algorithm}{Algorithm}{Algorithms}
\crefname{black box}{Procedure}{Procedure}
\crefname{figure}{Figure}{Figures}
\newcommand{\DD}{\mathcal{D}}
\newcommand{\DDF}{\hat{\mathsf{D}}(\Lambda q)}
\newcommand{\haT}{\hat{m}}
\newcommand{\htt}{\hat{m}}
\newcommand{\neweps}{\frac{\min\set{1,\eps}}{4\log n}}
\newcommand{\rr}{400\log n}
\newcommand{\clog}{\log^3 n}
\newcommand{\Prod}{\mathsf{Product}_k}
\newcommand{\Prodd}{\mathsf{Product}}
\newcommand{\eo}{1-1/e}
\newcommand{\nemp}{\neq\emptyset}
\newcounter{blackbox}[section] \newenvironment{blackbox}[1][]{\refstepcounter{blackbox}\begin{mdframed}[
    frametitle=#1,
    frametitlealignment=\centering,
    backgroundcolor=gray!10,
    linewidth=1pt,
    innerleftmargin=8pt,
    innerrightmargin=8pt,
    innertopmargin=10pt,
    innerbottommargin=10pt,
    splittopskip=\topskip,
    skipabove=\topsep,
    skipbelow=\topsep,
    userdefinedwidth=0.98\textwidth
  ]}{\end{mdframed}}
\crefname{blackbox}{Procedure}{Procedure}
\newcommand{\poly}{\mathsf{poly}}
\newcommand{\ZZ}{A}
\newcommand{\Vin}{V^{\mathrm{input}}}
\renewcommand{\clog}{\log^{(k^2)}n}
\newcommand{\cclog}{(k\log (4n))^{k^2}}
\newcommand{\cl}{\mathscr{T}}
\newcommand{\llow}{\Lambda_{\mathrm{low}}}
\newcommand{\gs}{\mu}
\newcommand{\vl}{V_\Lambda}
\newcommand{\hmvl}{\htt_{\vl}}
\newcommand{\bbot}{``\W \text{ is too large}''}
\newcommand{\apmt}{(1\pm\eps/2)}
\newcommand{\ple}{O(\poly(\log n,\tfrac1\eps))}
\newcommand{\hdv}{\hat{d}_v}
\newcommand{\BB}{\text{Hyperedge-Oracle}\xspace}
\newcommand{\HO}{\BB}
\newcommand{\CL}{\texttt{Duplicatable}\xspace}
\newcommand{\W}{L}
\newcommand{\Qd}{Q}
\newcommand{\Qdval}{5k\cdot 2^k\cdot \log^3 n}
\newcommand{\K}{\gamma}
\newcommand{\algDeg}{\mathbf{Deg\mhyphen Approx}_{\eps}}
\newcommand{\algCore}[1]{\arec\brak{#1}}
\newcommand{\algWrapD}[1]{\crec\brak{#1}}
\newcommand{\algWrapDNP}{\crec}
\newcommand{\apx}{\mathbf{GuessApx}}
\newcommand{\algPrev}{\mathbf{BBGM\mhyphen Approx}}
\newcommand{\arec}{\mathbf{RecursiveApx}_{\bot}}
\newcommand{\brec}{\mathbf{RecursiveApx}}
\newcommand{\crec}{\mathbf{MedianApx}}
\newcommand{\algCorez}[1]{\arec\brak{#1}}
\newcommand{\algCorezNP}{\arec}
\newcommand{\algCorezB}[1]{\brec\brak{#1}}
\newcommand{\algCorezNPB}{\brec}
\newcommand{\algmain}{\mathbf{Hyperedge\mhyphen Approx}}
\newcommand{\Cialg}{\mathbf{Count} \mhyphen\mathbf{Heavy}}
\newcommand{\FindHalg}{\mathbf{Find} \mhyphen\mathbf{Heavy}}
\newcommand{\FindHalgB}{\mathbf{Find} \mhyphen\mathbf{Heavy}_{\bot}^{\zeta}}
\title{Output-sensitive approximate counting via a measure-bounded hyperedge oracle,\\ or: How asymmetry helps estimate $k$-clique counts faster}
\date{}
\author{Keren Censor-Hillel \thanks{Department of Computer Science, Technion. \texttt{ckeren@cs.technion.ac.il}. The research is supported in part by the Israel Science Foundation (grant 529/23).} 
\and
Tomer Even \thanks{Department of Computer Science, Technion. \texttt{tomer.even@campus.technion.ac.il}.} 
\and 
Virginia Vassilevska Williams \thanks{Massachusetts Institute of Technology, Cambridge, MA, USA. \texttt{virgi@mit.edu}. Supported by NSF Grant CCF-2330048, BSF Grant 2020356, and a Simons Investigator Award.}}
\date{}
\begin{document}
\maketitle

\begin{abstract}
    Dell, Lapinskas and Meeks [DLM SICOMP 2022] presented a general reduction from approximate counting to decision for a class of fine-grained problems that can be viewed as hyperedge counting or detection problems in an implicit hypergraph, thus obtaining tight equivalences between approximate counting and decision for many key problems such as $k$-clique, $k$-sum and more. Their result is a reduction from approximately counting the number of hyperedges in an implicit $k$-partite hypergraph to a polylogarithmic number of calls to a hyperedge  oracle that returns whether a given subhypergraph contains an edge.

    The main result of this paper is a generalization of the DLM result for {\em output-sensitive} approximate counting, where the running time of the desired counting algorithm is inversely proportional to the number of witnesses. Our theorem is a reduction from approximately counting the (unknown) number of hyperedges in an implicit $k$-partite hypergraph to a polylogarithmic number of calls to a hyperedge oracle called only on subhypergraphs with a small ``measure''. If a subhypergraph has $u_i$ nodes in the $i$th node partition of the $k$-partite hypergraph, then its measure is $\prod_i u_i$.

    Using the new general reduction and by efficiently implementing measure-bounded colorful independence oracles, we obtain new improved output-sensitive approximate counting algorithms for $k$-clique, $k$-dominating set and $k$-sum. In graphs with $n^t$ $k$-cliques, for instance, our algorithm $(1\pm \epsilon)$-approximates the $k$-clique count in time
    $$\tilde{O}_\epsilon(n^{\omega(\frac{k-t-1}{3},\frac{k-t}{3},\frac{k-t+2}{3}) }+n^2),$$
    where $\omega(a,b,c)$ is the exponent of $n^a\times n^b$ by $n^b\times n^c$ matrix multiplication.
    For large $k$ and $t>2$, this is a substantial improvement over prior work, even if $\omega=2$.
\end{abstract}
\pagebreak
\tableofcontents
\pagebreak

\renewcommand{\EE}{\mathcal{E}}
\section{Introduction}
For many computational problems, approximately counting the number of solutions is known to be reducible to detecting whether a solution exists. This is true in a large number of scenarios. For instance, Valiant and Vazirani \cite{ValiantV86} showed that if one can solve SAT in polynomial time, then one can also approximately solve $\#$SAT (or any problem in $\#$P) in polynomial time, within an arbitrary constant precision. Similar statements are true in the FPT setting \cite{muller} and the subexponential time setting \cite{DellL21}.

For the fine-grained setting, Dell, Lapinskas and Meeks \cite{DellLM22,DellLM24} (following \cite{DellL21}) presented a general tight reduction from $(1+\eps)$-approximate counting to  decision for any problem that can be viewed as determining whether an $n$-node $k$-uniform $k$-partite hypergraph contains an edge.

Let's take $k$-clique as an example. It is known that detecting, finding and counting $k$-cliques in an $n$-node graph is equivalent to the respective problem in $k$-partite graphs with $n$ nodes in each part. Then, given a $k$-partite graph $G$, one can mentally define a $k$-partite $k$-uniform hypergraph $H$ with $n$ nodes in each part where $(u_1,\ldots,u_k)$ is a hyperedge in $H$ if and only if $(u_1,\ldots,u_k)$ is a $k$-clique in $G$. Counting the number of $k$-cliques in $G$ approximately then corresponds to approximately counting the number of hyperedges in $H$.

Many other problems parameterized by their solution size can be viewed in this way: $k$-SUM, $k$-Dominating set, $k$-OV and so on.

Dell, Lapinskas and Meeks \cite{DellLM22} present a reduction that given an oracle $O$ that can determine whether a $k$-partite hypergraph has no hyperedges (a ``colorful independence oracle''), for any constant $k$, one can obtain with high probability a $(1\pm \eps)$-approximation to the number of hyperedges in $H$ in time $(n/\eps^2)\textrm{polylog}(n)$ and with $(1/\eps^2)\textrm{polylog}(n)$ queries to $O$:

\begin{theorem}[Dell-Lapinskas-Meeks'22]\label{thm:dlm}
    Let $G$ be a $k$-partite hypergraph with vertex set $\Vin=\Vin_1\sqcup \Vin_2\sqcup \ldots\sqcup \Vin_k$, where $\abs{\Vin}=n$,
    and a set $E$ of $m$ (unknown) hyperedges.
There exists a randomized algorithm Count$(G,\eps)$ that takes as input the vertex set $\Vin$ and has access to a {\em colorful independence oracle} $O$, which given $X_1\subseteq \Vin_1,\ldots,X_k\subseteq \Vin_k$ returns whether the subhypergraph of $G$ induced by $\cup_i X_i$ contains a hyperedge. Count outputs $\hat{m}$ such that $\Pr{\hat{m}=m\apm}\geq 1-1/n^4\;.$
Count runs in $O(nT)$ time and queries $O$ at most $T$ times, where
    \[
        T=(k\log n)^{O(k)}/\eps^{2}.
    \]
\end{theorem}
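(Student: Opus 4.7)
The plan is to proceed in two stages: a coarse stage that pins $m$ down to within a polynomial factor, and a refinement stage that achieves $(1\pm\eps)$ relative accuracy. Both stages are built from a single primitive: sample $X_i\subseteq \Vin_i$ independently by including each vertex with probability $p_i$, and call the oracle on $(X_1,\ldots,X_k)$. The expected number of surviving hyperedges is $m\cdot\prod_i p_i$, and the distribution of the oracle's Boolean answer encodes information about $m$.

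For the coarse stage, I enumerate all scale vectors $(j_1,\ldots,j_k)\in \set{0,1,\ldots,\ceil{\log_2 n}}^k$, set $p_i = 2^{-j_i}$, repeat the primitive $\BO{\log n}$ times per scale, and record the majority answer. By Markov's inequality (upper bound on $\Pr{\text{non-empty}}$) and a Paley--Zygmund second-moment lower bound, the majority flips from \yes to \no near the scale where $m\cdot 2^{-(j_1+\cdots+j_k)} = \Theta(1)$; union-bounding over the $(\log n)^k$ scales identifies this transition \whp, yielding a coarse estimate $\tilde m$ with $\tilde m\leq m\leq \poly(n)\cdot\tilde m$ using $(k\log n)^{\BO{k}}$ total queries.

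For the refinement stage, I use $\tilde m$ to choose $p_i$ with $\prod_i p_i = c/\tilde m$ for a small constant $c$, so that each sample's induced subhypergraph contains $\Theta(1)$ expected hyperedges. Taking $T=(k\log n)^{\BO{k}}/\eps^2$ independent samples and letting $\hat q$ be the empirical fraction of \yes answers, a Chernoff bound gives $\hat q = q\apm$ \whp, where $q=\Pr{\text{induced subhypergraph non-empty}}$. The plan is then to read off $\hat m$ by inverting the map $m\mapsto q$.

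The main obstacle is precisely this inversion: while $q\leq m\cdot\prod_i p_i$ by union bound, the matching lower bound requires pairwise hyperedge overlap to be negligible relative to the first moment, which can fail when many hyperedges share vertices. To fix this I would perform a recursive decomposition: randomly partition $\Vin_1$ into $B$ bins, recursively estimate the hyperedge count inside each bin via restricted oracle queries, and sum the estimates. Each recursion level reduces the expected number of hyperedges per sub-instance by a factor $B$, so after $\BO{\log n}$ levels each sub-instance contains $\BO{1}$ expected hyperedges, at which point hyperedge collisions in the sample are negligible and $q\approx m\cdot\prod_i p_i$ holds. Combining a $(1\pm\eps/\BO{\log n})$ accuracy per level via a union bound recovers the final $(1\pm\eps)$ estimate, while keeping the overall query count at $(k\log n)^{\BO{k}}/\eps^2$; the running time is $\BO{nT}$ since each oracle call is preceded by $\BO{n}$ work to construct the sampled vertex sets.
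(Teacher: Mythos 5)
Your proposal identifies the right obstruction---hyperedge overlaps ruin the union-bound/Paley--Zygmund estimates---but the fix you offer does not actually remove it, and there are two concrete places where the argument breaks.

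First, the coarse stage. You invoke Paley--Zygmund to lower-bound the non-emptiness probability at the scale $\prod_i p_i \approx 1/m$. But the Paley--Zygmund bound is $(\E Z)^2/\E[Z^2]$, and $\E[Z^2]$ is dominated by pairs of overlapping hyperedges. In the extreme case where all $m$ hyperedges share a single vertex $v \in \Vin_1$, the second moment is $\Theta(m^2\cdot p_1\prod_{i\geq 2}p_i^2)$ while $(\E Z)^2 = m^2\prod_i p_i^2$, so the bound degenerates to $p_1$: it is $\Omega(1)$ only for the one scale vector with $p_1 = 1$, not ``near the scale where $m\cdot 2^{-(j_1+\cdots+j_k)}=\Theta(1)$'' as you claim. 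This is precisely why the sampling lemma in the literature (and in this paper, \Cref{lemma:discovery}) is a nontrivial induction that buckets vertices by degree and proves \emph{existence} of one good asymmetric vector, rather than a uniform second-moment bound. Your coarse stage can be salvaged by searching over all scale vectors and taking the maximum yes-scale, but you would need to actually prove a sampling lemma, not cite Paley--Zygmund.

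Second, and more seriously, the refinement stage. Randomly partitioning $\Vin_1$ into $B$ bins does \emph{not} disperse the overlaps you are worried about: every hyperedge through a fixed vertex $v\in\Vin_1$ is placed in the same bin as $v$. In the concentrated example above, after any number of levels of recursion one bin still contains all $m$ hyperedges, so the claimed factor-$B$ reduction per level is only an average, and it fails exactly on the instances where the overlap is bad. Moreover, even setting that aside, the budget does not close: after $O(\log n)$ levels of $B$-ary branching you have $B^{\Theta(\log n)} = \poly(n)$ sub-instances, and estimating each one separately already exceeds the allotted $T=(k\log n)^{O(k)}/\eps^2$ queries by a polynomial factor. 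To stay in budget you would have to subsample bins, but then the variance of the estimator is again governed by the same degree-concentration statistics you were trying to eliminate. Both the Dell--Lapinskas--Meeks proof and this paper's generalization resolve this by a different mechanism: they explicitly locate the high-degree (``heavy'') vertices using the oracle, count the hyperedges incident to them separately, remove them, and recurse only on the light remainder, so that variance is provably controlled at every step. That heavy/light decomposition is the missing idea; random binning is not a substitute for it.
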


This generic reduction shows that for all problems that can be cast in the $k$-partite hypergraph framework, the running time for $(1\pm \eps)$-approximate counting of solutions is within a $(1/\eps^2)\textrm{polylog}(n)$ factor of the running time for detecting a ``colorful'' solution, i.e. one that has exactly one vertex in each node partition in a $k$-partite input. For a large number of problems such as subgraph pattern problems like $k$-clique and directed $k$-cycles detecting a colorful pattern is equivalent to detection (without the colors), so that this theorem gives a tight reduction from approximate counting to detection.

For many problems, when the witness count is large, detection and approximate counting can be done faster. For instance, consider the Triangle (i.e. $3$-clique) problem in $n$ node graphs. If $G$ contains at least $t$ triangles, then by sampling $\tilde{O_\eps}(n^3/t)$ triples of vertices and checking them for triangles, we can estimate $t$ within a $(1\pm \eps)$ factor. That is, the approximate counting running time is {\em inversely} proportional to the number of copies. Algorithms with running times that depend on the number of solutions are called {\em output-sensitive}.

For triangles, T{\v{e}}tek \cite{tvetek2022approximate} obtained an even better output-sensitive approximate counting running time, $\tilde{O_\eps}(n^\omega/t^{\omega-2})$ , where $\omega<2.37134$ \cite{moreasym} is the matrix multiplication exponent. Censor-Hillel, Even and Vassilevska W. \cite{CEW24} improved the bound further using rectangular matrix multiplication and also obtained similar algorithms for approximately counting $k$-cycles in directed graphs for any constant $k\geq 3$. They also showed that their algorithms are conditionally optimal, based on assumptions from fine-grained complexity.
These works implicitly reduce output-sensitive triangle and $k$-cycle approximate counting to detection. A very natural and important question is
\begin{center}{\em Can the general theorem of Dell-Lapinskas-Meeks be extended for output-sensitive approximate counting?}\end{center}

If this can be done, then what does the running time for output-sensitive approximate counting look like for various important problems in fine-grained complexity: for $k$-clique, $k$-SUM, $k$-dominating set?

\subsection{Our results.}

Our main theorem is an extension of Dell-Lapinskas-Meeks' theorem for output-sensitive approximate counting. The theorem statement involves a function $\mu$ on subsets $U:=U_1\cup U_2\cup \ldots \cup U_k$ where for each $i\in [k]$, $U_i\subseteq \Vin_i$, and where $\Vin_i$ is the $i$th node partition of the input $k$-partite hypergraph. We define $\mu(U)=\prod_{i=1}^k |U_i|$. An \HO is an oracle that takes as input a set $U$ and returns whether the induced subhypergraph $G[U]$ contains a hyperedge or not.

\begin{restatable}[Main Theorem]{theorem}{thmMain}\label{thm4:main}
    Let $G$ be a $k$-partite hypergraph with vertex set $\Vin=\Vin_1\sqcup \Vin_2\sqcup \ldots\sqcup \Vin_k$, where $\abs{\Vin}=n$,
    and a set $E$ of $m$ hyperedges.
There exists a randomized algorithm $\algmain$ that takes as input the vertex set $\Vin$ and has access to a \HO, which
    outputs $\hat{m}$ such that $\Pr{\hat{m}=m\apm}\geq 1-1/n^4\;.$
    The algorithm queries the oracle at most $T$ times, while only querying sets $U$ that satisfy $\gs(U)\leq R$, for
    \begin{align*}
        T=(k\log n)^{O(k^3)}/\eps^{2k}\;, &  & R=\frac{\gs(\Vin)}{m}\cdot \frac{(k\log n)^{O(k^3)}}{\eps^{2k}}\;.
    \end{align*}
\end{restatable}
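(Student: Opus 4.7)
The plan is to extend the Dell-Lapinskas-Meeks recursive framework so that every oracle call lands on a subhypergraph of measure at most $R$. Since $R$ depends on the unknown count $m$, the algorithm $\algmain$ is assembled from three layers: a promise-based recursive estimator $\arec(\hat m)$ that works only when $m \geq \hat m$, a doubling wrapper $\algWrapNP$ that locates the correct scale of $\hat m$, and a median amplification layer $\crec$ that boosts the success probability to $1-1/n^4$.

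The recursive estimator $\arec(\hat m)$ proceeds by recursion on the number of parts $k$. At the top level, vertices of $\Vin_1$ are classified as heavy or light according to a threshold $\tau \asymp \hat m/\abs{\Vin_1}$ measured in terms of the hyperedges incident to each vertex. The subroutine $\FindHalg$ subsamples $\Vin_2,\ldots,\Vin_k$ at rates calibrated so that every oracle call has measure at most $R$, and identifies a superset of the heavy vertices via a DLM-style colorful coupon-collector argument adapted to the measure cap. For each heavy vertex $v\in \Vin_1$, the subroutine $\Cialg$ recursively approximates the number of hyperedges through $v$; this is a $(k-1)$-partite instance in which one coordinate has collapsed to a single vertex, so the measure bound is inherited automatically. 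The light contribution, by the choice of $\tau$, is at most $\hat m/2$ and is estimated directly by a measure-bounded sampling step. Combining the two yields a $(1\pm\eps/k)$-approximation per level under the promise $m\in[\hat m,2\hat m]$.

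The doubling wrapper $\algWrapNP$ iterates $\hat m$ downward from $\mu(\Vin)$ to $1$ by halving, and accepts the first guess for which $\arec$ returns a value close to $\hat m$ rather than the failure symbol $\bot$. When $\hat m \gg m$, the sampled subhypergraphs are hyperedge-free with high probability, every oracle call answers ``no,'' and $\arec$ outputs $\bot$; once $\hat m \leq 2m$ the estimator succeeds with constant probability and returns a $(1\pm\eps)$-approximation. Since $\log \mu(\Vin) = O(k\log n)$, there are only polylogarithmically many guesses, so the measure bound at the accepted guess remains within a polylog factor of the target $R$. Finally $\crec$ runs $\algWrapNP$ independently $\Theta(\log n)$ times and returns the median, which by a Chernoff bound fails with probability at most $n^{-4}$.

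The main obstacle is the tight coupling between the measure bound and the sample size at every level of the recursion: if a subsample is too small the oracle returns ``no'' even when hyperedges exist, destroying accuracy, whereas if it is too large the measure bound $R$ is violated. Balancing these constraints forces the sampling rates at each level to depend on both $\hat m$ and the sizes of the deeper partitions, and each recursion level multiplies the overall query count by an $\eps^{-2}$ concentration factor and by a $(k\log n)^{O(k^2)}$ overhead coming from the heavy/light split and the $\FindHalg$ analysis; aggregating over the $k$ levels gives the final bound $T = (k\log n)^{O(k^3)}/\eps^{2k}$. The most delicate step is verifying that at a guess $\hat m \gg m$ the estimator outputs $\bot$ with probability $1 - n^{-\Omega(1)}$, which is what ensures that the doubling wrapper identifies the correct scale of $m$ without ever being misled by a spurious acceptance.
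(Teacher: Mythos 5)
Your high-level scaffolding — a promise-based recursive estimator, a doubling wrapper, and median amplification — matches the paper, as does the idea of handling heavy vertices by reducing to $(k-1)$-partite degree-approximation subproblems via $\Cialg$. But the core of your recursive estimator is structured differently from the paper's, and the difference hides a genuine gap.

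You describe $\arec$ as recursing only on the number of parts $k$: classify $\Vin_1$ into heavy and light, recurse on each heavy vertex, and ``estimate the light contribution directly by a measure-bounded sampling step.'' This last step is the problem. The light contribution — hyperedges avoiding all heavy vertices — can be an arbitrary fraction of $m$, possibly much smaller than $\hat m$, and every light vertex has degree below $\tau$. Any direct sampling estimator of this quantity has variance proportional to $\tau$ times the light count, so achieving a $(1\pm\eps)$ relative estimate requires the light count itself to be at least $\Omega(\tau/\eps^2)$; when the light count is small (e.g.\ $O(\eps m)$ or zero) direct sampling cannot resolve it. You can't simply argue the light contribution is negligible either, because in other instances it is essentially all of $m$. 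The paper resolves this with a \emph{second} recursion that your proposal omits: after removing the $\Lambda$-heavy vertices, $\arec$ subsamples each remaining vertex independently with probability $p=1/2$, lowers the heaviness threshold to $\Lambda\cdot p^k$, and recurses on the subsampled graph. Crucially, removing heavy vertices first caps the maximum degree at $\Lambda$, which in turn caps the variance of the hyperedge count in the subsample (\Cref{claim:variance}); subsampling then both shrinks the graph and geometrically decays $\Lambda$ so that after $O(\log n)$ rounds $\Lambda\leq 1$ and the base case $\algPrev$ applies. This interleaved remove-heavy/subsample loop, together with the $\Lambda$-dependent telescoping of the additive error (\Cref{lemma2:ih}), is the paper's central device and cannot be replaced by a one-shot sampling step at the light layer.

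Two secondary points also differ. First, the paper classifies vertices of \emph{all} parts simultaneously as heavy/light with a single absolute threshold $\Lambda$, not part by part with $\tau\asymp\hat m/|\Vin_1|$; the all-parts threshold is what makes the handshaking bound $|\vl|\leq km_V/\llow$ available and bounds the query count of $\FindHalgB$. Second, $\Cialg$ approximates each heavy vertex's degree with an \emph{additive} error $\eps\llow/k$ rather than a multiplicative one, because degrees can fall below $\llow$ as earlier vertices in the processing order are removed; the additive guarantee is what lets the errors over all heavy vertices sum to $O(\eps)\cdot m_V(\vl)$. These details matter for making the query-count and measure bounds line up, but the essential missing piece is the $\Lambda$-subsampling recursion above.
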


Given our new general theorem, we turn to implementing \HO for several fundamental problems of interest: $k$-Clique, $k$-Dominating set and $k$-SUM.

\paragraph{A new approximate counting algorithm for $k$-Clique.}
The fastest algorithm for $k$-Clique detection and exact counting is by Ne\v{s}etril and Poljak \cite{nesetril} who reduced the problem to triangle counting in a larger graph.
The running time of the algorithm is  $$O(n^{\omega(\lfloor k/3 \rfloor,\lceil k/3 \rceil,k-\lfloor k/3 \rfloor-\lceil k/3\rceil)}),$$ where $\omega(a,b,c)$ is the exponent for $n^a\times n^b$ by $n^b\times n^c$ matrix multiplicaiton. (The best known bounds for these rectangular exponents can be found in \cite{moreasym}).

When an $n$-node graph has $m=n^{t}$ $k$-cliques, one can easily estimate $m$ in $\tilde{O}_\eps(n^k/m)$ time by sampling $k$-tuples of vertices and checking if they form a $k$-clique.
Since $\Omega(n^2)$ time is necessary when $m\leq n^{k-2}$ \cite{EdenRS20,EdenR18}\footnote{The lower bound is $\Omc[e]$ where $e$ is the number of edges in the graph, or $\Omc[\frac{n}{m^{1 / k}}+\min \set{\frac{e^{k / 2}}{m}, e}]$, which is $\Omc[n^2]$ for
    $e=\Omc[n^2]$ and $m\leq n^{k-2}$.} and the sampling algorithm gets at least $\Omega(n^2)$ then, the interesting regime is when $m\leq O(n^{k-2})$.
\begin{center}{\em What is the best running time for approximate $k$-clique counting when $m\in [\Omega(1),O(n^{k-2})]$?} \end{center}
As mentioned earlier, for $k=3$, i.e. approximate triangle counting, a faster algorithm is possible, running in time $\tilde{O}_\eps(n^{\omega(1-t,1,1)})$ \cite{CEW24,tvetek2022approximate}.
We could apply Ne\v{s}etril and Poljak's reduction from $k$-clique to triangle and then use the approximate triangle counting algorithm of \cite{CEW24}. How good is this? Say that $k$ is divisible by $3$. Then  Ne\v{s}etril and Poljak's reduction creates a graph on $O(n^{k/3})$ vertices that has a number of triangles that is exactly ${k\choose k/3}{2k/3\choose k/3}$ times the number $m=n^t$ of $k$ cliques in the original graph. Applying the algorithm of \cite{CEW24} on this graph we get a $(1\pm \eps)$-approximation to $m$ in $\tilde{O}_\eps(n^{\omega(k/3-t,k/3,k/3)})$ time.

    {\em We show that we can do better:}

\begin{restatable}[Clique (Simplified)]{theorem}{cliqueSimp}\label{thms:clique}
    Let $G$ be a given graph with $n$ vertices and let $k\geq 3$ be a fixed integer.
    There is a randomized algorithm that outputs an approximation $\htt$ for the number $m$ of $k$-cliques  in $G$ such that
    $\Pr{(1-\eps)m \leq \htt\leq (1+\eps)m}\geq 1-1/n^2$, for any constant $\eps>0$. Let $r$ be such that $n^r=n^k/m$.
    The running time is bounded by $\tilde{O_\eps}({n^\psi + n^2})$, where $$\psi=\frac{1}{3}\max\set{\omega(r-1,r-1,r+2),\omega(r-2,r+1,r+1)}\leq \frac{1}{3}\omega(r-1,r,r+2).$$
\end{restatable}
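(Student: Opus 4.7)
The plan is to apply \Cref{thm4:main} to a $k$-partite hypergraph encoding of $k$-clique and then implement the resulting measure-bounded \HO using an asymmetric Ne\v{s}etril--Poljak reduction to (rectangular) matrix multiplication. I first pass from $k$-clique counting in $G$ to colorful $k$-clique counting in a $k$-partite graph: take $k$ vertex copies $\Vin_1,\ldots,\Vin_k$ of $V(G)$, place an edge between $u\in\Vin_i$ and $v\in\Vin_j$ (for $i\neq j$) iff $uv\in E(G)$, and let $\HH$ be the $k$-uniform $k$-partite hypergraph whose hyperedges are the colorful $k$-cliques of this auxiliary graph. Every $k$-clique of $G$ contributes exactly $k!$ colorful $k$-cliques, so $\apm$-approximating $|\HH|$ yields a $\apm$-approximation of $m$ up to the fixed factor $k!$.

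Next, I invoke \Cref{thm4:main} on $\HH$. This reduces the problem to at most $T=(k\log n)^{O(k^3)}/\eps^{2k}$ calls to a \HO, each restricted to a subset $U=U_1\sqcup\cdots\sqcup U_k$ satisfying $\mu(U)=\prod_i |U_i|\leq R=\tilde{O}_\eps(\mu(\Vin)/m)=\tilde{O}_\eps(n^{r})$. Each oracle call must decide whether the subgraph of $G$ induced by $U_1,\ldots,U_k$ contains a colorful $k$-clique.

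I implement this oracle through an asymmetric Ne\v{s}etril--Poljak reduction: choose a partition $[k]=A\sqcup B\sqcup C$ of the color indices, and for each group form a super-vertex set consisting of all colorful sub-cliques of $G$ on the parts in the group; the set sizes are bounded by $N_A=\prod_{i\in A}|U_i|$, $N_B=\prod_{i\in B}|U_i|$, and $N_C=\prod_{i\in C}|U_i|$. Build the tripartite auxiliary graph whose edges join super-vertices whose union forms a colorful clique in $G$; its triangles are in bijection with the colorful $k$-cliques on $U_1,\ldots,U_k$. Triangle detection via rectangular matrix multiplication takes time $n^{\omega(\alpha,\beta,\gamma)+o(1)}$, where $\alpha=\log_n N_A$, $\beta=\log_n N_B$, $\gamma=\log_n N_C$, and $\alpha+\beta+\gamma\leq r+o(1)$. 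The algorithm chooses $(A,B,C)$ adaptively to minimize this cost for the given $|U_i|$'s.

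The main obstacle is the worst-case analysis: showing that, over every tuple $\{|U_i|\}$ with $\prod_i|U_i|\leq n^r$ and each $|U_i|\leq n$, the best partition always yields cost at most $n^\psi$. Using the homogeneity $\omega(cx,cy,cz)=c\cdot\omega(x,y,z)$, the two arguments of the max in $\psi=\frac{1}{3}\max\{\omega(r-1,r-1,r+2),\omega(r-2,r+1,r+1)\}$ correspond exactly to super-vertex size triples $(n^{(r-1)/3},n^{(r-1)/3},n^{(r+2)/3})$ and $(n^{(r-2)/3},n^{(r+1)/3},n^{(r+1)/3})$, which I expect to be the extremal configurations forced when one $|U_i|$ is pinned at $n$ and the remaining mass is distributed as evenly as possible across the remaining two groups. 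The weaker bound $\frac{1}{3}\omega(r-1,r,r+2)$ then follows by the monotonicity and convexity properties of $\omega$. Finally, the additive $n^2$ term absorbs the one-time cost of reading $G$ and computing edge adjacencies, the per-call bookkeeping of \Cref{thm4:main} summed over the $T$ oracle invocations, and the degenerate regime of very small $r$ in which the measure bound no longer beats brute-force sampling.
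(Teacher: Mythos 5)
Your overall architecture matches the paper's: reduce $k$-clique to colorful $k$-clique in a $k$-partite graph, apply \Cref{thm4:main} to obtain polylogarithmically many measure-bounded \HO calls, and implement each oracle call via an asymmetric Ne\v{s}etril--Poljak reduction to rectangular matrix multiplication with an adaptively chosen partition of the $k$ color classes. You also correctly identify the crux: bounding $\min_{P\in P_3([k])} \omega(x_1(P,U),x_2(P,U),x_3(P,U))$ uniformly over all admissible $U$ with $\mu(U)\leq n^r$. But this is exactly where your proposal has a genuine gap.

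The step you leave to ``I expect'' is the actual content of \Cref{cor:clique}, and your stated intuition for it is not right. The paper's argument is in two pieces. First, a greedy three-bin balancing lemma: given log-sizes $r_i=\log_n|U_i|\in[0,1]$ with $\sum r_i=r$, one can partition $[k]$ into $R_1,R_2,R_3$ whose sums $a\leq b\leq c$ satisfy $c\leq a+1$. The driving constraint is $r_i\leq 1$, i.e.\ $|U_i|\leq n$ for each $i$ --- not that some $|U_i|$ equals $n$ --- and the proof is an induction showing the greedy process terminates in such a configuration. Second, one maximizes $\omega(a,b,c)$ over the polytope $\{a\leq b\leq c\leq a+1,\ a+b+c=r\}$; the polytope is a triangle with corners $\tfrac13(r,r,r)$, $\tfrac13(r-2,r+1,r+1)$, $\tfrac13(r-1,r-1,r+2)$, convexity of $\omega$ puts the max at a corner, and \Cref{thm:MM balance is slower} discards $\tfrac13(r,r,r)$. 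Your description (``one $|U_i|$ pinned at $n$ and the remaining mass split evenly into two groups'') produces the log-size triple $(1,(r-1)/2,(r-1)/2)$, which is not one of the two extremal corners, is not even feasible for $r>5$, and would give a different exponent. So you have located the right extremal points but by the wrong mechanism, and without the greedy balancing lemma you cannot even conclude that a near-balanced partition is always available. Fill in the greedy balancing step and the corner-point maximization, and apply the balancing inequality $\omega(x-\eps,y+\eps,z)\leq\omega(x,y,z)$ to get both the $\max$ bound and the simplified $\tfrac13\omega(r-1,r,r+2)$ bound; then the proposal is complete.
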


If $m=n^t$ and $r=k-t$, the running time of our $k$-clique approximate counting algorithm is never worse than $$\tilde{O}_\eps(n^{\omega(\frac{k-t-1}{3},\frac{k-t}{3},\frac{k-t+2}{3}) }+n^2).$$
Thus for large $k$, we are essentially distributing the $-t$ term almost equally among the three inputs to $\omega$, getting a running time closer to $n^{\omega(k-t)/3}$. For $t > 2$, our running time is substantially better than the $\tilde{O}_\eps(n^{\omega(k/3-t,k/3,k/3)})$ time that follows from composing \cite{CEW24} with the Ne\v{s}etril and Poljak reduction. The easiest way to see this is when $\omega=2$: then $\omega(k/3-t,k/3,k/3)=2k/3$, whereas $\omega(\frac{k-t-1}{3},\frac{k-t}{3},\frac{k-t+2}{3})=2k/3-2(t-2)/3$.
In other words, our algorithm dependency in $t$ substantially improves upon previous approach even if $\omega=2$.

We obtain our result using our \Cref{thm4:main} and by designing a specialized \HO for $k$-clique. This \HO is obtained by a modification of Ne\v{s}etril and Poljak's reduction aimed at optimizing the running time for $k$-clique detection in unbalanced $k$-partite graphs. See the overview section.

For the special case of $k=4$ we get the following running time, which is faster than the best known running time of $O(n^{\omega(2,1,1)})$ for $4$-clique detection, for every $m=n^{\Omc}$.
\begin{restatable}[$4$-clique]{theorem}{tKfour}\label{thms:k4}
    Let $G$ be a given graph with $n$ vertices.
    There is a randomized algorithm that outputs an approximation $\htt$ for the number $m=n^t$ of $4$-cliques  in $G$ such that
    $\Pr{(1-\eps)m \leq \htt\leq (1+\eps)m}\geq 1-1/n^2$, for any constant $\eps>0$. The running time is bounded by $\tilde{O_\eps}({n^{\psi_4} + n^2})$, where
    \begin{align*}
        \psi_4=\max\set{\omega(1, 1, 2-t),\left(1-\frac{t}{4}\right)\cdot\omega(1, 1, 2),\omega\left(1, 1-\frac{t}{3}, 2-\frac{2t}{3}\right)}\;.
    \end{align*}
\end{restatable}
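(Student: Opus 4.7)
The plan is to invoke the Main Theorem (\Cref{thm4:main}) with $k=4$. We view $4$-cliques of $G$ as hyperedges of the $4$-partite $4$-uniform hypergraph $H$ on $\Vin_1 = \Vin_2 = \Vin_3 = \Vin_4 = V(G)$; the standard color-coding reduction from uncolored to colorful $k$-clique preserves the approximate count up to a $2^{O(k)}$ factor. The main theorem then reduces estimating $m = n^t$ to $T = \poly(\log n)/\eps^{8}$ calls to a hyperedge oracle on subhypergraphs of measure $\gs(U) = |U_1||U_2||U_3||U_4| \leq R$, where $R = \tilde{O}_\eps(n^{4-t})$. The overhead of the reduction itself and of preparing an $O(1)$-time adjacency lookup is $\tilde{O}_\eps(n^2)$, matching the additive term in the claimed bound and being unavoidable for small $t$ anyway by the lower bounds of \cite{EdenR18,EdenRS20}.

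It remains to implement the oracle: given $U_1,\ldots,U_4 \subseteq V(G)$ with $u_i := |U_i|$ and $\prod_i u_i \leq R$, decide whether some $(v_1,\ldots,v_4) \in U_1 \times U_2 \times U_3 \times U_4$ is a colorful $4$-clique. Our main tool is the following family of Ne\v{s}etril--Poljak-style matrix-multiplication strategies. Fix a permutation $(a,b,c,d)$ of $(1,2,3,4)$. Enumerate the (at most $u_c u_d$) edges of $G$ inside $U_c \times U_d$, then form the Boolean matrix $M_1 \in \{0,1\}^{u_a \times (u_c u_d)}$ with $M_1[v_a,(v_c,v_d)] = 1$ iff $\{v_c v_d, v_a v_c, v_a v_d\} \subseteq E(G)$, and $M_2 \in \{0,1\}^{(u_c u_d) \times u_b}$ defined analogously for $v_b$. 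A colorful $4$-clique exists iff some nonzero entry of $M_1 M_2$ at position $(v_a,v_b)$ satisfies $v_a v_b \in E(G)$. The total cost of this strategy is
\[
\tilde{O}\bigl(n^{\omega(\log_n u_a,\,\log_n(u_c u_d),\,\log_n u_b)} + u_c u_d(u_a+u_b)\bigr),
\]
and given $(u_1,\ldots,u_4)$ we choose the permutation that minimizes this quantity.

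The three terms in $\psi_4$ emerge as worst-case upper bounds over valid $(u_i)$. If $u_1 = u_2 = n$, the constraint forces $u_3 u_4 \leq n^{2-t}$, and the permutation $(a,b,c,d)=(1,2,3,4)$ yields $n^{\omega(1,\,2-t,\,1)} = n^{\omega(1,1,2-t)}$, the first term. In the balanced case $u_1=u_2=u_3=u_4 = n^{1-t/4}$, the same permutation gives $n^{\omega(1-t/4,\,2-t/2,\,1-t/4)}$, and by the tensor-power scaling inequality $\omega(sa,sb,sc)\le s\cdot\omega(a,b,c)$ applied with $s=1-t/4$, this is at most $n^{(1-t/4)\omega(1,2,1)} = n^{(1-t/4)\omega(1,1,2)}$, the second term. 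Finally, when $u_1 = n$, $u_2 = n^{1-t/3}$ and $u_3 u_4 = n^{2-2t/3}$, we get $n^{\omega(1,\,2-2t/3,\,1-t/3)} = n^{\omega(1,\,1-t/3,\,2-2t/3)}$ by the full symmetry of $\omega$ in its arguments.

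The main obstacle will be the case analysis: verifying that across the entire simplex $\{(x_i)\in[0,1]^4 : \sum_i x_i \leq 4-t\}$ of admissible scaled sizes $x_i = \log_n u_i$, the best permutation always beats $\max$ of the three specified exponents. This boils down to a four-variable optimization that exploits monotonicity of $\omega(\cdot,\cdot,\cdot)$ in each argument together with its convexity along straight lines; the ``hardest'' distributions of $(u_i)$ turn out to be precisely the three extremal configurations above, while on the boundary $\prod_i u_i \ll R$ one of the sizes shrinks and so does the matrix-multiplication cost. Combining the oracle with the Main Theorem yields a total running time of $T\cdot\tilde{O}_\eps(n^{\psi_4}) + \tilde{O}_\eps(n^2) = \tilde{O}_\eps(n^{\psi_4}+n^2)$, as claimed.
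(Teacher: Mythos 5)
Your overall approach matches the paper's: invoke \Cref{thm4:main} with $k=4$, implement the measure-bounded hyperedge oracle via a modified Ne\v{s}etril--Poljak reduction that splits $[4]$ into parts of shape $(2,1,1)$, and bound the oracle cost over admissible part sizes. The three extremal configurations you identify ($u_1=u_2=n$ with $u_3u_4\leq n^{2-t}$; all $u_i=n^{1-t/4}$; and $u_1=n$, $u_2=u_3=u_4=n^{1-t/3}$) are exactly the three feasible vertices that the paper computes, and they do yield the three terms of $\psi_4$.

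The gap is in the step you flag as ``the main obstacle.'' You propose to maximize $\min_{\text{permutation}}\omega(\cdots)$ over the simplex of admissible exponents and invoke convexity of $\omega$ to reduce the maximum to vertices. But a pointwise minimum of convex functions is not convex, so its maximum over a polytope need not be attained at a vertex, and the appeal to convexity does not go through as stated. The paper sidesteps this by bounding a \emph{single} fixed partition: after sorting $a_1\leq a_2\leq a_3\leq a_4$ (where $n^{a_i}=|U_i|$), always pack the two smallest classes into the pair, $I_1=\{1,2\}$, $I_2=\{3\}$, $I_3=\{4\}$. For that fixed choice the cost is $n^{\omega(a_1+a_2,\,a_3,\,a_4)}$, which \emph{is} a convex function over the polytope given, with $A=a_1+a_2$, by $A\geq 0$, $A+a_3+a_4=r$, $A\leq 2a_3$, and $a_3\leq a_4\leq 1$; hence its maximum lies at a vertex. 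Enumerating vertices gives exactly your three terms (the remaining candidate vertex choices are infeasible for $r>2$), and since the actual oracle selects the minimizing partition, it is dominated pointwise by this fixed-partition bound. The missing idea is thus to restrict to one partition up front so that convexity actually applies.
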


We note that 
\[\psi_4\leq \max\set{t+(1-t/2)\omega(1, 1, 2),\left(1-\frac{t}{4}\right)\cdot\omega(1, 1, 2),(t/3)+(1-t/3)\omega\left(1, 1, 2\right)}\leq \omega(1,1,2)-t/2,\]
and so the running time of our $4$-clique approximate counting algorithm is at most $\tilde{O}(n^2+n^{\omega(1,1,2)}/\sqrt{m})$.

\begin{figure}[ht]
    \centering
    \begin{subfigure}{0.48\textwidth}
        \centering
        \includegraphics[width=\textwidth]{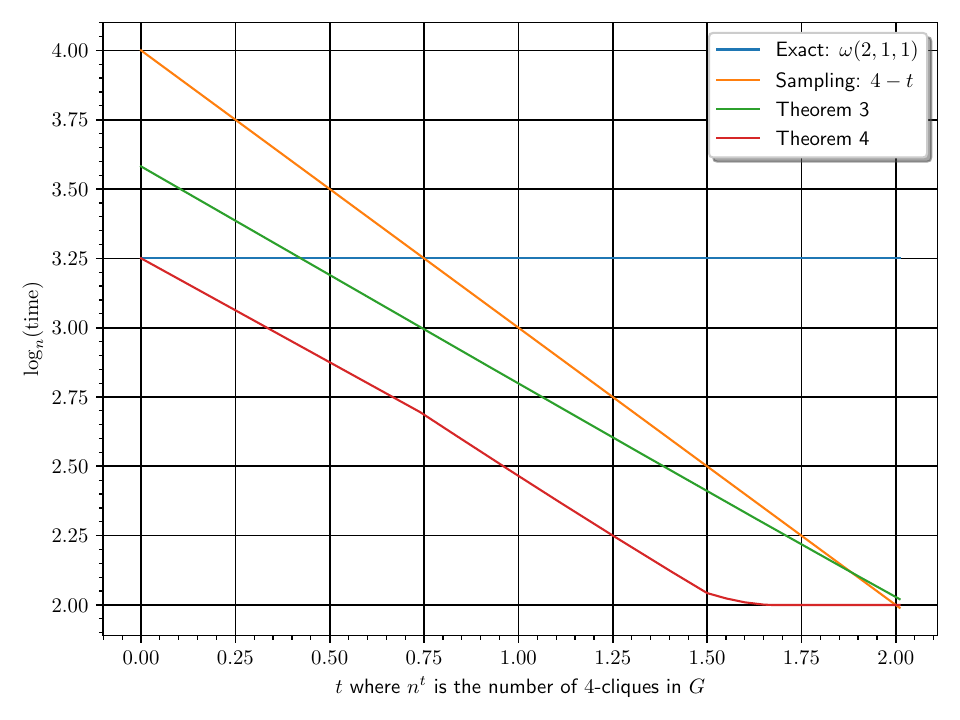}
        \caption{$k=4$}
        \label{fig:4clique}
    \end{subfigure}
    \hfill
    \begin{subfigure}{0.48\textwidth}
        \centering
        \includegraphics[width=\textwidth]{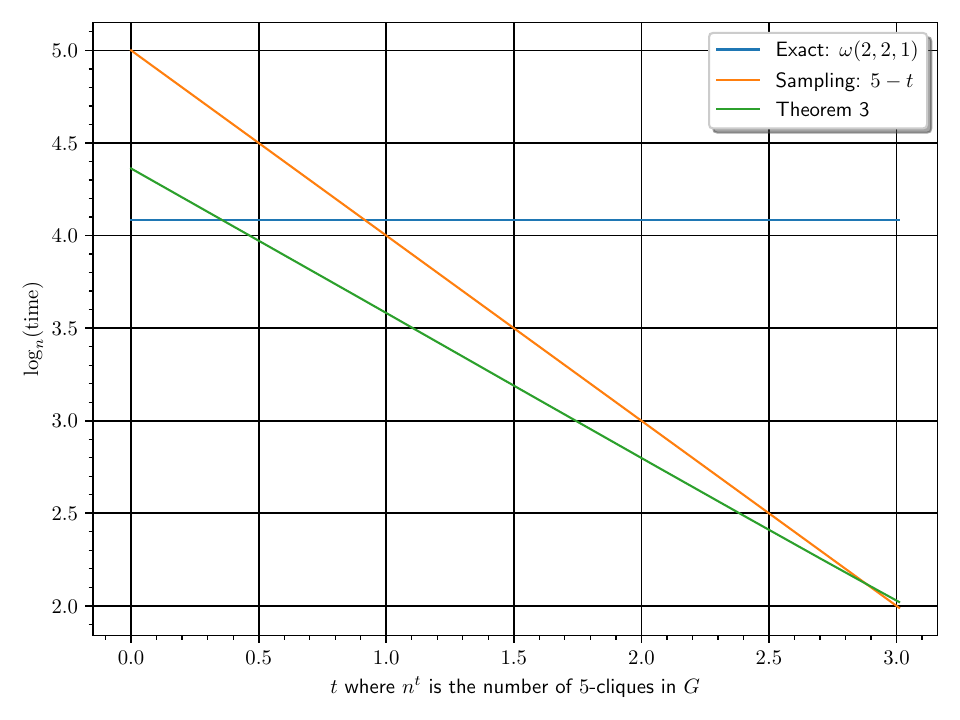}
        \caption{$k=5$}
        \label{fig:5clique}
    \end{subfigure}
    \caption{The above figures illustrate the comparison between our work and prior work for $k=4$ and $k=5$ in the $k$-clique problem. The blue line depicts the running time of \cite{nesetril}, which does not depend on $m$. The orange line depicts the running time of the sampling algorithm. 
    The green line depicts our running time specified in \Cref{thms:clique}, and the red line depicts our running time specified in \Cref{thms:k4}.
    \Cref{fig:clique0,fig:ds0} are plotted based on \cite{RenMM2021}.
    }
    \label{fig:clique0}
\end{figure}

\paragraph{New approximate counting algorithms for $k$-Dominating set.}

The fastest algorithm for $k$-dominating set detection and exact counting is by
Eisenbrand and Grandoni \cite{eisenbrand2004complexity} who reduced this problem to counting triangles in a tripartite graph, with vertex set of size $n,n^{\lfloor k/2 \rfloor}$ and $n^{\lceil k/2 \rceil}$.
The running time of the algorithm is
$$O(n^{\omega(1,\lfloor k/2 \rfloor,\lceil k/2 \rceil)})\;.$$
The problem of detecting whether a graph contains a $k$-dominating set is in the class $W[2]$, and also has a tight conditional lower bound of $n^{k-o(1)}$
based on SETH \cite{puatracscu2010possibility}.
On the other hand, the previously mentioned algorithm of \cite{eisenbrand2004complexity} obtains a running time of $n^{k+o(1)}$ for $k\geq 8$, and if $\omega=2$ then also for $k\geq 2$.
The question of deciding whether a graph contains a $k$-dominating set when the graph is sparse was also studied in \cite{FischerKR24}.

The simple sampling algorithm for approximate counting of $k$-dominating sets is to sample $\tilde{O_\eps}(n^k/m)$ sets of size $k$ and check if they form a $k$-dominating set, where testing if a set of $k$ vertices is a $k$-dominating set can be done in $O(kn)$ time, and therefore, for constant $k$ we get an upper bound of $\tilde{O_\eps}(n^{k+1}/m)$ time.
Other than that, to the best of our knowledge, there is no known output-sensitive algorithm for $k$-dominating set approximate counting or detection, and we provide the first such algorithm:
\begin{restatable}[$k$-DS (Simplified)]{theorem}{DSSimplified}\label{thms:DS}
    Let $G$ be a given graph with $n$ vertices and let $k\geq 3$ be a fixed integer.
    There is a randomized algorithm that outputs an approximation $\htt$ for the number $m=n^t$ of $k$-dominating sets in $G$ such that
    $\Pr{(1-\eps)m \leq \htt\leq (1+\eps)m}\geq 1-1/n^2$, for any constant $\eps>0$. The running time is bounded by
\begin{align*}
        \tilde{O_\eps}(n^{\omega(1,(k-t-1)/2,(k-t+1)/2)})\;.
    \end{align*}
    If $k-t\geq 1.64$ (see footnote\footnote{When $1\leq \alpha\cdot (k-t-1)/2$, where $\alpha\geq 0.321334$ is the largest known value s.t. $\omega(1,\alpha,1)=2$, we have that $\omega(1,(k-t-1)/2,(k-t+1)/2)=k-t$.}) or if $\omega=2$, the running time is $\tilde{O_\eps}(n^{k}/m)$.
\end{restatable}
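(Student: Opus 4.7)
The plan is to derive the $k$-dominating set algorithm by invoking the main theorem (\Cref{thm4:main}) on the natural $k$-partite hypergraph representation of $k$-dominating sets, and then to implement an efficient measure-bounded hyperedge oracle by adapting the Eisenbrand--Grandoni reduction to unbalanced partition sizes. Set $\Vin_1=\cdots=\Vin_k=V$ and declare a hyperedge on $(u_1,\ldots,u_k)$ iff $\{u_1,\ldots,u_k\}$ $k$-dominates $G$, so the hyperedge count equals $m$ up to the standard correction for permutations and coincident coordinates, which is absorbed into lower-order overhead. Applying \Cref{thm4:main} reduces $(1\pm\eps)$-approximating $m$ to $T=(k\log n)^{O(k^3)}/\eps^{2k}$ queries to an \HO, each on a set $U=U_1\sqcup\cdots\sqcup U_k$ with $\mu(U)=\prod_i|U_i|\le R=\tilde{O}(n^k/m)=\tilde{O}(n^{k-t})$.

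For a single \HO query, I partition $[k]=S\sqcup\bar S$ and, following Eisenbrand--Grandoni, detect a colorful $k$-dominating set by a matrix product: form $A\in\{0,1\}^{U_S\times V}$ with $A[L,v]=1$ iff $v\notin\bigcup_{i\in S}N[L_i]$, and $B\in\{0,1\}^{V\times U_{\bar S}}$ analogously, where $U_S=\prod_{i\in S}U_i$ and $U_{\bar S}=\prod_{i\in\bar S}U_i$. Then $(AB)[L,R]=0$ iff $L\cup R$ dominates $V$, so the oracle answer reduces to checking whether $AB$ contains a zero entry. Building $A$ and $B$ costs $O(n(|U_S|+|U_{\bar S}|))$, which is dominated by the matrix-multiplication cost $n^{\omega(\log_n|U_S|,\,1,\,\log_n|U_{\bar S}|)}$ once $k-t\ge 1$.

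The hard part is to argue that we can always split $[k]$ so that both $|U_S|$ and $|U_{\bar S}|$ lie in $[n^{(k-t-1)/2},n^{(k-t+1)/2}]$, giving oracle time $\tilde O(n^{\omega(1,(k-t-1)/2,(k-t+1)/2)})$. For this, observe that $|U_i|\le n$ for each $i$, so moving any single index from $\bar S$ to $S$ changes $\log_n(|U_S|/|U_{\bar S}|)$ by at most $2$. Starting from $S=\emptyset$ (ratio $n^{-(k-t)}$) and ending at $S=[k]$ (ratio $n^{k-t}$), a discrete intermediate-value argument produces an $S$ with $|\log_n(|U_S|/|U_{\bar S}|)|\le 1$; combined with $|U_S|\cdot|U_{\bar S}|\le R=\tilde O(n^{k-t})$, both sides lie in the claimed range. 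Multiplying the per-query cost by $T=\tilde O_\eps(1)$ yields the stated bound $\tilde{O_\eps}(n^{\omega(1,(k-t-1)/2,(k-t+1)/2)})$.

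For the second claim, it suffices to show $\omega(1,(k-t-1)/2,(k-t+1)/2)=k-t$ whenever the condition in the footnote holds (or $\omega=2$). In that regime the middle dimension is dominated by the smaller outer dimension through fast rectangular matrix multiplication, so the exponent collapses to the output size $1+(k-t+1)/2$ combined with the trivial $a+b=k-t$ lower bound from $\log_n|U_S|+\log_n|U_{\bar S}|$; standard manipulations using $\omega(1,\alpha,1)=2$ for $\alpha\ge 0.321334$ then give the bound $n^{k-t}=n^k/m$, matching the naive sampling baseline. The correctness and success probability $1-1/n^2$ follow directly from \Cref{thm4:main} together with a union bound over $T=\tilde O(1)$ oracle calls, each of which is deterministic given $A$ and $B$.
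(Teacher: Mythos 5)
Your overall plan matches the paper's: cast $k$-DS as hyperedge approximate counting, implement the \HO{} via an unbalanced variant of the Eisenbrand--Grandoni matrix-product reduction, and balance the partition of $[k]$ into two index blocks. Two points differ enough to be worth flagging.

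First, the reduction. The paper does \emph{not} use the simple ``set $\Vin_i=V$ and count all dominating tuples'' route; it constructs an auxiliary graph $H'$ on $V\times[k]$ and proves (Proposition~\ref{ds:reduction}) that the number of \emph{special} (colorful and distinct) $k$-DS in $H'$ is exactly $k!\cdot m$, and the oracle explicitly filters non-distinct tuples. You instead allow degenerate tuples as hyperedges and assert that their contribution is ``lower-order.'' This is in fact true (for $j<k$, a dominating $j$-set extends to $\binom{n-j}{k-j}$ many $k$-dominating sets while lying inside at most $\binom{k}{j}$ of them, so the number of $j$-dominating sets is $O(m/n^{k-j})$, and hence the degenerate tuple count is $O(m/n)$), but you state it as obvious when it is the exact issue the paper handles by the ``special'' construction. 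Your route is arguably cleaner, at the price of a nontrivial counting estimate that you skip.

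Second, the running-time argument has a real gap. You argue that a greedy/intermediate-value choice of $S\subseteq[k]$ ensures $||U_S|/|U_{\bar S}||\le n$ (correct), and then claim ``both sides lie in $[n^{(k-t-1)/2},n^{(k-t+1)/2}]$.'' The lower bound need not hold (the product can be strictly less than $n^{k-t}$), and even if it did, it would only give $\omega((k-t+1)/2,1,(k-t+1)/2)$, which is \emph{larger} than the stated exponent, not smaller. What is actually needed is: writing $a=\log_n|U_S|\le b=\log_n|U_{\bar S}|$, the two constraints $a+b\le k-t$ and $b-a\le 1$ together with the convexity-based balancing estimate (the paper's \Cref{thm:MM balance is slower}: $\omega(x-\eps,y+\eps,z)\le\omega(x,y,z)$) imply $\omega(a,1,b)\le\omega\bigl(\tfrac{k-t-1}{2},1,\tfrac{k-t+1}{2}\bigr)$. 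You never invoke this, and simple monotonicity of $\omega$ in each argument does not close the gap. Finally, your explanation of the second claim (``the exponent collapses to the output size $1+(k-t+1)/2$'') is garbled --- the relevant output size is $n^{a+b}=n^{k-t}$, and the claim follows directly from the theorem's footnote, which you could simply cite.
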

Thus, we obtain an improvement over the simple $\tilde{O_\eps}(n^{k+1}/m)$ sampling algorithm.
\begin{restatable}[$3$-dominating set]{theorem}{thmDSthree}\label{thms:ds3}
    Let $G$ be a given graph with $n$ vertices.
There is a randomized algorithm that outputs an approximation $\htt$ for the number $m=n^t$ of $k$-dominating sets in $G$ such that
    $\Pr{(1-\eps)m \leq \htt\leq (1+\eps)m}\geq 1-1/n^2$, for any constant $\eps>0$. The running time is bounded by
\begin{align*}
        \tilde{O_\eps}(n^2 + n^{\psi})\;, &  & \text{where }\quad 
        \psi=\max\set{\omega(1,1-\frac{t}{3},\frac{2(t-1)}{3}),\omega(1,1,2-t)}\leq \omega(1,1,2(1-\frac{t}{3}))\;.
    \end{align*}
\end{restatable}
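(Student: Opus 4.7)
The plan is to apply the Main Theorem \Cref{thm4:main} with $k=3$ to the $3$-partite hypergraph $H$ whose parts $\Vin_1,\Vin_2,\Vin_3$ are three copies of $V(G)$, and whose hyperedges are the ordered triples $(v_1,v_2,v_3)\in V^3$ with $N[v_1]\cup N[v_2]\cup N[v_3]=V(G)$. Each unordered $3$-dominating set contributes a constant number of such hyperedges, so it suffices to $\apm$-approximate the hyperedge count of $H$. \Cref{thm4:main} reduces this task to $T=\ple$ calls to a measure-bounded hyperedge oracle, where each queried set $U=U_1\cup U_2\cup U_3\subseteq\Vin$ satisfies $\gs(U)=|U_1|\cdot|U_2|\cdot|U_3|\leq R=n^{3-t}\cdot\ple$.

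The main work is implementing this oracle efficiently. Writing $u_i=|U_i|=n^{a_i}$, the oracle must decide whether some $(v_1,v_2,v_3)\in U_1\times U_2\times U_3$ dominates $V(G)$. Generalizing the Eisenbrand--Grandoni reduction of $3$-dominating set to triangle detection to the asymmetric, size-bounded setting, I would pick a split $\{i\}\sqcup\{j,k\}=\{1,2,3\}$, form the $0/1$ matrices $A[v,w]:=\mathbb{1}[w\notin N[v]]$ (rows indexed by $U_i$) and $B[(v,v'),w]:=\mathbb{1}[w\notin N[v]\cup N[v']]$ (rows indexed by $U_j\times U_k$), and compute the integer product $A B^T$, whose $(v,(v',v''))$-entry equals the number of vertices left undominated by $\{v,v',v''\}$. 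Detecting a zero entry decides the oracle, and the cost is a rectangular matrix multiplication of dimensions $n^{a_i}\times n$ by $n\times n^{a_j+a_k}$, namely $n^{\omega(a_i,1,a_j+a_k)}$; the algorithm then chooses the split minimizing this quantity.

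The key step is the worst-case analysis: the adversary picks $(a_1,a_2,a_3)$ with $a_i\in[0,1]$ and $\sum_i a_i\leq 3-t$ to maximize the algorithm's per-query time. The main obstacle is bounding this value by $n^\psi$. Using the full permutation symmetry of $\omega(a,b,c)$, together with a second oracle implementation that is more efficient when some $a_i$ is close to $1$, I would split the polytope into two regimes: (i) an interior regime whose worst case is attained near the symmetric point $a_1=a_2=a_3=(3-t)/3$ and, after a careful use of the rectangular symmetry of $\omega$ and the pair-structure of $U_j\times U_k$, yields the exponent $\omega(1,1-t/3,2(t-1)/3)$; and (ii) a boundary regime where some $a_i$ saturates at $1$ (the extremal case being $(u_1,u_2,u_3)=(n,n,n^{1-t})$), yielding $\omega(1,1,2-t)$. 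Taking the maximum over these two regimes gives the claimed $\psi$.

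The remaining pieces are routine: the additive $n^2$ absorbs the one-time preprocessing of the closed neighborhoods of $G$ (reading the graph and storing adjacency information), and the high-probability guarantee $1-1/n^2$ follows directly from the $1-1/n^4$ guarantee of \Cref{thm4:main} after rescaling constants and a standard union bound over the $T=\ple$ oracle calls.
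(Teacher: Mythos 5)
Your high-level plan is the same as the paper's: reduce via \Cref{thm4:main} to a measure-bounded hyperedge oracle, implement that oracle by the Eisenbrand--Grandoni-style rectangular matrix product over a partition of $[3]$ into a singleton and a pair, and then bound the worst-case oracle cost over the polytope of admissible part sizes $(a_1,a_2,a_3)$. However, the part of the proposal that is supposed to carry the proof — the worst-case analysis — is replaced by hand-waving, and the one idea that actually makes it work is absent. There is no ``second oracle implementation'' in the paper, nor is one needed, and you do not need to ``split the polytope into regimes.'' After assuming WLOG $s_1\le s_2\le s_3$ with $\sum_i s_i = 3-t$ and fixing the single split $I_1=\{1,2\}$, $I_2=\{3\}$, the oracle cost is a \emph{one-variable} function $\omega(1,s_3,3-t-s_3)$ on the interval $s_3\in[1-t/3,\,1]$. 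The whole argument is that $\omega$ is convex, so this function is maximized at an endpoint; plugging in $s_3=1$ gives $\omega(1,1,2-t)$, and $s_3=1-t/3$ gives $\omega\bigl(1,\,1-t/3,\,2(3-t)/3\bigr)$. Your write-up never states this convexity step. It also reproduces the exponent $\omega(1,1-t/3,2(t-1)/3)$ verbatim from the theorem statement, but at $s_3=1-t/3$ the third slot is $3-t-s_3=2-2t/3=2(3-t)/3$, not $2(t-1)/3$ — so you have copied what looks like a sign typo rather than derived the value, which is a further signal that the calculation was not carried out.

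There is a second, smaller gap in the reduction. Your hypergraph takes \emph{all} ordered triples $(v_1,v_2,v_3)\in V^3$ with $N[v_1]\cup N[v_2]\cup N[v_3]=V$, with no distinctness constraint, so it also contains degenerate hyperedges arising from dominating sets of size one or two. The paper avoids this by counting only the ``special'' (colorful \emph{and} distinct) tuples in the auxiliary $k$-partite graph, which makes the hyperedge count exactly $k!\,m$. Your version can be rescued — each $2$-DS extends to $n-2$ distinct $3$-DS and each $3$-DS arises from at most three $2$-DS, so the degenerate contribution is $O(m/n)$ and the multiplier is $6\bigl(1+O(1/n)\bigr)$ — but the sentence ``each unordered $3$-dominating set contributes a constant number of such hyperedges, so it suffices to approximate the hyperedge count'' does not by itself justify this, because it ignores the hyperedges that do \emph{not} come from $3$-dominating sets. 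Both issues are repairable, but as written the proof is missing its central step (the convexity maximization over a one-dimensional domain) and glosses over the bookkeeping that makes the reduction well-posed.
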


\begin{figure}[ht]
    \centering
    \begin{subfigure}{0.48\textwidth}
        \centering
        \includegraphics[width=\textwidth]{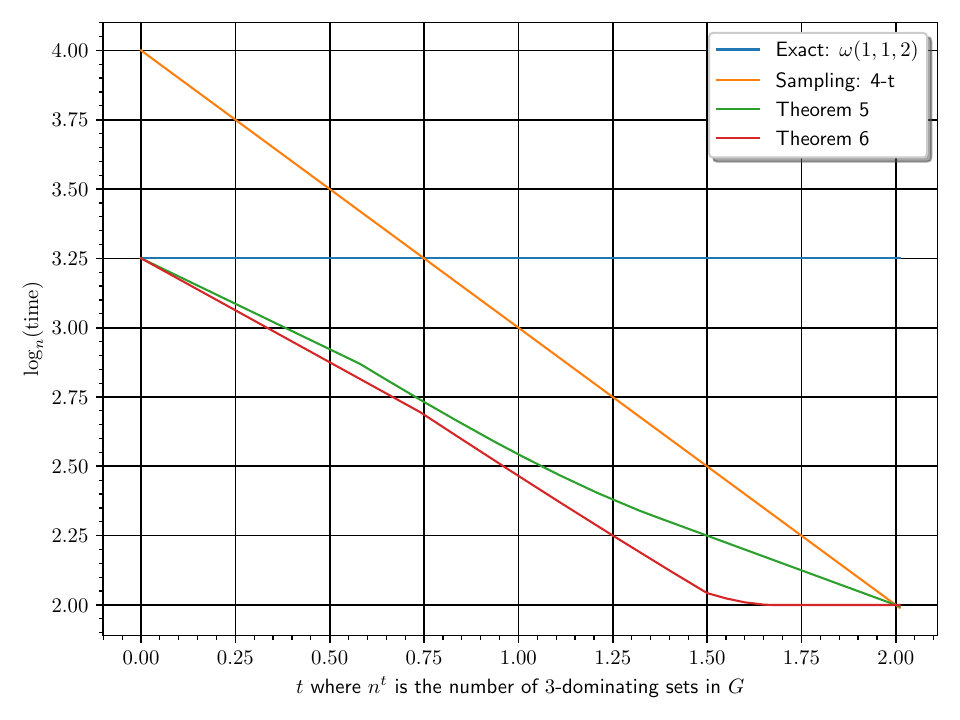}
        \caption{$k=3$}
        \label{fig2:DS4}
    \end{subfigure}
    \hfill
    \begin{subfigure}{0.48\textwidth}
        \centering
        \includegraphics[width=\textwidth]{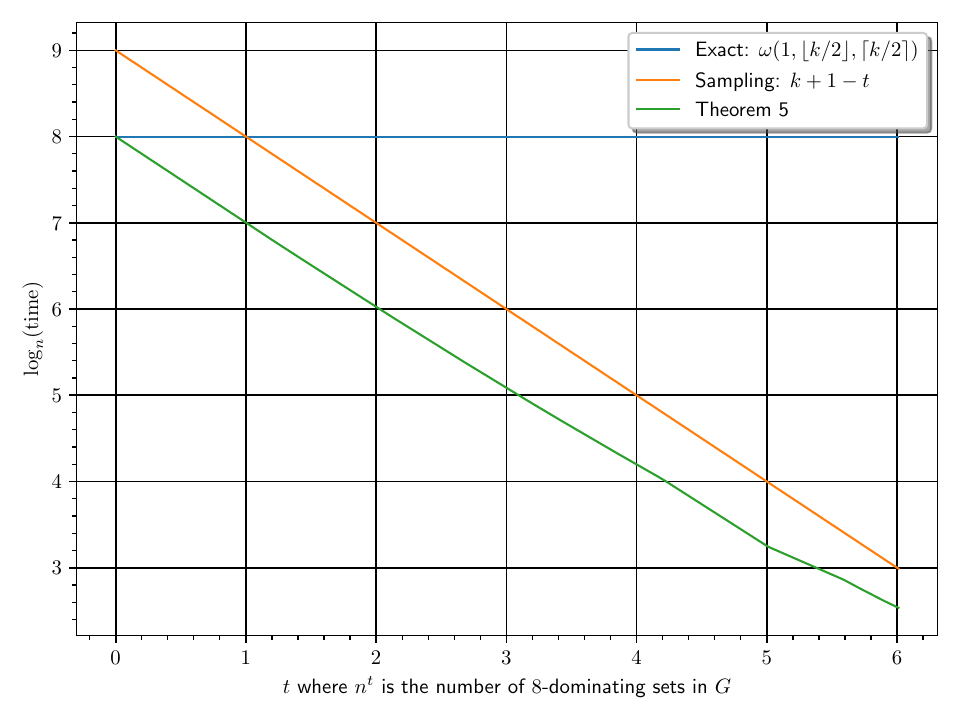}
        \caption{$k=8$}
        \label{fig2:DS10}
    \end{subfigure}
    \caption{The above figures illustrate the comparison between our work and prior work for $k=3$ and $k=8$ in the $k$-DS problem. The blue line depicts the running time of \cite{eisenbrand2004complexity}, which does not depend on $m$. The orange line depicts the running time of the sampling algorithm. 
    The green line depicts our running time specified in \Cref{thms:DS}, and the red line depicts our running time specified in \Cref{thms:ds3}. }
    \label{fig:ds0}
\end{figure}

\paragraph{New approximate counting algorithms for $k$-sum.}
In the $k$-sum problem, we are given a set of $n$ numbers and we are asked to decide if there exists a $k$-tuple of numbers that sum to zero.
This problem can be deterministically reduced to a ``colorful'' version, where we are given $k$  sets $A_1,\ldots,A_k$ of $n$ numbers each, and we are asked to decide if there exists a $k$-tuple of numbers, one from each set, that sum to zero.
Moreover, the reduction preserves the number of solutions, and if the original problem had $m$ solutions, then the colorful version has $m\cdot k!$ solutions.
The problem of $k$-sum detection is one of the most studied problems in fine-grained complexity, and it is conjectured that for every constant $\eps>0$
it does not have a $O(n^{\ceil{k/2}-\eps})$ time algorithm (see e.g. \cite{abboud2013exact,vassilevska2009finding,vsurvey}).

The $k$-sum problem also has a simple sampling algorithm that samples $\tilde{O_\eps}(n^k/m)$ $k$-tuples of numbers and checks if they sum to zero, but to the best of our knowledge, there is no known output-sensitive algorithm for $k$-sum approximate counting.

We provide the first output-sensitive algorithm for $k$-sum approximate counting, for $k\geq 3$.
\begin{restatable}[$k$-Sum (Simplified)]{theorem}{sumSimp}\label{thms:sum}
    Let $A$ be a set of $n$ integers, and let $k\geq 3$ be a fixed integer.
    There is a randomized algorithm that outputs an approximation $\htt$ for the number $m=n^t$ of $k$-sum tuples in $A$ such that
    $\Pr{(1-\eps)m \leq \htt\leq (1+\eps)m}\geq 1-1/n^2$, for any constant $\eps>0$. The running time is bounded by $\tilde{O_\eps}(n+\min\set{n^{(k-t+1)/2},n^{\ceil{k/2}}})$.
\end{restatable}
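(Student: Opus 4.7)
The plan is to invoke \Cref{thm4:main} with a fast implementation of \HO tailored to $k$-sum. First, I would cast the instance in the $k$-partite framework by setting $\Vin_i=A$ for every $i\in[k]$; the implicit hyperedges are the $k$-tuples $(a_1,\ldots,a_k)\in A^k$ with $\sum_i a_i=0$, which only inflates the witness count by a fixed constant factor of $k!$. Then $\gs(\Vin)=n^k$, so \Cref{thm4:main} produces a $\apm$-approximation (with failure probability at most $1/n^4$) using $T=\poly(\log n)/\eps^{2k}$ queries, each on some $U=U_1\cup\cdots\cup U_k$ satisfying $\prod_i|U_i|\leq R=\frac{n^k}{m}\cdot\poly(\log n)/\eps^{2k}=n^{k-t}\cdot\poly(\log n)/\eps^{2k}$.

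Next, I would implement the measure-bounded \HO via balanced meet-in-the-middle. Given a query $U_1,\ldots,U_k$ with $u_i:=|U_i|$, partition $[k]$ into $I\sqcup\bar I$, form the sum sets $S_I=\{\sum_{i\in I}a_i:a_i\in U_i\}$ and $S_{\bar I}=\{\sum_{i\in\bar I}a_i:a_i\in U_i\}$, sort them, and scan for $s\in S_I$ with $-s\in S_{\bar I}$; the running time is $\tilde O(P_I+P_{\bar I})$ where $P_I=\prod_{i\in I}u_i$ and $P_{\bar I}=\prod_{i\in\bar I}u_i$. To choose $I$, I would apply longest-processing-time-first on $\{\log u_i\}$: sort the $u_i$ in decreasing order and repeatedly assign the next index to the side with currently smaller log-load. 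A short LPT analysis shows that if $\log u_j$ is the last value placed on the heavier side, then that side's log-load is at most $\tfrac12(\log R+\log u_j)\leq \tfrac12(\log R+\log n)$, hence $\max(P_I,P_{\bar I})\leq\sqrt{R\cdot n}$.

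Multiplying the per-query cost $\tilde O(\sqrt{Rn})=n^{(k-t+1)/2}\cdot\poly(\log n)/\eps^k$ by the polylogarithmic $T$ yields total running time $\tilde O_\eps(n^{(k-t+1)/2})$, to which I add $O(n)$ for reading the input. Finally I would fall back to the classical unconditional meet-in-the-middle $\tilde O(n^{\lceil k/2\rceil})$ algorithm whenever it is cheaper, giving the advertised bound $\tilde{O_\eps}(n+\min\{n^{(k-t+1)/2},n^{\lceil k/2\rceil}\})$; the success probability $\geq 1-1/n^2$ follows by amplifying \Cref{thm4:main}'s $1-1/n^4$ guarantee (or simply using it as is).

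The main obstacle is the balancing step: one needs the split to guarantee $\max(P_I,P_{\bar I})\leq\sqrt{R\cdot n}$ rather than some weaker $R^{c}$ with $c>1/2$, and for this the key observation is that only a single ``largest'' factor (of size at most $n$) can cause imbalance, while the remaining factors can always be distributed in a near-balanced fashion by LPT. Everything else --- enumerating and sorting $S_I$ and $S_{\bar I}$, and the reduction from monochromatic to colorful $k$-sum --- is standard.
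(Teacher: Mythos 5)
Your proposal is correct and takes essentially the same route as the paper: reduce to colorful $k$-sum, invoke \Cref{thm4:main}, and implement the measure-bounded \HO via meet-in-the-middle with a balanced bipartition of $[k]$ whose log-loads differ by at most $\log n$, yielding $\max(P_I,P_{\bar I})\leq\sqrt{Rn}$. The only cosmetic differences are that the paper makes the arrays $A_i$ explicitly disjoint via an offset trick (rather than simply setting $\Vin_i=A$) and phrases the balancing step as a greedy element-transfer argument rather than LPT, but these are mathematically equivalent and the per-query bound, the polylog query count, and the fall-back to the classical $\tilde O(n^{\lceil k/2\rceil})$ algorithm all match.
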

The next theorem provides a better bound for smaller value of $k$. 
\begin{restatable}[$3$-Sum]{theorem}{sumThree}\label{thms:sum3}
    Let $A$ be a set of $n$ integers.
There is a randomized algorithm that outputs an approximation $\htt$ for the number $m=n^t$ of $3$-sum tuples in $A$ such that
    $\Pr{(1-\eps)m \leq \htt\leq (1+\eps)m}\geq 1-1/n^2$, for any constant $\eps>0$. The running time is bounded by $\tilde{O_\eps}(n+n^{2(1-t/3)})$.
\end{restatable}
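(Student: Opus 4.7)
The plan is to invoke the main theorem (\Cref{thm4:main}) with $k=3$ on a $3$-partite $3$-uniform hypergraph that encodes the colorful version of $3$-SUM. First I would reduce to the colorful setting in the standard way, creating three labeled copies $A_1,A_2,A_3$ of $A$ and counting zero-sum triples $(a_1,a_2,a_3)\in A_1\times A_2\times A_3$; this preserves the count up to an easily-corrected constant factor and runs in $O(n)$ time. Let $G$ be the $3$-partite $3$-uniform hypergraph on $A_1\sqcup A_2\sqcup A_3$ whose hyperedges are the zero-sum triples, so $G$ has $\Theta(m)=\Theta(n^t)$ hyperedges. Applying \Cref{thm4:main} reduces the task to answering $T=\polylog{n}/\eps^{6}$ queries to a hyperedge oracle $\BBB$, each on a set $U=U_1\cup U_2\cup U_3$ satisfying $\gs(U)=\abs{U_1}\cdot\abs{U_2}\cdot\abs{U_3}\leq R$, where $R=(n^3/m)\cdot T=n^{3-t}\cdot\tilde{O}_\eps(1)$.

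To implement $\BBB$ I would preprocess once by hashing $A_1,A_2,A_3$ in expected $O(n)$ time, and share this preprocessing across all oracle calls. On a query $(U_1,U_2,U_3)$ of sizes $u_1,u_2,u_3$, relabel so that $u_1\leq u_2\leq u_3$, build a hash table for $U_3$ in expected $O(u_3)$ time, and then for every pair $(a_1,a_2)\in U_1\times U_2$ look up $-a_1-a_2$ in the hashed $U_3$ in expected $O(1)$ time. The query is answered positively iff some such lookup succeeds, in expected $\tilde{O}(u_1u_2+u_3)$ time.

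The key calculation is that the measure constraint $u_1u_2u_3\leq R$, combined with the ordering $u_1\leq u_2\leq u_3$, forces $u_1u_2\leq R^{2/3}$: if $u_3\geq R^{1/3}$ then $u_1u_2\leq R/u_3\leq R^{2/3}$, and otherwise $u_1\leq u_2\leq u_3<R^{1/3}$ gives $u_1u_2<R^{2/3}$. Combined with $u_3\leq n$, each oracle call runs in $\tilde{O}(R^{2/3}+n)$ time; summed over the $T=\polylog{n}/\eps^{6}$ calls and substituting $R=n^{3-t}\cdot\tilde{O}_\eps(1)$, the total running time becomes $\tilde{O}_\eps(n+n^{2(1-t/3)})$. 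The $1-1/n^2$ success probability is immediate from the stronger $1-1/n^4$ guarantee of \Cref{thm4:main}. The main technical point is exactly the $R^{2/3}$ estimate above: it is the measure bound on the product $u_1u_2u_3$ (rather than any per-part size bound), together with always enumerating over the two smallest parts, that enables the output-sensitive improvement; without this one could only bound per-call time by the trivial $\tilde{O}(n^2)$ in the worst case $u_1=u_2=n$, $u_3=R/n^2$.
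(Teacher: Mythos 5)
Your proposal is correct and follows essentially the same route as the paper: both reduce to colorful $3$-SUM, apply the main theorem with $k=3$, and implement the measure-bounded oracle by pairing the two smallest parts (your hash-table lookup versus the paper's sort-and-merge are equivalent up to log factors). The key calculation $u_1u_2\leq R^{2/3}$ from $u_1u_2u_3\leq R$ with $u_1\leq u_2\leq u_3$ is exactly the paper's observation that $s_1+s_2\leq \tfrac{2}{3}(3-t)$, so the analysis matches.
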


\begin{figure}[ht]
    \centering
    \begin{subfigure}{0.48\textwidth}
        \centering
        \includegraphics[width=\textwidth]{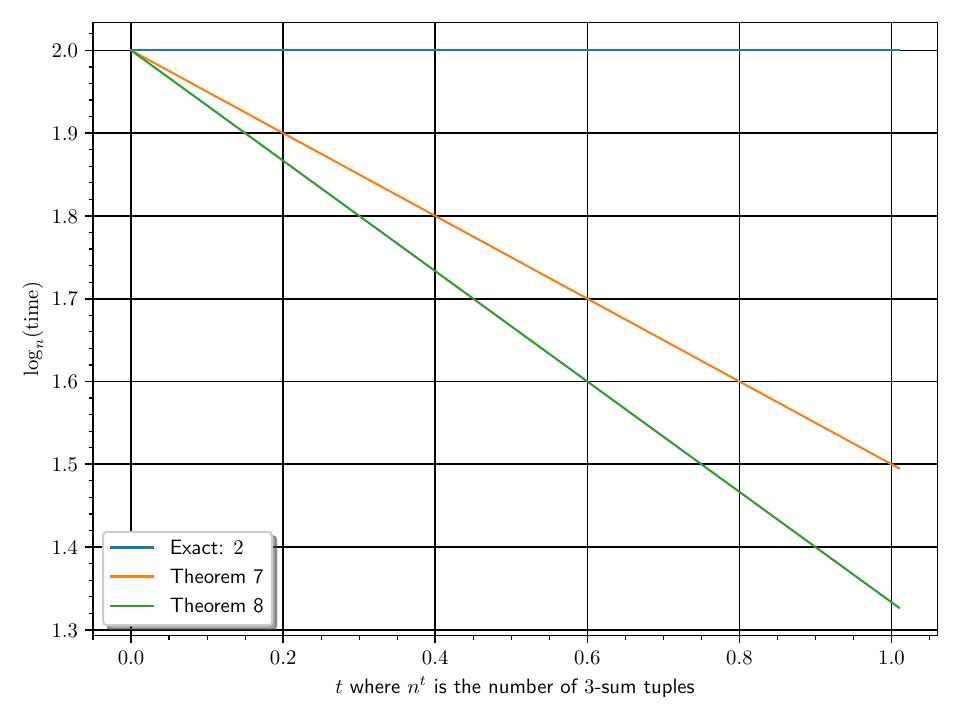}
        \caption{$k=3$}
        \label{fig2:sum3}
\end{subfigure}
    \hfill
    \begin{subfigure}{0.48\textwidth}
        \centering
        \includegraphics[width=\textwidth]{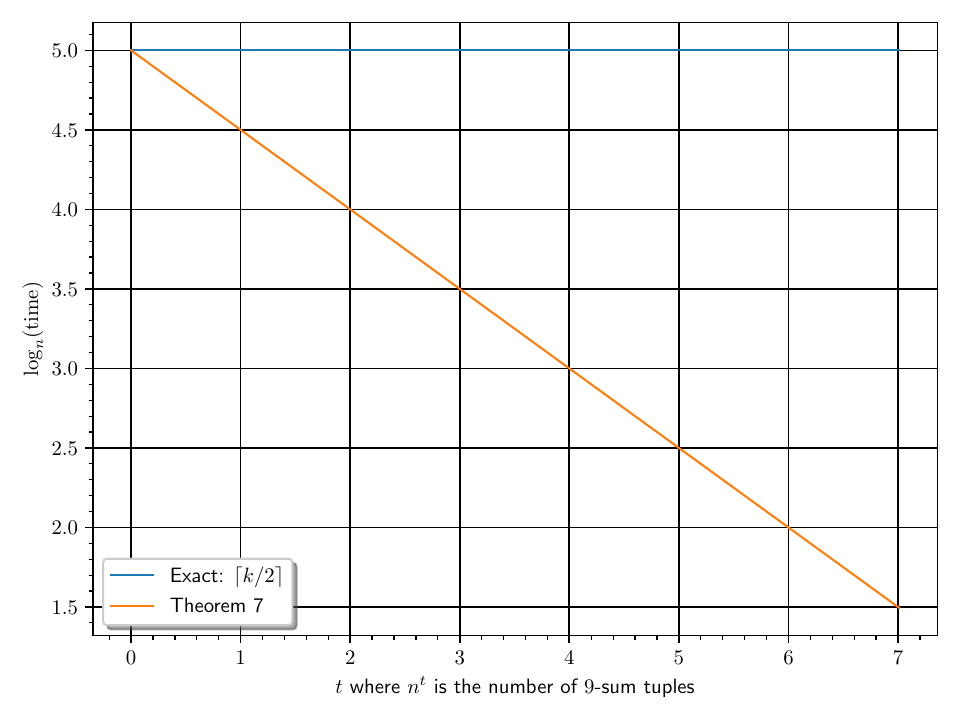}
        \caption{$k=9$}
        \label{fig2:sum9}
    \end{subfigure}
    \caption{The above figures illustrate the comparison between our work and prior work for $k=3$ and $k=9$ in the $k$-sum problem. The blue line depicts the folklore running time of $n^{\ceil{k/2}}$, which does not depend on $m$. 
    The orange line depicts the running time of the sampling algorithm.
    The green line depicts our running time specified in \Cref{thms:sum}, and the red line depicts our running time specified in \Cref{thms:sum3}.}
\end{figure}

\paragraph*{Additional Related Work.}

The problem of detecting or approximately counting subgraphs has also been explored within the framework of sublinear algorithms. In the standard sublinear model \cite{parnas2002testing,kaufman2004tight}, an algorithm may make three types of queries: \begin{enumerate*}[label=(\roman*)] \item degree queries, which return a vertex's degree, \item adjacency queries, which indicate whether two vertices are adjacent, and \item $i$th neighbor queries, which return the $i$th neighbor of a given vertex. \end{enumerate*} A stronger, augmented model \cite{aliakbarpour2018sublinear} allows a fourth type of query that yields a uniformly random edge.

In the general case, detecting a subgraph $H$ requires $\Omega(m)$ queries because the entire input must be read. Consequently, sublinear algorithms focus on settings with many copies of $H$ (denoted $t$), aiming to devise algorithms whose complexity decreases as $t$ increases.

For instance, for approximate counting the number of triangles \cite{eden2017approximately} establishes query time of 
$\Theta\bigl(\frac{n}{t^{1/3}} + \min\set{m,\frac{m^{3/2}}{t}}\bigr)$, with running time of $O(\frac{n}{t^{1/3}} + \frac{m^{3/2}}{t})$, 
and \cite{EdenRS20} 
provides an $O\bigl(\frac{n}{t^{1/k}} + \frac{m^{k/2}}{t}\bigr)$ time and query complexity algorithm for approximate counting of a $k$-clique. 
In the augmented model, there are results for general subgraphs $H$ \cite{assadi2018simple,Fichtenberger2020SamplingAS,Biswas2021TowardsAD}.
For approximate counting a general subgraph $H$ 
\cite{assadi2018simple} presents a sublinear-time algorithm in $\tilde{O}\bigl(\frac{m^{\rho(H)}}{t}\bigr)$ time, and $\tilde{O}\bigl(\min\set{m,\frac{m^{\rho(H)}}{t}}\bigr)$ query complexity, where $\rho(H)$ denotes the fractional edge-cover number of $H$, with a matching lower bound on the number of queries for odd length cycles.
For sampling a uniform copy of $H$ in the augmented model, \cite{Fichtenberger2020SamplingAS} presents a sublinear-time algorithm with the same time and query complexity as \cite{assadi2018simple}, and a matching lower bound for query complexity of sampling almost uniform $k$-cliques. 

~\\ Our work focuses on the regime where sublinear query complexity is unattainable, for example when $\frac{m^\rho(H)}{t}\geq m$. We provide algorithms that depend on $n$ the number of vertices, while being output-sensitive. The techniques we use are very different. 

\subsection{Technical Overview }
\paragraph{The Framework.}
Our algorithm is based on the following very general framework (consider the $k$-clique counting problem).
We sample (induced) subgraphs, check whether they contain at least one $k$-clique using a detection oracle, and use this information to approximate the number of $k$-cliques in the graph.
To analyze the complexity of the algorithm, we need to bound the number of queries to the oracle and the time it takes to answer each query.
The question of approximating the number of $k$-cliques in a graph, or more generally the number of hyperedges in a $k$-partite hypergraph, while minimizing the number of queries to the oracle, was previously studied in \cite{bhattacharya2024faster,DellLM22,DellLM24}, which roughly showed that the number of queries is $O_k(\polylog{n}/\eps^2)$.
However, the size of the sets that are queried is not taken into consideration in the above work.
Therefore, because there is no known $k$-detection oracle that is faster than an exact counting algorithm, the obtained algorithm for $k$-clique approximate counting is not faster than the exact counting version.
In a recent work \cite{DellLM24}, the authors provided a lower bound on the total running time of the algorithm for approximately counting the number of hyperedges in a $k$-partite hypergraph using a detection oracle, where every query on a set $U$ has cost that is proportional to $\abs{U}$.

We develop a new framework for approximately counting the number of hyperedges in a $k$-partite hypergraph, using a detection oracle which is sensitive to the number of hyperedges in the graph.
This allows us to provide a new algorithm for approximating the number of $k$-cliques in a graph $G$ with $n$ vertices, assuming the graph has $m$ $k$-cliques, which is faster than both the matrix multiplication approach and the na\"ive sampling algorithm, and also provides a negative dependency on the number of $k$-cliques.

To do so, we restrict the size of the sets that are queried with respect to some measure $\gs$, which we define as follows.

\paragraph*{The Measure $\gs$.}
For a $k$-partite hypergraph $G$ with vertex set $V=V_1\sqcup V_2\sqcup\ldots\sqcup V_k$, and a set of hyperedges $E$ containing $m$ hyperedges, we define $\gs(U)$ by $\gs(U)\triangleq \prod_{i=1}^{k}\abs{U\cap V_i}$.
To get some intuition on this measure, we look at the na\"ive sampling algorithm that samples $k$ vertices at random and queries the oracle on the sampled vertices to learn whether they form a hyperedge or not.
By querying the oracle $\TO{\frac{n^k }{m\eps^2}}$ times, we can approximate the number of hyperedges in $G$.
The number of queries is very large, and the measure $\gs$ of the queried set is constant.
On the other hand, in \cite{DellLM22,DellLM24,bhattacharya2024faster} they achieve an algorithm with a polylogarithmic number of queries, while querying sets with possibly maximum measure of $\gs(V)$.

In comparison, our approach gives a  polylogarithmic number of queries, while only querying sets $U$ that satisfy $\gs(U)\leq \gs(V)/m$,
which is comparable to the number of queries of the na\"ive sampling algorithm.
In other words, we switch from $\TO{n^k/m}$ many queries of constant measure to  a polylogarithmic number of queries and of measure at most $\TO{n^k/m}$.
Note that for the case of $k$-cliques, one could implement a detection oracle that answers queries on sets of $k$ vertices in constant time:
Given a set $U\subseteq V$ with $\gs(U)\leq M$, one could trivially implement the detection oracle in time $O(k^2\cdot M)$, by iterating over all possible $k$-tuple of vertex sets hyperedges in $G$ and counting the number of $k$-tuples that form a hyperedges.
However, to benefit from the new framework, we must provide a faster implementation of the detection oracle, which means it has to be \emph{sublinear} in the measure of the set.
To this end, we show that if the detection oracle takes time $T$ to answer the query on the entire vertex set $V$, then for every set $U\subseteq V$ with $\gs(U)\leq \gs(V)/N$, the detection oracle can be implemented on $U$ in time $O(T/N^{\Omega(1)})$.

~\\To conclude, we provide a new algorithm for approximating the number of $k$-cliques in a $k$-partite graph $G$ with $n$ vertices, assuming the graph has $m$ $k$-cliques, where $k$ is a constant, by reducing the problem to implementing a fast detection oracle on sets $U$ with $\gs(U)=\TO{\gs(V)/m}$, where the oracle is queried at most $\ple$ times.
We emphasize that we do not require that the queried sets have smaller cardinality, and in fact we query sets of size $\Omega(n)$.

In addition, we show that if our detection oracle is optimal, then so is the approximate counting algorithm we obtain.
In other words, assume there exists an algorithm $\AB$ that computes a $\apm$ approximation for the number of $k$-cliques in a graph $G$ that takes time $T$, assuming the graph has at least $m$ $k$-cliques.
Then, we can implement a detection oracle for $k$-cliques on sets $U$ with $\gs(U)=\TO{\gs(V)/m}$ in time $\TO{T}$.
This detection oracle can be used to obtain an approximate counting algorithm for $k$-cliques that takes time $\TO{T}$.
In other words, the framework we present is sufficiently strong to get an optimal approximate counting algorithm for the $k$-clique problem, and reducing the problem to implementing a detection oracle for $k$-cliques on sets $U$ with $\gs(U)=\TO{\gs(V)/m}$, is in some sense inherent.

\paragraph*{The Sampling Lemma.} Recall that our algorithm is based on the following very general framework, which samples (induced) subgraphs, checks whether they contain at least one $k$-clique, and use this information to approximate the number of $k$-cliques in the graph.
In what follows we explain according to which distributions we sample the subgraphs, and how these distributions are related to the measure $\mu$.
Let $G=(V,E)$ be a $k$-partite hypergraph with $m$ edges.
We define the following asymmetric sampling, which is tight in a sense we explain shortly.
For every vector $P\in[0,1]^k$, with $P=(p_1,p_2,\ldots,p_k)$, we define the weight of the vector $P$ by $w(P)\triangleq \prod_{i=1}^{k}p_i$.
We sample a subset of vertices $V[P]$ using the vector $P$ by adding to $V[P]$ every vertex $v\in V_i$ independently with probability $p_i$.
The sampling lemma states that if $G$ has at least $\Omc[m\cdot \polylog{n}]$ hyperedges, then there exists a vector $P$ with $w(P)\leq 1/m$, such that the probability that $G[V[P]]$ contains a hyperedge is at least $\Omc$.
This is the best that we can hope for, as the expected number of hyperedges in $G[V[P]]$ is $m\cdot w(P)=\TO{1}$.
In fact it provides us with something stronger. For every vertex $v$ with degree at least $\Lambda\cdot \polylog{n}$, there exists a vector $P$ with $w(P)\leq 1/\Lambda$, such that the probability that $v$ is non-isolated in $G[V[P]]$ is at least $(\eo)^k$. Moreover, for every vertex $v$ with degree less than  $\Lambda$, and any vector $P$ with $w(P)\leq 1/\Lambda$, we have that the probability that $v$ is non-isolated in $G[V[P]]$ is at most $1/4^k$.
This allows us to distinguish between vertices with high degree and vertices with low degree, which is a crucial step in our algorithm, on which we elaborate later.

While the space of possible vectors $P$ is infinite, we show that it suffices to consider only a finite subset of $\log^k (4n)$ vectors. In particular, it suffices to consider only the set of vectors $P=(p_1,p_2,\ldots,p_k)$ such that $p_i=2^{-j}$ for every $j\in\zrn{\log n + 1}$. We call such a vector a \emph{simple} vector. An important observation is the following connection between $\gs$ and the Sampling Lemma: For any sampling vector $P$, the set $V[P]$ has measure $\gs(V[P])=\gs(V)\cdot w(P)$.

A similar version of this sampling lemma was used in \cite{DellLM22} together with a detection oracle for hyperedge making only a polylogarithmic number of queries. The number of queries was later improved in \cite{DellLM24}.    However, both of these approaches don't lead to queries on subgraphs of reduced measure, and therefore are not applicable for our use.
Following the approach of \cite{CEW24} do lead to queries on subgraphs of reduced measure, but there a stronger detection oracle is used, which returns the degree of each vertex.

\paragraph*{Implementation of the Detection Oracle.}
An additional crucial step in our algorithm is how to implement the detection oracle on sets $U$ obtained by sampling subset of vertices from $V$ using a vector $P$ with $w(P)\leq 1/m$, and therefore with $\gs(U)=\TO{\gs(V)/m}$.
Moreover, how can we do so in time that is sublinear in the size of $\gs(U)$?
Intuitively, how can we benefit from the fact that the set $U$ is partitioned into $k$ parts of possibly very different sizes?

For example for $k$-cliques, to do so, we modify a well-known reduction from $k$-clique detection to triangle detection \cite{nesetril}.
The original reduction from $k$-clique detection starting from an $n$-node forms a new auxiliary graph whose vertices are those $k/3$-tuples that are also $k/3$ cliques (if $k/3$ is not an integer, there are three types of nodes: $\lceil k/3\rceil$, $\lfloor k/3\rfloor$ and $k-\lceil k/3\rceil - \lfloor k/3\rfloor$-cliques).
Then two $k/3$-cliques (i.e. nodes in the auxiliary graph) are connected by an edge if and only if together they form a $2k/3$ clique.

Our reduction is similar but with a twist. We start with a $k$-partite graph on parts $U_1,\ldots,U_k$ and instead of creating nodes corresponding to cliques of size $k/3$, we consider different sized cliques which have nodes from a pre-specified subset of the $U_i$s.
In other words, first a sampling vector is chosen, and then the sets $U_i$ are partitioned into $3$ parts, were the partition attempts to keep all parts of the same size, or at least minimize the difference between the sizes of the parts.

With a similar framework in mind, we provide implementations of a detection oracle also for $k$-dominating set, and $k$-sum, and show that the running time of the detection oracle on sets $U$ with $\gs(U)=\TO{\gs(V)/m}$ is sublinear in $\gs(U)$.

This simple idea can also be applied to the $k$-dominating set problem, and $k$-sum problem, because the optimal detection oracle for these problems follows the same structure as the detection oracle for $k$-cliques.
In other words, both problems have an algorithm that partitions the set $[k]$ into $2$ sets $I_1$, and $I_2$, and constructs the sets $W_1$ and $W_2$ where $W_1=\bigtimes_{j\in I_1}U_j$, and $W_2=\bigtimes_{j\in I_2}U_j$.
In the language of $k$-cliques, we partitioned the set $[k]$ into $3$ sets, and constructed the sets $W_1$, $W_2$, and $W_3$, and removed from $W_j$ any element that is not a $\abs{I_j}$-clique.
We conclude that any problem with an algorithm with such a structure can benefit from our framework.
Another example of a problem that can benefit from our framework is the $k$-cycle problem, as seen in \cite{CEW24}.

We elaborate on how to get a fast detection oracle on sets $U$ with $\gs(U)=\TO{\gs(V)/m}$, where the running time is sublinear in the size of $\gs(U)$, in \Cref{sec:reduction}.

\paragraph{Approximate Counting Hyperedges Using Oracle Detection.}
The second part of this paper is dedicated to proving \Cref{thm4:main} -- How to approximate the number of hyperedges in a $k$-partite hypergraph $G=(V,E)$ with $n$ vertices and $m$ hyperedges using a detection oracle, where only sets $U$ with $\gs(U)\leq \TO{n^k/m}$ are queried, and the oracle is queried at most $\ple$ times.
We highlight the techniques and tools we use in what follows.

\paragraph{The Recursive Algorithm.} We now explain how these detection oracles are used.
The recursive algorithm is an algorithm that takes as input a set of vertices $V$, of a $k$-partite hypergraph $G$, and computes a $\apm$ approximation for the number of hyperedges in $G$, using a detection oracle.
Given an algorithm that produces an approximation $\htt$ that falls within the interval $m\cdot \apm$ with probability of at least $2/3$, we can get an algorithm that produces an approximation $\htt$ such that $\Pr{\htt=m\apm}\geq 1-1/n^4$, by taking the median of $10\log n$ independent executions of the algorithm.
Therefore, we focus on getting an algorithm that succeeds with probability of at least $2/3$.
We also assume that the algorithm is given a coarse approximation
for $m$ the number of hyperedges in $G$, and the algorithm is required to output an approximation $\htt$ such that $\htt=m\cdot\apm$ with probability of at least $2/3$.
This assumption can be removed, and we explain how to do so later on.

The main goal of the recursive algorithm is to compute a $\apm$ approximation for the number of hyperedges in a $k$-partite hypergraph $G=(V,E)$ with $n$ vertices and $m$ hyperedges, by querying the oracle only $\TO{1}$ times, while only querying sets $U$ that satisfy $\gs(U)\leq \TO{n^k/m}$.
Recall that we say that a hypergraph is $k$-partite if the vertex set $V$ is partitioned into $k$ disjoint sets $V_1,V_2,\ldots,V_k$, and every hyperedge $e\in E$ intersects every vertex set $V_i$ in exactly one vertex.

A tempting (but unsuccessful) approach for obtaining a $\apm$ approximation for $m$ is to sample a subset of vertices $U$ by sampling each vertex $v\in V$ independently with probability $p$, and to compute an approximation $\htt$ for $m$ based on $m_U$ the number of hyperedges in the subhypergraph $G[U]$.
This approach fails because the probability that a hyperedge $e_1$ is in $G[U]$ is not independent of the probability that another hyperedge $e_2$ is in $G[U]$, when $e_1$ and $e_2$ share a vertex.
Moreover, the variance of the random variable $m_U$ is of order of $O(mp\cdot \Delta)$, where $\Delta$ is the maximum degree of a vertex in $G$.
In other words, $\Var{m_U}=O(\Exp{m_U}\cdot \Delta)$.
Using this observation, we can explain the main goal of our recursive algorithm.
We say that a vertex $v$ is $\Lambda$-heavy if its degree is at least $\Lambda$, and we say that it is $\Lambda$-light otherwise.
We say that a hyperedge is $\Lambda$-heavy if it contains at least one $\Lambda$-heavy vertex, and it is $\Lambda$-light otherwise.
The recursive algorithm approximates the heavy hyperedges and the light hyperedges separately, and outputs the sum of the approximations.
To approximate the number of $\Lambda$-heavy hyperedges, it finds a superset of the $\Lambda$-heavy vertices $\vl$ and approximates the number of $\Lambda$-heavy hyperedges, which are the hyperedges that contain at least one vertex from $\vl$.
The procedure for finding the heavy vertices is called $\FindHalgB$, and the procedure for approximating the number of heavy hyperedges is called $\Cialg$. Both procedures heavily relay on the detection oracle, as they do not have any  other way to access the hyperedges of the graph.

After finding the heavy vertices and approximating the number of heavy hyperedges, the algorithm removes the set $\vl$ from the graph, and approximates the number of hyperedges in the subhypergraph $G[V\setminus \vl]$, which contains only $\Lambda$-light vertices.
To approximate the number of light hyperedges, it samples a subset of vertices $U$ by sampling each vertex $v\in V\setminus \vl$ independently with probability $p=1/2$, and approximates the number of hyperedges in $G[U]$.
To approximate the number of hyperedges in the new sampled induced hypergraph $G[U]$, the algorithm recursively approximates the number of $\Lambda'$-heavy and light hyperedges in $G[U]$, with $\Lambda'=\Lambda\cdot p^k$.

These two steps are repeated recursively until the graph becomes sufficiently small and has only a polylogarithmic number of hyperedges. 
At this point, the maximum degree of the graph is at most polylogarithmic in $n$, and the dependencies between the vertices are limited. 
Let $W$ denote the set of vertices of the graph $G$ at this point.
We switch to the algorithm presented in \cite{DellLM22} that we refer to as $\algPrev(W,\eps)$ (detailed in \Cref{thm:dlm}), which approximates the number of hyperedges in the graph $G[W]$ by querying the oracle at most $O(\log^{3k+5} n/\eps^2)$ times, while querying sets $U$ with $\gs(U)=\gs(W)$.
We show that at this point, the number of remaining vertices in $W$ is sufficiently small, so that $\gs(W)\leq \gs(V)/m$.
The algorithm switches to this algorithm as soon as the heaviness threshold $\Lambda$ is smaller or equal to $1$.
We also note that using a faster algorithm for this case would not result in an improved running time, and therefore is not considered.

~\\The key observation of this recursive framework is that the two steps have the opposite effect on the variance of the random variable $m_U$.
Removing the heavy vertices in each step before sampling a new graph ensures that the variance of the random variable $m_U$ is not too large, and that we get a good estimation with probability $1-1/\polylog{n}$ at each step.
Sampling a new graph increases the variance of the random variable $m_U$, but allows us to work with a smaller graph.
The reason that we need to work with a smaller graph is that we gradually decrease the heaviness threshold, and finding the $\Lambda$-heavy vertices becomes more expensive as $\Lambda$ decreases.
To make sure that the cost of every level in the recursion is at most the cost of the previous level, we need to sample a subhypergraph with fewer vertices.

~\\ On a more intuitive level, the recursive algorithm tries to split the vertices into buckets by their degree, and approximates the sum of the degrees of each bucket separately.
Splitting into buckets here is helpful because given a set of vertices $S$ that have degree within the interval $[d,2d]$, one can approximate the sum of the degrees of vertices in $S$ by approximating the degree of only $O(\polylog{n})$ many vertices from $S$. This is done by approximating the average degree of the vertices in $S$, and multiplying it by $\abs{S}$.
While the sum of the degrees and the number of hyperedges in the graph, which include at least one vertex from $S$, are not identical, they are related. By applying a few additional techniques, we can also approximate the number of these hyperedges using a similar method.
So if we could partition the vertices into buckets, then we could approximate the degree of only $O(\polylog{n})$ vertices from each bucket, and get an approximation for the number of hyperedges in the graph.
However, constructing the buckets is not trivial, and could be too expensive.
Therefore, we avoid constructing these buckets explicitly, and we instead build them ``on the fly'' as follows.
We first construct the heaviest bucket which contains the set of all $\Lambda_0=m\cdot \frac{\eps}{\log n}$-heavy vertices. 
\newcommand{\vli}[1]{V_{\Lambda_{#1}}}
Let $\vli{0}$ be the vertices that are in the first bucket.
We then compute an approximation $\htt_0$ for the number of hyperedges that intersect this bucket, which means that they contain at least one vertex from this set, and remove the set $\vli{0}$ from the $V$.
Computing the approximation is done by computing an approximation for the average degree of the vertices in $\vli{0}$, and multiplying it by $\abs{\vli{0}}$.
We then sample a new vertex set $U_1$ that contains every vertex $v\in V-\vli{0}$ independently with probability $p=1/2$.
In the following iteration, we construct the next bucket that contains the set of all $\Lambda_1=\Lambda_0\cdot 2^{-k}$-heavy vertices, and use it to
compute an approximation $\htt_i$ for the number of hyperedges that intersect this bucket. As before, we do so by approximating the average degree of the vertices in this set, and multiplying it by $\abs{\vli{i}}$.
We continue in this manner for $\log n$ steps, and output the approximation $\sum_{i=0}^{\log n}\htt_i\cdot 2^{i}$.
The key observation here is that instead of constructing the buckets explicitly and then sampling $O(\polylog{n})$ vertices from each bucket,
we can construct the first bucket, and then sample half of the vertices and then construct the next bucket, and so on.

The formal property that we prove is that if a bucket is ``important'', i.e., initially at least $\Omega(m\cdot \eps/\polylog{n})$ hyperedges intersect it,
then the approximation $\htt_i$ is a good approximation for the number of hyperedges that intersect this bucket, with probability of at least $1-1/\polylog{n}$, which means we can use a union bound over all (important) buckets to show that the sum of the approximations is a good approximation for the number of hyperedges in the graph.

~\\To the best of our knowledge, this framework was used first in \cite{DellL21}
for the significantly simpler case of bipartite graphs (i.e. $k=2$), and in  
\cite{tvetek2022approximate} for approximately counting the number of triangles in a graph ($k=3$).   
In \cite{CEW24} the running time for approximating the number of triangles was improved, where the paper also extends this framework for approximate counting of $k$-cycles in a graph.
\paragraph{Challenges (Compared to the Previous Work).}
There are several key differences between our work and the previous work of \cite{CEW24}.
First, as previously discussed, we use the sampling lemma in a different way.
Moreover, in \cite{CEW24} the paper uses an access to a special stronger type of oracle, that takes a subset of vertices $U$ as input and outputs the degree of \emph{every vertex} in $U$ over the subhypergraph $G[U]$.
Moreover, this oracle can exactly compute the degree of every set of $n^\alpha$ vertices, over the entire hypergraph $G$, or any subhypergraph $G[U]$, in time $O(n^{2 + o(1)})$. Recall that $\alpha\geq 0.321334$ \cite{moreasym} is the largest known value s.t. $\omega(1,\alpha,1)=2$.
In contrast, we only assume we have access to a detection oracle that takes a subset of vertices $U$ as input and informs us whether $G[U]$ contains a hyperedge or not.

\sloppy{This raises two new challenges in the implementations of the two procedures $\FindHalg$ and $\Cialg$ which lie at the heart of the recursive algorithm.}
The $\FindHalg$ procedure takes as input a subset of vertices $U$ and a heaviness threshold $\Lambda$, and with probability of at least $1-1/n^4$ outputs a superset $\vl$ of all vertices of degree at least $\llow=\tfrac{\Lambda}{4^k\cdot \cclog}$ in $G[U]$, that contains all vertices of degree at least $\Lambda$ in $G[U]$.
In \cite{CEW24}, this procedure involved sampling a subset of vertices $U_1,U_2,\ldots,U_r$, for $r=\polylog{n}$, according to some distribution, and adding to $\vl$ every vertex $v$ that was non isolated (had a positive degree) at least $\tau$ times in $G[U_i]$, for every $i\in[r]$.
The challenge here is that this method relied on a stronger oracle that could retrieve the entire set of vertices of non-zero degree in each sample. However, in our work, we can find the non-zero degree vertices only when the number of such vertices is very small.
That is, we develop a deterministic tool to find the set $S$ of all non-zero degree vertices over $G[U]$ using $O(\abs{S}\cdot k\log n)$ queries to the oracle. Therefore, we must ensure that the distribution we use to sample the vertices $U_i$ is such that the number of non-zero degree vertices in $G[U_i]$ is very small in sufficiently many samples.

The $\Cialg$ procedure takes two subsets of vertices $U$ and $\vl$, together with a heaviness threshold $\Lambda$ and a precision parameter $\eps$, and with probability of at least $1-1/n^4$, outputs an approximation for the number of hyperedges in $G[U]$ that contain at least one vertex from $\vl$. 
We denote this set of edges by $E_U(\vl)$ and its cardinality by $m_U(\vl)$.
Computing an $\apm$ approximation for $m_U(\vl)$ using the stronger oracle, can be used by sampling a subset $S$ of $r=\polylog{n}$ from $\vl$ uniformly at random, and computes the exact degree of each vertex in $S$ over $G[U]$ using the stronger oracle, which gives us an estimation $\hat{d}_S$ for the average degree of vertices in $S$ over $G[U]$. Using $\hat{d}_S$ one can get a coarse approximation for $m_U(\vl)$ by computing $\hat{d}_S\cdot \abs{\vl}$.
This coarse approximation can be refined using additional tools.

In our case, with access only to the weaker oracle, even computing the approximate degree of a single vertex is not trivial, and requires new tools.
To approximate the degree of a vertex $v$ in a $k$-partite hypergraph $G$, we use the following recursive approach. Define a $(k-1)$ partite hypergraph $G'$, where the vertex set is $V'=V\setminus V_\ell$, where $V_\ell$ is such that $v\in V_\ell$, and the hyperedges of $G'$ are defined as follows.
For every hyperedge $e$ in $G$, if $e$ contains $v$, then $e\setminus\set{v}$ is added to $G'$.
We then apply our algorithm to approximate the number of hyperedges in $G'$
using a detection oracle. We can simulate a query $U$ on $G'$ by querying the oracle on $U\cup\set{v}$ on $G$.
The problem with this approach is that while $G$ had many hyperedges, and therefore we could get a $\apm$ approximation fast, the graph $G'$ might have very few hyperedges, and then computing the number of hyperedges in $G'$ would be too slow.
For that reason we slightly change our approach, and develop a new approximation algorithm called $\apx$ which takes as input a set of vertices $U$ and a heaviness threshold $\W$.
The algorithm either outputs a value $\htt$ which serves as an approximation for the number of hyperedges $m$ in $G[U]$ or outputs a message $\bbot$ indicating that the number of hyperedges in $G[U]$ is too small to complete the computation of a $\apm$ approximation for $m_U$.
In other words, this algorithm has the guarantee that its running time is bounded by the running time on graph with at least $\Omc[\W]$ hyperedges.
Moreover, if $\W\leq m_V/2$, then the algorithm outputs a value $\htt$ which is a $\apm$ approximation for $m_U$ with probability of at least $1-1/n^6$.

\section{Preliminaries}

\subsection{Notation}
\renewcommand{\EE}{E}
For the entire paper, $G$ will denote a $k$-partite hypergraph with vertex set $\Vin=(\Vin_1,\Vin_2,\ldots,\Vin_k)$, and hyperedges $\EE$. Each hyperedge $e\in\EE$ intersects every vertex set $\Vin_i$ in exactly one vertex.
For subset of vertices $A\subseteq \Vin$ we define the hyperedge set $E_A$ by
$E_A\triangleq \sset{e\in\EE\mid v\in e\implies v\in A}\;.$
In words, $E_A$ is a subset of hyperedges from $\EE$, where every $e\in\EE$ is also in $E_A$ if and only if all vertices that are in $e$ are also in $A$.
For two vertex sets $A,B$ we use $E_A(B)$ to denote the subset of hyperedges in $E_A$, where every $e\in E_A$ is also in $E_A(B)$ if and only if at least one vertex in $e$ is also in $B$.
We also $m_A(B)$ to denote the number of hyperedges in $E_A(B)$, where $m\triangleq m_{\Vin}(\Vin)$.
For a single vertex $v$, we use $d_U(v)$ to denote the number of hyperedges in $E_{U}(v)$, and $d(v)$ to denote $d_{\Vin}(v)$.
For a vertex set $U$, we define the induced subhypergraph $G[U]$ as a hypergraph with vertex set $U$ and hyperedge set $E_U$.
The set of non-isolated vertices over $G[U]$ is denoted by $\ZZ_U$.
That is, $\ZZ_U$ is the set of vertices in $U$ that have non-zero degree over $G[U]$.
We say that a vertex $v$ is $\Lambda$-heavy if $d(v)\geq \Lambda$, and $\Lambda$-light otherwise. We say that an hyperedge $e$ is $\Lambda$-heavy if it contains at least one $\Lambda$-heavy vertex, and $\Lambda$-light otherwise.
Note that a $k$-partite hypergraph $G$ with $m$ hyperedges contains at most $km/\Lambda$ vertices that are $\Lambda$-heavy, where this is sometimes referred to as the handshaking lemma.

We define a measure $\gs$ from every $U\subseteq \Vin$, by $\gs(U)\triangleq \prod_{i=1}^{k}|U\cap \Vin_i|$. We refer to $\gs(U)$ as the \emph{query measure} of $U$.
We say that an algorithm with oracle access has query measure at most $R$ if it only queries the oracle on sets $U$ for which $\gs(U)\leq R$.

Throughout the paper, $\eps$ denotes a fixed precision parameter, where we assume that $\eps\leq 1/4$. If this is not the case, we can always replace $\eps$ with $\min\set{\eps,1/4}$.
We also use $\eps'$ defined as $\eps'=\neweps$.

We use standard notation for the matrix multiplication exponents $\omega(a,b,c)$ to denote the smallest real number $w$ so that $n^a\times n^b$ by $n^b\times n^c$ matrices can be multiplied in $O(n^{w+\delta})$ time for all $\delta>0$. As is standard in the literature, for simplicity, we omit $\delta$ in the running time exponents of our algorithms and instead just write $\omega(a,b,c)$. 
Equivalently, we also write $\MM{n^a,n^b,n^c}$ for the running time of the matrix multiplication algorithm that multiplies $n^a\times n^b$ by $n^b\times n^c$ matrices. That is $n^{\omega(a,b,c)}$ and $\MM{n^a,n^b,n^c}$ denote the same running time.

\subsection{Tools}

~\\We also distinguish between $\Lambda$-heavy vertices, which are vertices with degree at least $\Lambda$, and $\Lambda$-light vertices, which have degree at most $\Lambda$.
We say that a hyperedge is $\Lambda$-heavy if it contains at least one $\Lambda$-heavy vertex, and $\Lambda$-light otherwise.
Throughout the paper we use the following observation.
A $k$-partite graph $G$ with $m$ edges, contains at most $km/\Lambda$ vertices that are $\Lambda$-heavy.

To compute the number of hyperedges in an induced subhypergraph $G[W]$ 
that has only $\polylog{n}$ hyperedges, we use the algorithm of \cite{DellLM22,bhattacharya2024faster}, which we refer to as $\algPrev(W,\eps)$.
This algorithm outputs a $\apm$ approximation for the number of hyperedges in the graph $G[W]$, while minimizing the number of queries to the oracle. However, it queries the oracle on set with maximum measure $\gs(W)$, and therefore we use this algorithm only when $\gs(W)\leq \gs(V)/m$.
\begin{theorem}[{\cite{DellLM22,bhattacharya2024faster}}]\label{thm:detect to count}
    Let $G=(V,E)$ be a $k$-partite hypergraph with $n$ vertices.
    There exists a randomized algorithm, denoted by $\algPrev(V,\eps)$.
    $\algPrev$ outputs $\hat{m}$ such that $\Pr{\hat{m}=m_V\apm}\geq 1-1/n^4\;.$
The algorithm makes $\frac{(k\log n)^{O(k)}}{\eps^2}$ queries to a CID-query oracle.
\end{theorem}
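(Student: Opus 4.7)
The plan is to approximate $m_V$ via a degree-based bucketing scheme applied recursively to a sequence of subsampled hypergraphs. First I would design two oracle-driven primitives: a \emph{find-heavy} procedure that, given a vertex set $U$ and threshold $\Lambda$, returns with probability $1-1/n^4$ a superset $\vl$ of all $\Lambda$-heavy vertices in $G[U]$ that excludes every vertex of degree below $\Lambda/\cclog$; and a \emph{count-heavy} procedure that, given $\vl$, approximates the number of hyperedges of $G[U]$ incident to $\vl$. Find-heavy draws $\polylog{n}$ simple sampling vectors $P$ with $w(P)\approx 1/\Lambda$ and uses the CID oracle to decide, via a binary-search style enumeration of non-isolated vertices, which vertices of $U$ appear non-isolated in $G[V[P]]$ sufficiently often. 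Count-heavy samples $\polylog{n}/\eps^2$ vertices of $\vl$ uniformly, approximates each sampled vertex's degree by recursing on the $(k-1)$-partite hypergraph obtained after fixing that vertex (whose CID oracle is simulated by inserting the fixed vertex into each query), and outputs the empirical average times $|\vl|$.

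Next I would iterate. Starting from a coarse factor-$2$ estimate of $m_V$ (obtained by a preliminary run at coarse precision), set $\Lambda_0=\Theta(m_V\eps/\log n)$, invoke find-heavy and count-heavy to obtain an estimate $\htt_0$ of the number of hyperedges incident to $\vli{0}$, strip $\vli{0}$ from $V$, subsample each remaining vertex independently with probability $1/2$ to form $U_1$, reduce the heaviness threshold to $\Lambda_1=\Lambda_0/2^k$, and recurse inside $G[U_1]$. After $O(\log n)$ levels the threshold drops below $1$; the surviving subhypergraph then contains at most $\polylog{n}$ hyperedges and can be finished by enumerating edges directly via guided CID queries. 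The final estimate is $\sum_i \htt_i\cdot 2^{ik}$, where the factor $2^{ik}$ undoes the cumulative subsampling from levels $1,\ldots,i$.

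The analysis rests on a per-level concentration argument and a union bound. Stripping $\Lambda_i$-heavy vertices before each subsampling step caps the surviving maximum degree by $\Lambda_i$, so the number of surviving edges $m_{U_i}$ incident to the next heavy bucket has $\Var{m_{U_i}}=O(\Exp{m_{U_i}}\cdot \Lambda_i)$; because we choose $\Lambda_i$ so that $\Lambda_i\ll \Exp{m_{U_i}}\cdot \eps^2/\polylog{n}$, Chebyshev yields a $(1\pm \eps/\log n)$ estimate of $\htt_i$ with failure probability $1/\polylog{n}$. A union bound over the $O(\log n)$ levels, combined with the $1-1/n^4$ guarantees of find-heavy and count-heavy, controls the total error, and the per-level multiplicative errors telescope to $\apm$. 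Each level issues $(k\log n)^{O(k)}/\eps^{2}$ oracle queries (for find-heavy, for the count-heavy sample, and for the recursive degree estimates), yielding the claimed aggregate bound.

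The principal obstacle will be implementing find-heavy with only the weak CID oracle. Unlike the stronger oracle used in \cite{CEW24}, one bit per query is returned, so one cannot directly list the non-isolated vertices of a random subhypergraph. My workaround is to argue from the sampling lemma stated in the overview that for a well-chosen sampling vector $P$ with $w(P)\approx 1/\Lambda$ the set $\ZZ_{V[P]}$ of non-isolated vertices is small with high probability, and then to enumerate $\ZZ_{V[P]}$ deterministically using $O(|\ZZ_{V[P]}|\cdot k\log n)$ CID queries via a partition-based binary search. Establishing the requisite upper tail bound on $|\ZZ_{V[P]}|$ reduces to the second half of the sampling lemma (the $1/4^k$ upper bound on the non-isolation probability of light vertices), and this tail bound is precisely what keeps the overall query budget polylogarithmic.
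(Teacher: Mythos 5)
The statement you are asked to prove is not proved in this paper at all; it is imported as a black box from Dell--Lapinskas--Meeks and Bhattacharya et al. (it is a re-statement of \Cref{thm:dlm}), and the paper invokes it only as the base case of its recursion when the heaviness threshold drops below $1$. Your proposal, by contrast, reconstructs essentially the paper's own recursive machinery for the \emph{new} \Cref{thm4:main}: $\FindHalgB$ via the sampling lemma and a partition-based enumeration of non-isolated vertices, $\Cialg$ via vertex-fixing into $(k-1)$-partite subproblems, subsample-and-recurse with threshold decay $\Lambda_{i+1}=\Lambda_i/2^k$, and a doubling wrapper to remove the coarse-estimate assumption. That scheme does yield a correct $\apm$ approximator, but it does \emph{not} establish the query bound claimed in \Cref{thm:detect to count}.

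The gap is quantitative and it is a real one. Your count-heavy primitive estimates the degree of each sampled heavy vertex by recursively solving an approximate-counting instance on a $(k-1)$-partite hypergraph, simulated through the CID oracle by inserting the fixed vertex into every query. Each such recursion spawns up to $\zeta=\Theta(\polylog{n}/\eps^2)$ further degree estimates at the next dimension down, and this branching is multiplied over $k$ dimensions and $O(\log n)$ threshold levels per dimension. The resulting query count is of the form $(k\log n)^{O(k^3)}/\eps^{2k}$ -- exactly the bound the present paper proves for $\apx$ in \Cref{thm:apx complexity} -- not the $(k\log n)^{O(k)}/\eps^{2}$ with a $k$-independent exponent on $\eps$ that the cited theorem asserts. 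Your remark that ``each level issues $(k\log n)^{O(k)}/\eps^2$ queries'' ignores that the degree-estimation subroutine is not a single oracle call but an entire recursive run whose cost compounds geometrically in $k$; the per-level accounting therefore underestimates the total by a factor exponential in $k$ in the $1/\eps$ exponent. The DLM/Bhattacharya proof does not take this route: it avoids the dimension-reducing vertex-fixing recursion entirely (which is the source of the $\eps^{2k}$), and instead achieves the stronger, $\eps^{-2}$ bound by a different variance-reduction technique on the full $k$-partite structure. So while your argument is a valid sketch of the \emph{measure-bounded} algorithm developed in this paper, it does not prove the statement as written, and in particular cannot be used to discharge the citation it is supposed to replace.
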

\paragraph*{Probabilistic Bounds.}
Let $V=V_1\sqcup\ldots\sqcup V_k$ be a set.
Let $V_p$ denote a random subset of $V$ where each element in $V$ is included independently with probability $p$.
We prove a concentration bound on the size of $V_p\cap V_i$, and on the size of $\gs(V_p)$, which as we previously mentioned is defined by $\gs(V_p)=\prod_{i=1}^k \abs{V_p\cap V_i}$.

\begin{claim}\label{claim:c2}
    \begin{align*}
        \Pr{\gs(V_p)\geq \gs(V)\cdot (p\cdot 6\log n)^k}\leq k/n^6\;.
    \end{align*}
\end{claim}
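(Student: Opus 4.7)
The plan is to combine a deterministic product-to-coordinate reduction with a union bound and a standard multiplicative Chernoff tail estimate on each binomial coordinate.

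First I would observe, purely deterministically, that if $\gs(V_p) \geq \gs(V) \cdot (6p\log n)^k$, then because both sides factor as products over the $k$ parts --- $\gs(V_p) = \prod_{i=1}^k \abs{V_p \cap V_i}$ and $\gs(V)\cdot (6p\log n)^k = \prod_{i=1}^k 6p\log n \cdot \abs{V_i}$ --- there must exist some index $i \in [k]$ for which $\abs{V_p \cap V_i} \geq 6p\log n \cdot \abs{V_i}$; otherwise every one of the $k$ non-negative factors on the left would be strictly smaller than the corresponding factor on the right, making the left-hand product strictly smaller than the right-hand one, a contradiction. This reduces the task, via a union bound over $i \in [k]$, to showing $\Pr{\abs{V_p \cap V_i} \geq 6p\log n \cdot \abs{V_i}} \leq 1/n^{6}$ for each $i$.

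For each $i$, the quantity $\abs{V_p \cap V_i}$ is a sum of $\abs{V_i}$ independent Bernoulli$(p)$ variables, hence binomial with mean $\mu_i = p\abs{V_i}$. I would then apply the multiplicative Chernoff bound $\Pr{X \geq t\mu}\leq (e/t)^{t\mu}$ with $t = 6\log n$, yielding $\Pr{\abs{V_p \cap V_i} \geq 6\log n\cdot \mu_i} \leq \brak{e/(6\log n)}^{6\log n\cdot \mu_i}$. In the regime $\mu_i \geq 1$ and once $n$ is large enough that $6\log n \geq 2e$, the base is at most $1/2$ while the exponent is at least $6\log n$, so the tail is at most $2^{-6\log n} \leq n^{-6}$. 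Summing the resulting $k$ coordinate-wise tails then yields the claimed bound $k/n^6$.

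The only step that requires real care is the Chernoff estimate itself, namely confirming that the exponent is large enough to beat $n^{-6}$ uniformly in $\mu_i$. The case $\mu_i \geq 1$ is routine and is the only regime in which the tail is non-trivial to begin with: when $6p\log n \cdot \abs{V_i} < 1$, the event reduces by integrality of $\abs{V_p \cap V_i}$ to $\abs{V_p \cap V_i} \geq 1$, whose probability is at most $\mu_i$ by a trivial union bound, and this is already negligible in any regime in which the claim is actually applied (where $p$ is at least inverse polylogarithmic and $\abs{V_i}$ polynomial in $n$). Everything else is routine book-keeping.
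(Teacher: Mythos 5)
Your approach matches the paper's: reduce the product inequality to ``some coordinate $\abs{V_p\cap V_i}$ exceeds $6p\log n\cdot\abs{V_i}$'', union-bound over the $k$ coordinates, and bound each coordinate tail by a multiplicative Chernoff estimate. You spell out the pigeonhole reduction and the specific Chernoff form more explicitly than the paper, which only says the per-coordinate tail ``follows by Chernoff's inequality'' and leaves the reduction step implicit in the union bound.

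The one place your proposal goes beyond the paper --- the case split on $\mu_i = p\abs{V_i}$ --- is also where a gap remains. Your Chernoff step requires $\mu_i\geq 1$; your integrality fallback requires $6\mu_i\log n<1$. That leaves the window $1/(6\log n)\leq\mu_i<1$ uncovered, and on part of it, roughly $\mu_i\lesssim 1/\log\log n$, the bound $\brak{e/(6\log n)}^{6\mu_i\log n}$ simply does not reach $n^{-6}$; a matching lower bound on $\Pr{\abs{V_p\cap V_i}=\lceil 6\mu_i\log n\rceil}$ shows the claim as literally stated is false there. Appealing to ``the regime in which the claim is applied'' is not a proof, and if $p$ were inverse-polylogarithmic and $\abs{V_i}$ polynomial as you suggest, then $\mu_i\geq 1$ already holds, so the fallback case you wrote would be vacuous. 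To be fair, this is not a defect you introduced: the paper's own proof does not acknowledge the small-mean regime at all, and the claim is only ever invoked with parameters that make $p\abs{V_i}$ large. Your write-up is a faithful and more self-aware rendering of what the paper actually establishes, but it does not (and, without an added hypothesis on $p\abs{V_i}$, cannot) close the hole the paper leaves open.
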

\begin{proof}
    We first prove that for every $i\in[k]$, it holds that
    \begin{align*}
        \Pr{\abs{V_p\cap V_i}\geq \abs{V_i}p\cdot 6\log n}\leq 1/n^6\;.
    \end{align*}
    This follows by Chernoff's inequality, as $\abs{V_p\cap V_i}$ is the sum of $n$ independent Bernoulli random variables with success probability $p$.
    By union bound over $i\in[k]$, we get that
    \begin{align*}
        \Pr{\gs(V_p)\geq \gs(V)\cdot (p\cdot 6\log n)^k}\leq k/n^6\;,
    \end{align*}
    which completes the proof.
\end{proof}

\newcommand{\clogA}{C}
\newcommand{\calc}[1]{\pgfmathparse{\clogA#1}\pgfmathprintnumber{\pgfmathresult}}
\newcommand{\clss}[1]{\log^{C{#1}} n}
\newcommand{\cgb}{\log^{(h^2)}n}
\newcommand{\cls}{\log^C n}
\newcommand{\clsm}{\log^5 n}
\newcommand{\clsa}{\log^4 n}
\newcommand{\Emh}{\Em{\mathrm{high}}}
\newcommand{\BEmh}{\overline{\Emh}}
\newcommand{\Emi}{\Em{\mathrm{ind}}}
\newcommand{\Emia}{\Em{\mathrm{ind}}(V,\Lambda)}
\newcommand{\Emib}{\Em{\mathrm{ind}}(U,\Lambda\cdot q)}
\newcommand{\Ems}{\Em{\mathrm{sample}}}
\newcommand{\Ebot}{\Em{\mathrm{\bot}}(V,\Lambda)}
\newcommand{\BEmia}{\overline{\mathcal{E}_{\mathrm{ind}}}(V,\Lambda)}
\newcommand{\BEmib}{\overline{\mathcal{E}_{\mathrm{ind}}}(U,\Lambda\cdot q)}
\newcommand{\BEms}{\overline{\mathcal{E}_{\mathrm{sample}}}}
\newcommand{\rPz}{\frac{4k\cdot \DD}{q\cdot  \K}}
\newcommand{\rPzB}{\frac{4k\cdot2^k \cdot \log^2 n}{\K}}
\newcommand{\rPzC}{\frac{5k\cdot2^k \cdot \log^2 n}{\Qd}}
\newcommand{\rPzk}{\frac{5k\cdot2^k \cdot \log^2 n}{\K}}
\newcommand{\II}{\mathcal{I}}

\section{The Recursive Algorithm}\label{sec:rec}
\paragraph{Organization.}
This section is dedicate to proving the following theorem, which captures the correctness guarantee that we use in \Cref{sec:complexity} for proving \Cref{thm4:main}. Informally, it shows that we can approximate the number of hyperedges given a good guess $L$ that bounds it.
\begin{restatable}[Approximating the number of hyperedges given a guess]{theorem}{LemTrunk}\label{lemma2:alg apx}
    For a vertex set $V$, a value $\W$, and a parameter $\eps$, there exists an algorithm $\apx(V,\W,\eps)$
    that produces either $\bbot$ or a value $\htt_V$ such that $\htt_V=m_V\apm$ with probability at least $1-1/n^6$.
    If $\W\leq m_V/2$, then the probability that the algorithm outputs $\bbot$ is at most $1/n^6$.
\end{restatable}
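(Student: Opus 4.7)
The plan is to realize $\apx$ as an iterative ``peeling'' algorithm that partitions hyperedges by degree buckets. I maintain a working vertex set $U_i$ (initially $U_0 = V$), a heaviness threshold $\Lambda_i$, and a running total $\htt_V$. The guess $\W$ sets the initial threshold $\Lambda_0 = \Theta(\W\cdot \eps/\polylog{n})$, calibrated so that whenever $\W \le m_V/2$ every degree-bucket that contributes more than $\Theta(\eps m_V/\log n)$ hyperedges is processed before the base case, and so that the per-level work matches the ``$\Omega(\W)$-hyperedge budget'' advertised in the overview.

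At each level $i$ I would do four things. First, call $\FindHalgB(U_i,\Lambda_i)$ to obtain $\vli{i}$, which with probability at least $1-1/n^8$ is a superset of the $\Lambda_i$-heavy vertices of $G[U_i]$ and contains no vertex of degree below $\Lambda_i/\polylog{n}$. Second, call $\Cialge(U_i,\vli{i},\Lambda_i,\eps')$ with $\eps' = \neweps$ to obtain $\htt_i$ that is a $(1\pm\eps')$-approximation to $m_{U_i}(\vli{i})$ with probability at least $1-1/n^8$. Third, add $2^{ik}\cdot \htt_i$ to $\htt_V$ (the factor $2^{ik}$ compensates for the cumulative sampling rate) and perform a ``budget check'': compare the size of $\vli{i}$, the accumulated estimate, and the number of oracle calls so far against what one would expect if $m_V \ge 2\W$, returning $\bbot$ if the check fails. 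Fourth, update $U_{i+1}$ by first deleting $\vli{i}$ from $U_i$ and then subsampling each remaining vertex independently with probability $1/2$, and set $\Lambda_{i+1} \gets \Lambda_i/2^k$ so that expected degrees scale matchingly. Iterate until $\Lambda_i < 1$, then finish by invoking $\algPrev(U_i,\eps')$ from \Cref{thm:detect to count} on the residual low-degree subhypergraph and adding the scaled result to $\htt_V$.

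For correctness I would proceed in three parts. A union bound over the $O(\log n)$ iterations and the single base-case call caps the total failure probability of the internal procedures at $1/(2n^6)$. Conditioned on success, decompose $m_V$ into the per-level contributions $c_i = m_{U_i}(\vli{i})$ plus the base-case residual; using $\Exp{m_{U_{i+1}}} = m_{U_i\setminus\vli{i}}/2^k$ together with the variance bound $\Var{m_{U_{i+1}}} = O(\Exp{m_{U_{i+1}}}\cdot \Lambda_{i+1})$ — valid precisely because $U_{i+1}$ contains only $\Lambda_i$-light vertices after peeling — a Chebyshev argument shows each important bucket is preserved to within a $(1\pm \eps/\log n)$ factor, and telescoping over $O(\log n)$ levels yields a $(1\pm\eps)$-approximation to $m_V$. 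The $\bbot$-safety clause in the regime $\W \le m_V/2$ follows from the same concentration: in that regime the partial sums of true contributions are consistent with the ``$m_V \ge 2\W$'' hypothesis underlying the budget check, so the check fires only on a bad event of probability at most $1/(2n^6)$.

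The main obstacle I expect is the coupling between $\FindHalgB$'s approximate-heaviness guarantee and the variance analysis of the sampling step. Because the detection oracle cannot reveal individual degrees, $\vli{i}$ is only guaranteed to contain vertices of degree in a $\polylog{n}$-wide band around $\Lambda_i$; this forces a gap of $\polylog{n}$ between consecutive thresholds, which in turn forces $\eps' = \Theta(\eps/\log n)$ rather than a constant fraction of $\eps$, and demands a careful calibration of the $\bbot$-trigger to avoid spurious aborts in the good-guess regime while still correctly aborting once $\W$ exceeds $m_V$ by enough to blow the time budget. Sorting out these interacting slack parameters — and showing that they fit together to give the approximation guarantee and the $\bbot$-safety simultaneously — is the technical crux.
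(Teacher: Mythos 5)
Your peeling scheme is the right inner loop (it is essentially the paper's $\arec$ unrolled), but collapsing the whole construction into a single pass loses two essential mechanisms that the paper layers on top, and without them the stated success probability cannot be reached.

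First, median amplification. Your correctness argument hinges on a Chebyshev bound for the subsampling step $U_{i+1} = (U_i\setminus\vli{i})[1/2]$, and Chebyshev can only give a $1-1/\polylog{n}$ failure probability per level: the variance bound $\Var{m_{U_{i+1}}} = O(\Exp{m_{U_{i+1}}}\cdot\Lambda_{i+1})$ is tight up to constants, and the additive budget you allocate is $\Theta(\W\eps/\polylog{n})$, so the second moment argument bottoms out at polynomial failure. Union-bounding over $O(\log n)$ levels gives at best $1-O(1/\polylog n)$, which is precisely the guarantee the paper proves for one run of $\arec$ (Lemma~\ref{lemma:simp}: probability $1-1/\log n$). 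The paper bridges the gap to $1-1/n^{\Omega(1)}$ by running $\arec$ independently $\Theta(\log n)$ times and taking the median (the algorithm $\crec = \algWrapDNP$). Your proposal has no analogue of this amplification step, and a ``union bound over the $O(\log n)$ iterations'' of the internal subprocedures does not substitute for it, because those subprocedures are not where the $1/\polylog n$-level failure comes from -- it comes from the sampling variance itself.

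Second, doubling on the guess $\W$. The inner estimator's guarantee is an approximation of the form $m_V(1\pm\eps/2)\pm O(\W\eps)$: the error is \emph{additive} in $\W$, not multiplicative in $m_V$. If $\W$ is much larger than $m_V$, a non-$\bbot$ output would not certify anything about $m_V$. The paper converts the additive guarantee into a multiplicative one by trying $\W, 2\W, 4\W,\ldots$ and accepting the first non-$\bot$ estimate $\htt_i$ that also satisfies $\htt_i\geq \W\cdot 2^i$ -- that self-consistency check is what forces the additive term to be a small fraction of the output, hence of $m_V$. Your ``budget check'' gestures at something similar, but you never specify what the check is or prove that it distinguishes $m_V\geq 2\W$ from the alternative; in particular, you describe it as a single test inside one pass, whereas the paper's mechanism is intrinsically a search over thresholds, and the two regimes ($\FindHalgB$ returning $\bot$ because the threshold is too low, versus the estimate being too small to certify) pull in opposite directions and must both be handled. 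Without this doubling layer, the first clause of the theorem -- that any non-$\bbot$ output is a $(1\pm\eps)$-approximation -- does not follow from your argument when $\W$ is a poor guess.

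A correct route is the paper's three-layer decomposition: prove the per-run guarantee for the recursive peeling algorithm (your loop) at probability $1-1/\log n$; amplify it to $1-1/n^{\Omega(1)}$ with a median of $\Theta(\log n)$ independent runs, also tracking the fraction of $\bot$s; and wrap the result in a doubling loop on $\W$ with the $\htt_i\geq\W\cdot 2^i$ acceptance test. You have the first layer essentially right; the other two are missing.
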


This section starts with an overview of our algorithm $\arec$, which is based on the recursive framework of \cite{tvetek2022approximate,CEW24}. See \Cref{lemma:simp} for a formal statement on the output of the algorithm.
We then explain how to implement the recursive algorithm using two randomized procedures, $\FindHalg(V,\Lambda)$ and $\Cialg(V,\vl,\Lambda)$, which find the set of $\Lambda$-heavy vertices $\vl$ and approximate $m_V(\vl)$, respectively.
In \Cref{ssec:bot} we show that the recursive algorithm $\arec$ outputs $\bot$ with probability at most $1/\log n$.
In \Cref{ssec:amplification} we prove \Cref{lemma2:alg apx}, where the algorithm $\apx$ specified in \Cref{lemma2:alg apx} can be viewed as an amplified version of the recursive algorithm $\arec$, that also employs a doubling technique to search for the right heaviness threshold.

~\\As promised, we start with an overview of the recursive algorithm  $\arec$.
We aim to get an algorithm for approximating the number of hyperedges in a $k$-partite ($k$-uniform) hypergraph, using a \HO, which is an oracle that detects whether a given set of vertices contains a hyperedge or not.

More formally, the input hypergraph $G$ has $k$ disjoint vertex sets $\Vin=\Vin_1,\ldots,\Vin_k$ and
a set of hyperedges $\EE$, where each hyperedge contains exactly $k$ vertices and intersects $\Vin_i$ in exactly one vertex, for each $i\in[k]$.
We denote the number of vertices by $n$ and the number of hyperedges by $m$.
The \HO takes as input a set of vertices $S\subseteq \Vin$. It outputs $1$ if there exists an hyperedge $e\in \EE$ such that $e\subseteq S$, and it outputs $0$ otherwise.
The algorithm holds the set of vertices $\Vin$ and has access to the \HO, but it can not access the hyperedges $\EE$ directly, apart from using this oracle.

We provide a randomized algorithm for outputting a value $\htt$ that is a $\apm$ approximation for $m$ \whp.
An important note that will be useful for the application of this approximation algorithm is that it only makes $\polylog{n}$ calls to the \HO,
while only querying the \HO with sets of vertices of $U$ for which $\prod_{i=1}^{k}\abs{U\cap \Vin_i}\leq \TO{n^k/m}$.
We define $\gs(U)\triangleq \prod_{i=1}^{k}\abs{U\cap \Vin_i}$, to denote this quantity.
We also note that instead of computing a $\apm$ approximation for $m$, we compute a $\apmp$ approximation for a smaller precision parameter  $\eps'\triangleq \frac{\min\set{\eps,1}}{4\log n}$. The value of $\eps'$ is set such that the interval $(1\pm \eps')^{2\log n}$ is contained in the interval $(1\pm \eps/2)$.

The algorithm $\arec$ takes as input two parameters: a set of vertices $V$ and a heaviness threshold $\Lambda$.
We assume we have some coarse estimate $\Lambda_0= \Theta(m/\polylog{n})$ for the number of hyperedges in the graph.
Later, we explain how to obtain such an estimate using doubling.
Recall that we say that a vertex is $\Lambda_0$-heavy if it has degree at least $\Lambda_0$, and otherwise we say that the vertex is $\Lambda_0$-light.
We say that a hyperedge is $\Lambda_0$-heavy if it contains at least one $\Lambda_0$-heavy vertex, and we say that the edge is $\Lambda_0$-light otherwise.
The algorithm approximates the number of $\Lambda_0$-heavy hyperedges and the number of $\Lambda_0$-light hyperedges separately.

Let $m_{\Lambda_0}$ denote the number of $\Lambda_0$-heavy hyperedges in the graph.
To approximate $m_{\Lambda_0}$, the algorithm finds a set of $\Lambda_0$-heavy vertices and uses it to compute an approximation $\htt_{\Lambda_0}$ for $m_{\Lambda_0}$.
To approximate the number of $\Lambda_0$-light hyperedges, the algorithm samples a subset of vertices $U$ from $V$, by choosing each vertex independently with probability $p=1/2$. Then it computes a $\apmp$ approximation for the number of hyperedges in the random induced subhypergraph $F=G[U]$ recursively, by invoking a recursive call $\arec(U,\Lambda_0\cdot p^k)$. Let $m_U$ denote the number of hyperedges in $F$, and let
$\htt_U$ denote the output of the recursive call.
The algorithm then outputs $\htt=\htt_{\Lambda_0}+\htt_U/p^k$ as the approximation for $m$ the number of hyperedges in $G$.

Our algorithm builds upon two procedures.
The first one, called $\FindHalg$, is given a set of vertices $V$ and a heaviness threshold $\Lambda$, and
finds a superset $V_\Lambda$ of the $\Lambda$-heavy vertices without any $\llow$-light vertices over $G$, for some $\llow= O(\Lambda/\polylog{n})$ 
The second one, called $\Cialg$, takes as input the previously computed set $V_\Lambda$ and approximates the number of hyperedges $m_V(\vl)$ 
which is the number of hyperedges in $G$ that contain at least one vertex in $V_\Lambda$.

\pagebreak
\noindent In \Cref{sec:count heavy,sec:find heavy}, we provide details on how to implement these procedures using the \HO.
Formally, the procedures that we use are the following.
The $\FindHalg$ procedure is:
\begin{restatable}[$\FindHalg{(V,\Lambda)}$]{blackbox}{reFHClean}\label{algo:find heavy bb1}
    Define $\llow\triangleq \Lambda/(4^k\cdot \cclog)$.
    \begin{description}
        \item[Input:] A subset of vertices $V$ and a heaviness threshold $\Lambda$.
        \item[Output:] A
            subset of vertices $V_\Lambda\subseteq V$, such that with probability at least $1-\frac{1}{n^4}$, for every $v\in V$:
            \begin{enumerate}
                \item If $d_V(v)\geq \Lambda$, then $v\in V_\Lambda$.
                \item If $d_V(v)< \llow$, then $v\notin V_\Lambda$.
            \end{enumerate}
    \end{description}
\end{restatable}

If the set of vertices $V_\Lambda$ outputted by the procedure $\FindHalg{(V,\Lambda)}$ indeed does not contain any $\llow$-light vertices, then it contains at most $k\cdot m_V/\llow$ vertices.
This follows from the fact that for any subset of vertices $V$ and any threshold $X$, the number of $X$-heavy vertices in $V$ is at most $k\cdot m_V/X$.
The procedure makes at most $O(\psi\cdot k\cdot m_V/\llow)$ queries to the \HO, where $\psi\triangleq 4^k\cdot k^2\log(k)\cdot\log^{k+2} n$ is a fixed parameter that depends on $k$ and $n$.
We also define a parameter $\zeta$ that will be set later.

We will use an additional procedure $\FindHalgB{(V,\Lambda)}$, which is a variant of $\FindHalg{(V,\Lambda)}$.
Informally, this procedure simulates the procedure $\FindHalg{(V,\Lambda)}$ while only allowing it to make $\BO{\psi\cdot \zeta}$ queries to the \HO.
If the simulation does not complete within $\BO{\psi\cdot \zeta}$ steps, or it does complete but the computed set $V_\Lambda$ contains more than $\zeta$ vertices, then the procedure $\FindHalgB(V, \Lambda)$ outputs $\bot$.
Otherwise, it outputs the set $V_\Lambda$ computed by the procedure $\FindHalg(V, \Lambda)$. These guarantees are summarize below.
\begin{restatable}[$\FindHalgB{(V,\Lambda)}$]{blackbox}{reFHBounded}\label{algo:find heavy bb bounded}
    Define $\llow\triangleq \Lambda/(4^k\cdot \cclog)$.
    \begin{description}
        \item[Input:] A subset of vertices $V$, a heaviness threshold $\Lambda$, and a threshold $\psi$.
        \item[Output:] Either $\bot$ or a
            subset of vertices $V_\Lambda\subseteq V$ containing at most $\zeta$ vertices,
            such that with probability at least $1-\frac{1}{n^4}$, for every $v\in V$:
            \begin{enumerate}
                \item If $d_V(v)\geq \Lambda$, then $v\in V_\Lambda$.
                \item If $d_V(v)< \llow$, then $v\notin V_\Lambda$.
            \end{enumerate}
    \end{description}
If $k\cdot m_V/\Lambda\leq \zeta$, then the probability of outputting $\bot$ is at most $1/n^4$.
\end{restatable}
Without an upper bound on the probability of outputting $\bot$, this procedure is useless, as it might output $\bot$ on every input.
To avoid this, we require that if $km_V/\llow \leq \zeta$, then the probability of outputting $\bot$ is at most $1/n^4$. Note that a $k$-partite hypergraph $G[V]$ with $m_V$ edges has no more than $k\cdot m_V/X$ vertices which are $X$-heavy, for every positive $X$. Thus, the condition $km_V/\llow \leq \zeta$ implies that the number of $\llow$-heavy vertices in $V$ is at most $\zeta$, and therefore that the number of $\Lambda$-heavy vertices in $V$ is at most $\zeta \cdot \llow/\Lambda$, which is also bounded by $\zeta$. Thus, the condition applies to smaller than $\zeta$ values of the size of the set of $\Lambda$-heavy vertices, and hence does not contradict the requirement that if the set of $\Lambda$-heavy vertices contains at least $\zeta$ vertices, then the procedure should output $\bot$.

~\\Next, the $\Cialg$ procedure is:
\begin{blackbox}[${\Cialg(V,V_{\Lambda},\llow)}$]\label{algo:count heavy}
    \begin{description}
        \item[Input:] A vertex set $V$, a vertex set $V_{\Lambda}\subseteq V$, a heaviness threshold $\llow$, and a precision parameter $\eps'$.
        \item[Output:]
            A value $\htt_{\Lambda}$, such that
            if every vertex $v\in V$ satisfies $d_V(v)\geq \llow$, then
            \begin{center}
                $\Pr{\htt_{\Lambda}=m_{V}(V_\Lambda)\cdot \apmp}\geq 1-\frac{1}{n^4}$.
            \end{center}
    \end{description}
\end{blackbox}

We now provide two implementations for the recursive algorithm, denoted by $\algCore{V,\Lambda}$ and $\algCorezB{V,\Lambda}$.
The reason we use two versions is that the first implementation uses the efficient procedure $\FindHalgB{(V,\Lambda)}$ to find a superset of the $\Lambda$-heavy vertices in $V$, while the second implementation is easier to analyze. We claim that if the first algorithm does not output $\bot$ and both algorithms use the same randomness, then they produce the same output.
\paragraph{Choosing the Parameters.}
We set the parameters as follows.
Let $\eps\in(0,1/2]$ be the precision parameter and let $\eps'=\neweps$.
We set $p=1/2$, $\Qd=\Qdval$,
$\zeta=k\cdot \cclog\cdot \frac{\Qd}{{\eps'}^2}\cdot \log^4 n$,
and $\psi=4^k\cdot k^2\log(k)\cdot\log^{k+2} n$.
We keep the specified symbols instead of their assigned value for readability.

~\\The algorithm $\algCorezNP$ is as follows.

\begin{algorithm}[H]
    \caption{$\algCorez{V,\Lambda}$}\label[algorithm]{alg:template recA}
    \setcounter{AlgoLine}{0}
    \KwIn{A vertex set $V$, a heaviness threshold $\Lambda$, and a precision parameter $\eps'$.}

    \medskip
    $V_\Lambda\gets \FindHalgB(V,\Lambda)$ \Comment*{\parbox[t]{3.5in}{$V_\Lambda$ is a superset of $\Lambda$-heavy vertices in $G[V]$ with no $\llow$-light vertices \whp.
}}

    \lIf{$V_\Lambda= \bot$}{\Return{$\bot$}}
\medskip

    \DontPrintSemicolon  \If{$\Lambda\leq 1$}{
        \Return $\algPrev(V,\eps')$; \hspace*{0.18in}$\triangleright\;$ \parbox[t]{3.5in}{ \texttt{Approximate the number of hyperedges in $G[V]$ using \Cref{thm:detect to count}.}}
    }
    \medskip

    $\htt_\Lambda\gets\Cialg(V,V_\Lambda,\tfrac{\Lambda}{2})$
    \Comment*{\parbox[t]{3.5in}{$\htt_\Lambda$ is a $\apmp$ approximation for $m_V(V_\Lambda)$ (the number of hyperedges intersecting $V_\Lambda$)}}
$U\gets V[p]-V_\Lambda$ \Comment*{\parbox[t]{3.5in}{$U$ is a random subset of $V-V_\Lambda$}}\label{line:5}
    \medskip

    $\htt_{U}\gets \algCore{U ,\Lambda\cdot p^k}$ \\
    \smallskip
    \Return{$\htt_{\Lambda}+\htt_{U}/p^k$}
\end{algorithm}

We emphasize that if one recursive call of the algorithm $\algCorezNP$ outputs $\bot$ then every preceding call, as well as the first call, also outputs $\bot$.

~\\The algorithm $\algCorezNPB$ is as follows.

\begin{algorithm}[H]
    \caption{$\algCorezB{V,\Lambda}$}\label[algorithm]{alg:template recB}
    \setcounter{AlgoLine}{0}
    \KwIn{A vertex set $V$, a heaviness threshold $\Lambda$, and a precision parameter $\eps'$.}

    \medskip

    $V_\Lambda\gets \FindHalg(V,\Lambda)$ \Comment*{\parbox[t]{3.5in}{$V_\Lambda$ is a superset of $\Lambda$-heavy vertices in $G[U]$ with no $\llow$-light vertices \whp}}
\DontPrintSemicolon  \If{$\Lambda\leq 1$}{
        \Return $\algPrev(V,\eps')$; \hspace*{0.18in}$\triangleright\;$ \parbox[t]{3.5in}{ \texttt{Approximate the number of hyperedges in $G[V]$ using \Cref{thm:detect to count}.}}
    }
    \medskip

    $\htt_\Lambda\gets\Cialg(V,V_\Lambda,\tfrac{\Lambda}{2})$
    \Comment*{\parbox[t]{3.5in}{$\htt_\Lambda$ is a $\apmp$ approximation for $m_V(V_\Lambda)$ (the number of hyperedges intersecting $V_\Lambda$)}}
$U\gets V[p]-V_\Lambda$ \Comment*{\parbox[t]{3.5in}{$U$ is a random subset of $V-V_\Lambda$}}\label{line2:5}
    \medskip

    $\htt_{U}\gets \algCorezB{U ,\Lambda\cdot p^k}$ \\
    \smallskip
    \Return{$\haT_{\Lambda}+\htt_{U}/p^k$}
\end{algorithm}

Informally, the guarantees of the algorithm $\algCorez{V,\Lambda}$ are as follows.
It outputs either $\bot$ or an approximation $\htt$ that falls inside the interval $m_V\apm\pm O_k(\log^3 n\cdot \Lambda/\eps')$ with probability at least $1-1/\log n$.
Moreover, if $\frac{k\cdot m_V}{\Lambda}\leq \frac{\zeta}{\log^4 n}$, then the algorithm outputs $\bot$ with probability at most $1/\log^3 n$.
The formal guarantees of the algorithm are specified in the following lemma.
\begin{restatable}[Guarantees for the $\arec$ Algorithm]{lemma}{LemCoreSimp}\label{lemma:simp}
    \sloppy{
    For every $\eps\in(0,1/4]$,
    the algorithm $\arec(V,\Lambda)$ outputs either $\bot$ or a value $\htt_V$ such that}
    \begin{align*}
         & \Pr{\arec(V,\Lambda)=m_V\cdot (1\pm \eps/2) \pm \Qd\cdot 2\Lambda/\eps'}\geq 1-\frac{1}{\log n}\;,
    \end{align*}
    where $\eps'=\neweps$, and $\Qd=\Qdval$.
    If $\frac{k\cdot m_V}{\Lambda}\leq \frac{\zeta}{\log^4 n}$, then the probability that the algorithm outputs $\bot$ is at most $2/\log^3 n$.
\end{restatable}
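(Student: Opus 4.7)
The plan is to proceed by induction on the recursion depth, which is bounded by $O(\log n)$ since $\Lambda$ is multiplied by $p^k = 2^{-k}$ per call until it drops below $1$. For the base case, when $\Lambda \leq 1$, the algorithm invokes $\algPrev(V, \eps')$, which by \Cref{thm:detect to count} returns $m_V \apmp$ with probability at least $1 - 1/n^4$; because $\eps' = \neweps$ was chosen so that $(1 \pm \eps')^{2 \log n} \subseteq (1 \pm \eps/2)$, this lies in $m_V (1 \pm \eps/2)$, and the additive term $\Qd \cdot 2\Lambda/\eps'$ is trivially absorbed.

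For the inductive step at a call $\arec(V, \Lambda)$ with $\Lambda > 1$, I would condition on three good events at this level. First, $\mathcal{E}_1$: $\FindHalgB(V, \Lambda)$ either returns $\bot$, in which case the algorithm terminates with $\bot$ (matching the first part of the conclusion), or returns a set $V_\Lambda$ that is a superset of the $\Lambda$-heavy vertices of $G[V]$ and contains no $\llow$-light vertex; by \Cref{algo:find heavy bb bounded}, the bad event has probability at most $1/n^4$. Second, $\mathcal{E}_2$: $\Cialg(V, V_\Lambda, \Lambda/2)$ returns $\htt_\Lambda = m_V(V_\Lambda) \apmp$ with probability at least $1 - 1/n^4$ conditional on a valid $V_\Lambda$, by \Cref{algo:count heavy}. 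Third, $\mathcal{E}_3$: the sampling $U = V[p] - V_\Lambda$ yields $m_U$ concentrated around $p^k(m_V - m_V(V_\Lambda))$; since every vertex of $V \setminus V_\Lambda$ has degree at most $\Lambda$ in $G[V]$, a direct covariance calculation gives $\Var{m_U} \leq O(k\Lambda \cdot p^k (m_V - m_V(V_\Lambda)))$, and Chebyshev's inequality yields $m_U/p^k = (m_V - m_V(V_\Lambda)) \apmp \pm O(\Lambda \cdot \polylog{n}/\eps')$ with probability at least $1 - 1/\log^3 n$.

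Conditioned on $\mathcal{E}_1 \cap \mathcal{E}_2 \cap \mathcal{E}_3$, I would apply the inductive hypothesis to $\arec(U, \Lambda p^k)$ to obtain $\htt_U = m_U(1 \pm \eps/2) \pm \Qd \cdot 2 \Lambda p^k/\eps'$. Dividing by $p^k$ and using $\mathcal{E}_3$, this becomes $\htt_U/p^k = (m_V - m_V(V_\Lambda))(1 \pm \eps/2) \pm \Qd \cdot 2\Lambda/\eps'$ after absorbing the sampling error through the $\polylog{n}$ slack in $\Qd = \Qdval$; adding $\htt_\Lambda$ yields $\arec(V, \Lambda) = m_V(1 \pm \eps/2) \pm \Qd \cdot 2\Lambda/\eps'$. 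The per-level multiplicative slack of $(1 \pm \eps')$ compounds over $O(\log n)$ levels into at most $(1 \pm \eps/2)$ by the choice of $\eps'$, and the additive errors across levels form a geometric sum dominated by $\Qd \cdot 2\Lambda/\eps'$. A union bound over all $O(\log n)$ levels yields an overall failure probability of $O(\log n/n^4) + O(1/\log^2 n) \leq 1/\log n$.

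For the $\bot$-probability bound, I would separately prove that under the stronger hypothesis $k m_V/\Lambda \leq \zeta/\log^4 n$, at every recursive level $i$ with sampled set $U_i$ and threshold $\Lambda_i = \Lambda p^{ki}$ the condition $k m_{U_i}/\Lambda_i \leq \zeta$ required by \Cref{algo:find heavy bb bounded} holds with high probability. Since $\Exp{m_{U_i}} = p^{ki} m_V$, the expected ratio $k\Exp{m_{U_i}}/\Lambda_i$ equals $k m_V/\Lambda$ and is preserved across levels; an iterated Chernoff argument across the $O(\log n)$ samplings confines the cumulative multiplicative deviation to $\polylog{n}$, which the $\log^4 n$ slack absorbs. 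Hence $\FindHalgB$ stays within its good regime at every level and outputs $\bot$ with probability at most $1/n^4$ there, and summing via union bound gives the stated $2/\log^3 n$. The main technical obstacle is controlling this chain of sampling-concentration events through all $O(\log n)$ levels simultaneously while keeping the slack polylogarithmic; this is precisely where the $\log^4 n$ exponent in the hypothesis is needed, and where comparing $\arec$ to the cleaner $\algCorezB$ variant (which never outputs $\bot$) enables a coupling argument.
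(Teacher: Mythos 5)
Your high-level structure is in the right spirit, but the paper organizes the argument differently, and one of your technical steps contains a genuine gap. The paper first proves the $\bot$-probability bound as a standalone lemma (its Lemma~\ref{prop2:no bot}) via a Markov-inequality argument on the ratio $R_i = k m_{U_i}/\Lambda_i$: a coupling to a ``no-removal'' sampling process shows $\Exp{m_{U_i}} \leq m_V p^{ki}$, hence $\Exp{R_i} \leq R_0$, and Markov gives $\Pr{R_i \geq R_0 \log^4 n} \leq 1/\log^4 n$ at each level, with a union bound over the $\leq \log n$ levels. It then separately proves correctness of the $\bot$-free twin $\brec$ (Lemma~\ref{lemma:simp B}, by induction with Chebyshev, proved in the appendix) and combines the two via the observation that $\arec$ and $\brec$ agree whenever $\arec$ does not output $\bot$ under shared randomness. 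This split avoids any conditioning on the non-$\bot$ event inside the induction, which your one-pass inductive argument implicitly does; your mention of the coupling in the final sentence is the right instinct but would need to be the central device rather than an afterthought.

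The concrete gap is your appeal to ``an iterated Chernoff argument'' for the $\bot$-probability bound. The hyperedge count $m_{U_i}$ is a sum of \emph{dependent} indicators (hyperedges sharing a vertex are correlated, and indeed a single high-degree vertex can swing $m_{U_i}$ by a constant factor of its mean), so Chernoff does not apply, and there is no multiplicative concentration to iterate. The correct tool is exactly the Markov bound above: one only needs a one-sided polynomial-in-$\log n$ tail at each level, and the fact that the heaviness threshold and the expected edge count shrink at the same rate $p^k$ makes $\Exp{R_i}$ non-increasing, which is all Markov needs. Relatedly, your description of the per-level additive errors summing ``geometrically'' to $\Qd \cdot 2\Lambda/\eps'$ is not quite right: after rescaling by $p^{-ki}$, each level contributes the same amount $2\K\Lambda/\eps'$, so the total is an \emph{arithmetic} sum of $\approx \DD[\Lambda]$ equal terms; the paper compensates by choosing $\K = \Qd/\DD[\Lambda]$ (equivalently by the definition $\Qd = \Qdval$), not by geometric decay. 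Neither of these affects your overall plan, which is salvageable, but both steps as written would not go through.
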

We explain how \Cref{lemma:simp} is used to prove \Cref{lemma2:alg apx}.
The first tool we need is a method to amplify the probability that the recursive algorithm $\arec$, specified in \Cref{lemma:simp}, outputs a value $\htt$ that falls within the interval $I_\Lambda \triangleq m_V\cdot (1\pm \eps/2) \pm \Qd\cdot 2\Lambda/\eps'$.
For that, we use the median of means technique, a standard technique for amplifying the success probability of an algorithm.
This includes executing the algorithm $\arec$ for $\log n$ times and taking the median of the outputs.
If too many executions yield $\bot$, then the algorithm outputs $\bot$.
Let $\crec$ denote this amplified algorithm, which we formally define in \Cref{lemma:alg trunk}.
The output of $\crec$ is either $\bot$ or falls inside the interval $I_\Lambda$ for some $\Lambda$ with probability at least $1-1/n^6$.

The second tool we need is doubling, which is used to find the right heaviness threshold $\Lambda$.
We start with the maximal possible heaviness threshold $\Lambda= \gs(V)=n^k$, and invoke a call to $\crec$ with output $\hat{s}$.
If $\hat{s}=\bot$, then we set $\Lambda\gets \Lambda/2$ and invoke another call to $\crec$ with the new heaviness threshold.
We continue in this manner until we find a heaviness threshold $\Lambda$ such that the output of $\crec$, denoted by $\hat{s}$, is not $\bot$.
We then check if $\hat{s}\geq \rho \cdot \Lambda$, where $\rho$ is a parameter we set later. If this condition is met, then we output $\hat{s}$ as our final output.
Otherwise, we set $\Lambda\gets \Lambda/2$ and again invoke a call to $\crec$.

This completes the description of how \Cref{lemma2:alg apx} is derived using \Cref{lemma:simp}. In \Cref{ssec:bot} we prove \Cref{lemma:simp} and in \Cref{ssec:amplification} we prove \Cref{lemma2:alg apx}. \renewcommand{\VV}{U}
\subsection{Analyzing the Recursive Algorithm $\arec$}\label{ssec:bot}
To prove \Cref{lemma:simp}, we split the proof into two parts. First, we show that the probability of outputting $\bot$ is small (\Cref{prop2:no bot}).
Second, we analyze a slightly different algorithm, denoted by $\algCorezNPB$, which is detailed in \Cref{alg:template recB}.
Recall that $\brec$ operates like $\arec$ but uses the procedure $\FindHalg{(V,\Lambda)}$ instead of $\FindHalgB{(V,\Lambda)}$, and therefore never outputs $\bot$ (\Cref{lemma:simp B}). 
Note that if the algorithms $\arec$ and $\brec$ use the same randomness and the former does not output $\bot$, then the two algorithms produce the same output.
Therefore, to bound the probability that $\arec$ outputs a good approximation, we analyze probability that $\brec$ outputs a good approximation and the probability that $\arec$ does not output $\bot$.

In this subsection we prove that the probability that the algorithm $\algCorez{V,\Lambda}$ outputs $\bot$ is bounded by $O(1/\polylog{n})$ as formally stated in the following lemma.
\begin{lemma}\label{prop2:no bot}
    \sloppy{
        Recall that $\zeta= k\cdot \cclog\cdot \frac{\Qd}{{\eps'}^2}\cdot \log^4 n$.
        Define the event
        $\Ebot\triangleq\set {\algCore{V,\Lambda}=\bot}$.}
Then, for every $\Lambda$ such that
    \begin{align}
        \frac{k\cdot m_V}{\Lambda} \leq \frac{\zeta}{\log^4 n}\;,
    \end{align}
    we have that $\Pr{\Ebot}\leq \frac{2}{\log^3 n}$.
\end{lemma}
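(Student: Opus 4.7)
The plan is to identify where $\bot$ can enter the output and to union-bound over the levels of the recursion. Inspection of \Cref{alg:template recA} shows that $\arec$ outputs $\bot$ exactly when some call to $\FindHalgB$ along the recursive chain returns $\bot$. Since $p=1/2$, the threshold shrinks as $\Lambda_i = \Lambda \cdot p^{ki} = \Lambda \cdot 2^{-ki}$, so the recursion terminates once $\Lambda_i \le 1$, giving recursion depth at most $\log_2 \Lambda / k \le \log n$ (as $\Lambda \le \gs(V) \le n^k$). Let $U_0 = V$ and let $U_i$ denote the vertex set at level $i$, and let $\mathcal{B}_i$ denote the event that $\FindHalgB(U_i,\Lambda_i)$ returns $\bot$.

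The key quantitative step is to show that, at every level $i$, the precondition for the $\FindHalgB$ guarantee, namely $k \cdot m_{U_i}/\Lambda_i \le \zeta$, holds with probability at least $1 - 1/\log^4 n$. For this I will bound $\Exp{m_{U_i}}$ by a simple induction. Since $\FindHalgB$ runs before the vertex subsampling $V[p]$ on \Cref{line:5}, the set $V_{\Lambda,i}$ is independent of the Bernoulli$(p)$ coins used to form $U_{i+1}$, so
\[
    \Exp{m_{U_{i+1}} \mid \mathcal{F}_i} \;\le\; p^k \cdot m_{U_i},
\]
where $\mathcal{F}_i$ is the $\sigma$-algebra of everything through level $i$; unrolling yields $\Exp{m_{U_i}} \le m_V \cdot p^{ki}$. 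Therefore Markov's inequality gives
\[
    \Pr{\frac{k \cdot m_{U_i}}{\Lambda_i} > \zeta} \;\le\; \frac{k \cdot \Exp{m_{U_i}}}{\zeta \Lambda \cdot p^{ki}} \;\le\; \frac{k \cdot m_V}{\zeta \Lambda} \;\le\; \frac{1}{\log^4 n},
\]
using the hypothesis $k \cdot m_V/\Lambda \le \zeta/\log^4 n$.

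To conclude, I decompose each $\mathcal{B}_i$ via the precondition event $\mathcal{C}_i = \{k \cdot m_{U_i}/\Lambda_i \le \zeta\}$: $\Pr{\mathcal{B}_i} \le \Pr{\overline{\mathcal{C}_i}} + \Pr{\mathcal{B}_i \mid \mathcal{C}_i}$. The first term is bounded by $1/\log^4 n$ as above, and the second by $1/n^4$ using the guarantee in \Cref{algo:find heavy bb bounded}. A union bound over the at most $\log n$ recursion levels then gives
\[
    \Pr{\Ebot} \;\le\; \sum_{i=0}^{\log n} \Pr{\mathcal{B}_i} \;\le\; \log n \cdot \left(\frac{1}{\log^4 n} + \frac{1}{n^4}\right) \;\le\; \frac{2}{\log^3 n}.
\]

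The main subtlety I anticipate is the conditioning for the expectation bound across levels: one must use that $V_{\Lambda,i}$ depends only on the internal randomness of $\FindHalgB$ at level $i$ and on $U_i$, while the sampling $V[p]$ that produces $U_{i+1}$ uses fresh independent coins, so the per-level $p^k$ decay in $\Exp{m_{U_{i+1}} \mid \mathcal{F}_i}$ goes through cleanly and the unconditional bound $\Exp{m_{U_i}} \le m_V \cdot p^{ki}$ follows by iterating expectations. Everything else is a routine Markov-plus-union-bound argument.
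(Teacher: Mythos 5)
Your proof is correct and follows essentially the same approach as the paper's: both establish the level-$i$ expectation bound $\Exp{m_{U_i}}\le m_V\cdot p^{ki}$ (the paper via a coupling argument in its Claim on the ratio $R_i$, you via iterating conditional expectations), apply Markov to get a $1/\log^4 n$ failure probability for the precondition at each of the $\le\log n$ levels, invoke the $\FindHalgB$ guarantee conditioned on that precondition, and union bound. The only cosmetic difference is that the paper factors $\Pr{\eva}\ge\Pr{\eva\mid\evb}\Pr{\evb}$ while you decompose $\Pr{\mathcal{B}_i}\le\Pr{\overline{\mathcal{C}_i}}+\Pr{\mathcal{B}_i\mid\mathcal{C}_i}$ per level, which yields the same numerics.
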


We need the following definition of the depth of the recursive calls.
\renewcommand{\DD}[1][\Lambda]{\mathrm{Depth}\brak{#1}}
\newcommand{\Rmax}{R_{\max}}

\begin{restatable}[{$\DD$}]{definition}{defDepth}\label{def:Depth}
    \sloppy{We define the depth of the call $\algCore{V,\Lambda}$ by
        $\DD[\Lambda]=\max\set{\ceil*{\log_{1/p^k}(\Lambda)},0}$, where $p=1/2$.}
\end{restatable}
We can assume without the lost of generality that $\Lambda\leq n^{k-1}$,
as a $k$-partite hypergraph with $n$ vertices has at most $n^{k}$
hyperedges and maximal degree of $n^{k-1}$.
Therefore, there are no $(2n^{k-1})$-heavy vertices
and we get that
\begin{align*}
    \DD[\Lambda] + 1\leq (k-1)\log n/ \log(1/p^k) + 1\leq \log n +1-\log n/k\leq \log n\;.
\end{align*}

~\\ Consider the recursive calls $\sset{\algCorez{\VV_i,\Lambda_i}}_{i=0}^{\DD[\Lambda_0]}$ made by the recursive algorithm, where $\VV_0$ and $\Lambda_0$ are the values given to the initial invocation of $\arec$.
For every $i\in \zrn{\DD[\Lambda_0]}$ we use $\VV_i$ to denote the set of sampled vertices in the $i$-th recursive call, and we use $\Lambda_i$ as the heaviness threshold in the $i$-th recursive call, where $\Lambda_i=\Lambda_i\cdot p^{ki}$.
We use $m_i$ to denote the exact number of hyperedges in $G[\VV_i]$, we use $R_i$ to denote the ratio $k\cdot m_i/\Lambda_i$, and we also use $\Rmax$ to denote the maximum value of $R_i$ over $i\in\sset{0,1,\ldots, \DD[\Lambda_0]}$.
Before proving \Cref{prop2:no bot}, we prove the following claim that upper-bounds the ratio $R_i$ in each recursive call.

\begin{claim}\label{claim:ratio}
    For every $i\in\sset{0,1,\ldots, \DD[\Lambda_0]}$ and every positive $X$, we have that $\Pr{R_i\geq R_0 \cdot X}\leq 1/X$.
\end{claim}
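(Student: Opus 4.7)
The plan is to show $\Exp{R_i}\leq R_0$ for every level $i$ and then apply Markov's inequality. The key observation is that the recursion produces the random set $V_{i+1}$ from $V_i$ in two steps: first the superset $V_{\Lambda_i}$ of $\Lambda_i$-heavy vertices is removed, and then each of the remaining vertices is retained independently with probability $p=1/2$ (see \Cref{line:5} of \Cref{alg:template recA}). In particular, the definition of $V_{i+1}$ does not depend on any properties of the hypergraph beyond what has already been exposed at level $i$.

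Fixing the randomness used up to (and including) the computation of $V_{\Lambda_i}$, I would show
\[
\Exp{m_{i+1}\,\bigl|\,V_i,\,V_{\Lambda_i}}\ \leq\ m_i\cdot p^k.
\]
This holds because any hyperedge $e$ contributing to $m_{i+1}$ must be a hyperedge of $G[V_i]$, must avoid $V_{\Lambda_i}$, and must have all of its $k$ vertices survive the subsequent independent sampling with parameter $p$. Thus each hyperedge of $G[V_i]$ survives with probability at most $p^k$ (exactly $p^k$ if it is disjoint from $V_{\Lambda_i}$, and $0$ otherwise), and linearity of expectation yields the displayed bound. Taking the expectation over earlier randomness and iterating gives $\Exp{m_i}\leq m_0\cdot p^{ki}$.

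Since $\Lambda_i=\Lambda_0\cdot p^{ki}$ by construction, I get
\[
\Exp{R_i}\ =\ \frac{k\cdot \Exp{m_i}}{\Lambda_i}\ \leq\ \frac{k\cdot m_0\cdot p^{ki}}{\Lambda_0\cdot p^{ki}}\ =\ \frac{k\cdot m_0}{\Lambda_0}\ =\ R_0.
\]
Since $R_i\geq 0$, Markov's inequality gives $\Pr{R_i\geq R_0\cdot X}\leq \Exp{R_i}/(R_0 X)\leq 1/X$, completing the argument.

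I do not expect any real obstacle here: the only subtlety is making sure the conditional expectation bound is valid regardless of the (arbitrary) behaviour of $\FindHalgB$ on how it picks $V_{\Lambda_i}$, but this is immediate because removing more vertices only decreases $m_{i+1}$, so the bound $\Exp{m_{i+1}\mid V_i,V_{\Lambda_i}}\leq m_i\cdot p^k$ holds uniformly over the choice of $V_{\Lambda_i}$ (and in particular over the internal randomness of $\FindHalgB$). Similarly, if the algorithm outputs $\bot$ before reaching level $i$, the claim is about the random variables defined by the sampling process, which are well-defined independently of whether the algorithm terminates early.
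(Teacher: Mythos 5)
Your proof is correct. The overall structure matches the paper: both establish $\Exp{m_i}\le m_0\,p^{ki}$ and then conclude $\Exp{R_i}\le R_0$ and apply Markov. The difference is in how the intermediate moment bound is proved. The paper introduces an auxiliary process $C_0=V$, $C_{i+1}=C_i[p]$ that never removes heavy vertices, computes $\Exp{s_i}=m_0 p^{ki}$ exactly for the hyperedge counts $s_i$ of $G[C_i]$, and then invokes a coupling (``use the same randomness'') to get $V_i\subseteq C_i$, hence $m_i\le s_i$ pointwise, hence $\Exp{m_i}\le\Exp{s_i}$. You instead bound $\Exp{m_{i+1}\mid V_i,V_{\Lambda_i}}\le m_i\,p^k$ directly by observing that each hyperedge of $G[V_i]$ survives with probability at most $p^k$ (exactly $p^k$ if it avoids $V_{\Lambda_i}$, else $0$), then iterate by the tower rule. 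Your route is a little more self-contained: it avoids introducing the auxiliary process and the implicit stochastic-dominance reasoning, and it makes explicit why the (adversarial) choice of $V_{\Lambda_i}$ by $\FindHalgB$ cannot hurt the bound, since removing more vertices only decreases $m_{i+1}$. The paper's coupling argument has the small advantage that the expectation computation for $s_i$ is an exact equality rather than an inequality, but that extra precision is not used. Both approaches are sound and deliver exactly the estimate needed for Markov's inequality.
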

\newcommand{\mi}{m_{\VV_i}}
\begin{proof}[Proof of \Cref{claim:ratio}]
    We first prove that for every $i\in \zrn{\DD[\Lambda_0]}$ we have that $\Exp{m_i}\leq m_0 \cdot p^{ik}$.
    Since $R_i=k\cdot m_i/\Lambda_i$, we get that $\Exp{R_i}=k\cdot \Exp{m_i}/\Lambda_i\leq k\cdot m_0/\Lambda_0 = R_0$.
    We then use Markov's inequality to get that $\Pr{R_i\geq R_0 \cdot X}\leq 1/X$. To complete the proof, we therefore need to show that $\Exp{m_i}\leq m_0 \cdot p^{ik}$.

    For that, we define a random variable $s_i$ as follows
Define $C_0\triangleq V$, and $C_{i+1}=C_i[p]$.
    That is, $C_{i+1}$ is obtained by sampling each vertex from $C_i$ independently with probability $p$.
    Recall that the set $\VV_{i+1}$ is also obtained by sampling every vertex from a \emph{subset} of $\VV_i$ independently with probability $p$,
    where this subset is defined as the set of vertices in $\VV_i$ that were not identified as $\Lambda_i$-heavy in the $i$-th recursive call.
Let $s_i$ denote a random variable that is equal to the number of hyperedges in $G[C_i]$.
    We show that $\Exp{s_i}=m_0\cdot p^{ki}$ and that if we use the same randomness then $m_i$ is smaller than $s_i$, which means that $\Exp{m_i}\leq \Exp{s_i}$, which completes the proof.

    To see why $\Exp{s_i}=m_{0}\cdot p^{ki}$, note that $\Exp{s_{i+1}\mid s_i=y}=y\cdot p^k$, and therefore $\Exp{s_{i+1}}=\Exp{s_i}\cdot p^k$, which means that $\Exp{s_i}=s_0\cdot p^{ki}=m_{0}\cdot p^{ki}$.

    Next, we show that $\Exp{m_i}\leq \Exp{s_i}$ and therefore conclude that $\Exp{m_i}\leq m_0\cdot p^{ki}$.
    As we previously explained, if we use the same randomness to sample the sets $\VV_i$ and $C_i$, then it must be the case that $\VV_i\subseteq C_{i}$, and therefore the number of hyperedges in $G[\VV_i]$ is at most the number of hyperedges in $G[C_i]$.
    In other words, we showed that for every value $z\geq 0$, it holds that $\Pr{m_i\geq z}\leq \Pr{s_i\geq z}$, which implies that $\Exp{m_i}\leq \Exp{s_i}$, and therefore completes the proof.
    
\end{proof}
We are now ready to prove \Cref{prop2:no bot}, which bounds the probability that $\arec$ outputs $\bot$.
\begin{proof}[Proof of \Cref{prop2:no bot}]
    \newcommand{\eva}{\mathcal{F}_{\bot}}
    \newcommand{\evb}{\mathcal{F}_{{\mathrm{ratio}}}}
    We define two sequences of events.
    For $i\in \zrn{\DD[\Lambda_0]}$,
    let $\evb(i)$ denote the event that $\sset{R_i\leq \zeta}$, and
    let $\eva(i)$ denote the event that the $i$-th recursive call to $\FindHalgB$ does not output $\bot$. We also define the events
    \begin{align*}
        \eva\triangleq\set{\bigcap_{i=0}^{\DD[\Lambda_0]} \eva(i)}, &  &
        \evb\triangleq\set{\bigcap_{i=0}^{\DD[\Lambda_0]} \evb(i)}\;.
    \end{align*}
Note that if the event $\eva$ occurs then no call to $\FindHalgB$ outputs $\bot$, and therefore the algorithm does not output $\bot$. This means that the events $\Ebot$ and $\eva$ are complements, i.e., $\Pr{\Ebot} =1-\Pr{\eva}$.
    Using the definition of conditional probability, we have that
    \begin{align*}
        \Pr{\eva}\geq \Pr{\eva\mid \evb}\cdot\Pr{\evb}\;.
    \end{align*}
    We analyze the two terms on the right-hand side of the inequality.
    Fix some $i\in \zrn{\DD[\Lambda_0]}$.
    Note that by the guarantee of the $\FindHalgB$ procedure, we have that
    if $R_i\leq \zeta$ then the $i$-th call $\FindHalgB(\VV_i,\Lambda_i,\zeta)$ does not output $\bot$ with probability at least $1-1/n^4$.
    In other words, if $\evb(i)$ occurs $\eva(i)$ occurs with probability at least $1-1/n^4$.
    We therefore get that $\Pr{\eva(i)\mid \evb(i)}\geq 1-1/n^4$.
    We also get $\Pr{\eva\mid \evb}\geq 1-\log n/n^4$ by a union bound over at most $\DD[\Lambda_0]< \log n$ events.

    We are left with proving that $\Pr{\evb}\geq 1-1/\log^3 n$, which we show under the assumption that $R_0\leq \zeta/\log^4 n$.
    We have the following inequalities:
    \begin{align*}
        \Pr{\evb(i)}=\Pr{R_i> \zeta}
        \leq \Pr{R_i> R_0\cdot \log^4 n}
        \leq 1/\log^4 n\;,
    \end{align*}
    where the first inequality follows by the assumption that $R_0 = k\cdot m_{\VV_0}/\Lambda_0\leq \zeta/\log^4 n$, and the second inequality follows by \Cref{claim:ratio}.
    Therefore, we get that $\Pr{\evb}\geq 1-1/\log^3 n$
by a union bound over at most $\DD[\Lambda_0]< \log n$ events.
    We plug in the computed values of $\Pr{\eva\mid \evb}$ and $\Pr{\evb}$, and get that
    \begin{align*}
        \Pr{\eva}\geq \Pr{\eva\mid \evb}\cdot\Pr{\evb}
        \geq \brak{1-\frac{\log n}{n^4}}\cdot \brak{1-\frac{1}{\log^3 n}}
        \geq 1-\frac{2}{\log^3 n}\;.
    \end{align*}
    As previously mentioned, we have $\Pr{\Ebot} =1-\Pr{\eva}$. Therefore, we get that $\Pr{\Ebot}\leq 2/\log^3 n$, which completes the proof.
\end{proof}

Recall that if we run the algorithms $\arec(V,\Lambda)$ and $\brec(V,\Lambda)$ with the same randomness and $\arec(V,\Lambda)$ does not output $\bot$, then the two algorithms produce the same output.
Therefore, to bound the probability that the algorithm $\arec(V,\Lambda)$ outputs a correct approximation it suffices to bound the probability that the algorithm $\brec(V,\Lambda)$ outputs a correct approximation and the probability that the algorithm $\arec(V,\Lambda)$ does not output $\bot$.
We already obtained an upper bound on the probability of the latter event in \Cref{prop2:no bot}.
The next lemma provides us with an upper bound on the probability that the algorithm $\brec(V,\Lambda)$ outputs a correct approximation.

\begin{restatable}[Guarantees for the $\algCorezNPB$ Algorithm]{lemma}{LemCoreSimpB}\label{lemma:simp B}
    For every $\eps\in(0,1/2]$ and every $\K\geq 1$, we have that
\begin{align*}
         & \Pr{\algCorezB{V,\Lambda}=m_V\cdot (1\pm \eps/2) \pm \K\cdot 2\Lambda/\eps'}\geq 1-\rPzB\;,
    \end{align*}
    where $\eps'=\neweps$.
    Moreover, if $\Lambda\leq m_V\cdot \frac{{\eps'}^2}{\K}$, we get that
    \begin{align*}
        \Pr{\algCorezB{V,\Lambda}=m_V\apm}\geq 1-\rPzB\;.
    \end{align*}
\end{restatable}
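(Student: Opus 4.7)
The plan is to prove the lemma by induction on the recursion depth $\Depth(\Lambda)\leq \log n$. For the base case $\Lambda\leq 1$, the algorithm returns $\algPrev(V,\eps')$, which by \Cref{thm:detect to count} yields a value in $m_V(1\pm\eps')\subseteq m_V(1\pm\eps/2)$ with probability at least $1-1/n^4$, comfortably inside the failure budget $\rPzB$, and the additive slack $\K\cdot 2\Lambda/\eps'\geq 0$ only helps.

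For the inductive step, fix $\Lambda>1$ and let $V_\Lambda,\htt_\Lambda,U,\htt_U$ be the quantities computed by $\brec(V,\Lambda)$. I would condition on four good events. First, $E_{\FH}$: the set $V_\Lambda$ returned by $\FindHalg$ contains every $\Lambda$-heavy vertex of $G[V]$ and no $\llow$-light one; by \Cref{algo:find heavy bb1} this holds with probability at least $1-1/n^4$. Second, under $E_{\FH}$ the procedure $\Cialg$ sees a valid input and returns $\htt_\Lambda=m_V(V_\Lambda)(1\pm\eps')$ with probability at least $1-1/n^4$. Third, writing $m^*=m_V-m_V(V_\Lambda)$ for the number of hyperedges in $G[V\setminus V_\Lambda]$, every vertex of $V\setminus V_\Lambda$ has degree less than $\Lambda$ (under $E_{\FH}$), so a routine covariance calculation (a pair of edges sharing $j\geq 1$ vertices contributes $p^{2k-j}$, and can be charged to a common vertex of degree $<\Lambda$) gives $\Var{m_U}\leq 2^k p^k\Lambda m^*$; Chebyshev then yields, with probability at least $1-1/\K$,
\[
\left|\frac{m_U}{p^k}-m^*\right|\leq \sqrt{\frac{2^k\K\Lambda m^*}{p^k}}.
\]
Fourth, by the inductive hypothesis the recursive call $\brec(U,\Lambda p^k)$ produces $\htt_U=m_U(1\pm\eps/2)\pm\K\cdot 2\Lambda p^k/\eps'$.

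Conditioning on all four events, AM--GM splits the Chebyshev bound into a multiplicative piece $\eps'\cdot m^*$ and an additive piece of order $2^k\K\Lambda/(\eps' p^k)$, so $m_U/p^k=m^*(1\pm O(\eps'))\pm O(2^k\K\Lambda/(\eps' p^k))$. Substituting into $\htt=\htt_\Lambda+\htt_U/p^k$ and compounding the $(1\pm\eps')$ factors with the IH's $(1\pm\eps/2)$ factor---recall that $\eps'=\eps/(4\log n)$ is tuned so that $(1\pm\eps')^{2\log n}\subseteq (1\pm\eps/2)$---while absorbing the $k$-dependent constant on the additive term into a constant factor of $\K$ (which is permissible since the statement quantifies over every $\K\geq 1$), I conclude $\htt=m_V(1\pm\eps/2)\pm\K\cdot 2\Lambda/\eps'$. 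The per-level failure is at most $2/n^4+1/\K$, so summing over the $\Depth(\Lambda)\leq \log n$ levels gives total failure at most $\log n\cdot(2/n^4+1/\K)$, which is bounded by $\rPzB=4k\cdot 2^k\log^2 n/\K$ by choice of the constants in $\rPzB$.

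The main obstacle is the joint concentration step: the sampling noise $\sqrt{2^k\K\Lambda m^*/p^k}$ must be simultaneously absorbed into the multiplicative $(1\pm\eps/2)$ factor when $m^*$ is large and into the additive budget $\K\cdot 2\Lambda/\eps'$ when $m^*$ is small, which is exactly the effect of the AM--GM split; bookkeeping the multiplicative $(1\pm\eps')$ and additive $\K\Lambda/\eps'$ errors through $\log n$ recursive levels without blowing up either is the most delicate part. The ``moreover'' claim is then immediate: if $\Lambda\leq m_V(\eps')^2/\K$ then $\K\cdot 2\Lambda/\eps'\leq 2m_V\eps'$, and $m_V(1\pm\eps/2)\pm 2m_V\eps'\subseteq m_V(1\pm\eps)$ since $2\eps'=\eps/(2\log n)\leq \eps/2$.
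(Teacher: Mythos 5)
Your high-level strategy mirrors the paper's: induct on $\DD[\Lambda]\leq\log n$, condition on four good events ($\FindHalg$ succeeds, $\Cialg$ succeeds, the sampled graph $G[U]$ concentrates via Chebyshev, and the recursive call obeys the hypothesis), bound the variance by $k\cdot\Delta\cdot p^k\cdot m^*$ with $\Delta<\Lambda$, and split the Chebyshev deviation $\sqrt{\K\Lambda m^*/q}$ via AM--GM into a multiplicative piece of order $\eps'm^*$ and an additive piece of order $\K\Lambda/\eps'$. The paper's Claims~\ref{claim:variance} and \ref{cor42:e3} implement exactly that.

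However, there is a genuine gap in how you close the induction, and it is the crux of the whole argument. You try to induct directly on the displayed bound $m_V(1\pm\eps/2)\pm\K\cdot 2\Lambda/\eps'$, but that bound does not survive one recursion level. Multiplicatively, after dividing $\htt_U$ by $p^k$ you have $m_{V'}(1\pm\eps/2)(1\pm\eps')$, which is strictly larger than $m_{V'}(1\pm\eps/2)$, so the $(1\pm\eps/2)$ factor cannot be re-established. Additively, the recursive call contributes $\K\cdot 2\Lambda p^k/\eps'$ which after dividing by $p^k$ is already the full budget $\K\cdot 2\Lambda/\eps'$, leaving zero room for the $\Cialg$ error $\eps'm_\Lambda$ or the sampling error $O(\K\Lambda/\eps')$. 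Your proposed fix---``absorbing the $k$-dependent constant on the additive term into a constant factor of $\K$ (permissible since the statement quantifies over every $\K\geq 1$)''---does not work inside an induction: if each level invokes the hypothesis with a $\K$ inflated by a constant $c>1$, then after $\log n$ levels you have used $\K/c^{\log n}$, and the failure probability $\Theta(1/\K)$ blows up by $c^{\log n}=\mathrm{poly}(n)$. The paper resolves this by stating the induction hypothesis with \emph{depth-indexed} error terms, namely $m_V(1\pm\eps')^{\DD[\Lambda]}\pm\DD[\Lambda]\cdot 2\gamma\Lambda/\eps'$ (Lemma~\ref{lemma2:ih}): the multiplicative factor grows by exactly one power of $(1\pm\eps')$ per level and the additive term grows by exactly one unit of $2\gamma\Lambda/\eps'$ per level, so the induction closes cleanly, and only \emph{at the very end} does one use $(1\pm\eps')^{\log n}\subseteq(1\pm\eps/2)$ and choose $\gamma\approx\K/\DD[\Lambda]$ so that $\DD[\Lambda]\cdot 2\gamma\Lambda/\eps'\leq\K\cdot 2\Lambda/\eps'$, at the cost of the extra $\log^2 n$ factor in $\rPzB$. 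You are aware of the $(1\pm\eps')^{2\log n}\subseteq(1\pm\eps/2)$ calibration, but without a depth-indexed hypothesis there is no place in your argument where that telescoping is actually performed.

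Separately, your variance bound $2^kp^k\Lambda m^*$ is looser than the paper's $k\cdot p^k\cdot\Lambda\cdot m^*$ (a pair sharing $j\geq 1$ vertices survives with probability $p^{2k-j}\leq p^{k+1}$, and the number of such ordered pairs is at most $k m^*\Delta$), but since the $2^k$ factor in $\rPzB$ already enters through the $1/q=2^k$ from $\Var{m_U/q}=\Var{m_U}/q^2$, this is a cosmetic slack rather than an error. The ``moreover'' part at the end of your proposal is correct.
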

The proof that the algorithm $\algCorezB{V,\Lambda}$ outputs a correct approximation follows the same line as in \cite{CEW24}, and we therefore include it in the appendix (\Cref{sec:proof of B}) for completeness.

\begin{proof}[Proof of \Cref{lemma:simp} using \Cref{lemma:simp B}]
    \newcommand{\EbotC}{\mathcal{E}_{\bot}^c(V,\Lambda)}
    \newcommand{\EA}{\mathcal{E}_{A}}
    \newcommand{\EB}{\mathcal{E}_{B}}
    \Cref{prop2:no bot} guarantees that the algorithm $\arec(V,\Lambda)$ outputs $\bot$ with probability at most $2/\log^3 n$ assuming that
    $\frac{k\cdot m_V}{\Lambda}\leq \frac{\zeta}{\log^4 n}$.
    We are left with proving that if the output of $\arec(V,\Lambda)$ is not $\bot$, then the output falls inside the interval $m_V\cdot (1\pm \eps/2) \pm \Qd\cdot 2\Lambda/\eps'$ with probability at least $1-1/\log n$.

    Note that assuming that the output of $\arec(V,\Lambda)$ is not $\bot$, the output of $\arec(V,\Lambda)$ and $\brec(V,\Lambda)$ is the same.
    Moreover, the probability that the output of $\brec(V,\Lambda)$ falls inside the interval $m_V\cdot (1\pm \eps/2) \pm \K\cdot 2\Lambda/\eps'$ is at least $1-\rPzB$ by \Cref{lemma:simp B}.
    Note that $1-\rPzB$ is at least $1-\frac{4}{5\log n}$, by plugging in $\gamma=\Qd=\Qdval$, we get that if the output of $\arec(V,\Lambda)$ is not $\bot$, then the probability that the output of falls inside the interval $m_V\cdot (1\pm \eps/2) \pm \Qd\cdot 2\Lambda/\eps'$ is at least $1-\frac{1}{\log n}$, which completes the proof.
    
\end{proof}

\subsection{Amplifying the Recursive Algorithm $\arec$}\label{ssec:amplification}
We now show how to amplify the algorithm $\arec$ to get a $\apm$ approximation for the number of hyperedges in a $k$-partite hypergraph.
The main result in this subsection is as follows.

Informally, the algorithm either outputs an approximation $\htt_V$ for
$m_V$ or outputs $\bbot$.
If $m_V\geq \W$, then the algorithm outputs a $\apm$ approximation for $m_V$ with high probability.
In case that $m_V< \W$ the algorithm could output the correct approximation, but will probably output $\bbot$.
To implement the algorithm $\apx(V,\W,\eps)$ we need another algorithm ,$\algWrapD{V,\W,\eps}$, where $\apx$ is a wrapper around $\algWrapD{V,\W,\eps}$.
We state the guarantees of the algorithm $\algWrapD{V,\W,\eps}$ in the following lemma.
\begin{lemma}\label{lemma:alg trunk}
    Let $g(k)\triangleq \cclog$.
    The algorithm $\algWrapD{V,\W,\eps}$ either outputs $\bot$ or a value $\htt_V$ such that
    \begin{align*}
        \Pr{\htt_V=m_V\cdot (1\pm \eps/2) \pm \W\eps/8}\geq 1-1/n^6\;.
    \end{align*}
If $\W\geq m_V/g(k)$, then the probability that the algorithm outputs $\bot$ is at most $1/n^7$.
\end{lemma}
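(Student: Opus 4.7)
The plan is to define $\crec(V,\W,\eps)$ as a median-of-means amplification of $\arec(V,\Lambda)$ with a carefully chosen heaviness threshold. Specifically, I would set $\Lambda := \W\eps\eps'/(16\Qd)$, where $\eps'=\neweps$, so that the additive slack $2\Qd\Lambda/\eps'$ appearing in \Cref{lemma:simp} is exactly the target $\W\eps/8$. Then I would invoke $\arec(V,\Lambda)$ independently $r = c\log n$ times for a sufficiently large constant $c$, obtaining outputs $\htt^{(1)},\ldots,\htt^{(r)}$. Let $B$ be the number of these that equal $\bot$: if $B\geq r/2$ return $\bot$, otherwise return the median of the $r-B$ non-$\bot$ outputs.

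For the approximation guarantee I would argue as follows. By \Cref{lemma:simp}, each individual call independently lies in the interval $I := [m_V(1-\eps/2)-\W\eps/8,\ m_V(1+\eps/2)+\W\eps/8]$ with probability at least $1-1/\log n$, irrespective of whether the hypothesis $\W\geq m_V/g(k)$ holds. Let $O$ count the outputs that are values outside $I$; then $\Exp{O}\leq r/\log n$, and a standard Chernoff bound yields $\Pr{O\geq r/4}\leq 1/n^6$ for $c$ large enough. If $B\geq r/2$ we return $\bot$, which is admissible, so the only way to violate the conclusion is $B<r/2$ together with the median of the non-$\bot$ outputs lying outside $I$; the latter forces $O\geq (r-B)/2 > r/4$, an event of probability at most $1/n^6$.

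For the $\bot$ guarantee under the hypothesis $\W\geq m_V/g(k)$, I need the precondition $k m_V/\Lambda\leq \zeta/\log^4 n$ of \Cref{lemma:simp} to hold. Plugging in $\Lambda = \W\eps\eps'/(16\Qd)$, $\eps'=\eps/(4\log n)$, and $\zeta = k\cclog\cdot \Qd/{\eps'}^2\cdot\log^4 n$, this precondition reduces to an inequality of the form $\W\geq c'(k)\cdot m_V/\poly(\log n)$, which is absorbed by taking the constant implicit in $g(k)=\ccclog$ large enough that $g(k)\geq c'(k)\cdot\poly(\log n)$ for all $n$. Granted the precondition, \Cref{lemma:simp} gives $\Pr{\arec=\bot}\leq 2/\log^3 n$ per call, so $\Exp{B}\leq 2r/\log^3 n$ and a Chernoff bound gives $\Pr{B\geq r/2}\leq 1/n^7$ for $c$ large enough. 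The main obstacle is this bookkeeping: both the additive error and the $\bot$-precondition are phrased in terms of the intertwined parameters $\Qd$, $\zeta$, $\eps'$, and polylogarithmic factors, and one must verify that the single threshold $\Lambda=\W\eps\eps'/(16\Qd)$ simultaneously makes the additive error match $\W\eps/8$ and keeps $km_V/\Lambda$ below $\zeta/\log^4 n$ whenever $\W\geq m_V/g(k)$; everything else is a routine Chernoff amplification.
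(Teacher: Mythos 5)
Your proposal takes essentially the same route as the paper: choose the heaviness threshold $\Lambda$ as a fixed multiple of $\W$ so that the additive slack in \Cref{lemma:simp} is at most $\W\eps/8$, check that the hypothesis $\W\geq m_V/g(k)$ forces the precondition $km_V/\Lambda\leq\zeta/\log^4 n$ of \Cref{prop2:no bot}, and then amplify by median-of-means with a cutoff on the fraction of $\bot$ outcomes. The only substantive difference is the choice of $\Lambda$: you take $\Lambda=\W\eps\eps'/(16\Qd)$, making $2\Qd\Lambda/\eps'=\W\eps/8$ hold with equality, while the paper takes $\Lambda=\W{\eps'}^2/\Qd$, which makes the precondition reduce to $\W\geq m_V/g(k)$ with equality (and then checks separately that $2\W\eps'\leq\W\eps/8$). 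Both choices are valid; your threshold $r/2$ for the $\bot$ count instead of the paper's $r/10$, and your count-outside-the-interval argument instead of invoking \Cref{claim:med trick0} directly, are also interchangeable.

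One inaccuracy is worth calling out. You claim the precondition "reduces to an inequality of the form $\W\geq c'(k)\cdot m_V/\poly(\log n)$, which is absorbed by taking the constant implicit in $g(k)$ large enough." But $g(k)=\cclog$ is fixed in the lemma statement and cannot be retuned, and, more to the point, this is not what the precondition reduces to. If it really reduced to $\W\geq c'(k)m_V/\poly(\log n)$ with only a low-degree $\poly(\log n)$, the hypothesis $\W\geq m_V/g(k)$ — which lower bounds $\W$ by the much smaller quantity $m_V/(k\log(4n))^{k^2}$ — would \emph{not} imply it, and the proof would break. Carrying out the algebra: with your $\Lambda$ and $\eps'=\eps/(4\log n)$, one gets $km_V/\Lambda\leq\zeta/\log^4 n\iff\W\geq 4m_V/(g(k)\log n)$, in which $g(k)$ survives in the denominator; this is implied by $\W\geq m_V/g(k)$ once $\log n\geq 4$. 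So your approach is correct, but the verification you deferred is exactly the step in which $g(k)$ must persist through the cancellation — it is not a matter of tuning constants.
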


We explain how to prove \Cref{lemma2:alg apx} using \Cref{lemma:alg trunk}.
\begin{proof}[Proof of \Cref{lemma2:alg apx} using \Cref{lemma:alg trunk}]
    \newcommand{\httt}{\hat{t}}
    We start with a description of the algorithm $\apx(V,\W,\eps)$.
    \begin{mdframed}[frametitle={$\apx(V,\W,\eps)$}, frametitlealignment=\centering, backgroundcolor=gray!10]
        For $i=0$ to $i=k\log n$:
        \begin{enumerate}
            \item Let $\htt_i\gets \algWrapD{V,\W\cdot 2^i,\eps}$.
            \item If $\htt_i=\bot$, then continue to the next iteration.
            \item If $\htt_i\geq \W\cdot 2^i$, then output $\htt_i$.
            \item Otherwise, if $\htt_i< \W\cdot 2^i$ output $\bbot$.
        \end{enumerate}
        Output $\bbot$.
    \end{mdframed}
Let $\htt_i$ denote the output of the $i$-th execution of the algorithm $\algWrapDNP$.
    Denote by $s$ the number of iterations that an execution of the algorithm does, and note that $s\leq k\log n + 1$.
We denote the final output of the algorithm by $\hat{t}$, which is either $\bbot$, or $\htt_s$.
We say that an approximation $\htt_i$ is \emph{good} if \begin{align*}
        \htt_i
        =m_V\cdot (1\pm \eps/2) \pm (\W\cdot 2^i)\cdot \eps/8\;.
    \end{align*}
    For every $i\leq s$ for which $\W\cdot 2^i\geq m_V/g(k)$, let $\FF(i)$ denote the event that $\htt_i$ is good.
For every other $i\leq s$, that is, for which $\W\cdot 2^i< m_V/g(k)$, let $\FF(i)$ denote the event that $\htt_i$ is good or that $\htt_i=\bot$.
    We make this distinction because if $i$ satisfies $\W\cdot 2^i< m_V/g(k)$,
    then we can use the guarantee of \Cref{lemma:alg trunk}, which states that the algorithm $\crec$ does not output $\bot$ with high probability.
Let $\FF$ denote the intersection of the events $\FF(i)$ for every $i\leq s$.
    We have that $\Pr{\FF}\geq 1-s/n^7$ by \Cref{lemma:alg trunk} and a union bound.
For the rest of this proof we assume that $\FF$ occurs.
    The probability that $\FF$ does not occur is at most $s/n^7\leq 1/n^6$, and therefore we can ignore this event.

We prove the output $\hat{t}$ is either $\bbot$ or $m_V\apm$, and that the algorithm completes its execution before iteration $k\log n$.
First, we show that there exists an index $i$ for which $\htt_i\neq \bot$.
For any $i$ such that $\W\cdot 2^i\geq m_V/g(k)$, we have that if $\FF(i)$ occurs, then by definition $\htt_i\neq \bot$.
    If $\htt_i\neq \bot$, then the algorithm stops before iteration $(i+1)$ and outputs either $\bbot$ or an approximation for $m_V$.
If the algorithm does not reach iteration $i$, then it outputs either $\bbot$ or an approximation for $m_V$ in some iteration $j$ for $j< i$.
We conclude that the algorithm stops before iteration $k\log n$ as for $i=k\log n/2$ we have that $\W\cdot 2^i\geq m_V/g(k)$.

    Next, we show that if $\W\geq m_V/2$, then $\hat{t}\neq \bot$\
If $\W\geq m_V/2$, and since we conditioned on $\FF$ then $\FF(1)$ occurs, then the output $\htt_1$ is good and therefore
    \begin{align}
        \htt_1
        =         & m_V\cdot (1\pm \eps/2) \pm \W\cdot \eps/8  \\
        \subseteq & m_V\cdot (1\pm \eps/2) \pm m_V\cdot \eps/8 \\
        \subseteq & m_V\apm \;. \label{eq22:h1}
    \end{align}
    Moreover, we get that $\htt_1\geq m_V(1-\eps)\geq m_V/2\geq \W$, which means that $\htt_1\geq \W$, and therefore $\httt$ the output of the algorithm is equal to $\htt_1$.
    Moreover, by \Cref{eq22:h1} we get that $\htt_1=m_V\apm$, and since $\httt=\htt_1$, we get that $\httt=m_V\apm$.
This completes the proof that if $\W\geq m_V/2$, then $\hat{t}\neq \bot$.

~\\Finally, we show that if $\httt=\htt_s$, then $\httt=m_V\apm$.
    The case that $\httt=\htt_s$ occurs only if $\htt_s\geq \W\cdot 2^s$, so it suffices to show that if $\htt_s$ is the output of $\crec(V,\W\cdot 2^s,\eps)$ and $\htt_s\geq \W\cdot 2^s$, then $\htt_s=m_V\apm$ with probability at least $1-1/n^6$. If $\FF$ occurs, then $\htt_s=m_V\apm$.

For this, we prove the following properties under the assumption that $\htt_s$ is good, i.e., that $\htt_s = m_V\cdot (1\pm \eps/2) \pm (\W\cdot 2^i)\cdot \eps/8$.
    \begin{enumerate}
        \item If $2^s\cdot \W\geq 4m_V$, then $\htt_s< 2^s\cdot \W$.
        \item If $2^s\cdot \W\leq 4m_V$, then $\htt_s=m_V\apm$.
    \end{enumerate}
    Before proving these properties, we explain why they imply that $\httt=m_V\apm$.
    By Property (2), we get that if $2^s\cdot \W\leq 4m_V$ then $\htt_s=m_V\apm$, and therefore $\httt=m_V\apm$.
    We want to show that indeed $2^s\cdot \W\leq 4m_V$.
    To this end, we assume towards a contradiction that $2^s\cdot \W>4m_V$. However, by Property (1) this would give that $\htt_s< 2^s\cdot \W$, which is a contradiction to the assumption that $\httt=\htt_s$, because the latter can only occur when $\htt_s\geq 2^s\cdot \W$.
    Thus, if $\httt=\htt_s$, then $\httt=m_V\apm$. It remains to prove the two properties.
    \paragraph{Proof of Property (1).}
    If $2^s\W\geq 4m_V$ then
    $m_V\apmt\leq 2^s\W (1+\eps)/4$. Further, we condition on $\htt_s$ being good, which means that
    \begin{align*}
        \htt_s & =m_V\apmt \pm\W\eps/8                 \\
               & \leq 2^s\W(1+\eps)/4 +2^s\W\eps/8     \\
               & \leq 2^s\W\cdot ((1+\eps)/4 + \eps/8) \\
               & \leq 2^s\W\cdot 3/4 < 2^s\W\;,
\end{align*}
where the penultimate inequality follows as $\eps\leq 1/4$. This completes the proof of Property (1).
    \paragraph{Proof of Property (2).}
    If $2^s\W\leq 4m_V$ then, since we condition on $\htt_s$ being good, we have
    \begin{align*}
        \htt_s=m_V\apmt \pm2^s\W\eps/8
        \subseteq m_V\apmt \pm m_V\eps/2
        \subseteq m_V\apm\;.
    \end{align*}
This completes the proof of Property (2), and therefore the proof that if $\httt=\htt_s$, then $\httt=m_V\apm$.

    To conclude, we have that the algorithm outputs either $\bbot$ or a value $\httt$. The value $\httt$ is equal to $m_V\apm$ with probability at least $1-1/n^6$.
    If $\W\geq m_V/2$, then the probability that the algorithm outputs $\bbot$ is at most $1/n^6$.
    
\end{proof}
It remains to prove \Cref{lemma:alg trunk}.
\begin{proof}[Proof of \Cref{lemma:alg trunk}]
    We start with a description of the algorithm $\algWrapD{V,\W,\eps}$.
\begin{mdframed}[frametitle={$\algWrapD{V,\W,\eps}$}, frametitlealignment=\centering, backgroundcolor=gray!10]
        \begin{enumerate}
            \item Set $\eps'\gets\neweps\;,\; \Lambda\gets \W \cdot{\eps'}^2/\Qd\;,\; r\gets \rr$.
            \item Make $r$ independent calls to the algorithm $\algCorez{V,\Lambda}$, and denote the result of the $i$-th call by $\htt_i$.
            \item Let $b$ denote the number of calls that output $\bot$.
                  If $b\geq r/10$, then output $\bot$.
            \item Let $\htt$ denote the median of the $r-b$ calls that do not output $\bot$, and output $\htt$.
        \end{enumerate}
    \end{mdframed}
We first prove that if $\W\geq m_V/g(k)$ then the probability that the algorithm outputs $\bot$ is at most $1/n^7$.
    We show that if $\W\geq m_V/g(k)$ then
    \begin{align}
        \frac{k\cdot m_V}{\Lambda} \leq \frac{\zeta}{\log^4 n}\;. \label{eq:kml}
    \end{align}
    Recall that by \Cref{prop2:no bot} if \Cref{eq:kml} holds, then the probability that the algorithm $\arec(V,\Lambda)$ outputs $\bot$ is at most $2/\log^3 n$.
    Recall that $g(k)= \cclog$,
    $\zeta= k\cdot \cclog\cdot \frac{\Qd}{{\eps'}^2}\cdot \log^4 n$, and therefore
    $\zeta= k\cdot \frac{\Qd}{{\eps'}^2}\cdot g(k)\cdot \log^4 n$.
    We therefore get that
\begin{align*}
        \frac{k\cdot m_V}{\Lambda} & \leq \frac{\zeta}{\log^4 n}                       &      & \iff \\
        \frac{k\cdot m_V}{\Lambda} & \leq \frac{k\cdot \frac{\Qd}{{\eps'}^2}\cdot g(k)
        \cdot \log^4 n}{\log^4 n}  &                                                   & \iff        \\
        \frac{m_V}{\Lambda}        & \leq \frac{\Qd}{{\eps'}^2}\cdot g(k)              &      & \iff \\
        \frac{m_V}{g(k)}           & \leq  \frac{\Qd}{{\eps'}^2}\cdot \Lambda\;.
\end{align*}
    Note that this completes the proof that if $\W\geq m_V/g(k)$ then \Cref{eq:kml} holds, because $\frac{\Qd}{{\eps'}^2}\cdot \Lambda=\W$.
    We conclude that the probability that each of the $r$ calls outputs $\bot$ is at most $2/\log^3$.
    This means that $b$ is a binomial random variable with $r$ trails and success probability at most $2/\log^3 n$. Therefore, $\Exp{b}\leq 2r/\log^3 n$, and by Chernoff's inequality, we get that $\Pr{b\geq r/10}\leq 2^{-r}$ which is at most $1/n^9$, as $r=\rr$.

We are left with proving that if the algorithm outputs a value $\htt$, then
    \begin{align*}
        \Pr{\htt=m_V\cdot (1\pm \eps/2) \pm \W\eps/8}\geq 1-\frac{1}{n^6}\;.
    \end{align*}
    In other words, we need to show that if if $b<r/10$, then the median of all the values $\htt_i$ for which $\htt_i\neq \bot$ falls inside the interval $m_V\cdot (1\pm \eps/2) \pm \W\eps/8$ with probability at least $1-1/n^6$.
To show that, it suffices to prove that for every $i\in [r]$ for which $\htt_i\neq \bot$, it holds that
    \begin{align*}
        \Pr{\htt_i=m_V\cdot (1\pm \eps/2) \pm \W\eps/8}\geq 1-\frac{1}{n^6}\;,
    \end{align*}
    and then to use the median of means trick, detailed in \Cref{claim:med trick0} in \Cref{app:median}, to prove that the median $\htt$ falls inside this interval with probability at least $1-1/n^7$. Note that $\htt$ is the median of at least $9r/10$ values.
    By \Cref{lemma:simp}, we have that if $\htt_i\neq \bot$, then
    \begin{align*}
        \Pr{\htt_i=m_V\cdot (1\pm \eps/2) \pm \Qd\cdot 2\Lambda/\eps'}\geq 1-\frac1{\log n}\;.
    \end{align*}
    We show that the interval $m_V\cdot (1\pm \eps/2) \pm \Qd\cdot 2\Lambda/\eps'$ is contained in the interval $m_V\cdot (1\pm \eps/2) \pm \W\eps/8$, which follows as $\Qd\cdot 2\Lambda/\eps'\leq \W\eps/8$.
    To see that, we plug in $\Lambda=\W\cdot {\eps'}^2/\Qd$ and get that
    \begin{align*}
        \Qd\cdot 2\Lambda/\eps' = 2\W\cdot \eps' \leq \W\cdot \eps/8\;,
    \end{align*}
    where the last inequality follows as $\eps'\leq \eps/4$.
    This completes the proof.
    
\end{proof}
 
\newcommand{\im}{i^*}
\newcommand{\tc}{6\log n}
\newcommand{\tcb}{10k\log n}
\newcommand{\nmi}{n_\mathrm{min}}
\section{Proof of \Cref{thm4:main}}\label{sec:complexity}
The main result of this section is the proof of \Cref{thm4:main} given the implementations of $\Cialg$ and $\FindHalg$, which are presented in
\Cref{sec:count heavy,sec:find heavy} respectively.
Recall that for a $k$-disjoint set $V=V_1\sqcup V_2\sqcup \ldots \sqcup V_k$, and a set $U\subseteq V$, we use $\gs(U)$ to denote the size $\prod_{i=1}^k \abs{U\cap V_i}$.
\thmMain* 

We describe how to implement the algorithm $\algmain$ using the algorithm $\apx$.
\begin{mdframed}[frametitle={The algorithm $\algmain$}, frametitlealignment=\centering, backgroundcolor=gray!10]
For $i=0$ to $i=k\log n$ do:
    \begin{enumerate}
        \item $\htt\gets \apx(\Vin,\gs(\Vin)\cdot 2^{-i},\eps)$.
        \item If $\htt\neq \bbot$, then output $\htt$.
\end{enumerate}
\end{mdframed}

As promised by \Cref{lemma2:alg apx}, if $\W\leq m/2$, then the algorithm outputs $\htt=m\apm$ with probability at least $1-1/n^6$, which ensures that this process terminates with the same probability.
To prove that it terminates with the correct output, we use the second guarantee of \Cref{lemma2:alg apx}, which states that if the output $\htt\neq \bbot$, then $\htt=m\apm$ with probability at least $1-1/n^6$.
In other words, an output which is not $\bbot$ is correct with high probability, and the algorithm eventually outputs an output which is not $\bbot$.

~\\To complete the proof of \Cref{thm4:main}, we use the following theorem that bounds the complexity of the algorithm $\apx$.
\begin{theorem}\label{thm:apx complexity}
    The algorithm $\apx(V,\W,\eps)$ queries the \HO at most $T$ times, and has query measure of at most $R$, where
    \begin{align*}
        T=(k\log n)^{O(k^3)}/\eps^{2k}\;, &  & R=\frac{\gs(V)}{\W}\cdot \frac{(k\cdot\log n)^{O(k^3)}}{\eps^{2k}}\;.
\end{align*}
\end{theorem}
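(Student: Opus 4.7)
The plan is to bound the query complexity $T$ and the query measure $R$ of $\apx$ by peeling off its three layers: the outer doubling loop, the median-of-means wrapper $\crec$, and the recursive core $\arec$. Since $\apx$ invokes $\crec$ at most $k\log n+1$ times and $\crec$ invokes $\arec$ exactly $r=O(\log n)$ times, both $T$ and $R$ reduce, up to polylogarithmic factors absorbed in the $(k\log n)^{O(k^3)}$ shape, to the corresponding bounds for a single call $\arec(V,\Lambda)$ with $\Lambda=\W\cdot(\eps')^2/\Qd$.

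For a single call to $\arec(V,\Lambda)$, \Cref{alg:template recA} shows at most $\DD[\Lambda]\le\log n$ recursion levels, each consisting of one call to $\FindHalgB$, one call to $\Cialg$, and eventually a single terminating call to $\algPrev$. By the specifications of $\FindHalgB$ (\Cref{algo:find heavy bb bounded}) and $\algPrev$ (\Cref{thm:detect to count}), those two procedures contribute $O(\psi\cdot\zeta)=(k\log n)^{O(k^2)}/\eps^2$ and $(k\log n)^{O(k)}/\eps^2$ queries respectively. The dominant term is $\Cialg$: as described in the technical overview, it approximates the degree of a vertex $v\in V_\ell$ by running the full approximate counting algorithm on the $(k-1)$-partite hypergraph $G'=(V\setminus V_\ell,\{e\setminus\{v\}:v\in e\in E\})$, simulating each query $U'$ on $G'$ by the query $U'\cup\{v\}$ on $G$. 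Letting $f_k(\eps)$ denote the worst-case query complexity of $\apx$ on a $k$-partite input, this simulation yields the recurrence
\begin{equation*}
f_k(\eps)\;\le\;(k\log n)^{O(k^2)}/\eps^2\cdot f_{k-1}\!\brak{\eps/\Theta(\log n)},
\end{equation*}
with base case $f_1(\eps)=(\log n)^{O(1)}/\eps^2$ from \Cref{thm:detect to count}. Unrolling over the $k$ partite dimensions yields $f_k(\eps)=(k\log n)^{O(k^3)}/\eps^{2k}$, which is the bound on $T$.

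For the query measure I would use \Cref{claim:c2}. At depth $i$ of $\arec$, the active set $V_i$ is obtained from $V$ by $i$ independent rounds of $p=\tfrac12$ sampling (the heavy-vertex removal only shrinks $V_i$ further), so $i$ applications of \Cref{claim:c2} with a union bound give $\gs(V_i)\le\gs(V)\cdot(6\log n/2^i)^k$ with probability $1-O(ki/n^6)$. Since $\arec$ recurses only while $\Lambda_i=\Lambda_0\cdot 2^{-ki}>1$, this forces $2^{-ki}\le 1/\Lambda_0$ at every level, hence $\gs(V_i)\le\gs(V)\cdot(6\log n)^k/\Lambda_0$ throughout. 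Substituting $\Lambda_0=\W\cdot(\eps')^2/\Qd$ and $\eps'=\eps/(4\log n)$ gives $\gs(V_i)\le(\gs(V)/\W)\cdot(\log n)^{O(k)}/\eps^2$ for every query that $\FindHalgB$ or $\algPrev$ issues at the top level. The simulation $U'\mapsto U'\cup\{v\}$ used inside $\Cialg$ satisfies $\gs(U'\cup\{v\})=\gs'(U')$, where $\gs'$ denotes the product measure on $G'$, so queries arising from the nested recursion of $\Cialg$ inherit the query-measure bound of the $(k-1)$-partite call on $G'$. Induction on $k$ through the same recurrence as above yields $R=(\gs(V)/\W)\cdot(k\log n)^{O(k^3)}/\eps^{2k}$.

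The main obstacle I expect is the joint bookkeeping of query count and query measure across the nested partite recursion inside $\Cialg$: one must verify that the heaviness threshold $\W'$ chosen for the recursive call on $G'$ is related to the top-level parameters in such a way that the inductive measure bound, when transported via $U'\mapsto U'\cup\{v\}$, composes cleanly into $\gs(V)/\W$ scaled only by the polylogarithmic and $\eps^{-2k}$ factors. Once the recurrence is set up, the closed-form $(k\log n)^{O(k^3)}/\eps^{2k}$ is immediate and the overheads from the $O(\log n)$ median repetitions of $\crec$ and the $O(k\log n)$ doublings of $\apx$ are absorbed.
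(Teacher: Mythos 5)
Your query-count recurrence is essentially the paper's recursion-tree argument (Section~4.2) written as a functional recurrence, and the unrolling does give $(k\log n)^{O(k^3)}/\eps^{2k}$; that half is fine. The query-measure half contains a genuine error. You write that because $\arec$ recurses only while $\Lambda_i=\Lambda_0\cdot 2^{-ki}>1$, ``this forces $2^{-ki}\le 1/\Lambda_0$ at every level.'' The implication is reversed: $\Lambda_0\cdot 2^{-ki}>1$ gives $2^{-ki}>1/\Lambda_0$, not $\le$. Consequently the asserted bound $\gs(V_i)\le\gs(V)\cdot(6\log n)^k/\Lambda_0$ ``throughout'' is false: at $i=0$ it reads $\gs(V)\le\gs(V)\cdot(6\log n)^k/\Lambda_0$, which fails whenever $\Lambda_0>(6\log n)^k$, i.e.\ in the generic regime where $\W$ is not merely polylogarithmic.

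The missing ingredient is the per-call query-measure guarantee of $\FindHalgB$. You are bounding the measure of queries at depth $i$ as if it were $\gs(V_i)$, but \Cref{thm3:find heavy bounded} shows that $\FindHalgB(V_i,\Lambda_i)$ queries only sets of measure $\TO{\gs(V_i)/\Lambda_i}$. The extra factor $1/\Lambda_i=2^{ki}/\Lambda_0$ exactly cancels the $2^{-ki}$ shrinkage of $\gs(V_i)$ supplied by \Cref{claim:c2}, so $\gs(V_i)/\Lambda_i\le\gs(V)\cdot(6\log n)^k/\Lambda_0$ holds unconditionally, with no appeal to when the recursion terminates; this is precisely the step the paper performs in Section~4.3 before invoking the Chernoff bound. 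Without explicitly invoking the $\FindHalgB$ guarantee, your bound does not close. The same issue propagates into the nested $(k-1)$-partite calls: the identity $\gs(U'\cup\{v\})=\gs'(U')$ is correct, but you still need the analogue of the paper's Property~(2), which relates $\gs(V')/\W'$ for the call $\apx_{h-1}(V',\W',\cdot)$ back to $\gs(V)/\W$, and deriving it requires tracking the heaviness threshold $\W'=\llow\cdot p^{ki}/k$ passed by $\Cialg$ together with the $1/\Lambda_i$ factor from $\FindHalgB$. You flag this joint bookkeeping as the main obstacle and leave it unresolved, so the $R$ bound is not actually established.
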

We prove \Cref{thm4:main} using \Cref{thm:apx complexity}.
\begin{proof}[Proof of \Cref{thm4:main} using \Cref{thm:apx complexity}]

    The algorithm $\algmain$ invokes the algorithm $\apx$ at most $k\log n$ times.
    The $i$-th execution of $\apx$ includes the following parameters $\apx(\Vin,\mu(\Vin)\cdot 2^{-i},\eps)$.
    Let $r$ be a random variable that is equal to the number of times that the algorithm $\algmain$ executes the algorithm $\apx$.
    By the guarantee of \Cref{lemma2:alg apx}, with probability at least $1-1/n^6$, the random variable $r$ satisfies $\mu(\Vin)/2^r\geq m/4$.
    To see why, note that for any $i$ for which $n^k/2^i\leq m/2$, the algorithm $\apx$ outputs $\bbot$ with probability at most $1/n^6$.
    If it does not output $\bbot$, then the algorithm $\algmain$ stops and outputs a value obtained from an execution of the algorithm $\apx$.
    This value is guaranteed to be a $\apm$ approximation for $m$ with probability at least $1-1/n^6$.
    This completes the proof of the correctness of the algorithm $\algmain$.

    ~\\ We analyze the number of queries and the query measure of the algorithm $\algmain$.
    Each call to the algorithm $\apx$ queries the \HO at most $T$ times independently of the input.
    Since the algorithm $\algmain$ makes at most $k\log n$ calls to the algorithm $\apx$ (with different inputs) the total number of times the oracle is queried is at most $k\log n\cdot T$.

    \newcommand{\QM}{M_0}
    Let $\QM \triangleq\frac{(k\cdot\log n)^{O(k^3)}}{\eps^{2k}}$. 
    Let $R_i$ denote the query measure of the $i$-th call $\apx(\Vin,\gs(\Vin)\cdot 2^{-i},\eps)$.
    The query measure of the algorithm $\algmain$ is $\max_{i} R_i$.
    From \Cref{thm:apx complexity}, we have that 
    \begin{align*}
        R_i\leq \QM\cdot \frac{\gs(\Vin)}{\gs(\Vin)\cdot 2^{-i}}=\QM\cdot 2^i\;.
    \end{align*}
    In other words, the sequence $\set{R_i}_i$ is monotonically increasing.
    Recall that we use $r$ to denote a random variable that is equal to the number of calls that the algorithm $\algmain$ makes to the algorithm $\apx$. We showed that with probability at least $1-1/n^6$, $r$ satisfies $\gs(\Vin)/2^r\geq m/4$, and therefore we can bound $\max_{i} R_i$ by $\QM \cdot 2^r$. 
    Therefore, 
    \begin{align*}
        \QM \cdot 2^r \leq 4\cdot \QM \cdot \frac{\gs(\Vin)}{m} \leq \frac{\gs(\Vin)}{m}\cdot \frac{(k\cdot\log n)^{O(k^3)}}{\eps^{2k}}\;,
    \end{align*}
    and therefore completes the proof.
\end{proof}

For the rest of this section we prove \Cref{thm:apx complexity}.
In \Cref{ssec:overview}, we start with an overview of the structure of the recursion created by the  algorithm $\apx$. In \Cref{ssec:number of q}, we analyze the number of queries made by the algorithm $\apx$.
In \Cref{ssec:query measure}, we analyze the query measure of the algorithm $\apx$. The reader may find it useful to consult with \Cref{fig:call-flow} for the analysis.

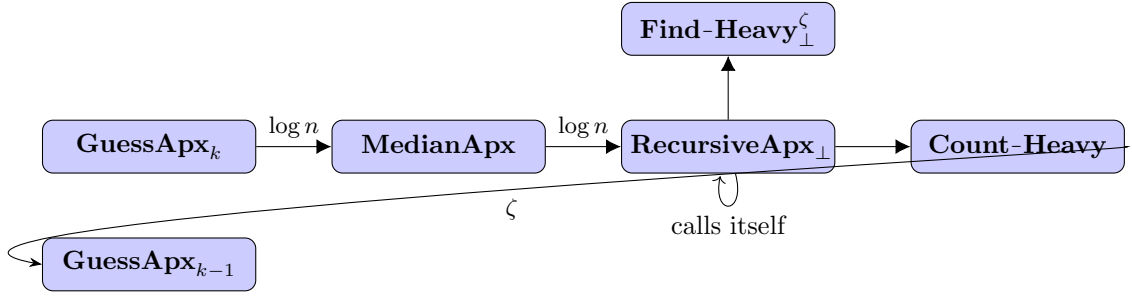
\begin{figure}[h]
    \tikzset{
block/.style={
        rectangle,
        draw,
        fill=blue!20,
        text width=7.4em,
        text centered,
        minimum height=2em,
        rounded corners
    },
line/.style={
draw,
-{Latex[length=2mm,width=2mm]}
},
zpath/.style={
to path={|- (\tikztotarget)},
draw,
-{Latex[length=2mm,width=2mm]}
},
curlypath/.style={
draw,
-{Latex[length=2mm,width=2mm]},
looseness=1,
bend right=60
},
spath/.style={
draw,
-{Latex[length=2mm,width=2mm]},
to path={let \p1 = ($(\tikztotarget) - (\tikztostart)$),
        \n2 = {veclen(\x1,\y1)},
        \n3 = {int(mod(\n2,2))} in
        .. controls ($(\tikztostart)!0.5!(\tikztotarget) + (0, \n3 * 0.5cm)$) and
        ($(\tikztostart)!0.5!(\tikztotarget) - (0, \n3 * 0.5cm)$) ..
        (\tikztotarget)}
}
}
\hspace{-2cm}
\begin{tikzpicture}[node distance=.85cm and 1cm]
\node [block] (appk) {$\apx_k$};
    \node [block, right=of appk] (c) {$\algWrapDNP$};
    \node [block, right=of c] (arec) {$\algCorezNP$};
    \node [block, above=of arec] (findheavy) {$\FindHalgB$};
    \node [block, right=of arec] (countheavy) {$\Cialg$};
    \node [block, below=of appk] (appk1) {$\apx_{k-1}$};

\path [line] (appk) -- (c) node[midway, above] {\small $\log n$};
    \path [line] (c) -- (arec) node[midway, above] {\small $\log n$};
    \path [line] (arec) edge [loop below] node {calls itself} (arec);
    \path [line] (arec) -- (findheavy);
    \path [line] (arec) -- (countheavy);
\draw [->, >=Stealth] (countheavy.east) .. controls (14,0) and (-5,-1) .. (appk1.west) node[midway, below] {\small $\zeta$};
\end{tikzpicture}

     \caption{This figure presents the call flow of the different algorithms.
        A directed edge from algorithm $A$ to algorithm $B$ indicates that algorithm $A$ makes a call to algorithm $B$. A value $x$ on such edge indicates an upper bound on the number of times that algorithm $A$ makes a call to algorithm $B$. This figure explains the structure of the recursion tree created by the algorithm $\apx$. An algorithm with out-degree zero is a leaf in the recursion tree, where only leaves query the \HO.
    }
    \label[figure]{fig:call-flow}
\end{figure}

We use the following parameters in the analysis.
Recall that
$g(k)= \cclog$,
$\zeta= k\cdot \cclog\cdot \frac{\Qd}{{\eps'}^2}\cdot \log^4 n$,
$\Qd=\Qdval$,
$\psi=4^k\cdot k^2\log(k)\cdot\log^{k+2} n$,
and define $\rho\triangleq \Qd/{\eps'}^2$.
Also, for every $\eps$ we define $\eps'\triangleq \neweps$.
\subsection{Overview of the Recursion}\label{ssec:overview}
The algorithm $\apx(V,\W,\eps)$, detailed in \Cref{lemma2:alg apx}, is a recursive algorithm that computes a $\apm$ approximation for the number of hyperedges $m_V$ in $G[V]$.
It attempts to compute a $\apm$ approximation for $m_V$ and will succeed \whp if $\W \leq m_V/2$. If $\W > m_V/2$, the algorithm may fail to compute such an approximation and outputs $\bbot$ instead.
The advantage of $\apx(V,\W,\eps)$ lies in its running time, which depends on $\W$ rather than on $m_V$.
This means that if $m_V$ is small and approximating it is too costly then the algorithm indicates that this is the case by outputting $\bbot$.

We also have several ``internal'' algorithms, which are used by the main algorithms. These are the algorithms $\algWrapDNP$, $\algCorezNP$, $\FindHalgB$, and $\Cialg$.

\paragraph*{$\algWrapDNP$.} The algorithm $\algWrapD{V,\W,\eps'}$, specified in \Cref{lemma:alg trunk}, is an amplification of the algorithm $\algCorezNP$.
It takes as input a vertex set $V$, a heaviness threshold $\W$, and an error parameter $\eps$. Let $\eps'=\neweps$.
The algorithm $\algWrapDNP$ executes $\algCorez{V,\W\cdot \eps'/\Qd}$ for $\Theta(\log n)$ times. Let $\hat{s}$ denote the median of the results, while ignoring the outputs that are $\bot$.
If the number of outputs that are $\bot$ is too large, the algorithm outputs $\bot$, and otherwise it outputs $\hat{s}$.

\paragraph*{$\arec(V,\Lambda)$.}
The algorithm $\arec(V,\Lambda)$, specified in \Cref{lemma:simp}, is a recursive algorithm that takes as input a vertex set $V$, a heaviness threshold $\Lambda$, and an error parameter $\eps'$. It outputs either $\bot$ or a value $\htt$ such that
\begin{align*}
    \Pr{\htt=m_V\apm\pm \Qd\cdot 2\Lambda/\eps'}\geq 1-1/\log n\;.
\end{align*}
If $\Lambda$ is sufficiently large then the algorithm outputs $\bot$ with probability at most $2/\log^3 n$.
The algorithm $\algCorez{V,\Lambda}$ operates as follows.
In case that $\Lambda>1$, the algorithm makes three calls in the following order.
\begin{enumerate}
    \item A call to $\FindHalgB(V,\Lambda)$, which computes a set $\vl$ or outputs $\bot$.
    \item A call to $\Cialg(V,\vl,\llow)$, where $\llow=\Lambda/(4^k\cdot \cclog)$.
    \item A call to $\algCorez{V[p]-\vl,\Lambda\cdot p^k}$.
\end{enumerate}
It uses the procedure $\FindHalgB$ to compute the set $\vl$.
It then uses the procedure $\Cialg$ to approximate $m_V(\vl)$ which is the number of hyperedges in $G[V]$ that contain at least one vertex from $\vl$.
The final call is a recursive call to $\algCorez{V[p]-\vl,\Lambda\cdot p^k}$, where $p$ is a parameter that is set to $1/2$, and the set $V[p]-\vl$ is a random subset of vertices taken from the set $V-\vl$, where every vertex is added to the new set independently with probability $p$.

The base case of the recursion is when $\Lambda\leq 1$.
In this case, the algorithm makes only the call to the algorithm $\FindHalgB(V,\Lambda)$ to compute a set $\vl$. This call can also output $\bot$ and this case $\arec$ also outputs $\bot$.
If the output of $\FindHalgB(V,\Lambda)$ is not $\bot$, then it is a set of vertices $\vl$ such that $\abs{\vl}\leq \zeta$.
Moreover, the set $\vl$ contains only $\llow$-heavy vertices and all $\Lambda$-heavy vertices \whp, where $\llow=\Lambda/(4^k\cdot \cclog)$.
Since $\Lambda\leq 1$, the set of $\Lambda$-heavy vertices contains all non isolated vertices in $G[V]$. Therefore, $m_V(\vl)$ is equal to $m_{\vl}(\vl)$ which is the number of edges in $G[\vl]$.
The number of edges in $G[\vl]$ is approximated by calling the algorithm $\algPrev(\vl,\eps')$, which was presented in \cite{DellLM22,bhattacharya2024faster} and is formally detailed in \Cref{thm:detect to count}.
This algorithm makes at most $\frac{(k\log n)^{O(k)}}{\eps'^2}$ queries to a \HO, while querying sets of vertices $U\subseteq \vl$, and therefore has query measure of at most $\gs(\vl)\leq \abs{\vl}^k\leq \zeta^k$.
The last inequality holds because the procedure $\FindHalgB$ never outputs a set of vertices of size larger than $\zeta$.
This completes the description of the algorithm $\arec(V,\Lambda)$.

\paragraph{$\FindHalgB$.} The procedure $\FindHalgB(V,\Lambda)$, which we introduced in the previous section in \Cref{algo:find heavy bb bounded}, is a procedure that takes as input a vertex set $V$ and a heaviness threshold $\Lambda$.
It outputs either $\bot$ or a set $\vl$ of all $\Lambda$-heavy vertices, and no $\llow$-light vertices with probability at least $1-1/n^4$, where $\llow=\Lambda/(4^k\cdot \cclog)$.
Later on, in \Cref{sec:find heavy}, and more specifically in \Cref{thm3:find heavy bounded}, we prove that it does not make any additional recursive calls.
We also prove that it only queries the oracle at most $O(\psi\cdot \zeta)$ times, while only querying sets of vertices $U$ for which $\gs(U)=\TO{\gs(V)/\Lambda}$.

\paragraph{$\Cialg$.} The procedure $\Cialg(V,\vl,\llow)$,
which we introduced in the previous section in \Cref{algo:count heavy}, is a procedure that takes as input a vertex set $V$ together with a subset $\vl\subseteq V$ and a heaviness threshold $\llow$.
It computes an approximation for $m_V(\vl)$.
Later on, in \Cref{sec:count heavy} we provide a formal description of this procedure, and how it is implemented.
We provide an overview of this procedure, which is also recursive, and makes additional calls to the algorithm $\apx$.

The procedure $\Cialg$ selects a vertex $v$ from $\vl$, approximates its degree $d_v$ in the hypergraph $G[V]$, denoted by $\hdv$, and then removes $v$ from the vertex set $V$. Consequently, all hyperedges containing $v$ are removed from $G[V]$. This process repeats until all vertices in $\vl$ are removed. The output of $\Cialg$ is the sum of the approximated degrees, $\sum_{v\in \vl} \hdv$. Removing $v$ from $V$ after approximating its degree ensures that the same hyperedge is not counted multiple times.

To approximate the degree of a vertex $v$ in a $k$-partite hypergraph, we use our main algorithm $\apx$ to approximate the number of hyperedges in a newly formed $(k-1)$-partite hypergraph $G_v$.
This hypergraph is derived by removing $v$ and all vertices in the same vertex set $\Vin_\ell$ that satisfies $v\in \Vin_\ell$ from $G$.
The set of hyperedges of the new graph $G_v$ is obtained as follows.
For every hyperedge $e$ in $E_V(v)$, we define a new hyperedge $e' = e \setminus \set{v}$, and add it to the set of hyperedges of the new graph $G_v$.
We then approximate the number of hyperedges in $G_v$, providing an estimate of $v$'s degree in $G$.
In other words, the call $\Cialg(V,\vl,\llow)$ makes $\abs{\vl}$ calls to the
algorithm $\apx_{k-1}$.
This algorithm is equipped with a new oracle, denoted by $\HO'$, which operates over $G_v$. We can easily simulate $\HO'$ using the original \HO: Given a set $U$ of vertices, we can simulate the query $\HO'({U})$ by
making the query $\HO(U\cup \sset{v})$.

The base case of this recursion is when $k=1$. In this case, the procedure $\Cialg$ makes no recursive calls, and directly computes the degree of a vertex $v$ using $\HO'$ via the query $\BB'(\set{v})$.
This means that at the base case, this procedure queries the \HO at most $\abs{\vl}$ times, where every such query has query measure of $O(1)$. This completes the overview of the procedure $\Cialg$.

~\\We summarize the three types of calls that access the oracle.
\begin{enumerate}
    \item A call to the algorithm $\algPrev(\vl,\eps')$.
          Such a call queries the \HO at most $\frac{(k\log n)^{O(k)}}{\eps'^2}\leq \zeta$ times,  while querying sets of vertices $U\subseteq \vl$ and therefore $\gs(U)\leq \zeta^k$.
\item A call to the procedure $\Cialg(U,\vl,\llow)$ over a $1$-partite hypergraph. Such a call queries the \HO at most $\BO{\abs{\vl}}$ times, which is at most $\zeta$. Each such query has query measure of $O(1)$.
    \item A call to the procedure $\FindHalgB(V,\Lambda)$.
          Such a call queries the \HO at most $\BO{\psi\cdot \zeta}$ times,
          where each query has query measure of $\TO{\gs(V)/\Lambda}$.
\end{enumerate}
This completes the overview of the structure of the recursion.

~\\We define the recursion tree of the algorithm $\apx(V,\W,\eps)$, which will be use to bound the number of queries and the query measure of the algorithm.
Each node in the recursion tree corresponds to a call to a specific algorithm, where the root of the tree corresponds to the call $\apx(V,\W,\eps)$.
The children of a node $x$ in the recursion tree correspond to the calls that are made by the algorithm that is associated with the node $x$.
The nodes that query the \HO are the leaves of the recursion tree.

Recall that the procedure $\Cialg$ invokes the algorithm $\apx$ to approximate the number of hyperedges in a $(k-1)$-partite hypergraph.
We use $\apx_h$ to emphasize a call to the algorithm $\apx$ on a $h$-partite hypergraph.
We define the dimension of a node $x$ in the recursion tree as follows.
For a node $x$ that is associated with a call to the algorithm $\apx_h(V,\W,\eps)$, which means that the vertex set $V$ contains $h$ vertex sets,
we say that the dimension of $x$ is $h$.
For a node $x$ that is not associated with a call to the algorithm $\apx$,
we define the dimension of $x$ as the dimension of its parent node.
\subsection{Bounding the Number of Queries}\label{ssec:number of q}
In this subsection, we prove that the algorithm $\apx(V,\W,\eps)$ makes at most $\BO{\log^{6k}n\cdot \zeta^{k}\cdot \psi}$ queries to the \HO.

We analyze the recursion tree generated by the algorithm $\apx_k$.
We can ignore the parameters in each call, as the number of queries made by any call in this tree is always bounded by $O(\psi\cdot \zeta)$ independently of the parameters given to the call.
It therefore suffices to bound the number of leaves of the recursion tree.

For that, we bound the number of leaves of dimension $h$ for $h\in[k]$.
Moreover, we bound the number of nodes of dimension $h$ that are associated with a call to the algorithm $\apx_{h}$, for $h\in[k]$.
This provides us with a recursive relation that bounds the total number of leaves in the recursion tree.
We need to prove the following properties.
\begin{enumerate}
    \item Each leaf queries the \HO at most $\BO{\psi\cdot \zeta}$ times.
\item A call $\apx_h(V,\W,\eps)$ has at most $\alpha\leq 3\log^3 n$ leaves of the same dimension $h$.
    \item A call $\apx_h(V,\W,\eps)$ has at most $\beta=\log^3 n \cdot \zeta$ descendant nodes which are associated with a call to the algorithm $\apx_{h-1}$.
\end{enumerate}
Note that the number of queries that are made indirectly by a node $x$ which is associated with a call to the algorithm $\apx_h$ is bounded by sum of the number of queries made by the leaves of dimension $h$, and the nodes of dimension $h-1$.

Once we show the above properties, we can bound the number of leaves of dimension $h$ in the recursion tree of the algorithm $\apx_k$ as follows.
For every $h\in[k]$, the number of leaves of dimension $h$ is at most the number of nodes of dimension $h$ that are associated with a call to the algorithm $\apx_h$ times the maximum number of leaves of dimension $h$ each such node can have.
The number of nodes of dimension $(k-h)$ that are associated with a call to the algorithm $\apx_{k-h}$ is at most $\beta^{h}$, and the maximum number of leaves that each such node can have is at most $\alpha$.
Therefore, the total number of leaves of dimension $(k-h)$ is at most $\alpha\cdot \beta^{h}$.
Therefore, the total number of leaves in the tree is
\begin{align*}
    \sum_{i=0}^{k-1} \alpha\cdot \beta^{i}
    \leq \alpha\cdot 2\beta^{k-1}\;,
\end{align*}
where the inequality holds because this is a sum of a geometric series with ratio $\beta\geq 2$.
Using the first property, we get that the number of queries made by the algorithm $\apx$ is at most
\begin{align*}
    \BO{2\alpha\beta^{k-1}\cdot \psi\cdot \zeta}
    =\BO{\log^{3k}n\cdot \zeta^{k}\cdot \psi}\;.
\end{align*}
To show that this is the desired bound, we need to show that
\begin{align*}
    \log^{3k}n\cdot \zeta^{k}\cdot \psi =(k\log n)^{O(k^3)}/\eps^{2k}\;.
\end{align*}
Clearly, $\log^{3k}n=(k\log n)^{O(k^2)}$, so we are left with showing that
$\psi = (k\log n)^{O(k^2)}$, and that $\zeta = (k\log n)^{O(k^2)}/\eps^{k}$.
Recall that $\psi=4^k\cdot k^2\log(k)\cdot\log^{k+2} n$, which implies that $\psi=O(4^k\cdot \log^{k+3} n)=(k\log n)^{O(k^2)}$.
To show the bound on $\zeta$, recall that $\zeta = k\cdot \cclog\cdot \frac{\Qd}{{\eps'}^2}\cdot \log^4 n$, and that $\Qd=\Qdval$, which means that $\zeta = (k\log n)^{O(k^2)}/\eps^{2k}$.
Therefore, we conclude that a call to the algorithm $\apx(V,\W,\eps)$ queries the oracle at most $(k\log n)^{O(k^3)}/\eps^{2k}$ times, which proves the first part of \Cref{thm:apx complexity}.

~\\We are left with proving the properties.
The first property follows by the definition of the leaves of the recursion tree.
We explain how to get the bounds on $\alpha$ and $\beta$.
The algorithm $\apx_h$ makes at most $\log n$ calls to $\crec$, which in turn makes at most $\log n$ calls to $\arec$.
Simplifying the analysis, we can assume each call to $\apx(V,\Lambda)$ results in at most $\log^2 n$ calls to $\algCorez{V,\Lambda}$.
$\algCorez{V,\Lambda}$ itself makes up to $\log n$ calls each to $\FindHalgB$ and $\Cialg$, and one call to $\algPrev$.
Thus, each call of $\apx_h$ results in at most $\log^2 n$ calls to $\algPrev$, and up to $\log^3 n$ calls to both $\FindHalg$ and $\Cialg$.
This proves that $\alpha\leq 3\log^3 n$ and therefore the second property holds.
Since each call to $\Cialg$ results in at most $\zeta$ calls to $\apx_{h-1}$, we have that $\beta\leq \log^3 n \cdot \zeta$, which proves that the third property holds.
Therefore, all three properties hold, and we have that the algorithm $\apx$ makes at most $(k\log n)^{O(k^3)}/\eps^{2k}$ queries to the \HO, which was the goal of this subsection.

\subsection{Bounding the Query Measure}\label{ssec:query measure}
\sloppy{In this subsection, we prove that the algorithm $\apx(V,\W,\eps)$ has query measure at most }
\begin{align*}
    \BO{\frac{\gs(V)}{\W}\cdot \frac{(k\cdot\log n)^{O(k^3)}}{\eps^{2k}}}\;.
\end{align*}

Recall that we say that an algorithm has query measure
$R$ if every set $U$ which is queried by the algorithm satisfies that $\gs(U)\leq R$.
We define the query measure of a node $x$ in the recursion tree as the maximum query measure of its direct descendants. In other words, we can ignore the number of times one algorithm calls another, and we only care about the parameters that are given to the call, which determine the query measure of the call.
As before, it suffices to bound the query measure of leaves of the recursion tree, as they are the only nodes that query the \HO.
The query measure of the call $\apx(V,\W,\eps)$ is the maximum query measure of the leaves of the recursion tree of the algorithm $\apx(V,\W,\eps)$.
The query measure of leaves associated with a call to the algorithms $\algPrev$ and $\Cialg$ is at most $\BO{\zeta^k}$, which is at most $(k\log n)^{O(k^3)}/\eps^{2k}$. This value is at most $\BO{(\gs(V)/\W) \cdot \zeta^k}$, because $(\gs(V)/\W)$ is larger than $1$.
We are left with bounding the query measure of leaves associated with a call to the procedure $\FindHalgB(U,\Lambda)$.
We will provide an analysis of the procedure $\FindHalgB$, in \Cref{sec:find heavy}, where we also prove an that the query measure of the the call
$\FindHalgB(U,\Lambda)$ is at most $\TO{\gs(U)/\Lambda}$ (\Cref{thm3:find heavy bounded})
We therefore need to show that for every pair of parameters $U$ and $\Lambda$ given to the procedure $\FindHalgB$, we have that $\gs(U)/\Lambda= \TO{\gs(V)/\W}$ with probability at least $1-1/n^6$, where $V$ and $\W$ are the parameters given to the call $\apx(V,\Lambda,\eps)$, which is associated with the call to root of the recursion tree.
We use $\xi=4^k\cdot \cclog$, where $\xi=\Lambda/\llow$, and let $\rho'=\rho \cdot (\tc)^k$.

We split the analysis into two parts.
We show that the following two properties hold for every node $s$ of dimension $h$ that is associated with a call to the algorithm $\apx_h(V,\W,\eps)$.
\begin{enumerate}
    \item The query measure of leaves of the same dimension as the node $s_0$ is at most $\BO{\frac{\gs(V)}{\W}\cdot \rho'}$ with probability at least $1-1/n^6$.
    \item Every descendant node $s'$ of dimension $h-1$ that is associated with a call to the algorithm $\apx_{h-1}(V',\W',\eps)$ satisfies that $\frac{\gs(V')}{\W'}\leq \frac{\gs(V)}{\W} \cdot \rho'\cdot \xi$ with probability at least $1-1/n^6$.
\end{enumerate}

These two properties prove that the query measure of every leaf of dimension $(k-h)$ in the recursion tree that is associated with the call $\apx_k(V,\W,\eps)$ is at most
\begin{align*}
    \BO{\frac{\gs(V)}{\W}\cdot \rho' \cdot\brak{\rho' \cdot\xi}^h}
\end{align*}
with probability at least $1-1/n^6$.
Note that the overall size of the recursion tree is at most $(k\log n)^{O(k^3)}$, as it has at most $(k\log n)^{O(k^3)}$ leaves, and height at most $k$.
Therefore, we can use a union bound on all nodes, and get that with probability at least $1-1/n^5$ the query measure is at most
\begin{align*}
    \BO{\frac{\gs(V)}{\W}\cdot \rho' \cdot\brak{\rho' \cdot\xi}^k}\;.
\end{align*}
To complete the proof which bounds the query measure of the call $\apx(V,\W,\eps)$, we need to show that
\begin{align*}
    \frac{\gs(V)}{\W}\cdot \rho' \cdot\brak{\rho' \cdot\xi}^k = \frac{\gs(V)}{\W}\cdot \frac{(k\cdot\log n)^{O(k^3)}}{\eps^{2k}}\;.
\end{align*}
This follows as $\rho'=(k\log n)^{O(k^2)}/{\eps^2}$, and $\xi=(k\log n)^{O(k^2)}$. Both hold by the definition of $\rho'$ and $\xi$:
\begin{align*}
    \rho' & =Q/{\eps'}^2 \cdot (\tc)^k = (k\log n)^{O(k)}/\eps^{2k}\;, \\
    \xi   & =4^k\cdot \cclog=(k\log n)^{O(k^2)}\;,
\end{align*}
which completes the proof that the algorithm $\apx(V,\W,\eps)$ has query measure at most $\BO{\frac{\gs(V)}{\W}\cdot \frac{(k\cdot\log n)^{O(k^3)}}{\eps^{2k}}}$.
~\\ We are left with proving the two properties.

\paragraph{Proof of property (1).} We say that a node $y$ associated with a call to the algorithm $\arec$ is
\emph{fresh}, if its parent node $y'$ is not associated with a call to the algorithm $\arec$, and instead is associated with a call to the algorithm $\crec$.
The algorithm $\apx_h(V,\W,\eps)$ makes $O(\log^2 n)$ calls to nodes of the same dimension, associated with the algorithm $\arec(V,\Lambda)$, where $\Lambda=\W/\rho$.
We only count the number of fresh nodes of dimension $h$ in the recursion tree of the call $\apx_h(V,\W,\eps)$, which are associated with a call to the algorithm $\arec(V,\Lambda)$.
We fix one fresh node $a$, and analyze the query measure of the leaves of the same dimension as $a$, in $a$'s recursion tree, which are associated with a call the procedure $\FindHalgB$.
The node $a$ has at most $O(\log n)$ such descendants
, where each such decedent is associated with a call to the procedure $\FindHalgB(U_i,\Lambda\cdot p^{ki})$,
where the vertex sets $\set{U_i}$ are random subsets of $V$ that are obtained as follows (See \Cref{alg:template recA}).
The vertex set $U_0$ is $U_0=V$. The vertex set $U_{i+1}$ is obtained by choosing each vertex from $U_i$ independently with probability $p$, and then removing some vertices from the obtained set. We can ignore this removal step, which can only decrease the size of $U_{i+1}$, and therefore only decrease the query measure of the associated call.
As we previously mentioned, we later on prove (in \Cref{thm3:find heavy bounded}) that the query measure of a call $\FindHalgB(U_i,\Lambda\cdot p^{ki})$ is at most $\TO{\gs(U_i)/\Lambda}$.
By standard Chernoff bounds, we have that the size of $U_i$ is at most $\abs{U_0}\cdot (\tc)^k$, which we proved in \Cref{claim:c2}
Therefore, the query measure of the leaf associated with the call $\FindHalgB(U_i,\Lambda\cdot p^{ki})$ is at most
$\BO{\frac{\gs(V)}{\Lambda}\cdot (\tc)^{k}}$ with probability at least $1-k/n^6$.
The term $\frac{\gs(V)}{\Lambda}\cdot (\tc)^{k}$ can also be written as $\frac{\gs(V)}{\W}\cdot \rho'$, which completes the proof of the first property.

\paragraph{Proof of property (2).} We use the same notation as in the proof of the first property.
Recall that we have a fresh node $a$ associated with a call to the algorithm $\arec(V,\Lambda)$, where $a$ is of dimension $h$.
Node $a$ makes one call to the procedure $\FindHalgB(V,\Lambda)$, which outputs a vertex set $\vl$. If this output is $\bot$, then the algorithm $\arec$ does not make additional calls.
If it is not $\bot$, then the algorithm $\arec$ makes a call to the procedure $\Cialg(V,\vl,\llow)$, where $\llow=\Lambda/(4^k\cdot \cclog)$, and another recursive call to itself with the following parameters $\arec(V[p]-\vl,\Lambda\cdot p^k)$, where $p=1/2$.
We unfold
the recursion, and think of $\arec(V,\Lambda)$ as such that has $\log n$ iterations. In iteration $i$ it makes one call to the procedure $\FindHalgB(U_i,\Lambda\cdot p^{ki})$, where we denote the output of this call by $\vl^i$.
If $\vl^i\neq\bot$, then a matching call of $\Cialg(U_i,\vl^i,\llow\cdot p^{ki})$ is made.
As previously explained, we have $U_0=V$ and the set $U_{i+1}$ is obtained by choosing each vertex from $U_i$ independently with probability $p$ and then removing some vertices from the obtained set. We ignore this removal step, which can only decrease the size of $U_{i+1}$, and therefore only decrease the query measure of the associated call.
We previously proved that the size of $U_i$ is at most $\abs{U_0}\cdot (\tc)^k$ with probability at least $1-k/n^6$.

Each call to the procedure $\Cialg(U_i,\vl^i,\llow\cdot p^{ki})$ generates $\abs{\vl^i}$ calls to the algorithm $\apx_{h-1}$, with the parameters $\apx_{h-1}(U_i',\llow\cdot p^{ki},\eps)$, where $U_i'\subseteq U_i$.
We therefore get that with probability at least $1-k/n^6$, we have that
\begin{align*}
    \frac{\gs(U_i')}{\llow\cdot p^{ki}}
    \leq \frac{\gs(V)}{\Lambda}\cdot \xi\cdot (\tc)^k
    \leq \frac{\gs(V)}{\W}\cdot \rho'\xi\;,
\end{align*}
which completes the proof of the second property, and therefore the proof of \Cref{thm:apx complexity}.
 \renewcommand{\L}{\Lambda^\prime}
\newcommand{\apmm}{(1\pm2\eps)}
\newcommand{\hdvp}{\hat{d}^{\pi}_{v}}
\newcommand{\mdvp}{d^{\pi}_{v}}
\newcommand{\hdvi}{\hat{d}^{\pi}_{v_i}}
\newcommand{\Wa}{W_{\pi\mathrm{\mhyphen heavy}}}
\newcommand{\Wb}{W_{\pi\mathrm{\mhyphen light}}}
\section{Implementing the $\Cialg$ Procedure}\label{sec:count heavy}
In this section, we explain how to implement the $\Cialg$ procedure, which is given two sets of vertices $V$ and $V_\Lambda$, a heaviness threshold $\llow$, and a precision parameter $\eps>0$.
If the set $V_\Lambda$ contains only $\llow$-heavy vertices, then \whp the algorithm outputs a $\apm$ approximation $\hmvl$ for $m_V(V_\Lambda)$.
The following lemma is the main result of this section.

\begin{restatable}[{$\Cialg$}]{theorem}{CountHeavyLemma}\label{prop:count main}
There exists an algorithm $\Cialg(V,V_{\Lambda},\llow)$ 
that takes as input two sets of vertices $V$ and $V_\Lambda$, a heaviness threshold $\llow$, and a precision parameter $\eps>0$.
    Assuming the vertex set $V_\Lambda$ contains only $\llow$-heavy vertices,
    the algorithm outputs $\hmvl$ such that
    \begin{align*}
        \Pr{\hmvl=\apm m_U(V_\Lambda)}\geq 1-\abs{V_\Lambda}/n^5\;.
    \end{align*}
The algorithm makes $\abs{\vl}$ calls to the algorithm $\algDeg(v,\llow/k,V)$, specified in \Cref{lemma:degree approx}, where $v\in \vl$.
\end{restatable}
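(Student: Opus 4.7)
The plan is to instantiate $\Cialg(V,V_{\Lambda},\llow)$ as the peeling procedure sketched in the technical overview. Enumerate $V_\Lambda=\{v_1,\ldots,v_\ell\}$ in an arbitrary order with $\ell=|V_\Lambda|$, initialize $V^{(1)}:=V$, and for $i=1,\ldots,\ell$ compute $\hat d_i\gets \algDeg(v_i,\llow/k,V^{(i)})$ and then set $V^{(i+1)}:=V^{(i)}\setminus\{v_i\}$. The algorithm outputs $\hmvl:=\sum_{i=1}^{\ell}\hat d_i$ and, by construction, makes exactly $|V_\Lambda|$ calls to $\algDeg$, as required by the statement.

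The correctness rests on the deterministic identity $\sum_{i=1}^{\ell} d_{V^{(i)}}(v_i)=m_V(V_\Lambda)$, which I would prove by double counting. Given any $e\in E_V(V_\Lambda)$, let $j(e)$ be the smallest index $j$ such that $v_j\in e$; this exists because $e$ meets $V_\Lambda$ by definition of $E_V(V_\Lambda)$. By minimality, none of $v_1,\ldots,v_{j(e)-1}$ lies in $e$, so $e\subseteq V^{(j(e))}$ and $e$ is counted once in $d_{V^{(j(e))}}(v_{j(e)})$. For $i<j(e)$ we have $v_i\notin e$, and for $i>j(e)$ the vertex $v_{j(e)}$ has already been peeled away from $V^{(i)}$; either way $e$ contributes nothing to $d_{V^{(i)}}(v_i)$. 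Hyperedges outside $E_V(V_\Lambda)$ meet no $v_j$ and contribute nothing. Thus every hyperedge of $E_V(V_\Lambda)$ is counted exactly once, yielding the claimed identity.

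Lifting to the approximation is then routine. By the guarantee of \Cref{lemma:degree approx}, each call $\algDeg(v_i,\llow/k,V^{(i)})$ returns $\hat d_i=(1\pm\eps)\,d_{V^{(i)}}(v_i)$ except with failure probability at most $1/n^5$. Union-bounding over the $\ell=|V_\Lambda|$ calls, all outputs are simultaneously accurate with probability at least $1-|V_\Lambda|/n^5$; on that event $\hmvl=\sum_i \hat d_i\in (1\pm\eps)\sum_i d_{V^{(i)}}(v_i)=(1\pm\eps)\,m_V(V_\Lambda)$, matching the stated accuracy and failure bound.

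The main subtlety I expect is not in the arithmetic above but in justifying the threshold $\llow/k$ supplied to $\algDeg$, which is the sole parameter coupling this lemma to the query-measure analysis of \Cref{sec:complexity}. The factor $1/k$ is essential: a handshake-type inequality gives $\sum_i d_V(v_i)\leq k\cdot m_V(V_\Lambda)$ (each hyperedge contributes at most $k$ to the sum), and since $V_\Lambda$ contains only $\llow$-heavy vertices we also have $d_V(v_i)\geq \llow$ for every $i$. Combining these, the average reduced degree satisfies $m_V(V_\Lambda)/|V_\Lambda|\geq \llow/k$, precisely matching the threshold fed to $\algDeg$; a larger threshold would force us to argue about reduced degrees that dip below it once enough peeling has occurred, while a smaller threshold would inflate the query measure of the recursive $\apx_{k-1}$ invocations beyond what \Cref{thm:apx complexity} permits.
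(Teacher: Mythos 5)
The algorithm you describe (peel $V_\Lambda$ in an arbitrary order, call $\algDeg$ on each vertex, sum the estimates) and the double-counting identity $\sum_i d_{V^{(i)}}(v_i)=m_V(V_\Lambda)$ match the paper exactly. The gap is in the step you call ``routine.'' You write that each call returns $\hat d_i=(1\pm\eps)\,d_{V^{(i)}}(v_i)$, but \Cref{lemma:degree approx} gives only $\hat d_i=d_{V^{(i)}}(v_i)\pm\eps\cdot\sigma_{v_i}$ with $\sigma_{v_i}=\max\{d_{V^{(i)}}(v_i),\llow/k\}$. These coincide \emph{only} when the residual degree stays above $\llow/k$; once peeling has removed enough edges incident to a later $v_i$, the residual $d_{V^{(i)}}(v_i)$ can be far below $\llow/k$ (even zero), and then $\algDeg$ promises only an additive error of $\eps\llow/k$ --- which is unbounded relative to the true value. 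You cannot therefore pull $(1\pm\eps)$ out of the sum.

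Ironically, your last paragraph identifies exactly this phenomenon (``reduced degrees that dip below it once enough peeling has occurred'') but then reassures yourself with an \emph{average} bound $m_V(V_\Lambda)/|V_\Lambda|\geq\llow/k$, which does not control individual residual degrees and so does not rescue the argument. What the paper does instead is split $V_\Lambda$ into $\Wa$ (residual degree $\geq\llow/k$, where the approximation really is multiplicative) and $\Wb$ (the rest, where the approximation is additive with error $\eps\llow/k$ each). It then shows the total additive error over $\Wb$ is at most $\eps\cdot\llow/k\cdot|\Wb|\leq\eps\cdot\llow/k\cdot|V_\Lambda|\leq\eps\, m_V(V_\Lambda)$, using precisely the handshake-plus-heaviness bound you wrote down, and concludes with a $(1\pm2\eps)$ guarantee (not $(1\pm\eps)$). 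Your proof is missing this $\Wa/\Wb$ decomposition and the conversion of the per-vertex additive slack into a global multiplicative one; without it, the union bound over $\algDeg$ successes does not by itself yield the stated accuracy.
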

Recall that for two sets of vertices $A,B$ we use $E_A(B)$ to denote the set of hyperedges $e$ such that $e\in G[A]$ and $e\cap B\neq\emptyset$, and we denote the number of hyperedges in $E_A(B)$ by $m_A(B)$.
Roughly speaking, to compute such an approximation we approximate the degree of every vertex $v\in\vl$ and sum the approximations of the degrees.
This simplified approach faces a few challenges.

First, the sum of the degrees of vertices in $V_\Lambda$ is not equal to the number of hyperedges that intersect $V_\Lambda$.
This is because some hyperedges might contain more than a single vertex from $V_\Lambda$, and therefore the sum of the degrees of the vertices in $V_\Lambda$ might double count some hyperedges.
To deal with this issue, we choose some arbitrary ordering $\pi$ over the vertices in $V_\Lambda$, and approximate the degree of the first vertex $v_1$ in the ordering, then remove $v_1$ from $V$, to obtain an updated vertex set $U$, and approximate the degree of the next vertex $v_2$ in the ordering over the new updated graph $G[U]$, and so on.
When we remove a vertex $v_i$ from $V$, we also remove all the hyperedges $e$ for which $v_i\in e$ from $V$. This way, we avoid double counting the hyperedges, as this ensures that every hyperedge in $E_V(V_\Lambda)$ is counted exactly once.

To approximate the degree of a vertex $v$ over a $k$-partite hypergraph $G$, we build a new $(k-1)$-partite hypergraph $G_v$, which is obtained as follows.
Let $V_\ell$ be the part in the $k$-partition of $G$'s vertices that contains $v$. 
The vertex set of $G_v$, denoted by $V'$, is defined as $V'=V\setminus V_\ell$.
The edge set $E'$ of $G_v$ is defined as follows.
For every hyperedge $e\in E_V(v)$, we define a new hyperedge $e'$ by $e'=e\setminus \set{v}$, and add $e'$ to $E'$.
Note that the degree of $v$ over $G$ is equal to the number of hyperedges in $G_v$. 

We explain how to approximate the degree $d_v$ of a vertex $v\in V_\Lambda$ within the subgraph $G$ or equivalently, the number of hyperedges in $G_v=(V',E')$.
While it might seem natural to estimate $d_v$ by approximating the number of hyperedges in $G_v$ using \Cref{thm4:main}, this might be too costly.
The reason is that the degree of $v$ might be significantly smaller than $\llow$, although initially all vertices in $V_\Lambda$ are $\llow$-heavy.
This occurs because the algorithm gradually removes the processed vertices from $U$, potentially decreasing the degrees of the remaining vertices.
Therefore, instead of approximating the degree of $v$ on $G_v$ using \Cref{thm4:main}, we approximate the degree of $v$ with an additive error of $\eps\cdot \llow/k$, by utilize our algorithm $\apx$, detailed in \Cref{lemma2:alg apx}, invoked with the parameters $\apx(V',\llow/k,\eps)$. 
The advantage of using $\apx(V',\llow/k,\eps)$ lies in its efficiency -- the number of queries it requires and its query measure are determined by $\llow/k$ and $\eps$ and are independent of the number of hyperedges in the graph it is operates on.
This comes at the price of getting an additive approximation instead of a multiplicative one which the algorithm $\algmain$ specified in \Cref{thm4:main} obtains. Recall that the algorithm $\apx(V',\llow/k,\eps)$ outputs either $\bbot$ or a value $\hdv$ such that $\hdv=m_V(v)\pm \eps\cdot \llow/k$.
In other words, the guarantee of the algorithm $\apx(V',\llow/k,\eps)$ is weaker than the guarantee of \Cref{thm4:main}.

We provide a simplified overview on why such an approximation is sufficient for our needs.
For every vertex $v\in V_\Lambda$, define a vertex set $V_{\pi(v)}$ that contains all vertices in $V$ without the vertices $u\in\vl$ for which $\pi(u)<\pi(v)$.
We use $\mdvp$ to denote the degree of $v$ over $G[V_{\pi(v)}]$, and $\hdvp$ to denote the output of the algorithm which approximates the degree of $v$ over $G[V_{\pi(v)}]$.
The approximation $\hdvp$ is such that with probability at least $1-1/n^5$ we have that $\hdvp=d_v\pm \eps\cdot \sigma_v$, where
$\sigma_v\triangleq\max\sset{\mdvp,\llow/k}$.
We partition the set of vertices $V_\Lambda$ into two disjoint sets $\Wa$ and $\Wb$.
A vertex $v\in \vl$ is in $\Wa$ if $\mdvp\geq \llow/k$ and is in $\Wb$ otherwise.
Note that for every vertex $v\in \Wa$ we have that $\mdvp\geq \llow/k$ and therefore $\sigma_v=\mdvp$, whereas for every vertex $v\in \Wb$ we have that $\sigma_v=\llow/k$.
In other words, the approximation $\hdvp$ is a $\apm$ approximation for $\mdvp$ for vertices in $\Wa$, and an approximation with an additive error of $\eps\cdot \llow/k$ for vertices in $\Wb$.

Finally, we show that $\sum_{v\in \Wa}\hdvp$ is a $\apm$ approximation for $m_V(\Wa)$, and that $\sum_{v\in \Wb}\hdvp=m_V(\Wb) \pm \eps m_V(\vl)$. We therefore get that the sum of the approximations $\hdvp$ is a $\apmm$ approximation for $m_V(V_\Lambda)$.

~\\ We also keep in mind that the set $\vl$ is obtained from a call to the algorithm $\FindHalgB(V,\Lambda)$, which means that $\abs{\vl}$ is at most $\zeta$, where recall that $\zeta=2^k\cdot 5k^2\cdot \cclog\cdot \log^7 n/{\eps'}^2$.

Recall that for every vertex $v\in V$, we use $\chi(v)$ to denote the index of the vertex set that $v$ belongs to. In other words, $v\in V_{\chi(v)}$.
The following lemma provides an algorithm that approximates the degree of a vertex.

\begin{lemma}\label{lemma:degree approx}
    There exists an algorithm $\algDeg(v,\Lambda',V)$ that takes as input a single vertex $v$, a heaviness threshold $\Lambda'$, a subset of vertices $V$, and a precision parameter $\eps>0$.
The algorithm outputs a value $\hdv$ such that
$\hdv=d_U(v) \pm \sigma_v \cdot \eps$, with probability at least $1-1/n^5$, where $\sigma_v\triangleq\max\sset{d_U(v),\Lambda'}$.
The algorithm makes a single call to the algorithm $\apx(U',\Lambda',\eps)$ specified in \Cref{lemma2:alg apx}, where $U'=U\setminus V_{\chi(v)}$, and might make a single call to the algorithm $\algWrapD{U',\Lambda',V}$ specified in \Cref{lemma:alg trunk}.
\end{lemma}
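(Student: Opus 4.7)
The plan is to estimate $d_V(v)$ by reducing to approximate hyperedge counting in an auxiliary $(k-1)$-partite hypergraph $G_v$ defined on vertex set $U'=V\setminus V_{\chi(v)}$, whose hyperedges are $\{e\setminus\{v\} : e\in E_V,\ v\in e\}$. By construction, the number of hyperedges of $G_v$ equals $d_V(v)$, and a \HO query on a set $S\subseteq U'$ for $G_v$ can be simulated by a single query to the original \HO on $S\cup\{v\}$; hence we have oracle access to $G_v$ at no asymptotic overhead.

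First, I would invoke $\apx(U',\Lambda',\eps/2)$ as given by \Cref{lemma2:alg apx}. If the returned value is some $\hat s\ne\bbot$, I set $\hdv\gets\hat s$. Otherwise (the returned value is $\bbot$), I would fall back to a single call $\algWrapD{U',\Lambda',\eps/4}$ from \Cref{lemma:alg trunk} and set $\hdv$ to be its output.

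For correctness, split on which branch is taken. In the first branch, \Cref{lemma2:alg apx} guarantees that $\hat s=d_V(v)(1\pm\eps/2)$ with probability at least $1-1/n^6$, so $|\hdv-d_V(v)|\le(\eps/2)\,d_V(v)\le\eps\sigma_v$. In the second branch, the very act of falling through tells us $\apx$ returned $\bbot$; by the second guarantee of \Cref{lemma2:alg apx} this happens with probability at most $1/n^6$ whenever $\Lambda'\le d_V(v)/2$, so up to an event of probability $1/n^6$ we are in the regime $d_V(v)<2\Lambda'\le g(k)\Lambda'$. In this regime \Cref{lemma:alg trunk} says $\algWrapDNP$ does not return $\bot$ with probability at least $1-1/n^7$ and delivers $\hdv=d_V(v)(1\pm\eps/8)\pm\Lambda'\eps/32$. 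Since both $d_V(v)$ and $\Lambda'$ are at most $\sigma_v$, the additive error is bounded by $(\eps/8+\eps/32)\sigma_v<\eps\sigma_v$. A union bound over the three failure events yields total failure probability at most $3/n^6\le 1/n^5$.

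The only subtlety is the branching structure: the algorithm cannot test whether $d_V(v)\gtrless\Lambda'$ before committing to a subroutine, so it must dispatch to $\apx$ first (which is correct in the heavy regime $d_V(v)\ge 2\Lambda'$) and use $\algWrapDNP$ as a fallback (whose additive error $\Lambda'\eps/32$ is small precisely in the complementary light regime $d_V(v)<2\Lambda'$). Verifying that these two regimes cover all possibilities, and that in each regime the resulting additive error is at most $\eps\sigma_v=\eps\max\{d_V(v),\Lambda'\}$, is the only substantive step; everything else is bookkeeping on the $\eps$ and failure-probability budgets.
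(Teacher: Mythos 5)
Your proposal is correct and matches the paper's proof almost exactly: both reduce to hyperedge counting in the $(k-1)$-partite hypergraph $G_v$ whose hyperedges are the hyperedges of $G$ through $v$ with $v$ removed, simulate the $G_v$ oracle by appending $v$ to each query, dispatch to $\apx$ first, and fall back to $\algWrapDNP$ exactly when $\bbot$ is returned — at which point one may assume $d_V(v)<2\Lambda'$ up to a small failure event, so that the $\W\geq m/g(k)$ precondition of \Cref{lemma:alg trunk} is met and the additive error $O(\Lambda'\eps)$ is dominated by $\sigma_v\eps$. The only thing the paper handles that you omit is the degenerate base case $k=1$ (where $U'=\emptyset$ and $G_v$ is $0$-partite, so the reduction does not type-check): there the paper just queries $\BB(\{v\})$ directly, which returns $d_V(v)\in\{0,1\}$ exactly. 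Your $\eps/2$, $\eps/4$ rescalings are harmless bookkeeping differences from the paper, which uses $\eps$ in both calls and eats the $d_V(v)\cdot\eps/2$ multiplicative term into the $\sigma_v\eps$ budget in the same way.
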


We prove \Cref{prop:count main} using \Cref{lemma:degree approx}.
\begin{proof}[Proof of \Cref{prop:count main} using \Cref{lemma:degree approx}]
    We start with a description of the algorithm $\Cialg(V,V_{\Lambda},\llow)$.

    \newcommand{\hwta}{\htt_V(\Wa)}
    \newcommand{\hwtb}{\htt_V(\Wb)}
    \begin{algorithm}[H]
        \caption{The procedure $\Cialg(V,V_{\Lambda},\llow)$ specified in \Cref{prop:count main}.}\label{alg2:count main}
        \setcounter{AlgoLine}{0}
        \KwIn{
            Two sets of vertices $V$ and $V_\Lambda$, a heaviness threshold $\llow$, and a precision parameter $\eps>0$.}
\KwOut{An approximation $\hmvl$. If $\vl$ contains only $\llow$-heavy vertices, then $\Pr{\htt_V(\vl)=\apm m_V(\vl)}\geq 1-\abs{\vl}/n^5$.}

        \medskip
        $\hmvl\gets 0,\;\;\L\gets \llow/k$.\\
\For{$v_i\in \vl$}
        {
$\hdvp\gets \algDeg(v_i,\L,V)$;\\
            $\hmvl\gets \hmvl+ \hdvp$;\\
            $V\gets V\setminus \set{v_i}$;
        }
        \Return $\hmvl$.
    \end{algorithm}

    We use the following notation.
    Let $\pi$ denote the order in which the vertices in $V_\Lambda$ are processed. Any ordering can be used, and we choose an arbitrary one.
    Recall that we use $\pi(v)$ to denote the index of the vertex $v$ in the ordering $\pi$,
and that $V_{\pi(v)}$ denotes the vertex set $U$ after removing all vertices $u$ for which $\pi(u)<\pi(v)$.
    Recall that $\mdvp$ denotes the degree of a vertex $v$ over $G[V_{\pi(v)}]$, and that $\hdvp$ denotes the output of the algorithm $\algDeg(v,\L,V_{\pi(v)})$, where $\L=\llow/k$.
    For every vertex $v\in \vl$, let $\sigma_v\triangleq\max\sset{d_v,\L}$.
    Let $\FF$ denote the event that for every vertex $v\in \vl$ we have
    \begin{align*}
        \hdvp=\mdvp\pm \sigma_v\cdot \eps\;.
    \end{align*}
The event $\FF$ occurs with probability at least $1-\abs{\vl}/n^5$, by \Cref{lemma:degree approx} and a union bound over all vertices in $V_\Lambda$.
We prove that if the event $\FF$ occurs, then  the output $\hmvl$ of the algorithm $\Cialg(V,V_{\Lambda},\llow)$ is a $\apmm$ approximation for $m_V(V_\Lambda)$.

    We partition the vertices in $V_\Lambda$ into two sets $\Wa$ and $\Wb$.
    A vertex $v\in V_\Lambda$ is in $\Wa$ if $\mdvp\geq \L$, and is in $\Wb$ otherwise.
    We emphasize that the algorithm does not know which vertex is in $\Wa$ or $\Wb$ and we only make this distinction for the analysis.
    We define the following values.
    \begin{align*}
\hwta\triangleq \sum_{v\in \Wa}\hdvp\;, &  &
        \hwtb\triangleq \sum_{v\in \Wb}\hdvp\;.
    \end{align*}
    Note that $\hmvl$ is equal to $\hwta+\hwtb$:
    \begin{align*}
        \hmvl = \sum_{v\in \vl}\hdvp =
        \sum_{v\in \Wa}\hdvp + \sum_{v\in \Wb}\hdvp
        =\hwta+\hwtb\;.
    \end{align*}
Our goal is to show that the following hold.
\begin{align}
        \hwta=m_V(\Wa)\cdot \apm\;, &  &
        \hwtb=m_V(\Wb) \pm m_V(\vl) \cdot \eps\;. \label{eq:htt}
    \end{align}
    Using \Cref{eq:htt}, we can complete the proof, as we get that
\begin{align*}
        \hmvl=\htt_{\Wa}+ \htt_{\Wb} =m_V(\vl)\cdot (1\pm 2\eps)\;.
    \end{align*}

    We are thus left with proving \Cref{eq:htt}.
    First, we prove that $m_V(\Wa)=\sum_{v\in \Wa}\mdvp$ and $m_V(\Wb)=\sum_{v\in \Wa}\mdvp$, as follows.
    The ordering $\pi$ defines a partition $\set{E_{V_{\pi(v)}}(v)}_{v\in \vl}$ of the hyperedges $E_V(V_\Lambda)$.
To prove that this is a partition, we show that every edge $e$ is in exactly one set $E_{V_{\pi(v)}}(v)$.
    Fix some edge $e\in E_V(V_\Lambda)$.
    Te prove that $e$ is in at least one set of the partition, let $v$ be the first vertex in the ordering $\pi$ such that $v\in e$. Then, $e\in E_{V_{\pi(v)}}(v)$, as every vertex $u\in e$ satisfies $\pi(u)\geq \pi(v)$ and therefore $u\in V_{\pi(v)}$.
    To prove that $e$ is in at most one set of the partition, assume towards a contradiction that
    an edge $e$ is in two sets $E_{V_{\pi(v)}}(v)$ and $E_{V_{\pi(u)}}(u)$ for some $u\neq v$, where $\pi(v)<\pi(u)$. Then, $v\notin V_{\pi(u)}$ and therefore $e$ is not in $G[V_{\pi(u)}]$, which implies that
    $e\notin E_{V_{\pi(u)}}(u)$.
    This completes the proof that $\set{E_{V_{\pi(v)}}(v)}_{v\in \vl}$ forms a partition of the hyperedges $E_V(V_\Lambda)$.

    We therefore have that
    \begin{align*}
        m_V(V_\Lambda) = \sum_{v\in\vl} \abs{E_{V_{\pi(v)}}(v)} = \sum_{v\in\vl} \mdvp\;.
    \end{align*}
    Since we did not use any property of the set $\vl$, the same claim holds for every subset $W\subseteq V_\Lambda$. That is, for any set $W\subseteq \vl$, we have that the collection $\set{E_{V_{\pi(v)}}(v)}_{v\in W}$ is a partition of the hyperedges $E_V(W)$.
    We therefore get that
    \begin{align}
        m_V(\Wa) = \sum_{v\in \Wa} \abs{E_{V_{\pi(v)}}(v)} = \sum_{v\in \Wa} \mdvp
        \label{eq1:m-is-sum-d} \\
        m_V(\Wb) = \sum_{v\in \Wb} \abs{E_{V_{\pi(v)}}(v)} = \sum_{v\in \Wb} \mdvp\;,
        \label{eq2:m-is-sum-d}
    \end{align}
    as both $\Wa$ and $\Wb$ are subsets of $V_\Lambda$.
This proves that $m_V(\Wa)=\sum_{v\in \Wa} \mdvp$ and $m_V(\Wb)=\sum_{v\in \Wb} \mdvp$.

    ~\\ Second, we show that
    \begin{align}
        \hwta=\sum_{v\in \Wa} \mdvp\cdot \apm\;, &  &
        \hwtb=(\sum_{v\in \Wb} \mdvp) \pm m_V(\vl)\cdot \eps\;. \label{eq2:htt}
    \end{align}
We conditioned on the event $\FF$, which means that for every vertex $v\in V_\Lambda$, we have $\hdvp=\mdvp\pm \sigma_v\cdot \eps$, where $\sigma_v\triangleq\max\sset{\mdvp,\L}$.
Note that by the definition of $\Wa$, for every vertex $v\in \Wa$ we have $\mdvp\geq \L$, and therefore $\sigma_v=\mdvp$.
    At the same time, for every vertex $v\in \Wb$ we have $\sigma_v=\L$.
    We therefore get
\begin{align*}
        \hwta
        = \sum_{v\in \Wa}\hdvp
        = \sum_{v\in \Wa}\brk{\mdvp\pm \sigma_v \cdot \eps}
        = \sum_{v\in \Wa}\brk{\mdvp\pm \mdvp\cdot \eps}
        = \sum_{v\in \Wa}\mdvp\apm\;,
    \end{align*}
and
    \begin{align*}
        \hwtb
        = \sum_{v\in \Wb}\hdvp
        = \sum_{v\in \Wb}\brk{\mdvp\pm \sigma_v \cdot \eps}
        = \sum_{v\in \Wb}\brk{\mdvp\pm \L\cdot \eps}  = 
        \brk{\sum_{v\in \Wb}\mdvp} \pm \L\cdot \eps\cdot \abs{\Wb}\;.
\end{align*}
    To complete the proof of \Cref{eq2:htt}, we show that $\L\cdot \abs{\Wb}\leq m_V(\vl)$.
    First note that $\Wb\subseteq \vl$. Therefore, it suffies to show that $\L\cdot \abs{\vl}\leq m_V(\vl)$.
    Since every vertex in $V_\Lambda$ is $\llow$-heavy, we have that $\sum_{v\in \vl}d_v\geq \llow\cdot \abs{\vl}$. 
    Moreover, $m_V(V_\Lambda)\geq \sum_{v\in \vl}d_v/k$ as every hyperedge contains exactly $k$ vertices. 
     We thus get that $m_V(V_\Lambda)\geq \llow\cdot \abs{\vl}/k = \L\cdot \abs{\vl}$, which proves \Cref{eq2:htt}. Together with \Cref{eq1:m-is-sum-d,eq2:m-is-sum-d}, this completes the proof of \Cref{eq:htt}.

\end{proof}

We are left with proving \Cref{lemma:degree approx}.
To this end, we provide a recursive implementation for $\algDeg$.
We start with a $k$-partite hypergraph $G[V_{\pi(v)}]$ where we want to approximate the degree of a vertex $v$, and
reduce the problem to approximating the number of hyperedges in
a new $(k-1)$-partite hypergraph.
This way, we reduce the dimension of the hypergraph by one in each recursive call.
The stopping condition for the recursion is when $k=1$ and the hypergraph is $1$-partite, i.e., every hyperedge contains exactly one vertex.
In this degenerate case the degree of a vertex is either zero or one,
and then we can exactly compute the degree of a vertex $v$ by invoking the query $\BB(\set{v})$.
When $k>1$, we use \Cref{lemma2:alg apx} to approximate the number of hyperedges in a new $(k-1)$-partite hypergraph as we formally prove in what follows.
\begin{proof}[Proof of \Cref{lemma:degree approx} using \Cref{lemma2:alg apx}]
    \newcommand{\newV}{V^\prime}
    If $G$ is a $1$-partite hypergraph, then we complete the proof as previously explained:
    We output $\BB(\set{v})$ which is either zero or one as the output of the algorithm $\algDeg(v,\Lambda,V)$.
    Clearly, the output is equal to $d_v$ so the correctness follows.

We prove the case that $k>1$.
We explain how to implement a call to the algorithm $\algDeg(v,\Lambda',V)$ using the algorithm $\apx(\newV,\Lambda',\eps)$ specified in \Cref{lemma2:alg apx}.
    For that, we define a new hypergraph $G_v$, which is a $(k-1)$-partite hypergraph, with the property that the degree of $v$ in $G[V]$ is equal to the number of hyperedges in $G_v$.
The vertex set of the new hypergraph $G_v$ is defined as $\newV\triangleq V\setminus V_{\chi(v)}$, where $\chi(v)$ is an index in $[k]$ such that  $v\in \Vin_{\chi(v)}$.
The hyperedge set of $G_v$, denoted by $\EE'$, is defined by $\EE'=\set{e\setminus\set{v}\mid e\in \EE\wedge v\in e}$.
    In words, for every hyperedge $e\in E_V(v)$, we define a new hyperedge $e'$ obtained by removing $v$ from $e$. We then add $e'$ to $\EE'$.
    Clearly, $\abs{\EE'}$ is equal to the degree of $v$ in $G[V]$.

    We use $\BB'$ to denote a new oracle over $G_v$.
    We explain how to simulate $\BB'$ over $G_v$, using a $\BB$ over $G$.
To simulate the query $\BB'(U)$, we query the set $U\cup \set{v}$ to $\BB$ and output accordingly.
We execute the algorithm $\apx(\newV,\Lambda',\eps)$ on the graph $G_v$, to
    get an approximation for the number of hyperedges in $G_v$ with additive error $\sigma_v\cdot \eps$, where $\sigma_v\triangleq\max\sset{d_v,\Lambda'}$.

    Let $\hat{s}$ denote the output of the call $\apx\brak{\newV,\Lambda',\eps}$.
    We need to prove that $\hat{s}=d_v\pm\sigma_v\cdot \eps$, with probability at least $1-1/n^5$.
    If $\hat{s}\neq \bbot$, then $\hat{s}=d_v\apm$ with probability at least $1-1/n^5$, by the correctness of \Cref{lemma2:alg apx}.
    Since $d_v\apm\subseteq d_v\pm\sigma_v\cdot \eps$, we get that $\hat{s}=d_v\pm\sigma_v\cdot \eps$ with probability at least $1-1/n^5$.

    We are left with the case that the output $\hat{s}$ is $\bbot$.
    If $\Lambda'\leq d_v/2$, then the probability that $\hat{s}=\bbot$ is at most $1/n^6$. This means that if $\Lambda'\leq d_v/2$, then $\hat{s}=d_v\pm\sigma_v\cdot \eps$ with probability at least $1-1/n^5$.
    So, if $\hat{s}$ is $\bbot$, we can assume that $\Lambda'>d_v/2$.
    In this case, we proceed by executing the algorithm $\algWrapD{\newV,\Lambda',\eps}$, detailed in \Cref{lemma:alg trunk}, with access to the detection oracle $\BB'$. Let $\hat{s}_2$ denote its output.
    We output $\hat{s}_2$ as the output of the algorithm $\algDeg(v,\Lambda',V)$.
    We prove that $\hat{s}_2=d_v\pm\sigma_v\cdot \eps$ with probability at least $1-1/n^5$.
    Since $\Lambda'\geq d_v/2$ the algorithm $\algWrapD{\newV,\Lambda',\eps}$ outputs $\hat{s}_2=d_v\pm \Lambda'\cdot \eps/8$, with probability at least $1-1/n^5$, by the correctness of \Cref{lemma:alg trunk}.
    Note that $\Lambda'\geq d_v/2$ implies that $\sigma_v\leq 2\Lambda'$,
    and therefore the output $\hat{s}_2=d_v\pm\sigma_v\cdot \eps/4$ with probability at least $1-1/n^5$. This completes the proof of the correctness of the algorithm $\algDeg(v,\Lambda',V)$.

    We analyze the running time. The algorithm $\algDeg(v,\Lambda,V)$ makes a single call to the algorithm $\apx\brak{\newV,\Lambda/2,\eps}$, and a single call to the algorithm $\algWrapD{\newV,\Lambda',\eps}$,
    which completes the proof of \Cref{lemma:degree approx}.
\end{proof}
 \newcommand{\alge}{\mathbf{Find\mhyphen Non\mhyphen Isolated}}
\newcommand{\algei}{\mathbf{Find\mhyphen Non\mhyphen Isolated_h}}
\newcommand{\lkm}{\log(2km_V)}

\newcommand{\ja}{\mathsf{J}}
\newcommand{\jc}{\mathsf{J}_2}
\newcommand{\jb}{\cl}
\newcommand{\thres}{\sigma}
\newcommand{\jk}{4k\cdot\jb}
\newcommand{\lmid}{\Lambda_{\mathrm{mid}}}
\newcommand{\ACP}{\mathbf{Discovery\mhyphen Experiment}}
\newcommand{\ZU}{\ZZ}
\newcommand{\zU}{z}
\renewcommand{\Prod}{\mathsf{Product}_k}
\section{Implementation of the $\FindHalg$ Procedure}\label{sec:find heavy}

In this section we implement the procedure $\FindHalg$, using the $\BB$.
We use the following notation. For every $\Lambda$ let $\lmid=\Lambda/\cclog$ and $\llow=\lmid/4^k$. Recall that we say that a vertex is $\Lambda$-heavy if its degree is at least $\Lambda$, and say that it is $\Lambda$-light otherwise.
Recall that we defined the measure $\gs$ on a subset $U\subseteq \Vin$ by $\gs(U)=\prod_{i=1}^k |U\cap \Vin_i|$.
The main theorem of this section is the following.
\begin{restatable}[$\FindHalg$]{theorem}{FindHeavyThm}\label{thm2:find heavy}
    There exists a randomized algorithm $\FindHalg(V,\Lambda)$ that gets as input the set of vertices $V$ and a heaviness threshold $\Lambda$.
    The algorithm outputs a set of vertices $V_\Lambda$ that contains all $\Lambda$-heavy vertices and no $\llow$-light vertices, with probability at least $1-1/n^4$.
The algorithm makes at most $\BO{k^3\log(k)\cdot\log^{k+2} n \cdot m_v/\lmid}$ queries to the $\BB$, while querying only sets of vertices $U$ for which $\gs(U)=\TO{\gs(V)/\Lambda}$.
\end{restatable}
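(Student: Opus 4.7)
I plan to implement $\FindHalg$ by combining the Sampling Lemma from the technical overview with a deterministic ``peeling'' subroutine that enumerates the non-isolated vertices of a subhypergraph using only $\BB$. For every \emph{simple} sampling vector $P=(p_1,\ldots,p_k)$ (each $p_i$ a negative power of two in $[0,1]$) with weight $w(P)=\prod_i p_i\leq 1/\lmid$, the algorithm draws $r=\BO{4^k\log n}$ independent random subsets $V[P]^{(1)},\ldots,V[P]^{(r)}\subseteq V$, enumerates the non-isolated vertex set $\ZZ_{V[P]^{(j)}}$ of each sample, and for every $v$ that appears in such a set increments a counter $c_v^{P}$. A vertex $v$ is placed in $V_\Lambda$ iff $c_v^{P}\geq\tau$ for some $P$, where $\tau$ is chosen strictly between $2r/4^k$ and $(\eo)^k r/2$; this gap is non-empty because $(4(\eo))^k>4$ whenever $k\geq 2$ (the trivial case $k=1$ is handled separately, since then $\BB$ directly certifies non-isolation via singleton queries).

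\paragraph{The peeling subroutine.}
Given $U\subseteq V$, the non-isolated vertices of $G[U]$ are enumerated part by part. For each $i\in[k]$ the subroutine initializes $U'\gets U$ and, as long as $\BB(U')=1$, binary-searches inside $U'\cap V_i$ for a single vertex $v$ such that $\BB((U'\setminus V_i)\cup\{v\})=1$ (the invariant of the search is that the current candidate set $W\subseteq U'\cap V_i$ satisfies $\BB((U'\setminus V_i)\cup W)=1$; splitting $W$ into halves and querying each preserves the invariant in at least one half). The vertex $v$ produced is guaranteed to be non-isolated in $G[U]$; it is added to $\ZZ_U$ and removed from $U'$, after which the process repeats. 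Each binary search costs $\BO{\log n}$ queries, each on a subset of $U$, so all $k$ parts together cost $\BO{(\abs{\ZZ_U}+k)\log n}$.

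\paragraph{Correctness and complexity.}
By the first part of the Sampling Lemma, every $\Lambda$-heavy vertex (degree $\geq\Lambda=\lmid\cdot\cclog$) admits a simple $P$ with $w(P)\leq 1/\lmid$ for which $\Pr[v\in\ZZ_{V[P]}]\geq(\eo)^k$; a Chernoff bound over $r$ repetitions gives $c_v^{P}\geq\tau$ with probability $\geq 1-1/n^6$. By the second part, every $\llow$-light vertex (degree $<\llow<\lmid$) satisfies $\Pr[v\in\ZZ_{V[P]}]\leq 1/4^k$ for \emph{every} simple $P$ with $w(P)\leq 1/\lmid$; a second Chernoff bound gives $c_v^{P}<\tau$ simultaneously for all such $P$ with probability $\geq 1-1/n^6$. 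A union bound over the $n$ vertices and the $\log^k(4n)$ simple vectors yields total failure probability $\leq 1/n^4$. For the complexity, linearity gives $\Exp{\abs{\ZZ_{V[P]}}}\leq k\,m_V\,w(P)\leq k\,m_V/\lmid$, hence $\abs{\ZZ_{V[P]}}=\TO{k\,m_V/\lmid}$ with high probability by Markov and standard amplification, and the total number of queries is
\[
\log^k(4n)\cdot r\cdot\TO{k\,m_V/\lmid+k}=\BO{k^3\log(k)\cdot\log^{k+2}n\cdot m_V/\lmid}
\]
after absorbing the $4^k$ factor into the stated polynomial. Each query is on a subset of some $V[P]$, and \Cref{claim:c2} gives $\gs(V[P])\leq\gs(V)\cdot w(P)\cdot(6\log n)^k=\TO{\gs(V)/\Lambda}$ with high probability, matching the claimed query measure.

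\paragraph{Main obstacle.}
Unlike \cite{CEW24}, we have only a detection oracle rather than a degree oracle, so enumerating $\ZZ_U$ via $\BB$ is affordable only when the sample is sparse. This forces the sampling weight to satisfy $w(P)\leq 1/\lmid$, and the analysis must simultaneously balance (i) the Chernoff gap between heavy capture probability $(\eo)^k$ and light capture probability $1/4^k$, which dictates $r=\BO{4^k\log n}$, and (ii) the peeling cost $\TO{\abs{\ZZ_U}+k}$ per sample, which is controlled by $\Exp{\abs{\ZZ_U}}\leq k\,m_V/\lmid$. Propagating the $k$-dependent constants through both axes while certifying the high-probability sparsity $\abs{\ZZ_{V[P]}}=\TO{k m_V/\lmid}$ uniformly over all samples (so that peeling terminates within its budget) is the delicate part.
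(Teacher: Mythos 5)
Your approach shares the paper's skeleton: run the $P$-discovery experiment for every simple vector $P\in\Prod(\lmid)$, enumerate the non-isolated set of each sampled subhypergraph through a deterministic $\BB$-based subroutine, accumulate per-vertex counters, and threshold them. Your binary-search enumerator is a legitimate variant of the paper's chunking procedure $\alge$, with comparable cost $O((\abs{\ZZ_U}+k)\log n)$ when the non-isolated set is small, and the correctness analysis (Chernoff over $r$ repetitions, threshold $\tau$ placed in the gap between the light-vertex capture probability $\leq 1/4^k$ and the heavy-vertex capture probability $\geq(\eo)^k$, union bound over vertices and over simple vectors) matches \Cref{lemma:vP} and the first half of \Cref{lemma2:find S_P}.

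The gap is in the complexity bound. You assert that $\abs{\ZZ_{V[P]}}=\TO{k\, m_V/\lmid}$ ``with high probability by Markov and standard amplification,'' uniformly over all $r\cdot\abs{\Prod(\lmid)}$ samples. Markov gives only a constant failure probability for any single sample, and since the indicators $\mathbbm{1}[v\in\ZZ_{V[P]}]$ are not independent across $v$ (all of them depend on the same random sample $V[P]$), standard concentration inequalities for $\abs{\ZZ_{V[P]}}$ do not apply without substantial extra work. Consequently an unbudgeted peeling can spend $\Omega(n\log n)$ queries on a single unlucky sample, and your total query count is controlled only in expectation—not with probability $1-1/n^4$ as needed, nor deterministically. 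The paper resolves precisely this difficulty via the machinery your proposal elides: the enumerator $\alge(U,\sigma)$ of \Cref{prop2:finding in edge} is capped at budget $\sigma$ and returns $\bot$ once the budget is exceeded, and $\ACP$ performs a doubling search over $\sigma=2^j$, stopping at the first $j$ for which the fraction of aborted repetitions is small. This both finds an operable budget without knowing $m_V$ (which $\FindHalg$ does not receive as input) and, via Property~(2) in the proof of \Cref{lemma2:find S_P}, guarantees that the doubling stops at $\sigma=\Theta(k\, m_V\cdot w(P)\cdot\lkm)$ with high probability, which is exactly what caps the total query count. You flag this as ``the delicate part'' in your closing paragraph but do not carry it out; without a cap-and-double mechanism (or an equivalent budgeting argument) the stated query bound does not follow from what you have written.
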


A corollary from \Cref{thm2:find heavy}, which we use in \Cref{sec:rec}, is the following.
\begin{restatable}[$\FindHalgB$]{theorem}{ThmFindHB}\label{thm3:find heavy bounded}
    There exists a randomized algorithm $\FindHalgB(V,\Lambda)$ that gets as input a set of vertices $V$, a heaviness threshold $\Lambda$, and a threshold $\zeta$.
    The algorithm either outputs $\bot$, or a set of vertices $V_\Lambda$ containing at most $\zeta$ vertices, such that with probability at least $1-1/n^4$, the set $V_\Lambda$ contains all $\Lambda$-heavy vertices and no $\llow$-light vertices.
If $km_V/\llow\leq \zeta$, then the probability that the algorithm outputs $\bot$ is at most $1/n^4$.
The algorithm makes at most $O(\psi\cdot \zeta)$ queries to \HO, where $\psi\triangleq 4^k\cdot k^2\log(k)\cdot\log^{k+2} n$.
    The algorithm does so
    while querying only sets of vertices $U$ for which $\gs(U)= \TO{\gs(V)/\Lambda}$.
\end{restatable}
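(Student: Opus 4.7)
The plan is to obtain $\FindHalgB$ as a thin wrapper around $\FindHalg$ from \Cref{thm2:find heavy}. Concretely, I would simulate $\FindHalg(V,\Lambda)$ while maintaining two safety checks: (i) abort and return $\bot$ as soon as the number of queries issued to the \HO exceeds the budget $\BO{\psi\cdot\zeta}$, and (ii) after $\FindHalg$ terminates within budget, return $\bot$ if $|V_\Lambda|>\zeta$ and otherwise return $V_\Lambda$. The query-measure guarantee is inherited verbatim from \Cref{thm2:find heavy}, since every query ever issued is a query that $\FindHalg$ itself would have issued, hence has measure $\TO{\gs(V)/\Lambda}$.

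Correctness of the $\Lambda$-heavy / $\llow$-light guarantee is inherited from $\FindHalg$ as well: conditioning on the high-probability event $\mathcal{E}$ that $\FindHalg$'s output contains all $\Lambda$-heavy vertices and no $\llow$-light ones, the only way the wrapper destroys this guarantee is by returning $\bot$. So the only real content is bounding the probability of $\bot$ under the hypothesis $km_V/\llow\le\zeta$. On this event $\mathcal{E}$ (which fails with probability at most $1/n^4$), the output set is a subset of the $\llow$-heavy vertices, and a $k$-partite hypergraph with $m_V$ hyperedges has at most $km_V/\llow\le\zeta$ $\llow$-heavy vertices; thus safety check (ii) never triggers on $\mathcal{E}$.

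It remains to argue that safety check (i) is also inactive on $\mathcal{E}$ (or more generally with high probability) when $km_V/\llow\le\zeta$, and this is where I expect the only nontrivial step: a bit of arithmetic between $\psi$, $\zeta$, $\lmid=\Lambda/\log^{(k^2)}n$, and $\llow=\lmid/4^k$. The query-complexity bound of \Cref{thm2:find heavy} is $\BO{k^3\log(k)\cdot\log^{k+2}n\cdot m_V/\lmid}$. Using $\lmid=4^k\llow$ and $m_V/\llow\le\zeta/k$, this is at most
\[
\BO{k^3\log(k)\cdot\log^{k+2}n\cdot \tfrac{\zeta}{k\cdot 4^k}}
=\BO{\tfrac{\psi}{4^{2k}}\cdot \zeta}\le \BO{\psi\cdot\zeta},
\]
so check (i) does not trigger (either deterministically if \Cref{thm2:find heavy}'s bound is worst-case, or with the same probability otherwise). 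A union bound over the failure of $\mathcal{E}$ and any probabilistic slack in the query-count bound keeps the total failure probability below $1/n^4$, yielding the claimed guarantees.
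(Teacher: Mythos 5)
Your proposal matches the paper's proof essentially step for step: truncate the simulation of $\FindHalg$ at a fixed query budget, abort if the output exceeds $\zeta$ vertices, and otherwise pass the output through; then condition on the event $\mathcal{E}$ that $\FindHalg$ succeeds to argue neither safety check can fire under the hypothesis $km_V/\llow\leq\zeta$. Your arithmetic linking $\lmid=4^k\llow$, $m_V/\llow\leq\zeta/k$, and the query bound of \Cref{thm2:find heavy} to $\BO{\psi\cdot\zeta}$ is if anything a bit cleaner than the paper's (which elides the $\llow$ and $\lmid$ bookkeeping in passing from $km_V$ to $\zeta$), so there is no gap.
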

Informally, the algorithm $\FindHalgB(V,\Lambda)$ is a truncated version of the algorithm $\FindHalg(V,\Lambda)$.
It simulates the algorithm $\FindHalg(V,\Lambda)$, but stops the simulation if the number of queries made by the algorithm exceeds $\BO{\psi\cdot \zeta}$, or the computed set is of size larger than $\zeta$ and outputs $\bot$.
Otherwise, the algorithm outputs the set of vertices computed by $\FindHalg(V,\Lambda)$.
If the condition $km_V/\llow\leq \zeta$ holds, then the probability that the algorithm outputs $\bot$ is at most $1/n^4$.
We need the truncated version of the algorithm to get keep the number of queries made to the \HO small, while also only querying sets of vertices $U$ for which $\gs(U)=\TO{\gs(V)/\Lambda}$.
Following is a formal proof.

\begin{proof}[Proof of \Cref{thm3:find heavy bounded} using \Cref{thm2:find heavy}]
    Let $C$ be the constant in the statement of \Cref{thm2:find heavy}. 
    That is, the number of times the procedure $\FindHalg(V,\Lambda)$ queries the \HO is at most $(C\cdot k^2\log (k)\cdot \log^{k+2}n\cdot m_V/\lmid)$, where this number of queries can also be written as $(\psi\cdot km_V/\Lambda)$.
The algorithm for $\FindHalgB(V,\Lambda)$ works as follows.
    It simulates the procedure $\FindHalg(V,\Lambda)$ and stops the simulation if the number of times the procedure queries the oracle exceeds $C\cdot \psi\cdot \zeta$.
    If the algorithm $\FindHalg(V,\Lambda)$ does not complete within $C\cdot \psi\cdot \zeta$ queries, or it does but the computed set of heavy vertices is of size larger than $\zeta$, then the algorithm outputs $\bot$.
    Otherwise, the algorithm outputs the set of vertices computed by $\FindHalg(V,\Lambda)$.

    To complete the proof, we need to show that if $km_V/\llow\leq \zeta$, then the probability that the algorithm $\FindHalgB$ outputs $\bot$ is at most $1/n^4$.
    Let $\mathcal{E}_1$ denote the event that the procedure $\FindHalg(V,\Lambda)$ computes a set of vertices $V_\Lambda$ that contains all $\Lambda$-heavy vertices and no $\llow$-light vertices, which happens with probability at least $1-1/n^4$.
In other words, the event $\mathcal{E}_1$ denotes the event that the set $\vl$ contains only $\llow$-heavy vertices and therefore contains at most $km_V/\llow$ vertices, which is at most $\zeta$ by the assumption.
    Recall that by the guarantee of \Cref{thm2:find heavy} the number of queries made by the algorithm $\FindHalg(V,\Lambda)$ is at most $C\cdot \psi\cdot km_V$, which is at most $C\cdot \psi\cdot \zeta$, by the assumption that $km_V/\llow\leq \zeta$.
    This means that if $\mathcal{E}_1$ occurs, then the simulation of the execution of the procedure $\FindHalg(V,\Lambda)$ by the procedure $\FindHalgB(V,\Lambda)$ completes within $C\cdot \psi\cdot \zeta$ queries, and with a set of vertices $V_\Lambda$ is of size at most $km_v/\llow$ which is at most $\zeta$.
    Therefore, conditioned on the event $\mathcal{E}_1$, the procedure $\FindHalgB(V,\Lambda)$ outputs the set of vertices $V_\Lambda$ computed by $\FindHalg(V,\Lambda)$, and not $\bot$, with probability $1$.
    As the event $\mathcal{E}_1$ occurs with probability at least $1-1/n^4$, we get that the probability that the algorithm $\FindHalgB(V,\Lambda)$ outputs $\bot$ is at most $1/n^4$.
\end{proof}

It thus remains to prove \Cref{thm2:find heavy}. We first define the required notation, and then provide an overview of the algorithm specified in the theorem.
Let $\Vin=(\Vin_1,\ldots, \Vin_k)$ be a set of vertices and let $\EE$ be a set of $k$-uniform hyperedges, where each hyperedge $e\in \EE$ contains exactly one vertex from each $\Vin_i$ for $i\in[k]$.
For a vector $P=(p_1,\ldots, p_k)\subseteq [0,1]^k$, we define the weight of $P$, denoted by $w(P)$, as $\prod_{i=1}^{k}p_i$. We say that a vector $P$ is a \emph{simple} if for every $i\in[k]$, we have $p_i=2^{-j}$ for some $j\in\zrn{\ceil*{\log(n)}}$.
We also define the following set of vectors.
\begin{restatable}[$\Prod(\Lambda)$]{definition}{prodDef}\label{def:prod}
    For every integer $k$ let $\Prodd_k(\Lambda)$ be a subset of vectors in $[0,1]^k$, which contains all vectors $P\in[0,1]^k$ that satisfy $w(P)\leq 1/\Lambda$, and are simple. A vector $P=(p_1,\ldots, p_k)$ is simple if for every $i\in[k]$, we have $p_i=2^{-j}$ for some $j\in\zrn{\ceil*{\log(n)}}$.
\end{restatable}
Note that the set $\Prodd_k(\Lambda)$ contains at most $(\log(n)+2)^k\leq \log^k(4n)$ vectors, as there are at most $\log(n)+2$ possible values for each coordinate of a vector $P\in \Prodd_k(\Lambda)$.

The key to proving \Cref{thm2:find heavy} lies in sampling an induced subhypergraph of $G$ using the vectors in $\Prod(\Lambda)$.
For a vector $P\in[0,1]^k$, we define the random varaible $V[P]$ to be the set of vertices obtained by sampling each vertex $v\in \Vin_i$ with probability $p_i$ for every $i\in[k]$. We call this the $P$-discovery experiment.
We denote by $\ZZ_{V[P]}$ the set of non-isolated vertices in the induced random graph $G[V[P]]$, which are the vertices with non-zero degree in $G[V[P]]$. We also say that each vertex in $\ZZ_{V[P]}$ is discovered by $P$ or $P$-discovered.
The next lemma provides a characterization on the probability that a vertex is discovered by a vector $P$.
\begin{restatable}[Sampling Lemma]{lemma}{LemSampleA}\label{lemma:discovery}
    Let $G$ be a $k$-partite hypergraph with $n$ vertices and $m$ hyperedges.
    Then, there exists a \emph{simple} vector $P\in[0,1]^k$ such that $m\cdot w(P)\geq \cclog$, for which the random induced graph $G[V[P]]$ contains at least one hyperedge with probability at least $(\eo)^{k}$.
\end{restatable}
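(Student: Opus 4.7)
The plan is a pigeonhole-plus-sequential-sampling argument. For each non-isolated $v\in V_i$, define its degree bucket $b_i(v)\triangleq\lceil\log_2 d(v)\rceil\in\{0,\ldots,\lceil(k-1)\log n\rceil\}$, so there are at most $k\log(4n)$ buckets per part. Assign each hyperedge $e=(v_1,\ldots,v_k)$ the \emph{type} $T(e)=(b_1(v_1),\ldots,b_k(v_k))$; with at most $(k\log(4n))^k$ distinct types, a pigeonhole produces a heavy type $T^*=(b^*_1,\ldots,b^*_k)$ containing $|E^*|\geq m/(k\log(4n))^k$ hyperedges. Let $V^*_i\subseteq V_i$ be the part-$i$ vertices appearing in $E^*$ and $n^*_i=|V^*_i|$; the bucket uniformity gives the structural sandwich $|E^*|/2^{b^*_i}\leq n^*_i\leq 2m/2^{b^*_i}$.

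\paragraph{Weight and survival.} Take $p_i$ to be the largest power of $2$ with $p_i\leq 1/n^*_i$, so $P$ is simple and $p_i n^*_i\in[1/2,1]$. The weight bound $m\cdot w(P)\geq\cclog=(k\log(4n))^{k^2}$ follows from a direct computation combining the structural sandwich on $n^*_i$ with $|E^*|\geq m/(k\log(4n))^k$; the slack of $(k\log(4n))^{k^2}$ in $\cclog$ is chosen precisely so as to absorb the $(k\log(4n))^k$ pigeonhole loss together with the per-coordinate dyadic rounding losses. For the survival probability, sample the parts sequentially and track $E^{(i)}\subseteq E^*$, the edges whose vertices in parts $1,\ldots,i$ have all been sampled; let $V_i^{(i-1)}\subseteq V^*_i$ denote the set of part-$i$ vertices of $E^{(i-1)}$. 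A second-moment estimate using the bucket uniformity (each vertex of $V^*_i$ contributes $\Theta(2^{b^*_i})$ edges to $E^*$) shows $|V_i^{(i-1)}|=\Omega(n^*_i)$ with probability high enough to survive a union bound across the $k$ steps. Hence
\begin{align*}
    \Pr{V_i^{(i-1)}\cap V[P]\neq\emptyset \mid E^{(i-1)}\neq\emptyset}\;\geq\;1-(1-p_i)^{|V_i^{(i-1)}|}\;\geq\;1-e^{-\Omega(p_i n^*_i)}\;\geq\;1-1/e,
\end{align*}
and multiplying these $k$ conditional probabilities yields $\Pr{G[V[P]]\text{ contains an edge}}\geq(1-1/e)^k$.

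\paragraph{Main obstacle.} The most delicate point is controlling $|V_i^{(i-1)}|$ throughout the sequential sampling: this random set depends on all earlier coordinate samples, so a concentration estimate is needed that is simultaneously strong enough to be compatible with a union bound over the $k$ conditioning events and precise enough to preserve the $(1-1/e)^k$ bound. The bucket uniformity coming from the type pigeonhole is the key enabler here, since it makes the part-$i$ vertices essentially interchangeable with respect to their edge contributions, so that a standard Paley--Zygmund or Chernoff estimate delivers the desired $\Omega(n^*_i)$ lower bound on $|V_i^{(i-1)}|$; the generous slack in $\cclog$ is enough to absorb a polylog loss at each concentration event.
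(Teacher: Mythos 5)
You take a genuinely different route from the paper. The paper proves the Sampling Lemma by induction on $k$: it degree-buckets only the last part $V_k$, finds a heavy bucket $Q_j$, applies the inductive hypothesis to the link $(k-1)$-partite hypergraph $H_u$ of each $u\in Q_j$, and then pigeonholes over the at most $(\log 4n)^{k-1}$ possible $(k-1)$-dimensional \emph{simple vectors} to find one that works for $\geq 2^j$ vertices of $Q_j$; the coordinate $p_k=2^{-j}$ is then set by the bucket index. You instead perform a single pigeonhole over hyperedge ``types'' (joint degree levels across all $k$ coordinates), and set $p_i\approx 1/n^*_i$ symmetrically. This is attractive because it avoids the induction, but it does not go through, for two reasons.

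First, your choice can never satisfy the weight condition for $k\geq 2$. Since $E^*\subseteq\bigtimes_{i=1}^k V^*_i$, you always have $\prod_i n^*_i\geq|E^*|\geq m/(k\log 4n)^k$, so with $p_i\leq 1/n^*_i$ you get
\[
m\cdot w(P)\ \leq\ \frac{m}{\prod_i n^*_i}\ \leq\ \frac{m}{|E^*|}\ \leq\ (k\log 4n)^k\ <\ (k\log 4n)^{k^2}=\cclog\;.
\]
So the ``direct computation'' step cannot work: the weight is on the wrong side of the required threshold by a factor of $(k\log 4n)^{k^2-k}$. The structural sandwich on the $n^*_i$ does not help; it bounds the $n^*_i$ \emph{individually}, but the obstruction above is a \emph{joint} lower bound on $\prod n^*_i$ that holds regardless.

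Second, even ignoring the weight constraint, the sequential-survival step has a gap. What the type pigeonhole gives you is that every $v\in V^*_i$ has $d(v)\approx 2^{b^*_i}$ \emph{in $G$}. It does not follow that $v$ has $\Theta(2^{b^*_i})$ incident edges \emph{in $E^*$} (it may have as few as one), and it says nothing at all about the degrees inside the conditioned set $E^{(i-1)}$ — yet $E^{(i-1)}$ is exactly what governs $V_i^{(i-1)}$. So the claim $|V_i^{(i-1)}|=\Omega(n^*_i)$ is false in general. Concretely, take $k=2$ and let $G$ be a disjoint union of $M$ stars, each with $D$ leaves, centers in $V_1$ and leaves in $V_2$. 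Every edge has the same type, $n^*_1=M$, $n^*_2=MD$. Sampling $V_1$ with $p_1\approx 1/M$ leaves $O(1)$ surviving stars w.h.p., so $|V_2^{(1)}|=O(D)$, not $\Omega(MD)$; the subsequent hit probability with $p_2\approx 1/(MD)$ is $O(1/M)$, which is vanishing rather than $\Omega(1)$. The paper's proof evades both problems because it does \emph{not} set $p_i\approx 1/n^*_i$: it samples the outer coordinate at the bucket scale $2^{-j}$, and the inner coordinates are produced by the inductive hypothesis applied to the link hypergraph $H_u$ of a single vertex $u$ — i.e.\ they are tailored to the post-conditioning structure, which is precisely what your fully symmetric choice discards.
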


We use this lemma to show that for any $\Lambda$-heavy vertex
$v$ there exists a vector $P\in\Prod(\lmid)$ such that $v$ is discovered by $P$ with probability at least $(\eo)^k$.
We use $v[P]$ to denote the probability that $v$ is discovered by $P$.
To relate the lemma which operates on the entire graph to a specific vertex we can make the following reduction.
Fix some vertex $v$, and define a new subhypergraph $H$ of $G$, which contains only the hyperedges from $G$ that contain $v$.
We claim that the probability that $H[V[P]]$ contains at least one edge is equal to $v[P]$, the probability that $v$ is discovered by $P$ over $G$.
We can now get a a lower bound on $v[P]$ by applying the sampling lemma over $H$.

The main application of the sampling lemma is to find a superset of the $\llow$-heavy vertices, which contains all $\Lambda$-heavy vertices.
Using the sampling lemma, we know that by performing the $P$-discovery experiment for a vector $P\in\Prod(\lmid)$, we can find a superset of the $\Lambda$-vertices.
We later show that the probability that a $\llow$-light vertex is discovered by some vector $P\in\Prod(\lmid)$ is at most $1/4^k$.

\paragraph{Roadmap.}
The rest of this section is organized as follows.
In \Cref{ssec:applications} we provide some claims we need for the proof of \Cref{thm2:find heavy}.
In \Cref{ssec:implementation of FHB} we prove \Cref{thm2:find heavy} given \Cref{lemma:discovery}, and in
 \Cref{ssec:sampling lemma proof} we prove
 \Cref{lemma:discovery}. 
\subsection{Properties of the $P$-Discovery Experiment}\label{ssec:applications}
In this subsection we prove several properties of the discovery experiment. 
We show in \Cref{lemma:vP} that if a vertex is $\Lambda$-heavy then there exists a vector $P\in\Prod(\lmid)$ that discovers it with probability at least $(\eo)^k$.
If a vertex is $\llow$-light then for any vector $P$ with $w(P)\leq 1/\lmid$, the probability that the vertex is discovered by $P$ is at most $(\eo)^k/2$.
These two properties allow us to distinguish between $\Lambda$-heavy vertices and $\llow$-light vertices, by approximating the discovery probability of a vertex by all vectors $P\in\Prod(\lmid)$.
Also, in \Cref{claim:few light vertices}, we show that the expected number of vertices that are discovered by a vector $P\in\Prod(\lmid)$ is at most $ w(p) \cdot  2km_V\cdot\log(2k m_V) $.

For a vector $P\in[0,1]^k$ and vertex $v\in V$, we use $v[P]$ to denote the probability that $v\in \ZZ_{V[P]}$. We refer to this event as the event that $v$ is discovered by $P$.
For each vector $P$ we define the set $S_P$ to be the set of vertices for which $v[P]\geq (\eo)^k$.
The following claim, used in the proofs of both lemmas, provides a simple upper bound on $v[P]$ using Markov's inequality.

\begin{claim}\label{claim:markov discovery}
    For every vertex $v$ and every vector $P\in[0,1]^k$, we have that $v[P]\leq d_v\cdot w(P)$.
\end{claim}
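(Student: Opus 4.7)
The plan is to interpret the discovery event as a threshold event on an auxiliary counting random variable and then bound it by its expectation, which is precisely where the name ``Markov discovery'' in the claim label comes from. Specifically, for a fixed vertex $v$ and a fixed vector $P = (p_1,\ldots,p_k)$, I would define the integer-valued random variable $X_v(P) \triangleq \abs{\set{e \in E : v \in e,\; e \subseteq V[P]}}$, which counts how many hyperedges incident to $v$ are entirely contained in the random sampled set $V[P]$.

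The first step is to observe that $v$ being $P$-discovered (i.e.\ $v \in \ZZ_{V[P]}$) is equivalent to $X_v(P) \geq 1$: the event $v \in \ZZ_{V[P]}$ says that $v$ has non-zero degree in $G[V[P]]$, which happens if and only if at least one hyperedge containing $v$ survives the sampling, i.e.\ lies inside $V[P]$. Note that this already forces $v \in V[P]$, since any surviving hyperedge contains $v$.

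Next, I would compute $\Exp{X_v(P)}$ by linearity of expectation. Since $G$ is $k$-partite and each hyperedge $e$ contains exactly one vertex from each $\Vin_i$, and since the sampling is independent across vertices with vertex in $\Vin_i$ included with probability $p_i$, the event $e \subseteq V[P]$ has probability exactly $\prod_{i=1}^k p_i = w(P)$. Summing over the $d_v$ hyperedges incident to $v$ gives $\Exp{X_v(P)} = d_v \cdot w(P)$.

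Finally, applying Markov's inequality to the nonnegative integer random variable $X_v(P)$ yields
\[
v[P] \;=\; \Pr{X_v(P) \geq 1} \;\leq\; \Exp{X_v(P)} \;=\; d_v \cdot w(P),
\]
which is the claimed bound. There is no real obstacle here; the only subtle point is matching the definitions so that the discovery event coincides exactly with $\sset{X_v(P) \geq 1}$ and that the probability of a single hyperedge surviving equals $w(P)$, which uses the $k$-partite structure of $G$ together with the independence of the sampling across vertices.
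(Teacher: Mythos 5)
Your proof is correct and follows the paper's own argument essentially verbatim: both define the random variable counting hyperedges incident to $v$ that survive the sampling, compute its expectation $d_v\cdot w(P)$ by linearity, and apply Markov's inequality to $\Pr{X\geq 1}$. The only difference is that you spell out slightly more carefully the identification of the discovery event with $\set{X_v(P)\geq 1}$, which the paper leaves implicit.
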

\begin{proof}
    Let $X$ be a random variable equal to degree of $v$ over $G[V[P]]$.
    Clearly, we have that $\Exp{X}=d_v\cdot w(P)$, as every hyperedge $e$ is also in $V[P]$ with probability $w(P)$.
    Since $X$ is a non-negative random variable that takes integral values, we get that $\Pr{X>0}=\Pr{X\geq 1}\leq \Exp{X}$.
    We use Markov's inequality and get that $\Pr{X>0}\leq \Exp{X}$.
    Overall, we get
    \begin{align*}
        v[P]=\Pr{X>0}\leq \Exp{X}=d_v\cdot w(P)\;,
    \end{align*}
as claimed.
\end{proof}

The next lemma draws a connection between the heaviness of a vertex and its discovery probability.

\begin{lemma}\label{lemma:vP}
    Fix some $\Lambda$. Recall that $\lmid=\Lambda/\cclog$ and let $\llow=\Lambda/(4^k\cclog )$.
    \begin{enumerate}
        \item For every $\Lambda$-heavy vertex $v$ there exists a vector $P\in\Prod(\lmid)$ such that $v[P]\geq (\eo)^k$.
        \item For every $\llow$-light vertex $v$ and any vector $P$ with $w(P)\leq 1/\lmid$, we have that $v[P]\leq (\eo)^k/2$.
    \end{enumerate}
\end{lemma}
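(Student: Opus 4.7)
The plan is to prove the two claims of the lemma separately. Claim (2) is a one-line consequence of the Markov-style upper bound in \Cref{claim:markov discovery}, while Claim (1) requires applying the Sampling Lemma (\Cref{lemma:discovery}) to a sub-hypergraph attached to $v$.

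For Claim (2), suppose $v$ is $\llow$-light, so $d_v < \llow$. For any $P$ with $w(P) \leq 1/\lmid$, \Cref{claim:markov discovery} yields $v[P] \leq d_v \cdot w(P) < \llow/\lmid = 1/4^k$. An elementary estimate closes the argument: since $4(1-1/e) > 2$, we have $(4(1-1/e))^k \geq 2$ for every $k \geq 1$, which rearranges to $1/4^k \leq (1-1/e)^k/2$, and hence $v[P] < (1-1/e)^k/2$ as required.

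For Claim (1), let $v$ be $\Lambda$-heavy and write $\ell = \chi(v)$. The idea is to pass to the sub-hypergraph $H_v$ of $G$ consisting of all hyperedges that contain $v$; after deleting $v$ from each such edge, $H_v$ becomes a $(k-1)$-partite hypergraph on $V \setminus V_\ell$ with $d_v \geq \Lambda$ hyperedges. Applying the Sampling Lemma to $H_v$ yields a simple $(k-1)$-dimensional vector $Q$ for which $H_v[V[Q]]$ contains a hyperedge with probability at least $(1-1/e)^{k-1} \geq (1-1/e)^k$. Extend $Q$ to a simple $k$-dimensional vector $P$ by inserting $p_\ell = 1$ in the $\ell$-th coordinate; then $w(P) = w(Q)$, and since $v$ is always in $V[P]$, the event ``$v$ is discovered by $P$ in $G$'' coincides with ``$H_v[V[Q]]$ contains a hyperedge''. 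Consequently $v[P] \geq (1-1/e)^k$.

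The main obstacle is verifying that $P \in \Prod(\lmid)$, i.e., that $w(P) = w(Q) \leq 1/\lmid$. The stated Sampling Lemma only furnishes the lower bound $d_v \cdot w(Q) \geq \cclog$; to obtain a matching upper bound I plan to exploit the fact that the proof in \Cref{ssec:sampling lemma proof} returns a \emph{minimal} simple vector, for which halving any coordinate would violate the inequality. Since simple vectors live on a dyadic grid, such minimality forces $d_v \cdot w(Q) \leq 2^{k-1}\cdot\cclog$, hence $w(Q) \leq 2^{k-1}\cclog/d_v \leq 2^{k-1}/\lmid$. The slack built into $\llow = \lmid/4^k$ comfortably absorbs the $2^{k-1}$ factor; alternatively, one can invoke the Sampling Lemma on a sub-hypergraph of $H_v$ containing any $\Lambda$ hyperedges through $v$, which directly targets $w(Q) \approx \cclog/\Lambda = 1/\lmid$. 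Either refinement completes Claim (1).
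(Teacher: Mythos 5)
Your proof of Item~(2) is correct and matches the paper's: \Cref{claim:markov discovery} gives $v[P]\leq d_v\cdot w(P)< \llow/\lmid = 1/4^k$, and your check that $1/4^k\leq (1-1/e)^k/2$ (via $4(1-1/e)>2$) is exactly the elementary inequality the paper leaves implicit.

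For Item~(1), your construction is a valid variant of the paper's. The paper defines $G_v$ as a \emph{$k$-partite} hypergraph on the same vertex set $V$ whose edge set is $\{e\in E : v\in e\}$ and applies the $k$-dimensional sampling lemma directly, so the resulting $P$ already lives in $[0,1]^k$. You instead pass to the $(k-1)$-partite hypergraph on $V\setminus V_\ell$ and extend the resulting $(k-1)$-dimensional $Q$ by $p_\ell=1$. Both constructions identify the right auxiliary hypergraph and use the same key lemma; yours even gives the (slightly stronger) bound $v[P]\geq (1-1/e)^{k-1}$. The paper's version avoids the extension step and also sidesteps the $k=2$ corner case, where your $H_v$ is $1$-partite and the Sampling Lemma does not apply.

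However, the ``main obstacle'' you raise and your proposed fix are both off the mark. You read \Cref{lemma:discovery} as giving only the \emph{lower} bound $d_v\cdot w(Q)\geq \cclog$ and plan to recover the needed upper bound $w(Q)\leq 1/\lmid$ by a minimality argument on the returned vector. This minimality claim is nowhere established in \Cref{ssec:sampling lemma proof}: the constructed vector satisfies $w(P)\leq 1/m'$, and nothing in the proof forces it to be tight up to a factor of $2^{k-1}$. More importantly, the argument is unnecessary. The paper's proof of the Sampling Lemma actually establishes \Cref{lemma:sampling tag}, which asserts $P\in\Prodd_k(m')$, i.e., the \emph{upper} bound $w(P)\leq 1/m'$. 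Applying this to your $H_v$ (with $m=d_v$ hyperedges, dimension $k-1$, and $m'=\lmid$) gives $w(Q)\leq 1/\lmid$ directly: the hypothesis $d_v\geq m'\cdot\cclog$ holds because $d_v\geq\Lambda=\lmid\cdot\cclog$, and the $(k-1)$-dimensional constant is dominated by the $k$-dimensional one. So the upper bound you need is already furnished, and the speculative minimality detour should be dropped.
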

\begin{proof}
    We start with Item (1).
    Fix a $\Lambda$-heavy vertex $v$.
    Define an auxiliary hypergraph $G_v$ with the vertex set $V$, and the edge set $E(v)$, defined as the set of hyperedges $e\in E$ for which $v\in e$.
    Note that $\abs{E(v)}=d_v$, where $d_v$ is the degree of $v$, which satisfies $d_v\geq \Lambda$ since $v$ is $\Lambda$-heavy.
We apply the sampling lemma (\Cref{lemma:discovery}) on $G_v$, and get that there exists a simple vector $P$ with $w(P)\leq \lmid$
such that the probability that $G_v[V[P]]$ contains at least one hyperedge is at least $(\eo)^k$.
This follows as $d_v\cdot w(P)\geq d_v/\lmid = \cclog$, and therefore the conditions of the sampling lemma are satisfied.
    Fix a vertex set $U$. Note that $G_v[U]$ contains an hyperedge, then the degree of $v$ in $G[U]$ is at least one.
    We have that $G_v[V[P]]$ contains at least one hyperedge with probability at least $(\eo)^k$, therefore $v$ is non-isolated in $G[V[P]]$ with probability at least $(\eo)^k$. In other words, $v[P]\geq (\eo)^k$.
This completes the proof of Item (1).
    Item (2) follows from \Cref{claim:markov discovery}, as
    \begin{align*}
        v[P]\leq d_v\cdot w(P)\leq \llow/\lmid=1/4^k\leq (\eo)^k/2\;,
    \end{align*}
    where we use the fact that $d_v\leq \llow$.
\end{proof}
The following claim provides a bound on the expected number of $\llow$-light vertices that are discovered by some vector $P\in\Prod(\lmid)$.
A trivial bound we get from \Cref{claim:markov discovery} is that this expectation is at most $n/4^k$, which is too weak for our purposes.
\begin{lemma}\label{claim:few light vertices}
Fix some simple vector $P$.
    Sample a random set of vertices $V[P]$, then $\Exp{\abs{\ZZ_{V[P]}}}\leq 2km_V\cdot w(P) \cdot \lkm$.
\end{lemma}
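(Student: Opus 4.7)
My plan is to observe that every vertex in $\ZZ_{V[P]}$ must sit inside some hyperedge of $G$ that survives the sampling, and then to bound $|\ZZ_{V[P]}|$ by $k$ times the (expected) number of surviving hyperedges.

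First, I would unpack the definition of $\ZZ_{V[P]}$. By the preliminaries, $\ZZ_U$ is the set of vertices of $U$ with nonzero degree in $G[U]$, where the edge set of $G[U]$ is $E_U = \{e \in E : e \subseteq U\}$. Therefore, a vertex $v$ belongs to $\ZZ_{V[P]}$ if and only if there exists a hyperedge $e \in E$ with $v \in e$ and $e \subseteq V[P]$. Letting $E^P := \{e \in E : e \subseteq V[P]\}$ denote the (random) set of surviving hyperedges, this gives
\[
\ZZ_{V[P]} \;=\; \bigcup_{e \in E^P} e, \qquad \text{and hence} \qquad |\ZZ_{V[P]}| \;\leq\; k \cdot |E^P|,
\]
since the hypergraph is $k$-uniform so each surviving hyperedge contributes at most $k$ vertices to the union.

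Second, I would take expectations. Since $V[P]$ samples each vertex of $V_i$ independently with probability $p_i$, and each hyperedge $e \in E$ contains exactly one vertex from each part $V_i$, the events ``$v \in V[P]$'' for $v \in e$ are independent, so $\Pr{e \subseteq V[P]} = \prod_{i=1}^{k} p_i = w(P)$. By linearity of expectation, $\Exp{|E^P|} = m_V \cdot w(P)$, and combining with the bound above gives
\[
\Exp{|\ZZ_{V[P]}|} \;\leq\; k \cdot m_V \cdot w(P).
\]
Finally, since $2 k m_V \geq 2$ we have $\log(2 k m_V) \geq \log 2 \geq 1/2$, so $k m_V w(P) \leq 2 k m_V w(P) \cdot \log(2km_V) = 2 k m_V w(P) \cdot \lkm$, yielding the claimed bound.

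There is no real obstacle here: the only care needed is in correctly reading off the definition of $\ZZ_{V[P]}$ (so that one counts \emph{vertices covered by surviving hyperedges}, not something like vertices merely sampled into $V[P]$), after which the bound is an immediate consequence of linearity of expectation and $k$-uniformity. The logarithmic slack in the stated inequality is there for convenience when the claim is invoked elsewhere.
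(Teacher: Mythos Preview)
Your proof is correct and is genuinely different from---and simpler than---the paper's argument. You bound $|\ZZ_{V[P]}|$ pointwise by $k$ times the number of surviving hyperedges and then take expectations, yielding $\Exp{|\ZZ_{V[P]}|}\le k\,m_V\,w(P)$, which is actually \emph{stronger} than the stated bound by a factor of $2\log(2km_V)$. The paper instead buckets the vertices of $V$ by their degree into classes $Q_j=\{v:d_v\in[W_j,2W_j)\}$ with $W_j=1/(2^j w(P))$, bounds $|Q_j|\le k m_V/W_j$ via handshaking, bounds $v[P]\le 2/2^j$ for $v\in Q_j$ via Markov, and then sums over the $O(\log(km_V))$ buckets, which is where the extra logarithm enters. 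Your direct counting argument sidesteps the bucketing entirely and avoids that loss; the paper's route is more elaborate than necessary for this particular lemma (the degree-bucketing machinery is used in a more essential way elsewhere in Section~\ref{sec:find heavy}, which may explain why it appears here too).
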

\begin{proof}
We partition the vertices of $V$ into classes based on their degrees, as follows.
    Let $W_j=1/(2^j\cdot w(P))$ for every $j\in\zrn{\log(km_V)}$, and define
    \begin{align*}
        Q_j=\set{v\in V\mid d_v\in [W_{j},W_{j-1})},\;Q_0=\set{v\in V\mid d_v\geq W_0}.
    \end{align*}
    We claim that:
    \begin{enumerate}
        \item[(1)] For every $j$, we have $\abs{Q_j}\leq km_V/W_j$.
        \item[(2)] For every $j$ and every vertex $v\in Q_j$, we have that $v[P]\leq 2/2^{j}$.
    \end{enumerate}
    Using the above claims, we get that
    \begin{align*}
        \Exp{\abs{\ZZ_U}}
         & = \sum_{v\in V}v[P]                                  \\
         & = \sum_{j=0}^{\log(km_V)}\sum_{v\in Q_j}v[P]         \\
         & \leq \sum_{j=0}^{\log(km_V)}\sum_{v\in Q_j}2/2^j     \\
         & = 2\sum_{j=0}^{\log(km_V)}\abs{Q_j}/2^j              \\
         & \leq 2\sum_{j=0}^{\log(km_V)}(km_V/W_j)\cdot (1/2^j) \\
         & = 2\sum_{j=0}^{\log(km_V)}km_V\cdot w(P)             \\
         & = 2km_V\cdot w(P)\cdot (\log(km_V) +1)               \\
         & = 2km_V\cdot w(P)\cdot \log(2km_V) \;.
\end{align*}
    which completes the proof.
    The first inequality follows by Item $(2)$ and the second inequality follows by Item $(1)$.

We now prove Items (1) and (2).
    Item (2) follows from \Cref{claim:markov discovery}, which states that $v[P]\leq d_v\cdot w(P)$.
    Fix a vertex $v\in Q_j$, and note that $d_v\in [W_{j},W_{j-1})$, where $W_j=1/(2^j\cdot w(P))$. Therefore, we get that
    $v[P]\leq \frac{W_{j-1}}{w(P)}=2/2^{j}$.

    To prove Item (1), we use a counting argument.
    Note that the sum of the degrees of all vertices in a $k$-partite hypergraph
    is equal to $km_V$, as every edge contributes $k$ to this sum.
    Therefore, for every positive $X$ there are at most $km_V/ X$ vertices with degree at least $X$. In our case, all vertices in $Q_j$ have degree at least $W_j$ and therefore we get that $\abs{Q_j}\leq km_V/W_j$.
\end{proof}

\subsection{Implementation of the Procedure $\FindHalg$}\label{ssec:implementation of FHB}

In this subsection we explain how to implement the procedure $\FindHalg$ using a Hyperedge-Oracle.
The procedure $\FindHalg$ finds a superset of the $\llow$-heavy vertices, which contains all $\Lambda$-heavy vertices.
The procedure queries Hyperedge-Oracle at most $\TO{km/\Lambda}$ queries.
It gives Hyperedge-Oracle only sets of vertices $U$ for which $\gs(U)=\prod_{i=1}^k |U\cap \Vin_i|\leq \TO{n^k/\Lambda}$.
We explain the connection between the heaviness threshold $\Lambda$ and the parameters $\lmid$ and $\llow$.
An ideal algorithm would output a set of vertices that contains all $\Lambda$-heavy vertices and no $(\Lambda-1)$-light vertices.
However, we need to allow some slack in the heaviness threshold. Thus,
our algorithm finds a superset $\vl$ of the $\llow$-heavy vertices (thus does not contain any $\llow$-light vertices), which contains all $\Lambda$-heavy vertices. To get the desired set of vertices, we run the $P$-discovery experiment
for every vector $P\in\Prod(\lmid)$.
Each $\Lambda$-heavy vertex is discovered with probability at least $\alpha$ by some vector $P\in\Prod(\lmid)$, as shown in \Cref{lemma:vP}. Also by
\Cref{lemma:vP}, each $\llow$-light vertex is discovered with probability at most $\alpha/2$, by any vector $P\in\Prod(\lmid)$.
Intuitively, we need the weight of the vector $P$ to be between the heaviness thresholds $\Lambda$ and $\llow$.

A rough description of the 
algorithm presented in this subsection is as follows.
Run the $P$-discovery experiment for every vector $P\in\Prod(\lmid)$ and output the set of vertices that are discovered sufficiently many times.
On top of this, we need to add a truncated search, which simulates the previous algorithm, with a twist. Whenever the truncated search realizes that the current iteration is going to take too many queries, or the computed set is too large, it stops this iteration and moves on to the next one.

The following aforementioned theorem is the main result of this section.
\FindHeavyThm*

We need the following lemma to prove the theorem, which can be thought of as an algorithmic version of \Cref{lemma:vP}.

\begin{lemma}\label{lemma2:find S_P}
    There exists a randomized algorithm $\ACP(V,P)$ that gets as input the set of vertices $V$ and a vector $P\in[0,1]^k$.
The algorithm outputs a set of vertices $\hat{S}_P$, such that  with probability at least $1-1/n^5$, for every $v\in V$, we have that
    \begin{enumerate}
        \item If $v$ satisfies $v[P]\geq (\eo)^k$, then $v\in \hat{S}_P$.
        \item If $v$ satisfies $v[P]\leq (\eo)^k/2$, then $v\not\in \hat{S}_P$.
    \end{enumerate}
The algorithm queries the oracle at most $\BO{k^3\log(k)\cdot\log^2 n \cdot m\cdot w(P)}$ times, while querying only sets of vertices $U$ for which $\gs(U)= \TO{\gs(V)\cdot w(P)}$.
\end{lemma}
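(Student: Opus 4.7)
The plan is to estimate $v[P]$ empirically by running many independent $P$-discovery experiments, then output the vertices whose empirical discovery rate clears the midpoint $\tfrac{3}{4}(\eo)^k$ of the two thresholds in the lemma. Specifically, I would fix $r = \Theta\bigl((\eo)^{-2k}\log n\bigr)$ independent experiments; in the $i$-th experiment I sample $V^i[P]$ by including each $v\in V\cap \Vin_j$ independently with probability $p_j$, extract the non-isolated set $\ZZ^i \triangleq \ZZ_{V^i[P]}$ using the Hyperedge-Oracle subroutine described next, and set $c(v)=\abs{\set{i : v\in \ZZ^i}}$. The final output is $\hat{S}_P \triangleq \set{v\in V : c(v)/r \geq \tfrac{3}{4}(\eo)^k}$.

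For the inner ``find all non-isolated vertices in $V^i[P]$'' subroutine, I would use a deterministic peeling procedure driven by binary search on the Hyperedge-Oracle. For each part $j$, binary-search inside $V^i[P]\cap \Vin_j$: split the current subset $S\subseteq V^i[P]\cap \Vin_j$ in half and query the oracle on each half together with the rest of $V^i[P]$; recurse only on halves that still report a hyperedge. This peels off one non-isolated vertex in $V^i[P]\cap \Vin_j$ at a time in $O(\log n)$ queries per vertex, yielding a total of $O(\abs{\ZZ^i}\cdot k\log n)$ oracle queries per experiment, all on subsets of $V^i[P]$.

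Correctness follows from a Chernoff bound: $c(v)/r$ is the empirical mean of $r$ i.i.d.\ Bernoullis with expectation $v[P]$, and the choice of $r$ forces $\abs{c(v)/r - v[P]}\leq (\eo)^k/4$ with probability $\geq 1-1/n^6$; a union bound over $v\in V$ gives the $1-1/n^5$ guarantee. For the expected query count, \Cref{claim:few light vertices} gives $\Exp{\abs{\ZZ^i}}=\TO{km\cdot w(P)}$, so the total across all experiments is $\TO{r\cdot k^2 m\cdot w(P)} = \BO{k^3\log(k)\cdot \log^2 n\cdot m\cdot w(P)}$, with the $(\eo)^{-2k}$ factor absorbed into the $O_k(1)$ hidden constant that the target bound already carries. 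For the query measure, \Cref{claim:c2} gives $\gs(V^i[P])\leq \gs(V)\cdot w(P)\cdot(6\log n)^k = \TO{\gs(V)\cdot w(P)}$ with probability at least $1-k/n^6$, and a union bound over the $r$ experiments preserves this for every query.

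The main obstacle is converting the expectation bound on total queries into a high-probability bound: individual $\abs{\ZZ^i}$ are not well-concentrated, and a single adversarial experiment could inflate the total. I would resolve this by equipping each experiment with an explicit query budget $\Theta(km\cdot w(P)\cdot \polylog{n})$ and aborting it — treating every $v$ as isolated in that experiment — once the budget is exhausted. By Markov applied to \Cref{claim:few light vertices}, only a $1/\polylog{n}$ fraction of experiments is aborted in expectation, hence \whp by another Chernoff; crucially, aborting only \emph{decreases} $c(v)$, so the direction ``$v[P]\leq (\eo)^k/2 \Rightarrow v\notin \hat{S}_P$'' is unaffected, and the direction ``$v[P]\geq (\eo)^k \Rightarrow v\in \hat{S}_P$'' is preserved by choosing the budget large enough that the non-aborted experiments still satisfy the Chernoff bound even after subtracting the aborted fraction from the empirical mean.
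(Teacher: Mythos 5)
Your outline follows the paper's strategy closely: run many independent $P$-discovery experiments, recover the non-isolated sets via an oracle-driven peeling/binary-search subroutine (essentially $\alge$ from Proposition~\ref{prop2:finding in edge}), threshold on the empirical discovery rate, and control the worst-case query cost by aborting experiments that run over budget. The use of additive (Hoeffding) concentration with $r=\Theta((\eo)^{-2k}\log n)$ rather than the paper's multiplicative Chernoff with $r=\Theta((\eo)^{-k}\log n)$ is harmless since the difference is an $O_k(1)$ factor that the bound already absorbs, and your observation that aborting only decreases $c(v)$ so the ``light $\Rightarrow$ rejected'' direction is monotone is exactly right.

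The genuine gap is in how you set the per-experiment budget. You propose a fixed budget of $\Theta(km\cdot w(P)\cdot\polylog{n})$, but $m=m_V$ is precisely the unknown quantity the whole machinery is built to estimate; the algorithm receives only $V$ and $P$ as input, and the lemma cannot assume $m$ is available. If you instead guess a budget that is too small, the abort rate blows up and your Chernoff argument for the ``heavy $\Rightarrow$ accepted'' direction collapses; if you guess too large (say based on $n^k$), you lose the query bound $\BO{k^3\log(k)\log^2 n\cdot m\cdot w(P)}$. The paper sidesteps this exactly with the doubling loop in Algorithm~\ref{alg:imp ACP}: it tries budgets $2^0,2^1,\dots$ and commits to the first one for which the fraction of aborted experiments drops below $rq/10$; Property~(2) of that proof (using Lemma~\ref{claim:few light vertices} plus Markov, as you correctly anticipated) guarantees this happens once $2^j$ exceeds roughly $km_V\cdot w(P)\cdot\lkm$, so the committed budget is automatically of the right magnitude without ever knowing $m_V$. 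Your proposal would need to be augmented with this search step to be complete. A smaller omission: the paper also inserts a deterministic size check (discarding any sample with $\abs{\uij}\geq \Exp{\abs{\uij}}\cdot(6\log n)^k$) so the query-measure bound holds for every query the algorithm actually issues, whereas you only get it with high probability via a union bound over Claim~\ref{claim:c2}; that check is cheap and worth adding.
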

Recall that for a vector $P=(p_1,\ldots,p_k)$, we use $w(P)=\prod_{i=1}^k p_i$.
We explain how to use \Cref{lemma2:find S_P} to prove \Cref{thm2:find heavy}.
Let $S_H$ denote the set of vertices, for which there exists a vector $P\in\Prod(\lmid)$ such that $v[P]\geq (\eo)^k$, and let $S_L$ denote the set of vertices for which every vector $P\in\Prod(\lmid)$ satisfies $v[P]\leq (\eo)^k/2$.
The two sets are disjoint by definition.
Clearly, we can compute a set $\hat{S}$ that contains all vertices in $S_H$ and no vertices in $S_L$, by running the algorithm $\ACP(V,P)$ for every vector $P\in\Prod(\lmid)$ and taking the union of all sets $\hat{S}_P$.
To prove \Cref{thm2:find heavy}, it suffices to show that the computed set $\hat{S}$ contains all $\Lambda$-heavy vertices and no $\llow$-light vertices, which follows from \Cref{lemma:vP}.

\begin{proof}[Proof of \Cref{thm2:find heavy} using \Cref{lemma2:find S_P}]
    The algorithm $\FindHalg(V,\Lambda)$ works as follows.
    For every vector $P\in\Prod(\lmid)$ execute the algorithm $\ACP(V,P)$ and let $\hat{S}_P$ be its output.
    Compute the set $\hat{S}$, which is the union of all sets $\hat{S}_P$, and output $\hat{S}$.

    We prove that the set $\hat{S}$ contains all $\Lambda$-heavy vertices and no $\llow$-light vertices, with probability at least $1-1/n^4$, based on the properties of \Cref{lemma2:find S_P}.

    \Cref{lemma:vP} promises that for every $\Lambda$-heavy vertex $v$, there exists a vector $P\in\Prod(\lmid)$ such that $v[P]\geq (\eo)^k$.
    Fix a vertex $v$ which is $\Lambda$-heavy and a vector $P$ that satisfies $v[P]\geq (\eo)^k$.
    By \Cref{lemma2:find S_P}, we have that every vertex $v$ for which $v[P]\geq (\eo)^k$ is in the set $\hat{S}_P$, with probability at least $1-1/n^5$.
    We apply a union bound over all $P\in\Prod(\lmid)$ to get that the set $\hat{S}$ contains all $\Lambda$-heavy vertices with probability at least $1-\abs{\Prod(\lmid)}/n^5$.

    Next, we show that for every $P\in\Prod(\lmid)$, we have that $\hat{S}_P$ does not contain any $\llow$-light vertex.
    \Cref{lemma:vP} promises that for every $\llow$-light vertex $v$ and any vector $P\in\Prod(\lmid)$, we have $v[P]\leq (\eo)^k/2$.
    Fix a vertex $v$ which is $\llow$-light and a vector $P\in \Prod(\lmid)$.
By \Cref{lemma2:find S_P}, we have that every vertex $v$ for which $v[P]\leq (\eo)^k/2$ is in the set $\hat{S}_P$, with probability at most $1/n^5$.
    We apply a union bound over all $P\in\Prod(\lmid)$ to get that the set $\hat{S}$ contains a $\llow$-light vertex with probability at most $\abs{\Prod(\lmid)}/n^5$.

    Overall, we get that the set $\hat{S}$ contains all $\Lambda$-heavy vertices and no $\llow$-light vertices, with probability at least $1-(2\log n)^k/n^5\geq 1-1/n^4$, where $\abs{\Prod(\lmid)}\leq (\log (4n))^k$, which follows from its definition.

    ~\\To bound the running time of the algorithm $\FindHalg$, note that the described algorithm makes $\abs{\Prod(\lmid)}$ calls to the algorithm $\ACP(V,P)$, and that the number of queries to $\BB$ made by each such call is at most
    \begin{align*}
        \BO{k^3\log(k)\cdot\log^2 n \cdot m_v\cdot w(P)}
        = \BO{k^3\log(k)\cdot\log^2 n \cdot m_v/\lmid}\;.
    \end{align*}
Therefore, the total number of queries to $\BB$ is at most
    \begin{align*}
        \BO{k^3\log(k)\cdot\log^{k+2} n \cdot m_V/\lmid}\;.
    \end{align*}
The sets of vertices that are queried by the algorithm $\ACP(V,P)$ are sets $U$ for which $\gs(U)=\TO{\gs(V)\cdot w(P)}=\TO{\gs(V)/\lmid}$, where the last equality follows because $P\in \Prod(\lmid)$, and therefore $w(P)\leq {1/\lmid}$.
This gives us the desired bound on the running time of the algorithm $\FindHalg$.
\end{proof}

To prove \Cref{lemma2:find S_P}, we use the following deterministic proposition. 
\begin{proposition}\label{prop2:finding in edge}
    For a set of vertices $U$, let $S$ denote the set of non-isolated vertices over $G[U]$.
    There exists a deterministic algorithm $\alge(U,\thres)$, that takes as input a set of vertices $U$ and a threshold $\thres\geq 2$.
    If $\abs{S}\geq \thres$ then the algorithm outputs $\bot$. Otherwise, the algorithm outputs the set $S$.
    The algorithm makes at most $\BO{k\cdot \thres \cdot \log n}$ queries to the $\BB$, while only querying sets of vertices $U'$ that are contained in $U$.
\end{proposition}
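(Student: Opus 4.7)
The plan is to handle each color class $V_i$ independently, enumerating $S\cap V_i$ by binary search on the color class while using $\BB$ as a subset detection primitive, and aborting as soon as the running count of discovered vertices exceeds $\thres$.

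Fix $i\in[k]$. For a subset $A\subseteq U\cap V_i$, consider the query set $Q(A)\triangleq (U\setminus V_i)\cup A$, which lies inside $U$ so the oracle access is legitimate. Because every hyperedge of $G$ meets $V_i$ in exactly one vertex, a hyperedge of $G[U]$ is contained in $Q(A)$ iff it contains some vertex of $A$. In particular, $\BB(Q(A))=1$ iff $A$ meets $S\cap V_i$, and $\BB(Q(\{v\}))=1$ iff $v\in S$. I would then enumerate $S\cap V_i$ by a standard divide-and-conquer routine $\textsf{Enum}(A)$: it first calls $\BB(Q(A))$; if the answer is $0$ it returns $\emptyset$; if $|A|=1$ it returns $A$; otherwise it splits $A$ into two halves of roughly equal size and returns the union of the two recursive calls. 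A global counter $c$ tracks the total number of non-isolated vertices discovered so far; whenever $c$ reaches $\thres$ the whole algorithm aborts and outputs $\bot$.

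For the query complexity, analyze $\textsf{Enum}$ on color class $i$ as a binary recursion tree of depth at most $\lceil\log_2 |U\cap V_i|\rceil \le \lceil\log_2 n\rceil$. Let $s_i=|S\cap V_i|$. The tree has at most $s_i$ leaves that return a non-empty vertex (those that correspond to a non-isolated vertex) and at most one ``empty'' sibling per internal node on the paths leading to these leaves; hence the number of oracle calls is $O((s_i+1)\log n)$ if the subroutine finishes, and is in any case bounded by $O((c_i+1)\log n)$ where $c_i$ is the number of vertices discovered in color $i$ before the global abort. Summing over $i\in[k]$, the total number of queries is $O\!\left(k\log n+\log n\cdot\sum_i c_i\right)=\BO{k\cdot\thres\cdot\log n}$ because the algorithm halts as soon as $\sum_i c_i\ge\thres$, and the last recursive call contributes at most $O(\log n)$ additional queries.

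Correctness is straightforward: if $|S|<\thres$ the abort never fires, so each $\textsf{Enum}$ run on color class $i$ correctly returns $S\cap V_i$ (by induction on $|A|$ using the biconditional $\BB(Q(A))=1\iff A\cap S\ne\emptyset$), and the union over $i$ equals $S$; if $|S|\ge\thres$, the counter $c$ surely reaches $\thres$ at some point during execution and the algorithm outputs $\bot$. The main technical point to get right is the query accounting: one must argue that an aborted execution does not blow up the bound, which is handled by charging every query along a ``yes'' branch to an eventually discovered non-isolated vertex (of which at most $\thres$ are ever discovered) and charging every ``no'' query to its parent ``yes'' query or to the initial probe $\BB(Q(U\cap V_i))$. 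This is the only subtle step; everything else is a direct binary-search argument.
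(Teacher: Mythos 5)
Your proof is correct, and it takes a genuinely different route from the paper. Both approaches rest on the same key primitive — querying $(U\setminus V_i)\cup A$ tests whether $A$ intersects $S\cap V_i$, so the oracle supports group testing on each color class separately — but the search structure differs. The paper's $\alge$ works iteratively on each color class: it partitions the surviving candidate set $L\subseteq U\cap V_i$ into $\thres$ balanced parts, queries all $\thres$, discards the parts whose query answers $0$, and repeats, finishing with singleton probes once $L$ is small. You instead run a binary search recursion per color class with a global abort once $\thres$ vertices have been found. The payoff of the binary version is a sharper, more transparent accounting: each oracle call is a node of a binary tree, positive nodes lie on root-to-leaf paths of discovered vertices, each negative node is the sibling of a positive node, so the count is $O((\abs{S}+k)\log n)$ in a full run and $O((\thres+k)\log n)$ after the abort, which is within (and slightly tighter than) the claimed $O(k\thres\log n)$; you also correctly isolate the one subtle point, that the in-progress branch at abort time costs only $O(\log n)$ extra. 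By contrast, the paper's $\thres$-ary shrinkage pays $\thres$ queries per round and, when $\abs{S\cap V_i}$ is close to $\thres$, may need $\Theta(\thres\log n)$ rounds before $L$ gets small (and its own recurrence $\abs{L_{t+1}}\le \frac{s}{\thres}\abs{L_t}+s$ bottoms out around $\thres^2$, not $\thres$), so as written that argument delivers $O(k\thres^2\log n)$ rather than the stated $O(k\thres\log n)$. In short: same oracle primitive, different decomposition, and yours is the one whose analysis cleanly matches the proposition's query bound.
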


We explain how to use \Cref{prop2:finding in edge} to prove \Cref{lemma2:find S_P}.
The core of the algorithm $\ACP(V,P)$ specified in \Cref{lemma2:find S_P} is to
run the $P$-discovery experiment, which includes sampling a subset of vertices $U\gets V[P]$ and computing the set of vertices $\ZZ_U$, which is the set of non-isolated vertices in the induced subgraph $G[U]$.
To compute this set, we use the algorithm $\alge$.

We add two additional steps on top of the $P$-discovery experiment, which are amplification and doubling.
We describe them in the following.
We fix some threshold $\sigma$ and repeat the discovery experiment $r=\Omega(\log n)$ times.
If the number of iterations in which the algorithm outputs $\bot$ is too large, then we increase the threshold $\sigma$ by a factor of $2$ and repeat the process.
Otherwise, we output the set of vertices that are discovered in at least $3r\cdot q/4$ iterations, where $q=(\eo)^k$.

We use doubling to find the correct threshold for the following reason.
If the threshold is too low, then the algorithm might output $\bot$ even if
all vertices $v$ for which $v[P]\geq (\eo)^k$ are discovered. On the other hand, if the threshold is too high, then the query-measure of the algorithm might be too high.

We note that the expected number of discovered vertices is at most $km_V\cdot w(P)\cdot \lkm$, by \Cref{claim:few light vertices}.
We also prove that \whp the final threshold we use is indeed bounded by
$O(km_V\cdot w(P)\cdot \lkm)$.
The formal proof is as follows.
\begin{proof}[Proof of \Cref{lemma2:find S_P} using \Cref{prop2:finding in edge}]
    \sloppy{We provide an implementation for the algorithm
$\ACP$ specified in \Cref{lemma2:find S_P}, using the algorithm $\alge$ specified in \Cref{prop2:finding in edge}.}
    \newcommand{\aca}{q}
    \newcommand{\hS}{\hat{S}}
    \newcommand{\hSP}{\hat{S}_P}
    \newcommand{\hcv}{\hat{c}_v}
    \newcommand{\uij}{U_{i,j}}
    \newcommand{\sij}{S_{i,j}}
    \newcommand{\hsij}{\hat{S}_{i,j}}

    \begin{algorithm}[H]
        \caption{Implementation of the algorithm $\ACP(V,P)$, specified in \Cref{lemma2:find S_P}.}\label{alg:imp ACP}
        \setcounter{AlgoLine}{0}
        \KwIn{A set of vertices $V$ and a vector $P\in[0,1]^k$.}
        \KwOut{A set of vertices $\hat{R}$, containing all vertices $v$ for which $v[P]\geq (\eo)^k$ and no vertex $v$ for which $v[P]\leq (\eo)^k/2$, with probability at least $1-1/n^5$.}

        \medskip
        Let $\aca\gets (\eo)^k$, and $r\gets 600\log n/\aca$;\\
        \smallskip
        \For{$j=0$ to $j=\log n$}{
            \smallskip
            Set $\hcv\gets 0$ for every vertex $v$, $b_j\gets 0$, and $\sigma=2^j$;\\
            \smallskip
            \For{$i=1$ to $i=r$}{
                \smallskip
                Sample a set of vertices $\uij\gets V[P]$\Comment{Let $\sij$ denote the set of non-isolated vertices in $G[\uij]$.}
                \lIf{$\abs{\uij}\geq \Exp{\uij}\cdot (6\log n)^k$}{
                    Continue to the next iteration
}
Define $\hsij\gets \alge(\uij,\sigma\cdot 100/\aca)$;\\
                \lIf{$\hsij=\bot$}{
                    $b_j\gets b_j + 1$
                }
                \For{every vertex $v\in \hsij$}{
                    $\hcv\gets \hcv + 1$.
                }
            }
            \If{$b_j\leq  r\cdot \aca / 10$}{
                \smallskip
                $\hat{R}_j\gets \set{v\mid \hcv\geq 3r\aca/4}$\Comment{
                    The set $\hat{R}_j$ to be the set of vertices $v$ for which $\hcv\geq 3r\aca/4$.}
                \smallskip
                \Return{$\hat{R}_j$}.
            }
        }
    \end{algorithm}
    We use the following notation.
    For every $j\in\zrn{\log n}$ and $i\in[r]$ we use $\uij$ to denote the set of vertices sampled in the $i$-th iteration with threshold $\sigma=2^j$.
    We also use $\sij$ to denote the set of non-isolated vertices over $G[\uij]$,
    and $\hsij$ to denote the output of the call $\alge(\uij,2^j\cdot 100/\aca)$.
    For fixed $j$ the algorithm maintains a counter $\hcv$ for every vertex $v$, which counts the number of times the vertex $v$ is contained in the set $\hsij$.
    The algorithm also maintains a counter $b_j$ that counts the number of times the algorithm outputs $\bot$, with threshold $\sigma=2^j$.
    We also define a set $\hSP$ to be the set of vertices $v$ for which $\hcv\geq 3r\aca/4$,
    For the analysis, we keep a counter $c_v$ for every vertex that counts the number of times the vertex $v$ is in the set $\sij$.
    We emphasize that we do not have access to the value of $c_v$ or to the set $\sij$, and we only use them for the analysis.
    We do have access to the counters $\hcv$ and the sets $\hsij$.

    ~\\We make the following observation.
    If $\hsij$ is not $\bot$, then the sets $\hsij$ and $\sij$ are equal.
This follows by the guarantee of the algorithm $\alge$, provided in \Cref{prop2:finding in edge}, which is a deterministic algorithm.
    Therefore, for every $v$ and fixed $j$ we have $c_v-b\leq \hcv\leq c_v$.
    We use this fact together with a concentration bound on $c_v$ to prove a concentration bound on $\hcv$.
    We prove the correctness of the algorithm by proving the following two properties.
    \begin{enumerate}
        \item For any $j\in\zrn{\log n}$, if $b_j\leq r\aca/10$, then the set $\hSP$ contains all vertices $v$ with $v[P]\geq \aca$ and no vertex $v$ with $v[P]\leq \aca/2$, with probability at least $1-1/n^5$.
\item For $j$ such that $2^j\geq 2km_V\cdot w(P)\cdot \lkm$, the probability that $b_j> r\aca/10$ is at most $1/n^5$.
\end{enumerate}
We start with proving Property (1). We fix some iteration $j$ for which $\hSP\neq \bot$.
    We show that
    \begin{enumerate}
        \item For every vertex $v$ with $v[P]\geq \aca$, we have that $c_v\geq  \aca \cdot r\cdot 5/6$, with probability at least $1-1/n^6$.
        \item For every vertex $v$ with $v[P]\geq \aca/2$, we have that $c_v<  \aca \cdot r\cdot 3/4$, with probability at least $1-1/n^6$.
    \end{enumerate}
    Note that $c_v$ is a binomial random variable with $r$ trials and success probability $v[P]$.
    We now apply Chernoff's inequality.
For a vertex $v$ with $v[P]\geq q$, we get
    \begin{align*}
        \Pr{c_v< \aca \cdot r\cdot 5/6}
        =\Pr{c_v< \aca \cdot r\cdot (1-1/6)}\leq \exp(-\aca \cdot r/100)\leq 1/n^6\;.
    \end{align*}
    For $v$ with $v[P]\leq \aca/2$, we get
    \begin{align*}
        \Pr{c_v\geq   \aca \cdot r\cdot 3/4}
        =\Pr{c_v\geq  \aca \cdot r/2\cdot (1+1/2)}\leq \exp(-\aca \cdot r/10)\leq 1/n^6\;.
    \end{align*}
    We apply a union bound over all $n$ vertices, and get that for every vertex $v$ with $v[P]\geq \aca$ we have that $c_v\geq 5r/6$, and for every vertex $v$ with $v[P]\geq \aca/2$ we have that $c_v\leq 3r/4$, with probability at least $1-1/n^5$.
    Using the assumption that $b_j\leq r\aca/10$, we get that if $c_v\geq r\cdot 5/6$ then $\hcv\geq 3r\aca/4$, and if $c_v\leq r\cdot 3/4$ then $\hcv< 3r\aca/4$.
    Therefore, we conclude that if $b_j\leq r\aca/10$, then the set $\hSP$ contains all vertices $v$ with $v[P]\geq \aca$ and no vertex $v$ with $v[P]\leq \aca/2$, with probability at least $1-1/n^5$.

    ~\\We now prove Property (2).
    That is, we prove that for $2^j\geq 2km_V\cdot w(P)$, the probability that $b>r\aca/10$ is at most $1/n^5$.
Let $X$ be a random variable that is equal to $\ZZ_{V[P]}$.
    By \Cref{claim:few light vertices}, we have that $\Exp{X}\leq 2km_V\cdot w(P)\cdot \lkm$. By Markov's inequality, we have that
    \begin{align*}
        \Pr{X\geq \Exp{X}\cdot 100/\aca}\leq \aca/100\;.
    \end{align*}
    Therefore, if $2^j\geq km_V\cdot w(P)\cdot \lkm$, then $b_j$ is a binomial random variable with $r$ trials and success probability at most $\aca/100$, and therefore its expectation is at most $r\cdot \aca/100$, which we denote by $\beta$.
    We apply Chernoff's inequality on $b_j$, under the assumption that $\beta\geq \Exp{b_j}$, and get
    \begin{align*}
        \Pr{b_j\geq r\cdot \aca/10}=
        \Pr{b_j\geq 10\beta}\leq 2^{-10\beta}\leq 1/n^5\;.
\end{align*}
    This completes the proof of the two properties.

    ~\\It remains to analyze the running time of the algorithm.
    The number of calls to the algorithm $\alge$ is at most $r\cdot \log n$.
    In each such call the threshold is at most $km_V\cdot w(P)$ by \Cref{claim:few light vertices}.
    This means that each such call makes at most $\BO{k^2\cdot m_V\cdot w(P)\cdot \lkm\cdot \log n}$ queries to the oracle $\BB$, by \Cref{prop2:finding in edge}.
    Therefore, the number of queries to the oracle $\BB$ is at most
    \begin{align*}
        \BO{k^2\cdot m_V\cdot w(P)\cdot \lkm\cdot \log n\cdot r}
        =\BO{k^3\log(k)\cdot m\cdot w(P)\cdot \log^2 n}\;,
    \end{align*}
    where we use the fact that $r=O(\log n)$, and that $m_V\leq n^k$.
We also note that the algorithm only queries sets of vertices $U$ obtained by sampling $V[P]$. Using \Cref{claim:c2}, we have that $\gs(U)\leq \BO{\gs(V)\cdot w(P) \cdot (6\log n)^k}$, with probability at least $1-1/n^6$.
    Therefore, the algorithm only queries sets of vertices $U$ for which $\gs(U)\leq \TO{\gs(V)\cdot w(P)}$.
    This completes the proof of the lemma.
\end{proof}

In what follows we prove \Cref{prop2:finding in edge}. 
Given a set $U$, and a threshold $\sigma$ we would like to find the subset of vertices $S$ of non-isolated vertices over $G[U]$, assuming $\abs{S}\leq \sigma$, where otherwise we output $\bot$.
We focus on finding $S\cap \Vin_1$, that is the set of vertices in $S$ that are contained in $\Vin_1$, where we compute $S\cap \Vin_i$ seperately for $i\in[k]$.
Informally, we look at a single part $U_1=\Vin_1\cap U$, and partition it into $\thres$ disjoint sets $L^1,\ldots,L^\tau$.
We then query the \HO on the sets $R_i=(U\setminus U_1)\cup L^i$ for $i\in[\thres]$. In words, $R_i$ is the entire set $U$, where the only set of vertices from $\Vin_1$ that are in $R_i$ are those in $L^i$.
If there are less than $\thres$ vertices in $S$, then there exists at least one set $L^i$ for which $\BB(R_i)=0$, and therefore we can discard all vertices that are in $L^i$, as none of them are in $S$.
Following is the formal proof.
\begin{proof}[Proof of \Cref{prop2:finding in edge}]
We use the following notation. Let $U_h=U\cap \Vin_h$ for $h\in[k]$.
We explain how to either find the set of vertices $S\cap U_1$, or find a witness for $\abs{S\cap U_1}\geq \thres$, in which case we output $\bot$.
    The same process can be applied to find $S\cap U_h$ for other values $h\in[k]$.
We initialize a set $L\gets U_1$, and repeat the following process:
    \begin{enumerate}
        \item Partition the vertices of $L$ into $\thres$ disjoint sets $L^1,\ldots, L^{\thres}$, as balanced as possible, such that each set has size in $\set{\floor{\abs{L}/\thres},\ceil{\abs{L}/\thres}}$.
              For each $i\in[\thres]$, define the set $R_i=(U\setminus U_1)\cup L^i$.
        \item Query the \HO on every set $R_i$, i.e., invoke $\BB(R_i)$ for each $i\in[\thres]$.
        \item Remove from $L$ all vertices that are not contained in a set $L^i$ for which $\BB(R_i)=1$. If no such set exists, we exit the loop, and otherwise go back to step 1.
    \end{enumerate}
    We first analyze the case where $\abs{S\cap U_1}< \thres$.
    In this case, as long as there are at least $\thres$ non-empty sets in $L$, there exists at least one set $L^i$ which does not contain a vertex from $S$, and therefore $\BB(R_i)=0$.
    Therefore, the process removes at least a $1/(\thres + 1)$ fraction of the vertices in $L$ in each iteration, which means that after at most $2\thres\log n$ iterations, there less than $\thres$ vertices left in $L$.
    We then make $\abs{L}$ trivial queries:
    we invoke the query $\BB((U\setminus U_1)\cup\set{v})$, for $v\in L$, which is the set of all vertices not in $U_1$ together with a single vertex $v$ from $L$.
    Define $S_1$ to be the set of vertices $v$ for which $\BB\brak{(U\setminus U_1)\cup\set{v}}=1$. We then output the set $S_1$.

    We now analyze the case where $\abs{S\cap U_1}\geq \thres$.
    In this case, There exists an iteration $j$ in which no set is removed from $L$,
    and there are at least $\thres$ non-empty sets in $L$ at this iteration.
    The sets $R_1,\ldots, R^{\thres}$ are the witnesses that $\abs{S\cap U_1}\geq \thres$, and therefore the algorithm outputs $\bot$.
\end{proof}

\subsection{Proof of \Cref{lemma:discovery}}\label{ssec:sampling lemma proof}
In this subsection we prove \Cref{lemma:discovery}.
\LemSampleA* \renewcommand{\Prod}{\mathsf{Product}_k}
\newcommand{\EC}{\mathcal{E}}
We work with the following notation.
Recall that we consider the following set of vectors.
\prodDef* 

For every vector $P\subseteq [0,1]^k$, we define the random variable $\XC(V[P])$ as an indicator random variable for the event that the random induced graph
$G[V[P]]$ contains at least one hyperedge.
Therefore, a different formulation of \Cref{lemma:discovery} is as follows.

\begin{restatable}[Equivalent Sampling Lemma]{lemma}{LemSample}\label{lemma:sampling tag}
    If $G$ has at least $m=m'\cdot \cclog$ edges then there exists a simple vector $P\in \Prod(m')$ such that $\XC(V[P])=1$ with probability at least $(\eo)^k$.
\end{restatable}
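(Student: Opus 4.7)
The plan is to prove the Sampling Lemma by induction on $k$. The key conceptual idea for the inductive step is to project $G$ onto $V_2 \times \cdots \times V_k$, bucket the resulting $(k-1)$-uniform hypergraph by preimage-multiplicity, and then use the independence of the sampling of $V_1$ from that of $V_2, \ldots, V_k$ to pick up the final factor $1-1/e$. For the base case $k=1$, $G$ is a set of $m \geq m' \cdot \log(4n)$ singleton hyperedges; taking $p_1 = 2^{-\lceil \log m' \rceil} \in [1/(2m'),\, 1/m']$ gives a simple vector with $w(P) \leq 1/m'$, and $\Exp{\abs{E(G[V[P]])}} = p_1 m \geq \log(4n)/2 \geq 1$ implies $\Pr{G[V[P]] \neq \emptyset} \geq 1 - e^{-p_1 m} \geq 1 - 1/e$.

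For the inductive step, assume the lemma for $k-1$ and define the $(k-1)$-uniform projection $H = \set{(u_2,\ldots,u_k) : \exists v \in V_1,\; (v, u_2, \ldots, u_k) \in E(G)}$ together with preimages $T(u) = \set{v \in V_1 : (v, u) \in E(G)}$, so that $\sum_{u \in H} \abs{T(u)} = m$. Bucket $H$ by $\abs{T(u)}$, setting $C_\ell = \set{u \in H : \abs{T(u)} \in [2^\ell, 2^{\ell+1})}$. Since $\abs{T(u)} \leq \abs{V_1} \leq n$ there are at most $\lceil \log n \rceil + 1$ nonempty buckets, so by pigeonhole some $C_{\ell^*}$ with $c = \abs{C_{\ell^*}}$ and $t^* = 2^{\ell^*}$ satisfies $c \cdot t^* \geq m/(2\log n)$. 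Set $p_1 = 2^{-\lfloor \log t^* \rfloor} \in [1/t^*,\, 2/t^*]$; this is simple since $t^* \leq n$, and crucially, for any $u^* \in C_{\ell^*}$ one has $\Pr{V[P]_1 \cap T(u^*) \neq \emptyset} \geq 1 - (1-p_1)^{t^*} \geq 1-1/e$, uniformly in the choice of $u^*$.

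Now choose $(p_2, \ldots, p_k)$ by case analysis on $c$ versus the threshold $c_0 := ((k-1)\log(4n))^{(k-1)^2}$. In the case $c \geq c_0$, apply the inductive hypothesis to $C_{\ell^*}$ viewed as a $(k-1)$-partite hypergraph on $V_2, \ldots, V_k$ with parameter $m'' = \lfloor c/c_0 \rfloor$, obtaining simple $(p_2, \ldots, p_k)$ with $\prod_{i\geq 2} p_i \leq 1/m''$ and $\Pr{C_{\ell^*}[V[P]_{-1}] \neq \emptyset} \geq (\eo)^{k-1}$ (where $V[P]_{-1}$ denotes the sampled subset of $V_2 \cup \cdots \cup V_k$). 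In the case $c < c_0$, simply set $p_i = 1$ for every $i \geq 2$, so that $V[P]_{-1} = V_2 \cup \cdots \cup V_k$ and every tuple of $C_{\ell^*}$ survives deterministically. In either case, by the independence of $V[P]_1$ and $V[P]_{-1}$, selecting any measurable $u^*$ from $C_{\ell^*}[V[P]_{-1}]$ when nonempty shows that $\Pr{G[V[P]] \neq \emptyset} \geq (\eo)^{k-1} \cdot (1-1/e) = (\eo)^k$, since the resulting $v \in V[P]_1 \cap T(u^*)$ together with $u^*$ yields a genuine surviving edge $(v, u^*) \in E(G)$.

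The weight bound $w(P) \leq 1/m'$ reduces in both cases to the inequality $c \cdot t^* \geq \Omega(m' c_0)$, which follows from $c \cdot t^* \geq m/(2\log n) \geq m' \cclog / (2\log n)$ once we observe that $\cclog/c_0 \geq (k\log(4n))^{2k-1}$ dominates $\log n$ by a comfortable margin. The main obstacle I expect is the case $c < c_0$, where the inductive hypothesis cannot be invoked because the bucket $C_{\ell^*}$ has too few tuples; the argument handles this regime by noting that $t^*$ must then be large (to keep the product $c \cdot t^*$ at least $m/(2\log n)$), at which point the trivial choice $p_2 = \cdots = p_k = 1$ combined with the sampling of $V_1$ alone suffices, subject to the numerical bookkeeping between $\cclog$ and $c_0$ outlined above.
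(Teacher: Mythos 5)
Your proof is correct, and it takes a genuinely different route from the paper's. The paper's inductive step buckets the vertices of $V_k$ by their degree, applies the inductive hypothesis separately to each heavy vertex's link hypergraph $H_u$, and then runs a pigeonhole argument over the at most $(\log 4n)^{k-1}$ possible simple $(k-1)$-vectors to find a single vector $Y$ that works for $\geq 2^j$ of them (the remaining case, a single ultra-heavy vertex in $Q_0$, is handled separately by extending with $p_k=1$). You instead project onto $V_2\times\cdots\times V_k$, bucket the projected $(k-1)$-tuples by their preimage size $|T(u)|$, and apply the inductive hypothesis \emph{once} to the bucket $C_{\ell^*}$ viewed as a single $(k-1)$-partite hypergraph. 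This sidesteps the pigeonhole-over-vectors entirely, because after conditioning you only need \emph{some} tuple of $C_{\ell^*}$ to survive (not a pre-specified one), and every such tuple has a preimage of size $\geq t^*$, so the final $1-1/e$ factor from independently sampling $V_1$ at rate $p_1=1/t^*$ is picked up uniformly over the $\mathcal{A}$-measurable choice of $u^*$. Your $c<c_0$ case replaces the paper's $Q_0\ne\emptyset$ case. Your $k=1$ base case is also simpler than the paper's $k=2$ base case. The minor bookkeeping slips (e.g.\ $\cclog$ at $k=1$ is $4\log n$, not $\log(4n)$, and $c_0$ should be $(4(k-1)\log n)^{(k-1)^2}$ to literally match the form of the inductive hypothesis) do not affect the argument, since the only inequality you need, $\cclog/c_0\gtrsim\log n$, holds with ample slack under either convention. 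Overall your route is shorter and arguably cleaner; the paper's version buys nothing extra here.
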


We prove \Cref{lemma:sampling tag} by induction on $k$.
The base case is when $k=2$, which we prove in the following.
~\\\textbf{Proof of the Base Case.}
\renewcommand{\L}{m^\prime}
Let $K_0=\log (\L)+1$, and let $W_i\triangleq \L/2^i$.
For $i\in\zrn{K_0}$, let $P_i=(\tfrac{2^i}{\L},\tfrac{1}{2^{i}})$, and let
$Z_i=e(G[P_i])$, denote a random variable that counts the number of edges in the random induced subgraph $G[P_i]$.
We partition $V_2$ into classes as follows. For $i\in\zrn{K_0}$, let
\begin{align*}
    Q_i=\set{u\in V_2\mid \deg(u)\in\left[W_i,2\cdot W_i\right)}\;, &  & Q_0=\set{u\in V_2\mid \deg(u)\geq W_0}\;.
\end{align*}
In words, $u\in Q_i$ if and only if its degree is at least $W_i$ and less than $2W_i$. The set $Q_0$ consist of all vertices $u\in V_2$, with degree at least $W_0=\L$.
We prove the following two claims.
\begin{enumerate}
    \item There exists $j\in\set{0,1,\ldots, K_0}$ such that $\abs{Q_j}\geq 2^j$.
    \item For such $j$, we have $\Pr{Z_j>0}\geq (\eo)^2$.
\end{enumerate}
Proving both of these claims completes the proof of the proposition.

We start with a proof of the second claim.
Fix $j\in\set{0,1,\ldots, K_0}$ where $\abs{Q_j}\geq 2^j$.
For any non-empty subset $S\subseteq Q_j$, let $\Em{1}(S)$ denote the event that $\set{Q_j\cap V_2[2^{-j}]=S}$.
That is, $\Em{1}(S)$ denote the event the set of vertices that were sampled from $V_2\cap Q_j$ is exactly the set $S$.
Let $N(S)$ denote the neighbors of the set $S$ (in $V_1$)
Let $\Em{2}(S)$ denote the event that $\set{V_1[1/W_j]\cap N(S)\nemp}$.
That is, $\Em{2}(S)$ denote the event that the set of vertices that were sampled from $V_1\cap N(S)$ is not empty.
Next, we show that
\begin{align}
    v[P_j]=\Pr{Z_j>0}=\sum_{S: \emptyset\subsetneq S\subseteq Q_j} \Pr{\Em{1}(S)\cap \Em{2}(S)} \geq (\eo)^2 \label{eq3:v Pk}\;.
\end{align}
For every fixed $S\subseteq Q_j$, which is not empty, the events $\Em{1}(S)$ and $\Em{2}(S)$ are independent, since $\Em{1}(S)$ addresses sampling vertices from $V_2$, while $\Em{2}(S)$ addresses sampling vertices from $V_1$, where the two samples are independent of each other.
Also note that $\Em{1}(S)\cap \Em{2}(S)$ is contained in the event that $Z_j$ is positive, and since the events $\set{\Em{1}(S)}_{S\subseteq Q_j}$ are pairwise disjoint, so are the events $\set{\Em{1}(S)\cap \Em{2}(S)}_{S\subseteq Q_j}$.
Therefore, the event that $Z_j$ is positive, contains the union of the following pairwise disjoint events $\sset{\bigcup_{S: \emptyset\subsetneq S\subseteq Q_j}\Em{1}(S)\cap\Em{2}(S)}$.
We get
\begin{align*}
    \Pr{Z_j>0}
     & \geq \Pr{\bigcup_{S: \emptyset\subsetneq S\subseteq Q_j}\Em{1}(S)\cap\Em{2}(S)} \\
     & = \sum_{S: \emptyset\subsetneq S\subseteq Q_j} \Pr{\Em{1}(S)\cap \Em{2}(S)}     \\
     & = \sum_{S: \emptyset\subsetneq S\subseteq Q_j} \Pr{\Em{1}(S)}\Pr{\Em{2}(S)}\;.
\end{align*}
To complete the proof, we need to show that
\begin{enumerate}
    \item $\sum_{S: \emptyset\subsetneq S\subseteq Q_j}\Pr{\Em{1}(S)}\geq \eo$
    \item for every $S\subseteq Q_j$, which is not empty, we have $\Pr{\Em{2}(S)}\geq \eo$.
\end{enumerate}
The first claim follows as
\begin{align*}
    \sum_{S: \emptyset\subsetneq S\subseteq Q_j}\Pr{\Em{1}(S)}=\Pr{Q_j[2^{-j}]\nemp}=1-(1-2^{-j})^{\abs{Q_j}}\geq \eo\;,
\end{align*}
where the inequalities hold for the following reasons. The first two equalities follows from definition, and the last inequality follows from the assumption that $\abs{Q_j}\geq 2^j$.

The second claim follows as every non-empty subset $S\subseteq Q_j$ satisfies $\abs{N(S)}\geq W_j$;
To see this, fix some vertex $u\in S$, and note that $N(u)\subseteq N(S)$, and $\deg(u)=\abs{N(u)}\geq W_j$ because $u\in Q_j$.
Therefore, for any such $S$, we have
\begin{align*}
    \Pr{\Em{2}(S)}=\Pr{N(S)[1/W_j]\nemp}=1-(1-1/W_j)^{\abs{N(S)}}\geq 1-(1-1/W_j)^{W_j}\geq \eo\;.
\end{align*}
This completes the proof of \Cref{eq3:v Pk}.

\medskip
To complete the proof of the induction base, it remains to show that the first claim indeed holds, i.e., that there exists $j\in\set{0,1,\ldots, K_0}$ such that $\abs{Q_j}\geq 2^j$.
Assume towards a contradiction that this is not the case, that is we have $\abs{E}\geq m$, the set $Q_0$ is empty, and for every $j\in\zrnone{K_0}$, we have $\abs{Q_j}<2^j$. Then,
\begin{align*}
    \abs{E}
     & =   \sum_{u\in[V_2]}\deg(u)                                                     \\
     & <   \sum_{u\in[V_2]}\sum_{1\leq k\leq K_0}\chi_{\set{u\in Q_j}}\cdot 2\cdot W_j \\
     & =   \sum_{1\leq k\leq K_0}\sum_{u\in[V_2]}\chi_{\set{u\in Q_j}}\cdot 2\cdot W_j \\
     & =   \sum_{1\leq k\leq K_0}\abs{Q_j}\cdot 2\cdot W_j
\end{align*}
Because we assumed towards a contradiction that for every $j\in\set{0,1,\ldots, K_0}$ we have $\abs{Q_j}<2^j$, we get
\begin{align*}
    \sum_{1\leq j\leq K_0}\abs{Q_j}\cdot 2\cdot W_j<
    \sum_{1\leq j\leq K_0}2^j\cdot 2\cdot W_j=
    \sum_{1\leq j\leq K_0}2\L =2K_0\cdot \L \;.
\end{align*}
This is a contradiction, as
\begin{align*}
    2K_0\cdot \L
    = 2\L(\log(\L)+1)
    = 2\cdot \frac{m}{6\log n}\cdot (\log(\frac{m}{6\log n})+1)
    \leq 2\cdot \frac{m}{6\log n}\cdot (2\log n)\leq 2m/3<m\;.
\end{align*}
The first equality follows from the definition of $K_0$, the next one by the fact that $\L=m/(6\log n)$.
This completes the proof of the induction base.
\renewcommand{\EE}[1]{\mathcal{E}(#1)}
\newcommand{\EEs}[1]{\mathcal{E}^{\ast}(#1)}
~\\Next, we prove the induction step.
We use the following notation.
Let $\beta=k\log 4n$, and $B=m'\cdot \beta^{(k-1)^2}$.
Recall that $m'=\frac{m}{\beta^{k^2}}$. We also define $K_0=\ceil*{\log(n)}$, $W_j=B/2^j$ for $j\in\zrn{K_0}$, and finally $\gamma\triangleq (\log 4n)^{k-1}$, where $\gamma\geq \abs{\Prod(B)}$.

~\\\textbf{Proof of the Induction Step of \Cref{lemma:sampling tag}.}
For every vertex $v$ define $\EE{v}=\set{e\in\EC\mid v\in e}$, and let $d_v=\abs{\EE{v}}$, which we call the degree of $v$.
We partition the vertices of $V_k$ into classes by their degrees.
For $j\in\zrnone{K_0}$, define the sets $Q_j$ as follows.
\begin{align*}
    Q_j=\set{u\in V_h\mid d_u\in[W_j,2W_j)}\;, &  & Q_0=\set{u\in V_h\mid d_u\geq W_0}\;.
\end{align*}
In words, $u\in Q_j$ if and only if its degree is at least $W_j$ and less than $2W_j$. The set $Q_0$ consist of all vertices $u\in V_2$, with degree at least $W_0=m'$.

We claim that at least one of the following must hold.
The set $Q_0$ is not empty, or there exists a value $j\in\zrnone{K_0}$ such that $\abs{Q_j}\geq 2^j\cdot \gamma$.
Assuming this is true, we can complete the proof.

We first prove this assuming that the set $Q_0$ is not empty, and let $u$ denote a vertex in $Q_0$.
We define a new $(k-1)$-partite graph $H_u$ as follows.
The vertex set of $H_u$ is $V(H_u)=V\setminus V_{k}$, and the edge set of $H_u$ is $E(H_U)=\set{e\setminus\set{u}\mid e\in\EC(u)}$.
Note that $H_u$ has at least $d_u$ edges, where $d_u$ is the degree of $u$ in $G$.
By the induction hypothesis there exists a simple vector $P_u\in \Prodd_{k-1}(d_u')$, where $d_u'=\frac{d_u}{\beta^{(k-1)^2}}$, such that $\Pr{\XC(V(H_u)[P_u])=1}\geq (\eo)^{k-1}$.
To complete the proof, we extend $P_u$ to a vector $Z_u\in \Prodd_k(m')$
and show that $\Pr{\XC(V[Z_u])=1}\geq (\eo)^{k}$.
The vector $Z_u$ is defined using $P_u$ as follows.
If $P_u=(p_1,\ldots, p_{k-1})$, then $Z_u=(p_1,\ldots, p_{k-1},p_k)$, where $p_k=1$.
We need to show that $Z_u\in \Prodd_k(m')$, which follows as
\begin{align*}
    w(Z_u)=w(P_u)
    = \frac{1}{d_u'}
    = \frac{\beta^{(k-1)^2}}{d_u}
    \leq \frac{\beta^{(k-1)^2}}{B}
    = \frac{1}{m'}\;.
\end{align*}
Intuitively, the vector $Z_u$ always samples the vertex $u$.
It then operates as $P_u$ on the remaining vertices.
To see that, note that for every realization $S\gets V(H_u)[P_u]$, the subgraph $H_u[S]$ contains an edge if and only if the subgraph $G_u[S\cup\set{u}]$ contains an edge.
Therefore, we have that
\begin{align*}
    \Pr{\XC(V[Z_u])=1}
    =\Pr{\XC(V(H_u)[P_u])=1}\geq (\eo)^{k-1}\;,
\end{align*}
which completes the proof of the case in which $Q_0$ is not empty.

~\\The other cases are a bit more complicated.
The difficulty arises for the following reason. If the ``good'' set $Q_j$, contains more then one vertex, then it might be the case that each vector has different vector that discovers it. Therefore there is no clear way to define $P_u$ as we did before. Moreover, choosing one such vector and extending it to a $k$-dimensional vector requires that the last coordinate of the extended vector is $1/2^j$, which means we choose a sampling vector $P_u$ for some vertex $u$ in $Q_j$, but might sample the vertex $u'$ from $Q_j$ which has a completely different vector that discovers it.
For that reason we use a stronger assumption, which is that the set $Q_j$ contains at least $2^j\cdot \gamma$ vertices. Since the number of possible vectors is at most $\gamma$, we can use the pigeonhole principle to claim that there exists a vector $P$ that discovers at least $2^j$ vertices in $Q_j$.

~\\\textbf{Completing the proof under the assumption.}

For every vertex $u\in Q_j$ we define a new $(k-1)$-partite graph $H_u$ as follows.
The vertex set of $H_u$ is $V\setminus V_{k}$, and the edge set of $H_u$ is $\set{e\setminus\set{u}\mid \EC(u)}$.
Note that for every $u\in Q_j$, has degree exactly $d_u$ which is
at least $W_j$, and therefore, $H_u$ has at least $W_j$ edges.
By the induction hypothesis, for each $H_u$ there exists a simple vector $P_u\in \Prodd_{k-1}(m'')$, where $m''=\frac{W_j}{\beta^{(k-1)^2}}$, such that $\Pr{\XC(H_u[P_u])=1}\geq (\eo)^{k-2}$.
We show that there exists at least one vector $Y\in \Prodd_{k-1}(m'')$, such that $\Pr{\XC(H_u[Y])=1}\geq (\eo)^{k}$, for at least $2^j$ vertices $u\in Q_j$.
We then show that the vector $Y$, which is a $(k-1)$-dimensional vector, can be extended to a $k$-dimensional vector $Z$ for which we have $\Pr{\XC(V[Z])=1}\geq (\eo)^{k}$, which completes the proof.

We first explain why $Y$ exists.
For every vector $P\in \Prodd_{k-1}(m'')$, define a set of vertices $Q_{j,P}\subseteq Q_j$ as follows.
A vertex $u\in Q_j$ is in $Q_{j,P}$ if $\Pr{\XC(H_u[P])=1}\geq (\eo)^{k-1}$.
Next, we use the pigeonhole principle, where vertices are pigeons, and vectors are holes, to claim that there exists at least one set $Q_{j,P}$ that contains at least $2^j$ vertices.
Note that the set of vectors $\Prodd_{k-1}(m'')$ has size at most $(\log n + 2)^{k-1}\leq (\log 4n)^{k-1}=\gamma$, while the set $Q_j$ has at least $2^j\gamma$ vertices.
Therefore, there exists a vector $Y\in \Prodd_{k-1}(m'')$ such that $\abs{Q_{j,Y}}\geq 2^j$.

To complete the proof, we extend the vector $Y$ that belongs to $\Prodd_{k-1}(m'')$ into a vector $Z\in \Prodd_k(m')$ as follows.
If $Y=(p_1,\ldots, p_{k-1})$, then $Z=(p_1,\ldots, p_{k-1},p_k)$, where $p_k=2^{-j}$.
To see that $Z\in \Prodd_k(m')$, note that
\begin{align*}
    w(Z)
    =w(Y)/2^j
    \leq \frac{1}{2^j\cdot m''}
= \frac{\beta^{(k-1)^2}}{2^j\cdot W_j}
    = \frac{\beta^{(k-1)^2}}{B}
    = \frac{1}{m'}\;.
\end{align*}
Next, we prove that $\Pr{\XC(V[Z])=1}\geq (\eo)^{k}$.
We define a random subset of vertices $R\subseteq V\setminus V_k$, obtained by sampling each vertex $v\in V_i$ with probability $p_i$, for every $i\in[k-1]$, where $p_i$ is the $i$-th coordinate of $Y$.
We claim that for every vertex $u\in Q_{j,Y}$, we have that $R$ contains an edge from $H_u$ with probability at least $(\eo)^{k-1}$, this is because the event that $R$ contains an edge from $H_u$ is exactly the event that $\XC(H_u[Y])=1$.
Next, we define $R_k$ to be a random subset of vertices obtained by sampling each vertex $v\in V_k$ with probability $p_k=2^{-j}$.
Finally, we define the following events. For every nonempty set $S\subseteq Q_{j,Y}$ we define the event $\Em{1}(S)$ to be the event that $\set{R_k=S}$, and the event $\Em{2}(S)$ to be the event that $R$ contains at least one edge from $H_u$, for at least one vertex $u\in S$.

Next, we show that
\begin{align}
    \Pr{\XC(V[Z])=1}\geq \sum_{S: \emptyset\subsetneq S\subseteq Q_{j,Y}} \Pr{\Em{1}(S)\cap \Em{2}(S)} \geq (\eo)^{k-1} \label{eq4:v Pk}\;.
\end{align}
For every fixed nonempty set $S\subseteq Q_{j,Y}$, the events $\Em{1}(S)$ and $\Em{2}(S)$ are independent, since $\Em{1}(S)$ addresses sampling vertices from $V_k$, while $\Em{2}(S)$ addresses sampling vertices from $V-V_k$, where the two samples are independent of each other.
Also note that $\Em{1}(S)\cap \Em{2}(S)$ is contained in the event that $\sset{\XC(V[Z])=1}$, and since the events $\set{\Em{1}(S)}_{S\subseteq Q_{j,Y}}$ are pairwise disjoint, so are the events $\set{\Em{1}(S)\cap \Em{2}(S)}_{S\subseteq Q_{j,Y}}$.
Therefore, the event that $\sset{\XC(V[Z])=1}$, contains the union of the following pairwise disjoint events $\sset{\bigcup_{S: \emptyset\subsetneq S\subseteq Q_{j,Y}}\Em{1}(S)\cap\Em{2}(S)}$.
We get
\begin{align*}
    \Pr{\XC(V[Z]=1)}
     & \geq \Pr{\bigcup_{S: \emptyset\subsetneq S\subseteq Q_{j,Y}}\Em{1}(S)\cap\Em{2}(S)} \\
     & = \sum_{S: \emptyset\subsetneq S\subseteq Q_{j,Y}} \Pr{\Em{1}(S)\cap \Em{2}(S)}     \\
     & = \sum_{S: \emptyset\subsetneq S\subseteq Q_{j,Y}} \Pr{\Em{1}(S)}\Pr{\Em{2}(S)}\;.
\end{align*}
To complete the proof, we need to show that
\begin{enumerate}
    \item $\sum_{S: \emptyset\subsetneq S\subseteq Q_{j,Y}}\Pr{\Em{1}(S)}\geq \eo$
    \item for every $S\subseteq Q_{j,Y}$, which is not empty, we have $\Pr{\Em{2}(S)}\geq (\eo)^{k-2}$.
\end{enumerate}
The first claim follows as
\begin{align*}
    \sum_{S: \emptyset\subsetneq S\subseteq Q_j}\Pr{\Em{1}(S)}=\Pr{Q_j[2^{-j}]\nemp}=1-(1-2^{-j})^{\abs{Q_j}}\geq \eo\;,
\end{align*}
where the inequalities hold for the following reasons. The first two equalities follows from definition, and the last inequality follows from the assumption that $\abs{Q_j}\geq 2^j$.

For the second claim, note that we previously explained why for every $u\in Q_{j,Y}$, we have that $R$ contains at least one edge from $H_u$ with probability at least $(\eo)^{k-2}$.
Therefore, since $S$ is a non-empty set that is contained in $Q_{j,Y}$,
it contains at least one vertex $u$ for which $R$ contains an edge from $H_u$ with probability at least $(\eo)^{k-2}$, which means that $\Pr{\Em{2}(S)}\geq (\eo)^{k-2}$.
This completes the proof of \Cref{eq4:v Pk}.

~\\\textbf{Proving that either $Q_0$ is not empty, or
    there exists $j\in\zrnone{K_0}$ such that $\abs{Q_j}\geq 2^j\cdot \gamma$.}
Assume towards a contradiction that this is not the case, that is, that $\abs{\EC}= m=m' \cdot \beta^{k^2}$ and that $Q_0$ is empty, and for every $j\in\zrnone{K_0}$, we have $\abs{Q_j}<2^j\cdot \gamma$.
Then,
\begin{align*}
    m & =   \sum_{u\in[V_h]}d_u                                                   \\
      & <   \sum_{u\in[V_h]}\sum_{1\leq j\leq K_0}\chi_{\set{u\in Q_j}}\cdot 2W_j \\
      & =   \sum_{1\leq j\leq K_0}\sum_{u\in[V_2]}\chi_{\set{u\in Q_j}}\cdot 2W_j \\
      & =   \sum_{1\leq j\leq K_0}\abs{Q_j}\cdot 2W_j
\end{align*}
Because we assumed towards a contradiction that for every $j\in\zrnone{K_0}$ we have $\abs{Q_j}<2^j \gamma$, we get
\begin{align*}
    \sum_{1\leq j\leq K_0}\abs{Q_j}\cdot 2W_j
    < \sum_{1\leq j\leq K_0}2^j \gamma \cdot \frac{2B}{2^j}
    = K_0\cdot \gamma \cdot 2B
\end{align*}
We have that
\begin{enumerate}
    \item $K_0\leq \log (4n) = \beta/k$
    \item $\gamma= (\log (4n))^{k-1} \leq (\beta/k)^{k-1}$
    \item $B= m'\cdot \beta^{(k-1)^2}$
\end{enumerate}
Therefore, we get that
\begin{align*}
    K_0\cdot \gamma \cdot 2B
    \leq (1/k)^{k}\cdot \beta^{k+(k-1)^2} \cdot m'
    = (1/k)^{k}\cdot \beta^{k^2-k+1} \cdot m'
<m'\cdot \beta^{k^2-1}<m\;,
\end{align*}
which is a contradiction.

~\\We have thus proved the assumption and this completes the proof of the induction step and therefore of \Cref{lemma:sampling tag}

\section{Applications to specific problems via hyperedge oracles}
\label{sec:reduction}

\renewcommand{\phi}{\varphi}

\subsection{$k$-Cliques}\label{ssec:clique}
In this subsection, we explain how to approximate the number $m$ of $k$-cliques in a graph $H=(V_H,E_H)$.

First, we will employ a standard reduction from $k$-clique to $k$-clique in a $k$-partite graph:
\begin{proposition}
    Given a graph $H=(V_H,E_H)$ with $|V|=n$, one can create in $O(n^2)$ time a $k$-partite graph $H'=(V',E_{H'})$ such that the number of cliques in $G$ is exactly $k!$ times the number of cliques in $H$.
\end{proposition}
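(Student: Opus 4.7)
The plan is to construct $H'$ by taking $k$ labeled disjoint copies of $V_H$ and joining copies in distinct parts exactly when their originals are adjacent in $H$. Concretely, I would set $V' = V'_1 \sqcup \cdots \sqcup V'_k$ where each $V'_i = \{v^{(i)} : v \in V_H\}$, and for every edge $\{u,v\} \in E_H$ (so $u \neq v$) and every ordered pair $i \neq j$ in $[k]$, include $\{u^{(i)}, v^{(j)}\}$ in $E_{H'}$. In particular, different copies of the \emph{same} vertex are never joined, which will be the only subtle point to watch.

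Next I would set up the counting bijection (up to a factor of $k!$). Any $k$-clique of $H'$ must, by $k$-partiteness, consist of exactly one vertex $u_i^{(i)}$ from each part $V'_i$. The clique condition says that for every $i \neq j$, the edge $\{u_i^{(i)}, u_j^{(j)}\}$ is in $E_{H'}$, which by construction forces $u_i \neq u_j$ and $\{u_i, u_j\} \in E_H$. Hence $\{u_1,\ldots,u_k\}$ is a $k$-clique of $H$. Conversely, given a $k$-clique $C = \{v_1,\ldots,v_k\}$ of $H$, every bijection $\pi : [k] \to C$ produces a tuple $(\pi(1)^{(1)}, \ldots, \pi(k)^{(k)})$ whose pairwise edges are present in $H'$ (the $v_i$'s are all distinct and pairwise adjacent in $H$), yielding a $k$-clique of $H'$; distinct bijections clearly give distinct cliques in $H'$, and every $k$-clique of $H'$ arises this way from a unique $C$. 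This produces exactly $k!$ cliques of $H'$ per clique of $H$, which is the claimed count.

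Finally, I would bound the construction time. The vertex set has $|V'| = kn = O(n)$ elements. To build $E_{H'}$, I iterate over each edge of $H$ and, for each, insert the $k(k-1) = O(1)$ corresponding bipartite edges (using $k$ a fixed constant, as in the rest of the paper). Reading $H$ and producing its edge list already takes $\Theta(n^2)$ in the adjacency-matrix input model, and the emitted number of edges of $H'$ is $O(k^2 |E_H|) = O(n^2)$, so the whole reduction runs in $O(n^2)$ time. There is no substantive obstacle here; the only care needed is the exclusion of self-loops between different copies of the same vertex, which is what guarantees that $k$-cliques in $H'$ correspond to $k$ pairwise distinct original vertices and keeps the count exact.
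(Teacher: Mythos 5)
Your construction is identical to the paper's: $k$ labeled copies of $V_H$, with an edge between $u^{(i)}$ and $v^{(j)}$ exactly when $i\neq j$ and $\{u,v\}\in E_H$, so that each $k$-clique of $H$ yields exactly $k!$ colorful $k$-cliques in $H'$ and none others arise. You just spell out the bijection and the $O(n^2)$ time bound that the paper leaves as a brief "folklore" remark; there is no substantive difference.
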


The proof of the proposition is folklore: We set $V'$ to be $V_1\cup V_2\cup\ldots\cup V_k$ where each $V_i$ is an independent set of size $n$ and every vertex $v\in V_H$ has a copy $v_i$ in $V_i$ for each $i\in [k]$. For every edge $(u,v)\in E_H$ we add an edge $(u_i,v_j)$ to $E_{H'}$ for every $i\neq j$, $i,j\in [k]$. It is clear that every $k$-clique in $H'$ corresponds to a $k$-clique in $H$ and that every $k$-clique in $H$ has $k!$ copies in $H'$, one for every permutation of its vertices.

From now on we assume that our given graph is $k$-partite.
The main result of this section is the following theorem.
\begin{theorem}[$k$-clique]\label{theorem:clique}
    Let $G$ be a given graph with $n$ vertices and let $k\geq 3$ be a fixed integer.
    There is a randomized algorithm that outputs an approximation $\htt$ for the number $m$ of $k$-cliques set in $G$ such that
    $\Pr{(1-\eps)m \leq \htt\leq (1+\eps)m}\geq 1-1/n^2$, for any constant $\eps>0$. The running time is bounded by $\TO{f_{\mathrm{clique}}(n^k/m)}$.
\end{theorem}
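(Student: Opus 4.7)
The plan is to invoke \Cref{thm4:main} on the implicit $k$-partite hypergraph whose hyperedges are exactly the $k$-cliques of $G$, after first applying the folklore $O(n^2)$-time reduction to $k$-partite $k$-clique (which preserves the clique count up to the harmless multiplicative factor $k!$). \Cref{thm4:main} then produces a $(1\pm\eps)$-approximation with probability $\geq 1-1/n^4$ using $T=\polylog(n)/\eps^{2k}$ oracle calls, each on a subhypergraph $U=U_1\sqcup\cdots\sqcup U_k$ satisfying $\mu(U)=\prod_i|U_i|\leq R=\tilde{O}(n^k/m)$ up to $\poly(k,1/\eps)$-factors. Amplifying from $1-1/n^4$ to $1-1/n^2$ is trivial.

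The substance is to implement a measure-bounded \HO: given $U$ with $\mu(U)\leq R$, decide whether $G[U]$ contains a $k$-clique. I will adapt the Ne\v{s}et\v{r}il--Poljak reduction to unbalanced $k$-partite inputs. Partition the index set $[k]$ into three blocks $I_1,I_2,I_3$ chosen as a function of the sizes $(|U_1|,\ldots,|U_k|)$, and form a tripartite auxiliary graph with parts $W_1,W_2,W_3$, where
\[
    W_j \;=\; \{\,C\subseteq G[U]\;:\;C\text{ is a clique with exactly one vertex in }U_i\text{ for each }i\in I_j\,\}.
\]
Put an edge between $C\in W_j$ and $C'\in W_{j'}$ (with $j\neq j'$) iff $C\cup C'$ is a clique on $|I_j|+|I_{j'}|$ vertices. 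Then triangles in the auxiliary tripartite graph correspond bijectively to $k$-cliques of $G[U]$, so the oracle reduces to triangle detection on this auxiliary graph, which is solved in time $\MM{|W_1|,|W_2|,|W_3|}$ by the standard triangle-via-matrix-multiplication algorithm. Constructing $W_j$ itself is routine, costing $\tilde{O}(|W_j|)$ after an initial $O(n^2)$ preprocessing of $G$.

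Since $|W_j|\leq \prod_{i\in I_j}|U_i|$, we have $|W_1|\cdot|W_2|\cdot|W_3|\leq \mu(U)\leq R$. Writing $n^r=n^k/m$ (so $R=\tilde{O}(n^r)$ up to $\poly(\log n,1/\eps)$) and $|W_j|\leq n^{a_j}$ with $a_1+a_2+a_3\leq r$, the oracle runs in $n^{\omega(a_1,a_2,a_3)}$ time, and the total running time is $\tilde{O}\brak{T\cdot n^{\omega(a_1,a_2,a_3)}+n^2}$. Taking the worst case over feasible $(|U_i|)_i$ with $\mu(U)\leq R$ and optimizing the integer partition $[k]=I_1\sqcup I_2\sqcup I_3$, one obtains $f_{\mathrm{clique}}(n^k/m)$; in the balanced simplified version of \Cref{thms:clique}, this optimum is achieved by the two natural near-equal splits of $k$ into three integer parts, yielding the bound $\tfrac13\max\set{\omega(r-1,r-1,r+2),\omega(r-2,r+1,r+1)}$.

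The main obstacle is identifying, as a function of $(|U_i|)_i$, which integer partition of $k$ into three blocks (and which assignment of the largest $|U_i|$'s among the blocks) minimizes $\omega(a_1,a_2,a_3)$ subject to the constraint $\sum_j a_j\leq r$; since $k$ is a constant, this is an $O(1)$-size optimization, but matching the precise exponent claimed in \Cref{thms:clique} requires a careful case analysis using monotonicity of $\omega$ in each coordinate. Once the right partition is identified for each sampling vector arising in the execution of the recursive algorithm, the remaining analysis---verifying that $W_j$ can be enumerated within budget and that the auxiliary triangle detection respects the matrix-multiplication cost model---is standard.
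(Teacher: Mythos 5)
Your proposal matches the paper's proof: the same folklore reduction to $k$-partite $k$-clique, the same measure-bounded \HO implementation via a data-dependent three-way partition $[k]=I_1\sqcup I_2\sqcup I_3$ of the color classes (a twist on Ne\v{s}et\v{r}il--Poljak), the same triangle detection via rectangular matrix multiplication on the auxiliary tripartite graph, and the same invocation of \Cref{thm4:main}. The only nitpick is the remark that the auxiliary vertex set $W_j$ can be enumerated in $\tilde{O}(|W_j|)$ time --- naively one scans the full Cartesian product $\prod_{i\in I_j}|U_i|$, and the matrices $M_{j,j'}$ cost $|W_j|\cdot|W_{j'}|$ to fill in; the paper simply absorbs all of this into the $\MM{|W_1|,|W_2|,|W_3|}$ bound, which is what you should do too.
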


The function $f_{\mathrm{clique}}(M)$ is defined in \Cref{f:clique} below.
We also get the following explicit upper bound on $f_{\mathrm{clique}}(M)$.
\begin{corollary}\label{cor:clique}
    Let $G$ be a graph with $n$ vertices and $n^t$ $k$-cliques.
    Let $r=k-t$.
    If $r\geq 2$, then,
\[  f_{\mathrm{clique}}(n^{r}) \leq n^2+ \max\set{
        n^{\frac{\omega(r-1,r-1,r+2)}{3}},n^{\frac{\omega(r-2,r+1,r+1)}{3}}}
        \leq n^2+n^{\frac{\omega(r-1,r,r+2)}{3}}\;. \]

\end{corollary}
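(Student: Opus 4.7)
The plan is to derive the corollary from Theorem \ref{theorem:clique} by making the definition of $f_{\mathrm{clique}}$ (to appear in \Cref{f:clique}) explicit, namely the per-query cost of a measure-bounded hyperedge oracle for $k$-clique. On input $U=U_1\sqcup\cdots\sqcup U_k$ with $\mu(U)\leq n^r$, the oracle I would construct partitions $[k]=I_1\sqcup I_2\sqcup I_3$, enumerates the set $W_j$ of colorful $|I_j|$-cliques in $G[\bigcup_{i\in I_j}U_i]$ (so $|W_j|\leq\prod_{i\in I_j}|U_i|$), and detects a triangle in the resulting tripartite auxiliary graph via rectangular matrix multiplication. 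Writing $|U_i|=n^{b_i}$ with $b_i\in[0,1]$ and $\sum_i b_i\leq r$ (up to polylogarithmic slack from the sampling lemma, since only sets $U$ drawn according to vectors in $\Prod(\Lambda)$ are queried by Theorem \ref{thm4:main}), the oracle runs in time $\tilde O(n^{\omega(s_1,s_2,s_3)})$, where $s_j=\sum_{i\in I_j}b_i$.

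The technical heart of the proof is a load-balancing claim: one may always choose $I_1,I_2,I_3$ so that the sorted triple $(s_1\leq s_2\leq s_3)$ is dominated coordinate-wise by either $\bigl((r-2)/3,(r+1)/3,(r+1)/3\bigr)$ or $\bigl((r-1)/3,(r-1)/3,(r+2)/3\bigr)$. I would prove this via a greedy longest-processing-time assignment---sort the $b_i$'s in decreasing order, place each into the currently least-loaded bin---and observe that the resulting imbalance is bounded by the largest $b_i\leq 1$. A short case analysis on how many $b_i$'s are equal to (or near) $1$ discriminates between the two shapes. Combined with the monotonicity and homogeneity of $\omega$, this gives an oracle cost bounded by $\max\{n^{\omega(r-2,r+1,r+1)/3},n^{\omega(r-1,r-1,r+2)/3}\}$; adding the $\tilde O(1)$ queries dictated by Theorem \ref{thm4:main} and the $O(n^2)$ cost of the initial reduction from $k$-clique on $H$ to $k$-clique on a $k$-partite graph yields the first inequality in the statement.

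For the second inequality, $\omega(r-1,r-1,r+2)\leq \omega(r-1,r,r+2)$ is immediate from coordinate-wise monotonicity of $\omega$. The bound $\omega(r-2,r+1,r+1)\leq \omega(r-1,r,r+2)$ is subtler because the two sorted triples are not coordinate-wise comparable; I would derive it from the full $S_3$ symmetry of $\omega$ (a consequence of the cyclic symmetry of the matrix-multiplication tensor) combined with convexity of $\omega$ on the positive orthant, which implies that the strictly larger sum $3r+1$ of $(r-1,r,r+2)$ compensates for its smaller spread relative to $(r-2,r+1,r+1)$. The main obstacle is the load-balancing step: one must carefully handle the boundary regime where several $b_i$'s equal $1$, since the dyadic structure of the simple sampling vectors then forces the imbalance pattern $(-1,-1,+2)$ or $(-2,+1,+1)$ at the coarsest level of the partition, and one must verify that no other pattern can arise that would violate the claimed bound.
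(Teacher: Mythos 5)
Your load-balancing step (LPT greedy giving pairwise gap at most $1$, i.e. $s_3 \leq s_1 + 1$) is fine and essentially matches the greedy argument the paper uses to establish the constraint $x_1\leq x_2\leq x_3\leq n\cdot x_1$. However, the key reduction that follows is wrong as stated: the sorted triple is \emph{not} coordinate-wise dominated by either target. Take the perfectly balanced outcome $s_1=s_2=s_3=r/3$ (which LPT can certainly produce, e.g. when the $b_i$'s split evenly); then $r/3 > (r-2)/3$ in the first coordinate and $r/3 > (r-1)/3$ in the second, so neither $\bigl(\tfrac{r-2}{3},\tfrac{r+1}{3},\tfrac{r+1}{3}\bigr)$ nor $\bigl(\tfrac{r-1}{3},\tfrac{r-1}{3},\tfrac{r+2}{3}\bigr)$ dominates it. Monotonicity of $\omega$ therefore cannot close the argument. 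What is actually needed — and what the paper does — is that, on the simplex $\{s_1+s_2+s_3=r,\ s_1\leq s_2\leq s_3\leq s_1+1\}$, the extreme points are precisely $\tfrac{1}{3}(r,r,r)$, $\tfrac{1}{3}(r-2,r+1,r+1)$, $\tfrac{1}{3}(r-1,r-1,r+2)$, and since $\omega$ is convex its maximum over this polytope is attained at one of them; the balanced vertex is then discarded using a ``less balanced is slower'' lemma (Claim~\ref{thm:MM balance is slower} in the paper). That convexity step is the missing idea in your proposal. Your appeal to the dyadic structure of the sampling vectors is also a red herring: $f_{\mathrm{clique}}(M)$ is defined as a worst-case maximum over \emph{all} $U$ with $\mu(U)\leq M$, not only over $U$'s arising from simple sampling vectors.

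Your argument for $\omega(r-2,r+1,r+1)\leq\omega(r-1,r,r+2)$ is also only heuristic. ``Larger sum compensates for smaller spread'' is not a theorem about $\omega$, and symmetry plus convexity do not by themselves yield a comparison between triples lying on different hyperplanes. The clean chain (which the paper implements) is: unbalance the pair $(r+1,r+1)$ into $(r,r+2)$ using the same ``balance is slower'' claim with $\eps=1$ to get $\omega(r-2,r+1,r+1)\leq\omega(r-2,r,r+2)$, then apply coordinate-wise monotonicity to conclude $\omega(r-2,r,r+2)\leq\omega(r-1,r,r+2)$. Your first comparison, $\omega(r-1,r-1,r+2)\leq\omega(r-1,r,r+2)$, is correct and immediate by monotonicity.
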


Combining \Cref{cor:clique} with \Cref{theorem:clique} we obtain:

\cliqueSimp* 

We also get a better bound for $k=4$.
\tKfour* 

To prove \Cref{theorem:clique} we consider the $k$-clique approximate counting in the $k$-partite graph $H'$ as a hyperedge approximate counting problem on the implicit $k$-partite hypergraph $G$ defined on the same vertex set as $H'$ and such that a $k$-tuple of vertices is a hyperedge in $G$ if and only if they are a $k$-clique in $H'$. We provide a \HO for $G$ and
analyze its running time.
Let $f_{\mathrm{clique}}(M)$ denote the running time of our \HO on a set $U\subseteq V(G)$ with $\gs(U)\leq M$. We provide an upper bound on $f_{\mathrm{clique}}(M)$ and
then we use \Cref{thm4:main} together with our detection oracle to get a $\apm$ approximation for $m$, the number of $k$-cliques in $H'$ (which in turn gives us a $\apm$ approximation for $k!m$ as well, the number of $k$-cliques in the original graph $H$).

We also show that there's a one to one correspondence between detection oracles for $k$-clique and $\apm$ approximation algorithms:
We show that if there exists an algorithm that computes a $\apm$ approximation for the number of $k$-cliques $m$ in a $k$-partite $G$ in time $T$, assuming that $m\geq \tau$ for some value $\tau$, then there exists a detection oracle for $k$-cliques on sets $U$ with $\gs(U)\leq \TO{\gs(V)/\tau}$ operating in time $\TO{T}$.

\paragraph{Oracle Implementation.}
Let $H'=(V',E_{H'})$ be a $k$-partite graph with vertex set $V'=V_1\sqcup V_2 \sqcup \ldots \sqcup V_k$, and let $E'$ denote the set of $k$-cliques in $H'$, where every $k$-clique $e\in E'$ is a $k$-tuple of vertices, $e=(v_1,v_2,\ldots,v_k)$, such that $v_i\in V_i$ for every $i\in[k]$.
This defines an implicit $k$-partite hypergraph $G=(V',E')$, as discussed before.

Given a set of vertices $U\subseteq V'$, we explain how to implement the $\BB$ on the hypergraph $G[U]$, i.e., determine if $G[U]$ contains a hyperedge, or equivalently if $H'[U]$ contains a $k$-clique.
To do so, we modify a well-known reduction from $k$-clique detection to triangle detection by Ne\v{s}etril and Poljak \cite{nesetril}. The original reduction from $k$-clique detection starting from an $n$-node graph $H$ forms a new auxiliary graph whose vertices are those $k/3$-tuples of vertices of $H$ that are also $k/3$ cliques (if $k/3$ is not an integer, there are three types of nodes: $\lceil k/3\rceil$, $\floor k/3\rfloor$ and $k-\lceil k/3\rceil - \lfloor k/3\rfloor$-cliques). Then two $k/3$-cliques (i.e. nodes in the auxiliary graph) are connected by an edge if and only if together they form a $2k/3$ clique.

Our reduction is similar but with a twist. We start with a $k$-partite graph on parts $U_1,\ldots,U_k$ and instead of creating nodes corresponding to cliques of size $k/3$, we consider different sized cliques which have nodes from a pre-specified subset of the $U_i$s.

More formally, we create a new graph $F$ as follows.
The vertex set $U$ of the oracle input is partitioned into $k$ sets $U_1,U_2,\ldots,U_k$, where $U_i= V_i\cap U$ for every $i\in[k]$.
We then partition the index set $[k]$ into three sets, $I_1,I_2,I_3$, and define the vertex set of $F$ as follows. Define
\begin{align*}
    W_1=\bigtimes_{j\in I_1} U_j \;, &  & W_2=\bigtimes_{j\in I_2} U_j \;, &  & W_3=\bigtimes_{j\in I_3} U_j \;.
\end{align*}

Note that $W_i$ consists of $|I_i|$-tuples of vertices of $U$, one from each $U_j$ with $j\in I_i$.

The vertex set of $F$ is composed of three sets $A_1,A_2,A_3$, where $A_i\subseteq W_i$ for $i\in [3]$ defined as follows.
For every $i\in [3]$ and every element $S=(s_1,s_2,\ldots,s_{|I_i|})\in W_i$
, we add $S$ to the set $A_i$ if $H'[S]$ is a clique. So $A_i$ consists of those $|I_i|$-tuples of vertices that are $|I_i|$-cliques and that have exactly one vertex from $U_j$ with $j\in I_i$.

For any two vertices $S\in A_i,P\in A_j$ for $i\neq j$, let $V_{S,P}\subseteq V'$ denote a subset of vertices from $H'$, which are in the vertex tuples $S$ or $P$.
We add an edge $(S,P)$ to $F$ if $H'[V_{S,P}]$ is a clique.
We emphasize that if both $S$ and $P$ are in the same set $A_i$, then $H'[V_{S,P}]$ is not a clique,
and therefore the edge $(S,P)$ is not in $F$, making $F$ a tripartite graph. Also, since $A_i$ and $A_j$ are defined on disjoint subsets of $U$ when $i\neq j$, $H'[V_{S,P}]$ is a clique if and only if it is a clique of size $|I_i|+|I_j|$. Thus, any triangle in $F$ corresponds to a clique of size $|I_1|+|I_2|+|I_3|=k$.
As $H'[U]$ is $k$-partite, every $k$-clique also corresponds to a triangle in $F$, and thus the number of triangles in $F$ equals the number of $k$-cliques in $H'[U]$.

In what follows, we explain how to verify whether $F$ contains a triangle or not. We do so using rectangular matrix multiplication on $F$ as follows.
For $i,j\in[3]$, where $i\neq j$, we define a matrix $M_{i,j}$ of size $\abs{A_i}\times \abs{A_j}$, where the entry $M_{i,j}[x,y]$ is equal to $1$ if the $x$-th vertex in $A_i$ is connected to the $y$-th vertex in $A_j$ in $F$, and otherwise $M_{i,j}[x,y]=0$.
We compute the product $D=M_{1,2}M_{2,3}M_{3,1}$ and check whether the trace of $D$ is non-zero.
The output of the \HO is $1$ if the trace of $D$ is non-zero, and $0$ otherwise.

\paragraph*{Running Time.}
In the above description, we are free to pick the partition of the index set $[k]$ nto $I_1,I_2,I_3$ and we will do so as to minimize the running time.
There are at most $3^k$ such partitions, so finding the best one takes $O(3^k)$ time.
The running time for a specific partition $I_1,I_2,I_3$ is at most
$\MM{\abs{A_1},\abs{A_2},\abs{A_3}}$, plus the time to construct the matrices $M_{1,2},M_{2,3},M_{3,1}$, where both are upper bounded by $\MM{\abs{W_1}, \abs{W_2}, \abs{W_3}}$.

For any positive value $M$, let $f_{\mathrm{clique}}(M)$ denote the worst-case running time of the algorithm on a set $U$ with $\gs(U)\leq M$.
We need the following notation to provide an upper bound on $f_{\mathrm{clique}}(M)$.
Let $P_3([k])$ denote the set of all partitions of $[k]$ into three sets.
For $P\in P_3([k])$, where $P=(I_1,I_2,I_3)$, and a set $U=(U_1,U_2,\ldots, U_k)$, we define a triplet of numbers $x_1(P,U),x_2(P,U),x_3(P,U)$ as follows.
\begin{align*}
    x_j(P,U)= & \prod_{\ell\in I_j}\abs{U_\ell}\;, \quad \text{for } \ell\in[3]\;.
\end{align*}
We thus have the following upper bound on $f_{\mathrm{clique}}(M)$, where recall that $\gs(U)=\prod_{i=1}^3x_i(P,U)$:
\begin{align}
    f_{\mathrm{clique}}(M)=\max_{U:\gs(U)\leq M} \min_{P\in P_3([k])} \set{\MM{x_1(P,U),x_2(P,U),x_3(P,U)}}\;. \label{f:clique}
\end{align}
\begin{remark}
    For any non-negative $z$, the best we could hope for using this technique is
    $f_{\mathrm{clique}}(n^z) = n^{\tfrac{\omega\cdot z}{3}}$.
\end{remark}

\paragraph{Proof of \Cref{cor:clique}.}

To prove \Cref{cor:clique}, we need to analyze $f_{\mathrm{clique}}(n^r)$.
We need the following two properties.
First, that there exists a partition $P=(I_1,I_2,I_3)$ of $[k]$ into three parts such that
\begin{align}
    x_1(P,U)\leq x_2(P,U)\leq x_3(P,U)\leq n\cdot x_1(P,U)\;. \label{eq:3-batch}
\end{align}
Second, that for any such partition $P$ that satisfies \Cref{eq:3-batch}, we have that
\begin{align}
    \MM{x_1(P,U),x_2(P,U),x_3(P,U)}\leq \max\{n^{\omega(r-1,r-1,r+2)/3},n^{\omega(r-2,r+1,r+1)/3}\}\;, \label{eq:prop2}
\end{align}
where $n^{r}=\prod_{i\in[k]}\abs{U_i}$, and we use $n$ to denote the number of vertices in $V$.

Finally, we want to show that
\begin{align}\omega(r-1,r-1,r+2),\omega(r-2,r+1,r+1)\leq \omega(r-2,r,r+2)\; \label{eq:prop3}\end{align}

To prove the property in \Cref{eq:3-batch}, we analyze the greedy algorithm that partitions $[k]$ into three parts that are as ``balanced'' as possible.
We prove that there exists a partition $P$, that satisfies \Cref{eq:3-batch}.
Proving such a partition exists, is equivalent to showing that the set of numbers $R=\set{r_i}_{i\in [k]}$ where $r_i=\log_n(n_i)$,
with $\max(R)\leq 1$, can be partitioned into three sets $R_1,R_2,R_3$ such that
\begin{align*}
    \sum_{r\in R_1}r\leq \sum_{r\in R_2}r\leq \sum_{r\in R_3}r\leq 1+\sum_{r\in R_1}r\;.
\end{align*}
We prove this using a greedy strategy:
Start with a partition in which $R_1$ and $R_2$ are empty, and $R_3=R$.
While there are two sets $R_i,R_j$ such that $\sum_{r\in R_i}r + 1<\sum_{r\in R_j}r$, move the largest element from $R_j$ to $R_i$.
Assuming the process stops, the three obtained sets satisfy the required conditions (up to renaming the sets).
We prove the process always stops by induction on the number of elements.
The base case, when there is a single element is trivial.
For the step, assume the process stops for $k-1$ elements, and consider $k$ elements. Take one element out, denoted by $r_1$, and apply the induction hypothesis. Assume that when the process stops, there are three sets $R_1,R_2$ and $R_3$ which satisfy the required conditions.
We add the element $r_1$ to the set that has the smallest sum, i.e. $R_1$.
We now use case analysis. Let $w(R)$ denote $\sum_{r\in R}r$.
\begin{description}
    \item[Case 1:] $w(R_1)+r_1\leq w(R_2)$. Then the claim follows.
    \item[Case 2:] $w(R_1)+r_1\leq w(R_3)$. Then the claim follows by renaming: $R_1'\gets R_2$ and $R_2'\gets R_1 + \set{r_1}$.
    \item[Case 3:] $w(R_1)+r_1> w(R_3)$. Then the claim follows by renaming:$R_1'\gets R_2$, $R_2'\gets R_3$ and $R_3\gets R_1 + \set{r_1}$. Note that 
        $w(R_1') +1\geq w(R_1) +r_1 = w(R_3')$ as required.
\end{description}

Next we prove property \ref{eq:prop2} .
We have to show that for every $P\in P_3([k])$,
that satisfies \Cref{eq:3-batch}, we have that
\begin{align*}
    \MM{x_1(P,U),x_2(P,U),x_3(P,U)}\leq \max\{n^{\omega(r-1,r-1,r+2)/3},n^{\omega(r-2,r+1,r+1)/3}\}\;.
\end{align*}
For simplicity, we use $n^a,n^b,n^c$ to denote $x_1(P,U),x_2(P,U),x_3(P,U)$, respectively, where without loss of generality we assume that $a\leq b\leq c\leq a+1$.

Recall that the number of $k$-cliques is $n^t$ and $r=k-t$. As $\gs(U)=n^{a+b+c}\leq n^r$, we will assume that $a+b+c=r$ as that will maximize $\omega(a,b,c)$ and thus maximize $\MM{x_1(P,U),x_2(P,U),x_3(P,U)}=n^{\omega(a,b,c)}$.

Using the constraint $a+b+c=r$, we get that the set of feasible solutions for the constraints is a 2D plane in 3D space.
Each point on this plane can be written as a convex combination of three points obtained by setting $a=b=c$ ,$a=b\wedge c=a+1$, and $b=c=a+1$, which are
\begin{align*}
    \tfrac{1}{3}\cdot(r,r,r)\;,       &  &
    \tfrac{1}{3}\cdot(r-2,r+1,r+1)\;, &  &
    \tfrac{1}{3}\cdot(r-1,r-1,r+2)\;.
\end{align*}

In what follows we bound omega on a point $(x,y,z)$ that is in the set of feasible solutions.
We claim that the maximum value of $\omega(x,y,z)$ is obtained at either the point $(\tfrac{r-2}{3},\tfrac{r+1}{3},\tfrac{r+1}{3})$ or $(\tfrac{r-1}{3},\tfrac{r-1}{3},\tfrac{r+2}{3})$.
This follows as the function $\omega$ is convex in each of the variables $x,y,z$, and therefore the maximum value is obtained at one of the corners of the feasible set. We can ignore the point $(r/3,r/3,r/3)$ as the function omega is smaller as the arguments are closer to each other, as shown in \Cref{thm:MM balance is slower} below.
We get that the maximum value of $\omega(x,y,z)$ is obtained at either $(\tfrac{r-2}{3},\tfrac{r+1}{3},\tfrac{r+1}{3})$ or $(\tfrac{r-1}{3},\tfrac{r-1}{3},\tfrac{r+2}{3})$.

Finally, we would like to upper bound $\omega(r-1,r-1,r+2)$ and $\omega(r-2,r+1,r+1)$ by $\omega(r-2,r,r+2)$.

We use the following simple claim:
\begin{claim}\label{thm:MM balance is slower}
For every $y,x,z\geq 0$, $x>y$ and every $\eps \in [0,(x-y)]$, $\omega(x-\eps,y+\eps,z)\leq \omega(x,y,z)$.
\end{claim}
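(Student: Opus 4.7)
The plan is to exploit two standard properties of the rectangular matrix multiplication exponent: (i) \emph{symmetry}, namely $\omega(a,b,c)=\omega(\pi(a),\pi(b),\pi(c))$ for any permutation $\pi$, which follows because transposition and cyclic reordering of the three factor matrices do not change the multiplication complexity; and (ii) \emph{convexity} of the function $(a,b,c)\mapsto \omega(a,b,c)$, which is a classical fact (it follows, for instance, by a standard tensor-product argument: combining an algorithm for $\omega(a_1,b_1,c_1)$ with one for $\omega(a_2,b_2,c_2)$ via tensoring gives an algorithm for $\omega(a_1+a_2,b_1+b_2,c_1+c_2)$ with exponent at most the sum, which yields convexity along any line after normalization).

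Define the single-variable function $f(\eps):=\omega(x-\eps,\,y+\eps,\,z)$ on the interval $[0,x-y]$. By the convexity of $\omega$, restricting to the line segment $\{(x-\eps,y+\eps,z):\eps\in[0,x-y]\}$ in its domain yields that $f$ is a convex function of $\eps$. By the symmetry of $\omega$ in its first two arguments, we have
\[
f(\eps)=\omega(x-\eps,y+\eps,z)=\omega(y+\eps,x-\eps,z)=f(x-y-\eps),
\]
so $f$ is symmetric about the midpoint $\eps^{*}=(x-y)/2$.

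A convex function on a closed interval that is symmetric about the midpoint attains its maximum at the endpoints, and by the symmetry $f(0)=f(x-y)$, so $f(\eps)\le f(0)=\omega(x,y,z)$ for every $\eps\in[0,x-y]$, which is exactly the desired inequality. The only step that is not completely routine is invoking the convexity of $\omega$; this is standard but should be cited (e.g., to \cite{moreasym} or the original work establishing the tensor-product based convexity of the rectangular exponent), and it is what lets us treat the one-parameter slice as a convex function in the first place.
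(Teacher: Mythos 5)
Your proposal is correct and rests on the same two facts the paper uses — convexity of $\omega(\cdot)$ and its symmetry in the first two arguments — so it is essentially the same argument. The paper just presents it more compactly by writing $(x-\eps,y+\eps,z)$ directly as a convex combination of $(x,y,z)$ and $(y,x,z)$ with weight $\eps/(x-y)$ and applying Jensen plus symmetry in one line, whereas you package the same computation as "a convex function symmetric about the midpoint of an interval is maximized at the endpoints."
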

\begin{proof}
    Notice that
    \[x-\eps = \left(1-\frac{\eps}{x-y}\right)\cdot x + \frac{\eps}{x-y}\cdot y \textrm{ and } y+\eps = \left(1-\frac{\eps}{x-y}\right)\cdot y + \frac{\eps}{x-y}\cdot x.\]
    By the convexity of $\omega(\cdot)$ and because $\omega(x,y,z)=\omega(y,x,z)$ we get
    \[\omega(x-\eps,y+\eps,z)\leq \left(1-\frac{\eps}{x-y}\right) \cdot \omega(x,y,z)+ \frac{\eps}{x-y}\cdot \omega(y,x,z)=\omega(x,y,z).\]
\end{proof}

Applying \Cref{thm:MM balance is slower} with $x=r+2, y=r$ and $\eps=1$ we get that $\omega(r-2,r+1,r+1)\leq \omega(r-2,r,r+2)$.

Applying \Cref{thm:MM balance is slower} with $x=r, y=r-2$ and $\eps=1$ we get that $\omega(r-1,r-1,r+2)\leq \omega(r-2,r,r+2)$.

This completes the proof of \Cref{cor:clique}.

~\\We are left with proving \Cref{thms:k4}.
\begin{proof}[Proof of \Cref{thms:k4}]
    Fix $k=4$ and $r=k-t$ where $n^t$ is the number of $k$-cliques we want to approximate.
    We aim for an better upper bound on $f_{\mathrm{clique}}(n^r)$ for $r> 2$ since, as explained in the introduction, the interesting regime is when $t< k-2=2$.

    When $r\leq 2$ we just upper bound the running time by $\tilde{O_\eps}(n^2)$.

    We identify every set $U$ with $\gs(U)\leq n^r$ with a four dimensional point $(a_1,a_2,a_3,a_4)$ where $a_i$ is such that $n^{a_i}=\abs{U_i}$ for every $i\in[4]$.
    We also assume without the loss of generality that $a_1\leq a_2\leq a_3\leq a_4$ (up to renaming the color classes), where we know that $a_1+a_2+a_3+a_4=r$.
    We analyze the running time of the oracle on a set $U$, under the partition $P=(I_1,I_2,I_3)$, where $I_1=\{1,2\}$, $I_2=\{3\}$ and $I_3=\{4\}$.
    In other words, the two smallest set are ``packed'' together, and the two largest sets are separated.
    Let $S$ denote our universe where
    \begin{align*}
        S=  & \sset{(a_1,a_2,a_3,a_4)\in [0,1]^4\mid \sum_{i=1}^4 a_i=r\;\wedge a_1\leq a_2\leq a_3\leq a_4}\;. \\
        S'= & \set{(a_1+a_2,a_3,a_4)\mid (a_1,a_2,a_3,a_4)\in S}\;.
    \end{align*}
Let $A=a_1+a_2$.
    We need to upper bound the function $\omega(A,a_3,a_4)$ for $(A,a_3,a_4)$ under the following constraints:
    \begin{enumerate}
        \item $A+a_3+a_4=r$.

\item $A\geq 0$.
        \item $A\leq 2a_3$
        \item $a_3\leq a_4$.
        \item $a_4\leq 1$.
    \end{enumerate}

    The linear inequalities on $A,a_3,a_4$ define a convex set and since $\omega$ is convex function over the points  $(A,a_3,a_4)$, its maxima are attained at corners.

    The corners are:
    \begin{itemize}
        \item (constraints 5 and 4 tight): $a_4=a_3=1, A=r-2$, giving $\omega(r-2,1,1)$.
        \item (constraints 5 and 3 tight): $a_4=1,a_3=(r-1)/3, A=2(r-1)/3$, giving $\omega(2(r-1)/3,(r-1)/3,1)$.
        \item (constraints 5 and 2 tight): $a_4=1, A=0, a_3=r-1$ -- not a corner since we assumed that $r>2$ so $r-1>1$ and $a_3\leq a_4\leq 1$ is violated.
        \item (constraints 4 and 3 tight): $a_3=a_4=A/2=r/4$, giving $\omega(r/2,r/4,r/4)$.
        \item (constraints 4 and 2 tight): $a_4=a_3=r/2$ -- not feasible.
        \item (constraints 3 and 2 tight): $a_4=r$ -- not feasible.

    \end{itemize}

\end{proof}

\paragraph{From Fast Approximate Counting to a Fast Oracle.}
For any positive value $\tau$ the following is true.
We assume that there exists an algorithm $\textbf{A}$ that takes as input any graph $F$ and a threshold $\tau$, and outputs a $\apm$ approximation for $m_F$ the number of $k$-cliques in $F$ in time $T$, assuming that $m_F\geq \tau$.
We show that the existence of $\textbf{A}$ implies the existence of a detection oracle for $k$-cliques which takes time $\TO{T}$ on sets $U$ with $\gs(U)\leq \TO{\gs(V)/\tau}$.
In other words, if our detection-oracle implementation for $k$-cliques is optimal, then so is our approximate $k$-clique counting algorithm, up to polylogarithmic factors.

\begin{theorem}\label{thm:dup clique}
    Assume there exists an algorithm $\textbf{A}$ that takes as input any graph $F$ and a threshold $\tau$. 
    If $m_F\geq \tau$ the algorithm outputs a $\apm$ approximation for $m_F$ in time $T$. If $m_F<\tau$ the algorithm either outputs a $\apm$ approximation for $m_F$ or output $\bot$. 
    Then, there exists a detection oracle for $k$-cliques which takes time $\TO{T}$ on sets $U$ with $\gs(U)\leq \TO{\gs(V)/\tau}$.
\end{theorem}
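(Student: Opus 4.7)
The plan is to implement the required detection oracle by a \emph{vertex-duplication} reduction: given a query set $U=U_1\sqcup\cdots\sqcup U_k\subseteq V$ with $\gs(U)\leq \TO{\gs(V)/\tau}$, we build an auxiliary $k$-partite graph $F$ in which every $k$-clique of the induced graph on $U$ is amplified into many $k$-cliques, so that the algorithm $\textbf{A}$ is forced out of the ``$\bot$-regime'' precisely when $U$ does contain a clique.

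Concretely, for each $i\in[k]$ set $c_i:=\lfloor |V_i|/|U_i|\rfloor$ and, in $F$, replace every vertex $v\in U_i$ by $c_i$ identical ``twin'' copies $v^{(1)},\ldots,v^{(c_i)}$. For $i\neq j$, join two copies $v^{(a)}\in U_i$ and $u^{(b)}\in U_j$ in $F$ iff $(v,u)$ is an edge of the underlying graph on $U$. Then the $i$-th part of $F$ has size $|U_i|\cdot c_i\leq |V_i|$, so $F$ fits within the original vertex budget, and a direct counting argument gives $m_F = m_U \cdot \prod_{i=1}^k c_i$, where $m_U$ is the number of $k$-cliques in $G[U]$. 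The measure hypothesis implies $\prod_i c_i \geq \tau/\mathrm{polylog}(n)$; call this product $\tau'$.

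Now invoke $\textbf{A}(F,\tau')$ and declare the oracle answer to be ``yes'' iff $\textbf{A}$ returns a strictly positive value. If $m_U=0$ then $m_F=0$ and $\textbf{A}$ returns either $\bot$ or the (trivially correct) approximation $0$, so the oracle answers ``no''. If $m_U\geq 1$ then $m_F\geq \tau'$, so by hypothesis $\textbf{A}$ must return a $\apm$-approximation of $m_F$, which is at least $(1-\eps)\tau'>0$, and the oracle answers ``yes''. The total running time is the cost of materializing $F$ (at most $O(n^2)$, since $F$ has at most $n$ vertices) plus one call to $\textbf{A}$, for a total of $\TO{T}$.

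The only delicate point is reconciling the polylogarithmic slack between $\gs(V)/\tau$ and the actual measure bound on $U$: we cannot guarantee $\prod_i c_i\geq \tau$ exactly, only up to a polylog factor. This is handled by feeding $\textbf{A}$ the slightly smaller threshold $\tau'=\prod_i c_i$ itself; by assumption $\textbf{A}$ works for every threshold, and any degradation in its running time from $\tau$ to $\tau/\mathrm{polylog}(n)$ is absorbed into the $\TO{\cdot}$ of the conclusion. The main obstacle is therefore only notational; the core correctness reduces to the clean identity $m_F=m_U\prod_i c_i$, which makes the blown-up graph simultaneously cheap to build and guaranteed to cross the threshold whenever $U$ hides even a single clique.
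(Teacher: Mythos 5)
Your proposal is correct and takes essentially the same approach as the paper: both implement the oracle by blowing up each vertex of $U_i$ into roughly $|V_i|/|U_i|$ twins so that a single clique in $G[U]$ becomes at least $\tau$ (up to polylog slack) cliques in the auxiliary graph, then invoke $\textbf{A}$ and read off the answer from whether it reports a nonzero count. The only cosmetic differences are that you use floors and pass the exact product $\prod_i c_i$ as the threshold (explicitly absorbing the polylog slack), whereas the paper uses ceilings and passes $\tau$ directly.
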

\begin{proof}[Proof of \Cref{thm:dup clique}]
The input graph is $H=(V,E)$, and we would like to implement a detection oracle for $k$-cliques on sets $U\subseteq V$ with $\gs(U)\leq \gs(V)/\tau$, using the algorithm $\textbf{A}$.
    Let $H'=(V',E')$ be the $k$-partite graph obtained from the reduction.
    We explain how to implement the detection oracle on $H'$, which also implies the detection oracle on $H$.
Recall that $H'$ has the vertex set $V'=V_1\sqcup V_2 \sqcup \ldots \sqcup V_k$ and the edge set $E_{H'}$, and that we denote the set of $k$-cliques in $H'$ by $E'$.
    Let $U\subseteq V'$ be such that $\gs(U)\leq \gs(V')/\tau$.
    Recall that $E'_U$ denotes the set of $k$-cliques in the induced graph $H'[U]$.
    We would like to determine if $E'_U$ is empty or not.
To answer the query ``is $E'_U$ is empty'' we build a new auxiliary graph $F$, and run the algorithm $\textbf{A}$ on $F$.
    If $\textbf{A}$ outputs that $F$ does not contain any $k$-cliques, then we output that $E'_U$ is empty, and if $\textbf{A}$ outputs that $F$ contains at least $\tau$ $k$-cliques, then we output that $E'_U$ is not empty.

    We explain how to construct the graph $F$.
    For every $i$ let $U_i=U\cap V_i$. We define the vertex set $V^F$ of $F$ to be $V^F=V^F_1\sqcup V^F_2\sqcup\ldots\sqcup V^F_k$, where $V^F_i$ is obtained by duplicating each vertex in $U_i$ for $\ceil*{\abs{V_i}/\abs{U_i}}$ times.
    For $i\in [k]$, let $s_i=\ceil*{\abs{V_i}/\abs{U_i}}$, and for every vertex $v\in U_i$, let $\phi_v:[s_i]\to V^F_i$ be a function that maps any index $j\in[s_i]$ to the $j$-th duplicate of $v$ in $V^F_i$.
For every pair of vertices $x'',y''\in V^F$, where $x''$ and $y''$
are duplicates of two vertices $x',y'\in U$, we connect $x''$ and $y''$ in $F$ if and only if $x'$ and $y'$ are connected in $H'$.
    This means that $F$ is still a $k$-partite graph.
    Constructing $F$ takes time $O(\abs{V'}^2)$.
    Let $E''$ denote the set of $k$-cliques in $F$.

    We show that the following holds.
    $E'_U$ is empty if and only if $E''$ is empty.
    Moreover, if $E'_U$ is not empty, then $E''$ contains at least $\tau$ elements.
    Clearly, if $F$ contains a $k$-clique $C'=(v'_1,v'_2,\ldots,v'_k)$, where $v_i\in V^F_i$ for every $i\in[k]$, then the set of vertices $C$ such that $v_i$ is the original vertex in $U_i$ that $v_i'$ was duplicated from, is a $k$-clique in $H'$. This proves that if $E'_U$ is empty, then so is $E''$.
    The exact same argument shows that if $E'_U$ contains a $k$-clique $C$, where $C=(v_1,v_2,\ldots,v_k)$,
    then $C'=(\phi_{v_1}(i_1),\phi_{v_2}(i_2),\ldots,\phi_{v_k}(i_k))$ is a $k$-clique in $F$, for every choice of $i_1,i_2,\ldots,i_k\in[s_1]\times[s_2]\times\ldots\times[s_k]$.
    To complete the proof, we need to show that $\prod_{i\in[k]}s_i\geq \tau$.
    We have
    \begin{align*}
        \prod_{i\in[k]}s_i=\prod_{i\in [k]}\abs{V_i}/\abs{U_i}=\gs(V')/\gs(U)\;,
    \end{align*}
    and because we assumed that $\gs(U)\leq \gs(V')/\tau$, we have that $\prod_{i\in[k]}s_i\geq \tau$, which completes the proof.
\end{proof}
\subsubsection{Conditional Hardness}
Two popular hardness hypotheses in fine-grained complexity are that detecting a $k$-clique takes $n^{\omega(\floor{k/3},\ceil{k/3},k-\ceil{k/3}-\floor{k/3})-o(1)}$ time, and that any combinatorial algorithm for $k$-clique detection takes $n^{k-o(1)}$ time \cite{focsy}.
In \cite{jin2022tight} an equivalent variant of the combinatorial hypothesis is introduced:

\begin{hypothesis}[Unbalanced Combinatorial $k$-Clique Hypothesis{\cite[Hypothesis 2.1]{jin2022tight}}]\label{hyp0:clique}
    Let $G$ be a $k$-partite graph with vertex set $V=V_1\sqcup V_2 \sqcup \ldots \sqcup V_k$, where $\abs{V_i}=n^{a_i}$ for every $i\in[k]$.
    Let $s = \sum_{i=1}^k a_i$.
    In the word-RAM model with $O(\log n)$ bits words, any randomized algorithm for $k$-clique detection in $G$ needs $n^{s-o(1)}$ time.
\end{hypothesis}
A natural extension of \Cref{hyp0:clique} to any algorithm instead of only combinatorial algorithms is the following.
\begin{hypothesis}\label{hyp:clique}
    Let $G$ be a $k$-partite graph with vertex set $V=V_1\sqcup V_2 \sqcup \ldots \sqcup V_k$, where $\abs{V_i}=n^{a_i}$ for every $i\in[k]$.
    Let $s = \sum_{i=1}^k a_i$.
    In the word-RAM model with $O(\log n)$ bits words, any randomized algorithm for $k$-clique detection in $G$ needs $n^{\frac{\omega\cdot s}{3}-o(1)}$ time.
\end{hypothesis}

We show that assuming \Cref{hyp:clique} is true the running time of our algorithm is close to optimal.
\begin{theorem}
Assume \Cref{hyp:clique} and the Word-RAM model of computation with $O(\log n)$ bit words. Let $k\geq 3$ be any constant integer.
    Any randomized algorithm that, when given an $n$ node graph $G$, can
    distinguish between $G$ being $k$-clique-free and containing $\geq n^t$ copies of $k$-cliques needs
    $n^{\omega(k-t)/3 - o(1)}$ time.
\end{theorem}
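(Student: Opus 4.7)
The plan is to give a fine-grained reduction from the unbalanced $k$-clique detection problem of \Cref{hyp:clique} to the promised distinguishing problem. Let $G$ be an unbalanced $k$-partite graph with parts $V_1,\ldots,V_k$ of sizes $|V_i|=n^{a_i}$ and $s=\sum_i a_i = k-t$. I will construct a balanced $k$-partite graph $G'$ on $N=kn$ vertices by ``duplicating'' each vertex $v\in V_i$ exactly $\lceil n^{1-a_i}\rceil$ times to live in the $i$-th part of $G'$, and, for every edge $(u,v)\in E(G)$ with $u\in V_i$, $v\in V_j$, connecting every duplicate of $u$ to every duplicate of $v$ in $G'$. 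This is essentially the duplication gadget used in the proof of \Cref{thm:dup clique}, and building $G'$ takes $O(N^2)$ time.

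The key properties: (i) any $k$-clique in $G'$, by collapsing duplicates back to their originating vertices, yields a $k$-clique in $G$; so $G$ is $k$-clique-free iff $G'$ is $k$-clique-free. (ii) If $G$ contains at least one $k$-clique $(v_1,\ldots,v_k)$, then taking arbitrary duplicates of each $v_i$ in $G'$ produces a $k$-clique in $G'$; the total count is at least $\prod_{i=1}^k n^{1-a_i} = n^{k-s} = n^t$. Thus an algorithm $\mathbf{A}$ that distinguishes ``$0$ copies'' from ``$\geq n^t$ copies'' of $k$-cliques on an $N$-node graph in time $T(N)$ immediately yields a $k$-clique detector on $G$ running in time $T(O(n)) + O(n^2)$, by invoking $\mathbf{A}$ on $G'$.

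By \Cref{hyp:clique}, any such detector on $G$ requires $n^{\omega \cdot s/3 - o(1)}$ time. Plugging in $s=k-t$ and absorbing the additive $O(n^2)$ into the error term $o(1)$ of the exponent (which is valid whenever $\omega(k-t)/3 > 2$, the only nontrivial regime where the bound beats simply reading the input), we conclude $T(N)\geq N^{\omega(k-t)/3 - o(1)}$, which is the claimed lower bound.

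The step I expect to require the most care is verifying that the approximate-counting assumption implies the distinguishing assumption, so that a $(1\pm\eps)$-approximation algorithm is at least as strong as an algorithm for the promise problem, and checking that the additive $O(n^2)$ cost of constructing $G'$ does not swallow the lower bound; once these are in place, the reduction is essentially routine. The construction also shows that the interesting regime where the lower bound is nontrivial (namely $\omega(k-t)/3>2$) matches the regime where our upper bound from \Cref{thms:clique} strictly beats the $O(n^2)$ sampling barrier, confirming that our algorithm is optimal up to polylogarithmic and $o(1)$ exponent factors under \Cref{hyp:clique}.
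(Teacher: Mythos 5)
Your proof is correct and follows essentially the same route as the paper's: both take an unbalanced $k$-partite instance for \Cref{hyp:clique}, blow each vertex in $V_i$ up into $n^{1-a_i}$ duplicates to form a balanced graph $G'$ on $O(n)$ vertices, observe that $G$ has a $k$-clique iff $G'$ has $\geq n^t$ $k$-cliques (with $t=k-s$), and invoke the hypothesis to get the lower bound. The only cosmetic differences are that you explicitly note the $O(n^2)$ construction cost is absorbed in the regime where the bound is nontrivial and you digress briefly about approximate counting versus distinguishing, which is not actually needed since the theorem is stated directly for the distinguishing problem.
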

\begin{proof}
    We start with a $k$-partite graph $G=(V,E)$ with vertex set $V=V_1\sqcup V_2 \sqcup \ldots \sqcup V_k$, where $\abs{V_i}=n^{a_i}$ for every $i\in[k]$. Let $s=\sum_{i=1}^k a_i$.
    We build a new graph $G'=(V',E')$ as follows.
    For every $i\in[k]$, we duplicate every vertex in $V_i$ for ${n^{1-a_i}}$ times. For every vertex $v$, we denote its duplicates by $(v,1),(v,2),\ldots$.
    For every edge $(u,v)\in E$, we add the edge $((u,i),(v,j))$ to $E'$ between every copy of $u$ and every copy of $v$.

    We show that $G$ has a $k$-clique if and only if $G'$ has $n^t$ $k$-cliques, where $t=k-s$.
    For the first direction, note that every $k$-clique in $G$ denoted by $C=(v_1,v_2,\ldots,v_k)$, where $v_i\in V_i$ for every $i\in[k]$, corresponds to $n^t$ copies of a $k$-clique in $G'$, by replacing each $v_i$ with one of its $n^{1-a_i}$ duplicates.
    Therefore, the number of $k$-cliques in $G'$ is
    \begin{align*}
        \prod_{i\in[k]}n^{1-a_i}=n^{k-s}=n^t\;.
    \end{align*}
    For the other direction, we show that if $G'$ has a single $k$-clique $C'=(v'_1,v'_2,\ldots,v'_k)$, then $G$ also have a $k$-clique.
    For every $i\in[k]$, let $v_i$ be the vertex in $V_i$ that $v'_i$ was duplicated from.
    Then the set of vertices $C=(v_1,v_2,\ldots,v_k)$ is a $k$-clique in $G$ by definition. This completes the proof of the statements that $G$ has a $k$-clique if and only if $G'$ has $n^t$ $k$-cliques, where $t=k-s$.

    To complete the proof of the theorem, we assume towards a contradiction that there exists an algorithm that can distinguish between $G'$ being $k$-clique-free and containing $\geq n^t$ copies of $k$-cliques in time $n^{\omega(k-t)/3 - \delta}$, for some constant positive $\delta$.
    Recall that we defined $t$ to be $k-s$, and therefore 
    $k-t=s$.
    However, such an algorithm can be used to distinguish between $G$ being $k$-clique-free and containing a $k$-clique in time $n^{\omega\cdot s/3 - \delta}$, which contradicts \Cref{hyp:clique}.
\end{proof}

We conclude that assuming \Cref{hyp:clique}, the running time of our algorithm which is bounded by $O(n^2+n^{\frac{\omega(s-1,s,s+2)}{3}})$ 
is close to optimal, since the best we could hope for is $O(n^{\frac{\omega\cdot s}{3}-o(1)})$.

\subsection{$k$-Dominating Sets}
In this subsection, we explain how to approximate the number $m$ of $k$-dominating sets (henceforth $k$-DS) in a graph $H=(V,E_H)$.

First, we will employ a standard reduction from $k$-DS to $k$-DS in a $k$-partite graph:
\begin{proposition}\label{ds:reduction}
    Given a graph $H=(V_H,E_H)$ with $|V|=n$, one can create in $O(n^2)$ time a $k$-partite graph $H'=(V',E_{H'})$ such that the number of DS in $G$ is exactly $k!$ times the number of DS in $H$.
\end{proposition}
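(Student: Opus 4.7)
The plan is to mirror the construction used in the preceding $k$-clique reduction, with the edge set of $H'$ chosen to encode closed neighbourhoods (so that domination in $H$ lifts to $H'$). Concretely, I set $V'=V_1\sqcup\cdots\sqcup V_k$ where each $V_i$ is an independent copy $\{v_i : v\in V_H\}$ of $V_H$, and for every $i\ne j$ I place $(u_i,v_j)$ into $E_{H'}$ exactly when $v\in N_H[u]$. This lifts each edge of $H$ across every pair of partitions and additionally introduces the ``identity'' edges $(v_i,v_j)$ for $i\ne j$. Scanning each partition pair and each edge (plus the identity edges) takes $O(k^2(|E_H|+n))=O(n^2)$ time, and the output is $k$-partite since each $V_i$ is independent.

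The counting claim will follow from a $k!$-to-$1$ bijection between (i) ordered $k$-tuples $(u_1,\ldots,u_k)\in V_H^k$ of pairwise distinct vertices whose underlying set is a $k$-DS of $H$ and (ii) the $k$-DS of $H'$. The map sends $(u_1,\ldots,u_k)$ to $S'=\{(u_1)_1,\ldots,(u_k)_k\}$; this set has size $k$ because the partitions are disjoint, and each $k$-DS of $H$ has exactly $k!$ such orderings. The forward direction is a direct verification: for $w_\ell\in V'$, either $w=u_\ell$ and $w_\ell\in S'$, or the $k$-DS property of $\{u_1,\ldots,u_k\}$ supplies an index $i$ with $w\in N_H[u_i]$; a short case split on whether $i=\ell$ produces a neighbour of $w_\ell$ in $S'$, using a cross-partition lift of the edge $(u_i,w)\in E_H$ when $i\ne\ell$, and an identity edge from the copy of some other $u_{i'}$ sitting in $S'$ when $i=\ell$.

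The backward direction is the main obstacle: I must show that every $k$-DS of $H'$ has exactly one vertex per partition and pairwise distinct originals. I plan to establish colourfulness first, by an independence/pigeonhole argument: if some $V_\ell$ were missed by $S'$ then all $n$ vertices $w_\ell\in V_\ell$ would need cross-partition neighbours in $S'$, and the resulting closed-neighbourhood cover condition on the multiset of originals cannot be satisfied by just $k$ elements for general inputs. Distinctness of originals is the delicate part: ``doubled'' copies $\{v_i,v_j\}$ of the same original are joined by an identity edge and together behave almost like the single vertex $v\in V_H$, so one must worry about spurious $k$-DS of $H'$ whose originals repeat. The plan here is to show that any such configuration would collapse to a DS of $H$ of size strictly smaller than $k$, and to complete the accounting either by a direct combinatorial argument on this construction or, if necessary, by augmenting $H'$ with small per-partition gadgets (for instance a pendant attached only to a designated vertex in each partition) that forbid these collapses while preserving both the $O(n^2)$ build time and the $k!$ correspondence. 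This distinctness step, together with making sure the augmentation does not introduce new $k$-DS, is where I expect the proof to require the most care.
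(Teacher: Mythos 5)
Your construction is close in spirit to the paper's, but the target property you are trying to prove is \emph{false}, and this is exactly where your plan stalls. Consider a graph $H$ in which a single vertex $a$ dominates everything (a universal vertex). In your $H'$ with closed-neighbourhood edges, the set $\{a_1,a_2,\ldots,a_k\}$ is a genuine $k$-dominating set of $H'$: each $a_i$ dominates all of $V_j$ for $j\neq i$, and each $a_j$ with $j\neq i$ dominates all of $V_i$. But its underlying multiset of originals is just $\{a\}$, a $1$-DS of $H$, not a $k$-DS. So distinctness is \emph{not} forced on $k$-DS of $H'$. The same kind of example (e.g.\ a triangle and $k=3$) also produces $k$-DS of $H'$ that use fewer than $k$ colour classes, so your pigeonhole argument for colourfulness fails as well: the closed-neighbourhood cover condition on the originals \emph{can} be met by fewer than $k$ distinct elements. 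The pendant-gadget patch is not obviously salvageable either, since a designated pendant per partition can still be dominated by any vertex whose original is adjacent to the pendant's attachment, and adding enough gadgets to force both distinctness and colourfulness while preserving an exact $k!$ correspondence is a substantial project in itself.

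The paper avoids all of this by changing what is being counted. It does \emph{not} claim a $k!$-to-$1$ bijection between $k$-DS of $H$ and \emph{all} $k$-DS of $H'$. Instead it introduces two labelling functions $\phi$ (colour) and $\chi$ (original), defines a $k$-DS of $H'$ to be \emph{special} if $\phi$ and $\chi$ are both injective on it (colourful and distinct), and proves the clean and easy statement: the \emph{special} $k$-DS of $H'$ are in $k!$-to-$1$ correspondence with the $k$-DS of $H$. Spurious $k$-DS of $H'$ (like $\{a_1,\ldots,a_k\}$) are simply not special, so they never need to be ruled out or enumerated. This is consistent with the hypergraph framework used throughout the paper: the hyperedge set $E'$ is \emph{defined} to be the special $k$-DS, and the oracle implementation (the matrix-multiplication step in \Cref{f:ds}) can cheaply restrict attention to colourful candidates by construction and filter out non-distinct ones with the $\chi$ test. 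The moral is that the proposition's informal wording (``the number of DS in $H'$'') should be read as ``the number of special $k$-DS in $H'$''; once you adopt that reading, your closed-neighbourhood construction is essentially right and the hard distinctness/colourfulness steps you anticipated simply disappear.
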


From now on we assume that our given graph is $k$-partite.
The main result of this section is the following theorem.
\begin{theorem}[$k$-dominating sets]\label{theorem:ds}
    Let $G$ be a given graph with $n$ vertices and let $k\geq 3$ be a fixed integer.
    There is a randomized algorithm that outputs an approximation $\htt$ for the number $m$ of $k$-dominating sets in $G$ such that
    $\Pr{(1-\eps)m \leq \htt\leq (1+\eps)m}\geq 1-1/n^2$, for any constant $\eps>0$. The running time is bounded by $\TO{f_{\mathrm{DS}}(n^k/m)}$.
\end{theorem}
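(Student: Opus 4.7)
The plan is to follow the same template as the proof of \Cref{theorem:clique} in \Cref{ssec:clique}, replacing the modified Ne\v{s}etril–Poljak reduction with an Eisenbrand–Grandoni style reduction. First, I would invoke \Cref{ds:reduction} to reduce to a $k$-partite graph $H'=(V', E_{H'})$ with $V'=V_1\sqcup\cdots\sqcup V_k$, losing only a factor of $k!$ in the count of $k$-dominating sets. I then view $H'$ as inducing an implicit $k$-partite hypergraph $G=(V',E')$ whose hyperedges $e=(v_1,\ldots,v_k)$ with $v_i\in V_i$ are precisely the (colourful) $k$-dominating sets of $H'$. With this in hand, \Cref{thm4:main} immediately yields a randomized $\apm$-approximation for the number of hyperedges in $G$ while querying the \HO only on sets $U$ with $\gs(U)\le \TO{\gs(V')/m}=\TO{n^k/m}$, so everything reduces to implementing a measure-bounded $\mathsf{Hyperedge}$-oracle for $G$.

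For the oracle implementation on input $U=U_1\sqcup\cdots\sqcup U_k$ (with $U_i\subseteq V_i$), I would adapt the Eisenbrand–Grandoni reduction \cite{eisenbrand2004complexity} but, as in \Cref{ssec:clique}, with a \emph{twist}: the partition of $[k]$ is tailored to the sizes of the $U_i$. Concretely, choose a partition $P=(I_1,I_2)$ of $[k]$ and form the two index sets $W_j=\bigtimes_{i\in I_j}U_i$ of tuples, one per side. Build the $|V'|\times|W_1|$ Boolean matrix $M_1$ defined by $M_1[v,S]=1$ iff $v$ is \emph{not} dominated in $H'$ by any vertex of $S$, and analogously an $|V'|\times|W_2|$ matrix $M_2$. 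Compute the product $D=M_1^{\top}M_2$ using rectangular matrix multiplication; then $D[S_1,S_2]=\#\{v\in V':v\text{ is dominated by neither }S_1\text{ nor }S_2\}$, so $H'[U]$ contains a $k$-DS (equivalently, $G[U]$ contains a hyperedge) if and only if some entry of $D$ equals $0$. Scanning $D$ for a zero entry is dominated by the multiplication cost, giving total time $\MM{|W_1|,|V'|,|W_2|}$ up to an $O(n^2)$ preprocessing cost for constructing the matrices. We then minimize over $P$, and define
\[
f_{\mathrm{DS}}(M)=\max_{U:\,\gs(U)\le M}\;\min_{P\in P_2([k])}\MM{x_1(P,U),\,n,\,x_2(P,U)},
\]
where $x_j(P,U)=\prod_{i\in I_j}|U_i|$; plugging this bound into \Cref{thm4:main} yields the $\TO{f_{\mathrm{DS}}(n^k/m)}$ running time claimed.

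The routine bookkeeping is to verify (i) correctness of the zero-entry test (which is immediate from the definitions of $M_1,M_2$ and the fact that $D[S_1,S_2]$ counts un-dominated vertices), and (ii) that the $\polylog$ oracle calls from \Cref{thm4:main} compose with the union-bound losses in the reduction to give the high-probability guarantee stated. The only point requiring a moment of care is the choice of partition $P$: unlike in the clique case where three groups were balanced, here we split $[k]$ into only two groups, and I expect the main obstacle to be showing that for any $U$ with $\gs(U)\le n^r$ one can always choose $P$ so that $x_1(P,U)$ and $x_2(P,U)$ differ multiplicatively by at most a factor of $n$. This is exactly the two-set version of the greedy balancing argument already given in the proof of \Cref{cor:clique}: the multiset $\{\log_n|U_i|\}_{i\in[k]}$ lies in $[0,1]$ and can be greedily partitioned so that the two sums differ by at most $1$, which is what produces the balanced exponents $(r-1)/2$ and $(r+1)/2$ used downstream in \Cref{thms:DS}. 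Once this partition-balancing is in place, everything else is a direct substitution into \Cref{thm4:main}.
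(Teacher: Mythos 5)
Your overall architecture matches the paper's: reduce to the $k$-partite graph $H'$ via \Cref{ds:reduction}, treat (certain) $k$-dominating sets of $H'$ as hyperedges of an implicit $k$-partite hypergraph $G$, implement a measure-bounded \HO by an Eisenbrand–Grandoni-style split of $[k]$ into two index sets $I_1,I_2$ chosen to balance $x_1(P,U)$ and $x_2(P,U)$, and feed this into \Cref{thm4:main}. The two-set greedy balancing argument you invoke is exactly what the paper uses to derive \Cref{cor:f-ds}, and your matrix product $D=M_1^{\top}M_2$ with the zero-entry test is the right shape of argument. (The paper states \Cref{f:ds} with a repeated $x_1$; this is evidently a typo and your $\MM{x_1,n,x_2}$ form is the intended one.)

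However, there is a genuine correctness gap in what you count. The reduction in \Cref{ds:reduction} builds $H'$ on vertex set $V_H\times[k]$, so two vertices $(v,i)$ and $(v,j)$ of $H'$ with $i\neq j$ are \emph{duplicates} of the same underlying vertex $v\in V_H$. A colourful $k$-tuple in $H'$ may use the same underlying vertex of $H$ twice, and such a tuple can still be a $k$-dominating set of $H'$ (it corresponds to a dominating set of $H$ of size $<k$). These are \emph{not} in $k!$-to-$1$ correspondence with $k$-DS of $H$. The paper therefore counts only the \emph{special} $k$-DS, i.e.\ tuples that are both colourful and \emph{distinct} (the map $\chi$ back to $V_H$ is injective), and it is precisely the set of special $k$-DS whose cardinality is $k!\cdot m$. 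Your proposal declares the hyperedges of $G$ to be all colourful $k$-DS of $H'$, omits the distinctness constraint entirely, and builds $M_1,M_2$ indexed by all of $W_1,W_2$ rather than by the distinct partial tuples. Consequently your oracle detects, and \Cref{thm4:main} would approximate, the number of colourful $k$-DS of $H'$, which generally exceeds $k!\cdot m$ by an additive amount with no fixed ratio (e.g.\ already for $k=2$ on $K_2$ there are $4$ colourful $2$-DS but only $2$ special ones). The fix is exactly the paper's: restrict the column index sets to the distinct partial tuples $A_i\subseteq W_i$, and after computing $D$, among the zero entries $(S_1,S_2)$ discard those for which $S_1\cup S_2$ is not distinct (an $O(|A_1||A_2|)$ pass, dominated by the matrix-multiplication cost). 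Without that filter the argument does not prove \Cref{theorem:ds}.
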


The function $f_{\mathrm{DS}}(M)$ is defined in \Cref{f:ds} below.
We also get the following explicit upper bound on $f_{\mathrm{ds}}(M)$.
\begin{corollary}\label{cor:f-ds}
    For every $r\geq 2$, we have that
    \begin{align*}
        f_{\mathrm{ds}}(n^r)\leq n^{\omega(1,(r-1)/2,(r+1)/2)}\;.
    \end{align*}
\end{corollary}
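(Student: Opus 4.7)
The plan is to implement the hyperedge oracle for $k$-DS by adapting the Eisenbrand-Grandoni reduction \cite{eisenbrand2004complexity} to the measure-bounded setting, following the same template as the $k$-clique oracle in \Cref{ssec:clique}. Given a query set $U = U_1 \sqcup \cdots \sqcup U_k$ with $\mu(U) \leq n^r$, I will first pick a partition $[k] = I_1 \sqcup I_2$ and form the tuple sets $W_j = \prod_{i \in I_j} U_i$ for $j \in \{1,2\}$. A $k$-tuple in $\prod_i U_i$ is a $k$-DS of $H'$ iff the corresponding pair $(T_1, T_2) \in W_1 \times W_2$ jointly dominates every vertex of $V'$. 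To detect this I build the Boolean ``non-domination'' matrices $M_1 \in \{0,1\}^{W_1 \times V'}$ with $M_1[T_1, u] = 1$ iff $T_1$ does not dominate $u$, and symmetrically $M_2 \in \{0,1\}^{V' \times W_2}$; a zero entry of $M_1 \cdot M_2$ certifies the existence of a $k$-DS. The Boolean product costs $\MM{|W_1|, n, |W_2|}$, while the preprocessing of $M_1, M_2$ takes $O(k \cdot n \cdot (|W_1| + |W_2|))$ time, which will be dominated.

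Next, I will show how to balance $|W_1|$ and $|W_2|$. Setting $a_i = \log_n |U_i| \in [0,1]$, we have $\sum_i a_i = r$ (WLOG equality, as we care about the worst case). I claim there is a partition $[k] = I_1 \sqcup I_2$ with $|s_1 - s_2| \leq 1$, where $s_j := \sum_{i \in I_j} a_i$. Take any partition minimizing $|s_1 - s_2|$; if the difference exceeded $1$ then the larger bin contains a positive $a_i$ (otherwise both sums are zero), and moving it to the smaller bin changes the difference to $|s_1 - s_2| - 2a_i$, which has strictly smaller absolute value because $0 < a_i \leq 1 < |s_1 - s_2|$, contradicting minimality. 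Hence, WLOG with $s_1 \leq s_2$ and $s_1 + s_2 = r$, we obtain $s_1 \in [(r-1)/2, r/2]$ and $s_2 \in [r/2, (r+1)/2]$.

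To conclude, I will apply the rebalancing inequality (\Cref{thm:MM balance is slower}) to the triple $(1, s_1, s_2)$: with $\varepsilon = (r+1)/2 - s_2 \in [0, 1]$ I get $\omega(1, s_1, s_2) \leq \omega(1, (r-1)/2, (r+1)/2)$. Thus the oracle runs in $\tilde{O}(n^{\omega(1,(r-1)/2,(r+1)/2)})$ time on any $U$ with $\mu(U) \leq n^r$, which gives the claimed bound. The preprocessing cost $O(n^{(r+3)/2})$ is dominated because $\omega(1, (r-1)/2, (r+1)/2) \geq 1 + (r+1)/2 = (r+3)/2$ by the trivial output-size bound on matrix multiplication. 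The main subtlety is the balancing argument, which must accommodate coordinates that could be at their cap $|U_i| = n$ (forcing $a_i$ to equal $1$ exactly); the strict-decrease argument above handles this since it only needs $a_i \leq 1$ strictly less than the current imbalance.
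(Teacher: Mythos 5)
Your proof is correct and follows essentially the same route as the paper: find a bipartition $I_1\sqcup I_2$ whose log-sizes $s_1,s_2$ satisfy $|s_1-s_2|\leq 1$ (equivalently $X_1\leq X_2\leq nX_1$), then invoke the rebalancing inequality \Cref{thm:MM balance is slower} to reach $\omega(1,(r-1)/2,(r+1)/2)$. The paper simply asserts the existence of such a partition ``using a greedy algorithm'' and applies the rebalancing step without spelling it out; your local-optimality argument for the balanced bipartition and explicit instantiation of the rebalancing lemma supply the missing detail, and your observation that the $O(n\cdot|W_2|)$ preprocessing is dominated (via $\omega(a,b,c)\geq a+c$) is a correct, if minor, additional check.
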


Combining \Cref{cor:f-ds} with \Cref{theorem:ds} we obtain:
\DSSimplified*

\begin{proof}[Proof of \Cref{ds:reduction}]
    We start with a graph $H=(V,E_H)$ with $n$ vertices, and define a new graph $H'=(V',E_{H'})$.
    We count the number of \emph{special} $k$-DS in $H'$, which we shortly define.
    The key point here is that unlike the previous reduction, we don't need the number of $k$-DS in $H'$ to be equal to the number of $k$-DS in $H$ times a fixed factor. Instead, we can count a subset of the $k$-DS in $H'$.

    The vertex set of $H'$ is $V'=V\times [k]$, and the edge set $E_{H'}$ is defined as follows. For every edge $(u,v)\in E_H$,
    we add the edge $((u,i),(v,j))$ to $E_{H'}$ for every $i,j\in[k]$, where unlike the previous reduction, we allow for $i=j$.
    We define two functions $\phi:V\to[k]$ and $\chi:V\to A$ as follows. For every vertex $v'=(v,i)\in V'$, we set $\phi(v')=i$ and $\chi(v')=v$.
    We say that a set $e\subseteq V'$ is \emph{colorful} if $\phi$ is injective on $e$, and say that $e$ is \emph{distinct} if $\chi$ is injective on $e$.
    We say that $e$ is \emph{special} if $e$ is both colorful and distinct.

    Let $E$ denote the set of $k$-DS in $H$ and let $m$ denote its size.
    In addition, let $E'$ denote the set of \emph{special} $k$-DS in $H'$ and let $m'$ denote its size.
    We encode every $k$-DS in $H$ as a $k$-tuple of vertices, $e=(v_1,v_2,\ldots,v_k)$, ordered lexicographically.

    Let $\Pi_k$ denote the set of all permutations of $[k]$.
    For every $k$-DS $C=(v_1,v_2,\ldots,v_k)$ in $H$, where the vertices in $C$ are ordered lexicographically, and every permutation $\pi\in \Pi_k$, we use the notation $C_\pi=((v_1,\pi(1)),(v_2,\pi(2)),\ldots,(v_k,\pi(k)))$.

    We show the following holds.
    $C\in E$ if and only if there exists a permutation $\pi\in \Pi_k$ such that $C_\pi\in E'$.

    To see the first direction, let $C$ be a $k$-DS in $H$.
    We show that for every $\pi\in \Pi_k$, the set $C_\pi$ is a special $k$-DS in $H'$.
    Clearly, $C_\pi$ is colorful as $\phi$ is injective on $C_\pi$, and $C_\pi$ is distinct as $\chi$ is injective on $C_\pi$, as $\chi(C_\pi)=C$, because $C$ contains $k$ distinct vertices.
    We are left with proving that $C_\pi$ is a $k$-DS in $H'$.
    We fix some vertex $u'\in V'$ and show it is in $C_\pi$ or has a neighbor in $C_\pi$. If it is in $C_\pi$ then the proof is complete, and we therefore assume that $u'\notin C_\pi$.
    We use $u'=(u,i)$, i.e., $\chi(u')=u$ and $\phi(u')=i$.
    Since $C$ is a $k$-DS in $H$, there exists a vertex $v\in C$ which is a neighbor of $u$.
    Therefore, there must be a vertex $v'\in C_\pi$ for which $\chi(v')=v$, which means that $v'$ is a neighbor of $u'$ in $H'$, which proves that $C_\pi$ is a $k$-DS in $H'$.

    To see the other direction, let $C_\pi$ be a special $k$-DS in $H'$.
    We prove that $C$ is a $k$-DS in $H$.
    Note that $C=\sset{\chi(v')\mid v'\in C_\pi}$. Since $C_\pi$ is distinct, the set $C$ contains exactly $k$ (distinct) vertices.
    We are left with proving that $C$ is a $k$-DS in $H$.
    We fix some vertex $u\in V$ and show that it is in $C$ or has a neighbor in $C$. If it is in $C$ then the proof is complete, and we therefore assume that $u\notin C$.
    We look at the vertex $(u,1)$ in $H'$, which is either in $C_\pi$ or has a neighbor in $C_\pi$. If $(u,1)\in C_\pi$ then $u\in C$, so we assume that $(u,1)\notin C_\pi$.
    Since $C_\pi$ is dominating, there exists a vertex $v'\in C_\pi$ which is a neighbor of $(u,1)$.
    Let $v=\chi(v')$, and note that $v\in C$ and $v$ is a neighbor of $u$ in $H$, which proves that $C$ is dominating.

\end{proof}

This completes the reduction from a general graph $H$ to a new graph $H'$. We have that if $\hat{m}$ is a $\apm$ approximation for the number of special $k$-DS in $H'$, then $\hat{m}/(k!)$ is a $\apm$ approximation for the number of $k$-DS in $H$, as required.

\paragraph{Oracle Implementation.}
We explain how to implement the \HO on a $k$-partite hypergraph $G$ obtained from $H'$ by considering every special $k$-DS as a hyperedge.
That is we provide an algorithm that takes as input a set $U\subseteq V'$, and determines if $H'[U]$ contains a special $k$-DS. The oracle has access to $H$, $H'$ and the two functions $\phi$ and $\chi$, and can evaluate them on any vertex in $H'$ in $O(1)$ time.

To do so, we use a well-known reduction from the $k$-DS problem to matrix multiplication, shown in \cite{eisenbrand2004complexity}.
In other words, we first find all $k$-DS in $H'$ and we then filter the non-special ones from this set.

The vertex set $U$ is partitioned in $k$ sets $U_1,U_2,\ldots,U_k$, where $U_i= V_i\cap U$ for every $i\in[k]$, where $V_i=V\times\set{i}$.
We partition the index set $[k]$ into two sets, $I_1,I_2$, and build a new graph $F$ as follows. We define two supersets $W_1,W_2$ as follows.
\begin{align*}
    W_1=\bigtimes_{j\in I_1}U_j \;, &  & W_2=\bigtimes_{j\in I_2}U_j \;.
\end{align*}
We say that an element $S\in W_i$ is special if it is colorful and distinct.
Let $A_i$ denote the set of all special elements in $W_i$ for every $i\in[2]$.
Note that every element $S\in W_i$ is a subset of $\abs{I_i}$ vertices, which is also colorful. However, $S$ might not be distinct -- the function $\chi$ might not be injective on $S$.
We keep in $A_i$ only the distinct elements in $W_i$.
For $i\in[2]$ we define a new matrix $M_i$ of size $\abs{A_i}\times \abs{V}$, where the entry $M_i[x,y]$ is equal to $0$ if the $y$-th vertex in $V$ has a neighbor in the $x$-th element in $A_i$, and otherwise it is equal to $1$.
Let $(M_2)^T$ denote the transpose of $M_2$.
We then compute the product $D=M_1(M_2)^T$.
The following holds:
\begin{align*}
    D[i,j]=\sum_{h=1}^{n} M_1[i,h]\cdot (M_2)^T[h,j]\;,
\end{align*}
and therefore $D[i,j]=0$ if and only if for every $h\in[n]$, the $h$-th vertex in $V$ is connected to the $i$-th element in $A_1$ or to the $j$-th element in $A_2$.
For every zero entry $D[i,j]=0$, the union of the $i$-th element in $A_1$ and the $j$-th element in $A_2$ is a colorful $k$-DS in $H'$, however, it might not be distinct. In what follows we explain how to filter the non-distinct sets.

Fix $i$ and $j$, and let $S$ denote the $i$-element in $A_1$, and $S'$ denote the $j$-element in $A_2$.
Then $S\cup S'$ is a colorful $k$-DS in $H'$ if and only if $D[i,j]=0$.
Testing whether $S\cup S'$ is distinct takes $O(1)$ time, as we can evaluate $\chi$ on $S\cup S'$ in $O(1)$ time, and $S\cup S'$ is distinct if and only if $\chi$ is injective on $S\cup S'$.
Therefore, The number of special $k$-DS in $H'$ is equal to the number of zero entries in $D$, which are associated with a distinct set of vertices.
Finding all such sets take at most $O(\abs{A_1}\cdot \abs{A_2})$ time.
Therefore, the output of the \HO is $1$ if we find a zero entry in $D$, which is associated with a special $k$-DS in $H'$, and it is $0$ otherwise.

\paragraph{Running Time.}
The running time of the oracle is $\MM{n,\abs{A_1},\abs{A_2}}$, which is bounded by $\MM{n,\abs{W_1},\abs{W_2}}$.
We choose the partition of the index set $[k]$ into $I_1,I_2$ that minimizes the running time of the oracle.

For any positive value $M$, let $f_{\mathrm{DS}}(M)$ denote the worst-case running time of the oracle on a set $U$ with $\gs(U)\leq M$.
We need the following notation to provide an upper bound on $f_{\mathrm{DS}}(M)$.
Let $P_2([k])$ denote the set of all partitions of $[k]$ into two sets.
For $P\in P_2([k])$, where $P=(I_1,I_2)$, and a set $U=(U_1,U_2,\ldots, U_k)$, we define a pair of numbers $x_1(P,U),x_2(P,U)$ as follows.
\begin{align*}
    x_1(P,U)= & \prod_{i\in I_1}\abs{U_i}\;, & x_2(P,U)= & \prod_{i\in I_2}\abs{U_i}\;.
\end{align*}

We have the following upper bound on $f_{\mathrm{DS}}(M)$:
\begin{align}
    f_{\mathrm{DS}}(M)=\max_{U:\gs(U)\leq M} \min_{P\in P_2([k])} \set{\MM{x_1(P,U),x_1(P,U),n}}\;.\label{f:ds}
\end{align}
This concludes the proof of \Cref{theorem:ds}.
In what follows we prove \Cref{cor:f-ds}.

For every partition $P\in P_2([k])$, where $P=(I_1,I_2)$ let $W_i=\bigtimes_{j\in I_i}\abs{U_j}$ for every $i\in[2]$, and define $X_i=\abs{W_i}$.
We therefore can always find a partition $I_1,I_2$ such that $X_1\leq X_2\leq n\cdot X_1$, using a greedy algorithm.
Because $X_1\cdot X_2=M$ and $X_2\leq X_1\cdot n$ there exists $s\in[0,1/2]$ such that $X_1=\sqrt{M}\cdot n^{s-1/2}$ and $X_2=\sqrt{M}\cdot n^{1/2-s}$.
We therefore have
\begin{align*}
    f_{\mathrm{DS}}(M)\leq \MM{n,\sqrt{M/n}\cdot n^s,\sqrt{M\cdot n}/n^{s}}
    \leq \MM{n,\sqrt{M/n},\sqrt{Mn}}\;,
\end{align*}
and to complete the proof of \Cref{cor:f-ds}, we plug in $M=n^r$.

\begin{remark}
    The best we can hope for is $f_{\mathrm{DS}}(M)=\MM{n,\sqrt{M},\sqrt{M}}$.
\end{remark}

We are left with proving \Cref{thms:ds3}
\begin{proof}
    We prove that for $k=3$ it holds that
    $f_{\mathrm{DS}}(n^{3-t})\leq n+n^{\psi}$, where
    \begin{align*}
        \psi=\max\set{\omega(1,1-t/3,2(t-1)/3),\omega(1,1,2-t)}\leq \omega(1,1,2(1-t/3))\;.
    \end{align*}
We are given a set $U\subseteq V$ where $U$ is partitioned into $3$ sets $U_1,U_2,U_3$, where $U_i=V_i\cap U$ for every $i\in[3]$.
    We assume that $\gs(U)= n^{3-t}$, for some $t\leq 2$.
    We use $n^{s_i}$ to denote $\abs{U_i}$ for every $i\in[3]$, where we assume without the loss of generality that $s_1\leq s_2\leq s_3$,
    as well as that $0\leq s_1\leq s_2\leq s_3\leq 1$, and that $s_1+s_2+s_3=3-t$.

    To prove the theorem, we partition the index set $[3]$ into two sets $I_1,I_2$, by choosing $I_1=\set{1,2}$ and $I_2=\set{3}$.
    This induces two vertex sets $W_1,W_2$, where $W_1=U_1\times U_2$ and $W_2=U_3$.
    We have that $\abs{W_1}= n^{s_3}$, and $\abs{W_2}= n^{s_1+s_2}$, and therefore $f_{\mathrm{DS}}(n^{3-t})\leq \MM{n,n^{s_3},n^{s_1+s_2}}$.
    We can write $s_1 + s_2 = 3-t - s_3$, and therefore
    \begin{align*}
        f_{\mathrm{DS}}(n^{3-t})\leq n^{\omega(1,s_3,3-t-s_3)}\;.
    \end{align*}
    Note that due to the convexity of omega, we have that
    the maximum of $\omega(1,s_3,3-t-s_3)$ is achieved at the endpoints, which are when $s_3$ is maximal, i.e., $s_3=1$, or when $s_3$ is minimal, i.e., $s_3=(3-t)/3=1-t/3$, which completes the proof.
\end{proof}

\subsection{$k$-Sum}
In this subsection, we explain how to approximate the number $m$ of $k$-sums in an array $A$ of $n$ numbers, where a $k$-sum is a $k$-tuple of elements in $A$ that sum to $0$.

First, we will employ a standard reduction from $k$-sum to colorful $k$-sum, which is defined as follows.
Let $A=A_1,A_2\ldots A_k$ be a set of $k$-arrays of integers and let $k\geq 3$ be a fixed integer.
Compute a $\apm$ approximation for the number of colorful $k$-sums,
where a $k$-tuple $s=(a_1,a_2,\ldots,a_k)$ is considered as a colorful $k$-sum if $a_i\in A_i$ for every $i\in[k]$, and $\sum_{i=1}^{k}a_i=0$.

For that, we prove the following proposition.
\begin{proposition}
    Given an array $A$ of $n$ numbers, one can create in $O(n)$ time $k$ arrays $A_1,A_2,\ldots,A_k$ such that the number of $k$-sums in $A$ is exactly $k!$ times the number of (colorful) $k$-sums in $A_1,A_2,\ldots,A_k$.
\end{proposition}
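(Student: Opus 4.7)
My plan is the natural reduction: set $A_i := A$ for every $i \in [k]$, which takes $O(kn) = O(n)$ time since $k$ is a constant. Under this choice, a colorful $k$-sum in $A_1 \times \cdots \times A_k$ is simply an ordered tuple $(a_{j_1},\dots,a_{j_k}) \in A^k$ with $\sum_i a_{j_i} = 0$, where the $i$-th coordinate is interpreted as drawn from color class $i$. To establish the claimed $k!$-to-$1$ correspondence, I would view an ordered $k$-sum in $A$ as a $k$-tuple of distinct elements of $A$ summing to zero, and a colorful $k$-sum as the corresponding equivalence class under relabeling of colors: each unordered $k$-subset of $A$ summing to zero gives rise to exactly $k!$ distinct ordered $k$-sums in $A$ (one per permutation), and to a single colorful $k$-sum once the color labels are factored out. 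Summing over all unordered solutions yields $|k\text{-sums in }A| = k! \cdot |\text{colorful }k\text{-sums}|$.

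The main obstacle is the possibility of degenerate colorful $k$-sums, that is, tuples that use the same index of $A$ in two different coordinates, since these break the clean $k!$-to-$1$ correspondence: their underlying multiset has fewer than $k$ distinct elements, so they are not in bijection with any ordered $k$-sum of distinct elements. I would handle this in one of two ways. Either (i) adopt the standard $k$-SUM counting convention that "$k$-tuples" are tuples of pairwise distinct entries — in which case the construction above works verbatim — or (ii) perform a cheap $O(n)$-time tagging preprocessing that replaces each $a \in A$ at position $j$ by a lifted value $N \cdot a + t_j$, where $N$ dominates all $t$-contributions and the tags $(t_j)_{j \in [n]}$ are chosen so that no $k$-sum on tags with a repeated index can vanish. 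Because $N$ is large enough, $\sum_i (N a_{j_i} + t_{j_i}) = 0$ forces both $\sum_i a_{j_i}=0$ and $\sum_i t_{j_i}=0$, so the arithmetic of the problem is preserved while the colorful count is cut down to exactly those tuples with $k$ distinct underlying indices. Either convention restores the clean $k!$-to-$1$ bijection.
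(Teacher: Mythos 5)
Your base construction ($A_i := A$ for all $i$) differs from the paper's: the paper instead shifts $A_i := A + c_i$ with color-dependent constants satisfying $\sum_i c_i = 0$ (concretely $c_i = 3^i\cdot 10U$ for $i<k$ and $c_k = -10U\sum_{i<k}3^i$ where $U$ bounds the entries). For the proposition itself the shifts are inessential: either way a colorful tuple $(a_{j_1}+c_1,\dots,a_{j_k}+c_k)$ sums to zero iff $\sum_i a_{j_i}=0$, and each unordered $k$-subset of $A$ summing to zero contributes exactly $k!$ such tuples. You are also right to flag the degenerate colorful tuples (same underlying index in two coordinates) as the only threat to the \emph{exact} $k!$ identity, and it is worth observing that the paper's per-color offsets do not eliminate them either: they are constant on each $A_i$, so they cannot see whether two coordinates came from the same position of $A$. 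The paper is silent on this and thus implicitly relies on your option~(i), a counting convention that discards the degenerate tuples.

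Your option~(ii), however, cannot be repaired. You want tags $(t_j)_{j\in[n]}$ so that $\sum_i t_{j_i}=0$ holds whenever $j_1,\dots,j_k$ are pairwise distinct but fails whenever some index repeats. The first requirement alone already forces all tags to be equal: fix $j_1,\dots,j_{k-1}$ and vary $j_k$ over two fresh indices $j_k,j_k'$ (possible for $n\geq k+1$) to get $t_{j_k}=t_{j_k'}$; hence $t_j\equiv c$ for all $j$, and $\sum_i t_{j_i}=kc$ is the same for repeated-index tuples, so no separation is possible. More fundamentally, the colorful $k$-SUM instance sees only the arrays $A_1,\dots,A_k$, not the underlying $A$-index of each entry, so no value-lifting applied uniformly inside each $A_i$ can make the summation arithmetic detect index repetition across coordinates. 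Option~(ii) should be dropped; option~(i), read as a convention on which colorful tuples are counted, is the only one of your two routes that actually matches the paper's claim.
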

\begin{proof}
    We start with an instance $A$ of the $k$-Sum problem, which has $m$ $k$-sums, and get a new instance $A'$ of the Colorful-$k$-Sum problem, which has $m'$ colorful $k$-sums, where $m'= (k!) m$. The reduction is oblivious to the $m$.

    Let $A$ be an array of $n$ integers, and let $k\geq 3$ be a fixed integer.
    Assume that all elements in $A$ fall inside the interval $[-U,U]$, where $U$ is a positive number.
    We build a new instance $A'$ of the Colorful-$k$-Sum problem as follows.
    We define a new array $A_i=A + 3^i \cdot 10U$ for every $i\in[k-1]$, and let $A_k=A - U'$, where $U'=10U\cdot \sum_{i=1}^{k-1}3^i$.
    This completes the reduction.

    For any solution $s=(a_1,a_2,\ldots,a_k)$ to the $k$-Sum problem, where $a_i\in A$ for every $i\in[k]$, and every permutation $\pi\in \Pi_k$,
    we have that $s'=(a_{\pi(1)}+3^1\cdot 10U,a_{\pi(2)}+3^2\cdot 10U,\ldots,a_{\pi(k)}-10U')$ is a solution to the Colorful-$k$-Sum problem.
    Note that the offsets we added to each $A_i$ ensures that
    for every $s\subseteq A'$ that satisfies $\sum_{a\in s}a_i=0$, means that $s$ is colorful.
\end{proof}

From now on we assume that our given input contains k-arrays. The main result of this section is the following theorem.

\sumSimp*

The function $f_{\mathrm{clique}}(M)$ is defined in \Cref{f:ksum} below.
We also get the following explicit upper bound on $f_{\mathrm{sum}}(M)$.
\begin{corollary}\label{cor:Sum}
    Let $A$ be an array of $n$ integers, and let $k\geq 3$ be a fixed integer.
    Assume that $A$ contains $n^t$ solutions for the $k$-sum problem.
    Let $r=k-t$.
If $r\geq 2$, then,
    \[  f_{\mathrm{sum}}(n^{r}) \leq \TO{n^{(r+1)/2}}\;. \]
\end{corollary}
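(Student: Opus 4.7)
The plan is to implement the hyperedge oracle for the (colorful) $k$-sum problem via a meet-in-the-middle strategy, then optimize the split using the same balancing trick employed in the $k$-DS corollary. Given $U = U_1 \sqcup \cdots \sqcup U_k$ with $\gs(U) \leq n^r$, I will pick a partition $P = (I_1, I_2)$ of $[k]$ and form the Minkowski-style sum-sets
\[
W_1 = \Bigl\{ \sum_{j \in I_1} a_j \ : \ a_j \in U_j \Bigr\}, \qquad
W_2 = \Bigl\{ \sum_{j \in I_2} a_j \ : \ a_j \in U_j \Bigr\},
\]
of sizes at most $X_1 = \prod_{j \in I_1}|U_j|$ and $X_2 = \prod_{j \in I_2}|U_j|$ respectively, noting $X_1 \cdot X_2 = \gs(U)$. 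The oracle then sorts $W_1$ and, for each $s \in W_2$, binary-searches for $-s$ in $W_1$, reporting yes iff a match is found. This correctly decides whether $G[U]$ contains a (colorful) $k$-sum hyperedge, and runs in time $\tilde{O}(X_1 + X_2)$.

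Second, I would optimize the choice of partition exactly as in the proof of \Cref{cor:f-ds}: greedy rebalancing shows that a partition $(I_1, I_2)$ exists with $X_1 \leq X_2 \leq n \cdot X_1$. (Taking logarithms base $n$, this is the same ``balance two bins within $1$'' statement proved via the greedy exchange argument given right after \Cref{eq:3-batch}, but now for two bins rather than three, so the argument simplifies further.) Combined with the equality $X_1 X_2 = \gs(U) \leq n^r$, this yields $X_2^2 \leq n \cdot X_1 X_2 \leq n^{r+1}$, hence $X_1 + X_2 \leq 2X_2 \leq 2 n^{(r+1)/2}$. Taking the worst case over $U$ with $\gs(U) \leq n^r$ then gives $f_{\mathrm{sum}}(n^r) \leq \tilde{O}(n^{(r+1)/2})$, which is the claimed bound.

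The main thing to be careful about is justifying the greedy balancing step when the $|U_j|$ are very lopsided (in particular, that the $\log_n$-sizes $s_j = \log_n |U_j|$ with $\max_j s_j \leq 1$ can always be split into two bins whose totals differ by at most $1$); this is exactly the two-bin instance of the same exchange argument already carried out in the $k$-DS proof, so no new ingredients are needed. The rest of the argument is routine: sorting and binary search are standard, and the greedy partition can be computed in $O(k \log k)$ time, which is negligible. This completes the sketch.
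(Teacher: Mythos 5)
Your proposal is correct and follows essentially the same route as the paper: implement the oracle by a meet-in-the-middle split of $[k]$ into $I_1, I_2$, form the two sum-sets, sort and binary-search, and then invoke the same greedy two-bin balancing (using $|U_j|\leq n$) to get a partition with $X_1 \leq X_2 \leq nX_1$, from which $X_2 \leq n^{(r+1)/2}$ follows since $X_1 X_2 \leq n^r$. The only cosmetic difference is that the paper first forms the Cartesian products $W_1, W_2$ of tuples and then the sets $R_1, R_2$ of their sums, while you fuse these two steps into sum-sets directly.
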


Our final result in this section is the following theorem.
\sumThree* 

To prove \Cref{thms:sum} we consider the colorful $k$-sum approximate counting over the arrays $A'=(A_1,A_2,\ldots, A_k)$ as a hyperedge approximate counting problem on the implicit $k$-partite hypergraph $G$, whose vertices are the elements of $A'$ and the hyperedges are exactly the set of colorful $k$-sums. We provide a \HO for $G$ and analyze its running time.
Let $f_{\mathrm{sum}}(M)$ denote the running time of our \HO on a set $U\subseteq V(G)$ with $\gs(U)\leq M$. We provide an upper bound on $f_{\mathrm{sum}}(M)$ and then we use \Cref{thm4:main} together with our detection oracle to get a $\apm$ approximation for $m$, the number of colorful $k$-sum in $H'$ (which in turn gives us a $\apm$ approximation for $k!m$ as well, the number of $k$-sum in the original array $A$).

\paragraph{Oracle Implementation.}
Let $V$ be a set of $k$-arrays $V_1, V_2, \ldots ,V_k$, and let $E$ denote the set of colorful $k$-tuple of numbers, $e=(v_1,v_2,\ldots,v_k)$, such that $v_i\in V_i$ for every $i\in[k]$, for which $\sum_{i=1}^{k}v_i=0$.
Then $G=(V,E)$ be a $k$-partite hypergraph.

Given a set of vertices $U\subseteq V$, we explain how to implement a query on $U$ over the hypergraph $G[U]$, i.e., determine if $G[U]$ contains a hyperedge, or equivalently if $U$ contains a $k$-tuple from $E$.
The vertex set $U$ is partitioned in $k$ pairwise disjoint sets $U_1,U_2,\ldots,U_k$, where $U_i= V_i\cap U$ for every $i\in[k]$.

To do so, we modify a well-known algorithm for the $k$-sum problem, which
first builds two sets $W_1,W_2$ as follows, by having $W_1=\bigtimes_{i\in[k/2]} U_i$ and $W_2=\bigtimes_{i\in\sset{k/2+1,\ldots, k}} U_i$, and for every element $s\in W_1\cup W_2$ computing $\sum_{a\in s}a$.
Let $R_1=\sset{\sum_{a\in s}a\mid s\in W_1}$ and $R_2=\sset{\sum_{a\in s}a\mid s\in W_2}$.
Assume without the loss of generality that $\abs{R_1}\leq \abs{R_2}$.
The algorithm proceeds by iterating over all elements in $R_1$, and for every $x\in R_1$ checking if $-x\in R_2$.
If such elements exist, the algorithm outputs $1$, and $0$ otherwise.
The running time of this algorithm is $O(\abs{R_1}\log(\abs{R_1})+\abs{R_2}\log(\abs{R_2}))$.

Our reduction is similar but with a twist. We start with a $k$-partite graph on parts $U_1,\ldots,U_k$ and instead of creating nodes corresponding to elements  of size $k/2$, we consider different sized tuples which have numbers from a pre-specified subset of the $U_i$s.

More formally, we create two new sets $W_1$ and $W_2$ as follows.
We partition the index set $[k]$ into two sets, $I_1,I_2$, and define the
two sets
\begin{align*}
    W_1=\bigtimes_{j\in I_1} U_j \;, &  & W_2=\bigtimes_{j\in I_2} U_j \;.
\end{align*}
Note that $W_i$ consists of $|I_i|$-tuples of elements from $U$, one from each $U_j$ with $j\in I_i$.
We then define two sets $R_1,R_2$ as before:
\begin{align*}
    R_j=\sset{\sum_{a\in s}a\mid s\in W_j}\;, \quad j\in\set{1,2}\;.
\end{align*}
From this point forward our reduction is the same as the standard one.
It involves sorting the sets $R_1$ and $R_2$, and then checking if
for every $x\in R_1$ there exists an element $-x\in R_2$.

\paragraph*{Running Time Analysis.}
We choose the partition of the index set $[k]$ into $I_1,I_2$ that minimizes the running time of the algorithm.
Let $f_{\mathrm{sum}}(M)$ denote the running time of the algorithm for a set $U$ with $\gs(U)\leq M$. We provide an upper bound on $f_{\mathrm{sum}}(M)$.
Let $\UU$ denote the collection of sets $U$ with $\gs(U)\leq M$ for some value $M$. Then the running time of the algorithm is
\begin{align}
    f_{\mathrm{sum}}(M)=\max_{U:\gs(U)\leq M} \min_{P\in P_2([k])} \set{x_1(P,U),x_2(P,U)}\label{f:ksum}\;,
\end{align}
Where recall that $P_2([k])$ is the set of all partitions of $[k]$ into two sets.
This completes the proof of \Cref{thms:sum}.

In the following, we provide an explicit bound on $f_{\mathrm{sum}}(M)$, which proves \Cref{cor:Sum}.
We can provide an explicit bound on $f_{\mathrm{sum}}(M)$, using the fact that $\abs{U_i}\leq n$ for every $i\in[k]$.
We therefore can always find a partition $I_1,I_2$ where $\abs{R_1}\leq \abs{R_2}\leq \abs{R_1}\cdot n$.
This follows by a greedy approach that starts with a trivial partition where all elements are in $I_2$ and while $\abs{R_2}\geq\abs{R_1}$, picks an index from $I_2$ and moves it to $I_1$.
By the previous discussion, we can assume that there exists a partition $I_1,I_2$ that induces two sets $R_1, R_2$ such that $\abs{R_1}\cdot\abs{R_2} =M$ and  $\abs{R_2}\leq \abs{R_1}\leq \abs{R_2}\cdot n$, where $n=\abs{V}$.
We can also write $\abs{R_1}$ and $\abs{R_2}$ as follows.
There exists $s\in[0,1/2]$ such that $\abs{R_1}=\sqrt{M}\cdot n^{s-1/2}$ and $\abs{R_2}=\sqrt{M}\cdot n^{1/2-s}$.
If $s=0$, then $\abs{R_1}=\abs{R_2}$ and if $s=1/2$, then $\abs{R_1}=\abs{R_2}\cdot n$.
The running time of the algorithm is therefore at most $O(\abs{R_2}\cdot \log(\abs{R_2}))$ which is at most $O(\sqrt{M}\cdot n^{1/2-s}\cdot \log(n))$.
By plugging in $M=n^k/m$ we get that the running time of the algorithm is at most $\sqrt{n^{k+1}/m}$.

\paragraph{Refined Analysis for Small Values of $k$.}
In what follows we prove \Cref{thms:sum3}.
\sumThree*

\begin{proof}[Proof of \Cref{thms:sum3}.]
    We prove that $f_{\mathrm{sum}}(n^{3-t})\leq n+n^{2(1-t/3)}$.
    In other words, we prove that given three arrays $A_1,A_2,A_3$, where
    $A_i$ has cardinality $n^s_i$ for every $i\in[3]$, and $s_1+s_2+s_3=3-t$,
    we can compute the number of colorful $3$-sums in $A_1,A_2,A_3$ in time $O(n+n^{2(1-t/3)})$.
    We assume without the loss of generality that $s_1\leq s_2\leq s_3$, and consider the partition $I_1=\sset{1,2},I_2=\sset{3}$.
    This partition induces two sets $W_1=A_1\times A_2, W_2=A_3$, and two sets $R_1,R_2$, where
    \begin{align*}
        R_1=\sset{\sum_{a\in s}a\mid s\in W_1}\;, &  & R_2=\sset{\sum_{a\in s}a\mid s\in W_2}\;.
    \end{align*}
    Clearly, $\abs{R_2}\leq n$.
    Moreover, we have that $\abs{R_1}\leq n^{s_1+s_2}$, and using the following facts
    \begin{enumerate}
        \item $s_1+s_2 +s_3=3-t$
        \item $0\leq s_1\leq s_2\leq s_3\leq 1$
    \end{enumerate}
    we get that the expression $s_1+s_2=3-t-s_3$ is maximized when $s_1$ is large as possible, and $s_3$ is as small as possible, which occurs when $s_1=s_2=s_3=(3-t)/3=1-t/3$.
    Therefore the running time of constructing and sorting $R_1$ is at most $O(n^{2(1-t/3)}\log n)$, and the running time of iterating over every element $x$ in $R_2$ and checking if $-x\in R_2$ takes at most $O(\abs{R_2})=O(n)$ time. Overall, the running time is at most $O(n+n^{2(1-t/3)})$, which completes the proof of \Cref{thms:sum3}.
\end{proof}

\subsection{General Problems}
We characterize a class of problems for which the reduction from approximate counting to oracle detection implementation holds.
We think of a ``problem'' $\PP$ as a family of functions taking as input a $k$-partite graph $H=(V,E_H)$, and a coloring $\phi:V\to[k]$, where $V_i$ denotes the set of vertices in $V$ that are mapped to $i$ by $\phi$.
The ``problem'' $\PP$ outputs a subset $E\subseteq \bigtimes_{i\in[k]}V_i$.
We write $\PP(H)=E$ to denote that $E$ is the output of $\PP$ on input $H$.
Note that $E\subset V^k$, and for every $e\in E$ we have $\phi(e)=[k]$.
We denote by $\CL$ the class of problems for which the reduction from approximate counting to oracle detection holds.
We explain which problems are in the class $\CL$.
Fix some $k$-partite graph $H=(V,E_H)$ a $k$-coloring $\phi$ and let $E=\PP(H)$.
Define a new graph $H'$ obtained from $H$ by adding to it a new vertex $v'_i$ which is a duplicate of $v_i\in V_i$. That is, we extend $\phi$ such that $\phi(v'_i)=\phi(v_i)$.
We also add every edge $(u,v'_i)$ to the graph, where $u\in N(v_i)$.
Let $E'=\PP(H')$.
Then the problem $\PP$ is in the class $\CL$ if the following holds.
For every $e\in E$ we have $e\in E'$.
Moreover, for any $e\in E$, if $v_i\in e$ and $e'=(e\setminus\set{v_i})\cup\set{v'_i}$ then $e'\in E'$.
In words, in every element $e\in E$ we can replace every vertex $v_i$ with its duplicate $v'_i$ and get a new element $e'$ that is also in $E'$.
~\\The main result of this subsection is the following.

\begin{theorem}\label{thm:duplicatable}
    Fix a problem $\PP$ taken from the class $\CL$. Let $H=(V,E_H)$ be any graph, and let $\phi$ be a $k$-coloring of $V$.
    If there exists an algorithm that can distinguish whether $\PP(H)=E$ is empty or contains at least $\tau$ elements, in time $T$, then there exists an implementation of the detection oracle in time $\TO{T}$ over sets $U$ with $\gs(U)\leq \gs(V)/\tau$.
\end{theorem}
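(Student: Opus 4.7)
The plan is to generalize the argument used in the proof of \Cref{thm:dup clique} for $k$-cliques, where the key primitive is vertex duplication that amplifies the number of solutions in the queried induced subhypergraph by a factor of exactly $\mu(V)/\mu(U)$. The duplicatability property of $\PP$ is precisely what ensures this amplification factor is achieved. Let $G$ be the implicit $k$-partite hypergraph associated to $\PP$ on input $H$ (so the hyperedges of $G$ are the elements of $\PP(H)$). Given a query set $U\subseteq V$ with $\mu(U)\leq \mu(V)/\tau$, partitioned as $U_i = U \cap V_i$, the plan is to construct an auxiliary graph $F$, run the distinguishing algorithm $\textbf{A}$ on $F$, and translate the answer back to a yes/no answer for whether $G[U]$ contains a hyperedge.

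First I would construct $F$ as follows: for every $i\in[k]$, set $s_i = \lceil |V_i|/|U_i| \rceil$ and replace each vertex $v\in U_i$ by $s_i$ copies $v^{(1)},\ldots,v^{(s_i)}$, coloring each copy with color $i$. The edge set of $F$ is obtained by connecting two copies $u^{(a)}, v^{(b)}$ whenever $(u,v)$ is an edge of $H[U]$. Clearly $F$ has at most $|V_i|$ vertices of color $i$, so it is at most as large as $H$, and the construction runs in $O(|V|^2)$ time. Running $\textbf{A}$ on $F$ with threshold $\tau$ then takes $\TO{T}$ time. The oracle then reports a hyperedge in $G[U]$ if $\textbf{A}$ reports at least $\tau$ solutions, and reports no hyperedge if $\textbf{A}$ reports $0$ solutions.

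The correctness step is the heart of the argument, and is where I would apply the duplicatable property. Specifically, I need to show (i) if $G[U]$ has no hyperedge, then $\PP(F)$ is empty, and (ii) if $G[U]$ has a hyperedge, then $|\PP(F)|\geq \prod_i s_i \geq \mu(V)/\mu(U) \geq \tau$. Direction (i) follows from the opposite direction of the duplicatability property: any solution in $F$ projects (via the natural map sending each copy back to its original vertex) to a solution in $H[U]$, since $H[U]$ is obtained from $F$ by identifying copies. For direction (ii), I would apply the duplicatability property iteratively: starting from a solution $e\in \PP(H[U])$, I first duplicate one vertex of $e$ using one application of the definition of $\CL$, obtaining $s_i$ solutions with one vertex replaceable; I then iterate over all $k$ coordinates, each time multiplying the number of solutions by the number of copies in that color class, arriving at $\prod_i s_i$ distinct solutions in $F$ extending $e$. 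Since distinct hyperedges $e\in \PP(H[U])$ give disjoint families (they disagree on the underlying non-duplicated vertex in at least one coordinate), the total count is at least $\prod_i s_i \geq \tau$.

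The main subtlety is in the iterative duplication argument: the definition of $\CL$ only guarantees duplicatability for a single added copy of a single vertex at a time, so one must verify that the property is preserved under repeated duplication, i.e., that $F$ is obtainable from $H[U]$ by a sequence of single-vertex duplications of the form described in the definition of $\CL$, and the duplicatability property passes through each intermediate graph. Once this inductive step is spelled out, combining (i) and (ii) shows that the oracle's answer is correct, and plugging $\prod_i s_i = \mu(V)/\mu(U)\geq \tau$ into the running time of $\textbf{A}$ completes the proof.
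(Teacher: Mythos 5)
Your construction and argument follow essentially the same route as the paper's, but there is one concrete divergence worth flagging, and one step that both you and the paper gloss over.

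The divergence: in the paper's construction of the auxiliary graph (called $H'$ there), two copies $x', y'$ are made adjacent whenever their originals $x, y$ are adjacent in $H$ \emph{or when $x=y$}; you omit the second clause and only connect copies when the originals are adjacent in $H[U]$. Strictly following the definition of $\CL$ — which only says the duplicate inherits the neighborhood of the original, not an edge to the original itself — your version is arguably the more faithful one. But the paper's extra edges exist precisely because, for a problem like $k$-dominating set, the duplicate $v'$ must be dominated by the original $v$, which requires the edge $(v,v')$. This mismatch between the abstract $\CL$ definition and the construction used in the proof is a point you should reconcile: either the definition of $\CL$ should include the edge between a vertex and its duplicate, or the construction should follow the definition and one then checks that the target problems still belong to $\CL$.

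The shared step: your direction (i), that a solution in $\PP(F)$ projects to a solution in $\PP(H[U])$, is not actually supplied by the $\CL$ property. As stated, $\CL$ guarantees only that $E \subseteq E'$ and that copy-replacement of a solution is a solution; it says nothing about $E'$ not containing ``spurious'' solutions whose projections fail to be in $E_U$. The paper's proof also invokes membership in $\CL$ for this reverse direction without justification. Your instinct to worry about iterated duplication is the less pressing concern (it does go through, since the $\CL$ property is quantified over all graphs and colorings, so it holds for every intermediate graph); the projection step is the real hole, and for the theorem to hold in full generality the definition of $\CL$ presumably needs to be strengthened to a biconditional. Finally, a small simplification: the closing disjointness argument is unnecessary — a single $e \in \PP(H[U])$ already yields $\prod_i s_i \geq \tau$ distinct elements of $\PP(F)$, so there is no need to argue that different $e$'s give disjoint families.
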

\begin{corollary}\label{cor:duplicatable}
    For every problem $\PP$ in the class $\CL$, if there exists an algorithm that approximates the size of $\abs{E}$ in time $T$, assuming that $\abs{E}\geq \tau$, then there exists an implementation of the detection oracle in time $\TO{T}$ over sets $U$ with $\gs(U)\leq \gs(V)/\tau$.
\end{corollary}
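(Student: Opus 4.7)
The plan is to reduce Corollary \ref{cor:duplicatable} directly to Theorem \ref{thm:duplicatable} by showing that any approximation algorithm for $|E|$ in the claimed regime yields, at essentially no cost, a distinguishing algorithm between $|E|=0$ and $|E|\geq\tau$.

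First I would instantiate the given $\apm$-approximation algorithm $\AB$ with a fixed constant precision, say $\eps=1/4$. By hypothesis, $\AB$ runs in time $T$ on input $H$ and, whenever $|\PP(H)|\geq\tau$, outputs $\hat{s}$ with $(1-\eps)|E|\leq\hat{s}\leq(1+\eps)|E|$ with high probability. Under the natural output-sensitive convention used throughout this paper (cf.\ the $\apx$ interface of \Cref{lemma2:alg apx}), the algorithm returns either such an approximation or a failure symbol $\bot$, which we identify with the numerical value $0$. Thresholding the output at $\tau/2$ yields the distinguisher: declare ``at least $\tau$ elements'' if $\hat{s}\geq\tau/2$ and declare ``empty'' otherwise. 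Correctness on the side $|E|\geq\tau$ follows since $\hat{s}\geq(1-\eps)\tau\geq\tau/2$ with high probability; correctness on the side $|E|=0$ follows since $\hat{s}=0<\tau/2$ in that case.

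Next I would amplify the success probability, if it is not already $1-1/\poly(n)$, by running the distinguisher $O(\log n)$ times independently and taking the majority vote; standard Chernoff bounds push the error probability below any desired inverse polynomial at a multiplicative cost of $O(\log n)$ absorbed into the $\TO{T}$ bound. Finally I would feed this distinguishing algorithm into Theorem \ref{thm:duplicatable} to obtain a detection oracle for $\PP$ running in time $\TO{T}$ on every queried set $U$ with $\gs(U)\leq\gs(V)/\tau$, which is exactly the statement of the corollary.

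The only real obstacle is the conventional one of pinning down how the approximation algorithm behaves when $|E|<\tau$: formally, the $\apm$ guarantee is vacuous there, and one must either adopt the standard output-sensitive interface (approximation-or-$\bot$) or argue separately that the algorithm never overshoots by a factor of $\tau$ from a truly empty instance. Since every output-sensitive algorithm developed in the paper already conforms to the $\bot$ convention, this is a definitional matter rather than a mathematical one, and the remainder of the reduction is immediate from Theorem \ref{thm:duplicatable}.
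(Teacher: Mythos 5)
Your proof is correct and matches what the paper leaves implicit: the corollary is stated immediately after \Cref{thm:duplicatable} with no proof, and the intended reduction is exactly yours---threshold the approximation at $\tau/2$ to obtain a distinguisher between $|E|=0$ and $|E|\geq\tau$, then invoke the theorem. Your worry about the behavior of the approximation algorithm when $|E|<\tau$ is a real definitional point, and the paper's own $k$-clique instantiation (\Cref{thm:dup clique}) confirms your resolution: there the hypothesis is spelled out as ``if $m_F<\tau$ the algorithm either outputs a $\apm$ approximation for $m_F$ or outputs $\bot$,'' which is precisely the output-sensitive convention you invoke and under which the $\tau/2$ threshold cleanly separates the two sides (an empty instance yields either $\bot$ or a value in $(1\pm\eps)\cdot 0=\{0\}$, never something $\geq\tau/2$). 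The $O(\log n)$ amplification step is standard and correctly absorbed into the $\TO{T}$ bound.
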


\begin{proof}[Proof of \Cref{thm:duplicatable}]
    Let $H=(V,E_H)$ and let $\phi$ be a $k$-coloring of $V$, where $V_i$ denotes the set of vertices in $V$ that are mapped to $i$ by $\phi$.
    We also use $E=\PP(H)$ to denote a set of $k$ distinct vertices from $V$ where we would like to compute a $\apm$ approximation for $\abs{E}$.
Fix some positive value $\tau$.
    Let $U\subseteq V$ be such that $\gs(U)\leq \gs(V)/\tau$.
    We would like to determine if $E_U\subseteq E\cap U^k$.
    We assume we have an algorithm $\textbf{A}$ that takes as input a graph $H$ and can determine if $E=\PP(H)$ is empty or contains at least $\tau$ elements in time $T$.
    We have no guarantee on the output of the algorithm if $E$ is not empty and contains less than $\tau$ elements.

    To answer the query ``is $E_U$ empty'', we build a new auxiliary graph $H'$, and let $E'=\PP(H')$. We show that $E_U$ is empty iff $E'$ is empty. Moreover, if $E_U$ is not empty, then $E'$ contains at least $\tau$ elements.
    We then run the algorithm $\textbf{A}$ on the graph $H'$.
    If it outputs that $E'$ is empty, then we output that $E_U$ is empty as well, and if it outputs that $E'$ is not empty, then we output that $E_U$ is not empty.

    We explain how to construct the graph $H'$.
    For every $i$ let $U_i=U\cap V_i$. We define the vertex set $V'$ of $H'$ to be $V'=V_1'\sqcup V_2'\sqcup\ldots\sqcup V_k'$, where $V_i'$ is obtained by duplicating each vertex in $U_i$ for $\ceil*{\abs{V_i}/\abs{U_i}}$ times.
    For every pair of vertices $x',y'\in V'$, where $x'$ and $y'$ are duplicates of two vertices $x,y\in U$, we connect $x'$ and $y'$ if at least one of the following holds.
    \begin{enumerate}
        \item $x$ and $y$ are connected in $H$.
        \item $x=y$.
    \end{enumerate}
Constructing the graph $H'$ takes time $O(\abs{V}^2)$.
    Since $\PP$ is in the class $\CL$, we have that $E_U$ is empty if and only if $E'$ is empty.
    Moreover, if $E_U$ is not empty, then $E'$ contains at least $\prod_{i\in [k]}\abs{V_i}/\abs{U_i}=\gs(V)/\gs(U)$ elements.
    Therefore, for sets $U$ of measure $\gs(U)\leq \gs(V)/\tau$, we have that $E'$ is of size at least $\gs(V)/\gs(U)\geq \tau$, which means that if $E_U$ is not empty, then $E'$ contains at least $\tau$ elements.
    On the other hand if $E_U$ is empty, then $E'$ is empty as well.
    To see why, assume towards contradiction that $E'$ contains a an element $e=(u_1',u_2',\ldots,u_k')$, where $u_i'\in V_i'$ for every $i\in[k]$. Let $u_i$ be the ``original'' vertex in $U_i$ that $u_i'$ was duplicated from. Then $(u_1,u_2,\ldots,u_k)$ must be in $E_U$, since $\PP$ is in the class $\CL$, which is a contradiction.
\end{proof}

\bibliographystyle{alpha}

\addcontentsline{toc}{section}{Bibliography}
\bibliography{refs.bib}

\newcommand{\etalchar}[1]{$^{#1}$}
\begin{thebibliography}{BBGM24}

\bibitem[ABG{\etalchar{+}}18]{aliakbarpour2018sublinear}
Maryam Aliakbarpour, Amartya~Shankha Biswas, Themis Gouleakis, John Peebles,
  Ronitt Rubinfeld, and Anak Yodpinyanee.
\newblock Sublinear-time algorithms for counting star subgraphs via edge
  sampling.
\newblock {\em Algorithmica}, 80:668--697, 2018.

\bibitem[ADV{\etalchar{+}}25]{moreasym}
Josh Alman, Ran Duan, Virginia {Vassilevska Williams}, Yinzhan Xu, Zixuan Xu,
  and Renfei Zhou.
\newblock More asymmetry yields faster matrix multiplication.
\newblock In {\em Proceedings of the 2025 {ACM-SIAM} Symposium on Discrete
  Algorithms, {SODA} 2025}, page to appear. {SIAM}, 2025.

\bibitem[AKK18]{assadi2018simple}
Sepehr Assadi, Michael Kapralov, and Sanjeev Khanna.
\newblock A simple sublinear-time algorithm for counting arbitrary subgraphs
  via edge sampling.
\newblock {\em arXiv preprint arXiv:1811.07780}, 2018.

\bibitem[AL13]{abboud2013exact}
Amir Abboud and Kevin Lewi.
\newblock Exact weight subgraphs and the k-sum conjecture.
\newblock In {\em Automata, Languages, and Programming: 40th International
  Colloquium, ICALP 2013, Riga, Latvia, July 8-12, 2013, Proceedings, Part I
  40}, pages 1--12. Springer, 2013.

\bibitem[BBGM24]{bhattacharya2024faster}
Anup Bhattacharya, Arijit Bishnu, Arijit Ghosh, and Gopinath Mishra.
\newblock Faster counting and sampling algorithms using colorful decision
  oracle.
\newblock {\em ACM Transactions on Computation Theory}, 16(2):1--19, 2024.

\bibitem[BER21]{Biswas2021TowardsAD}
Amartya~Shankha Biswas, T.~Eden, and R.~Rubinfeld.
\newblock Towards a decomposition-optimal algorithm for counting and sampling
  arbitrary motifs in sublinear time.
\newblock {\em ArXiv}, abs/2107.06582, 2021.

\bibitem[CEV24]{CEW24}
Keren Censor{-}Hillel, Tomer Even, and Virginia {Vassilevska Williams}.
\newblock Fast approximate counting of cycles.
\newblock In {\em 51st International Colloquium on Automata, Languages, and
  Programming, {ICALP} 2024, July 8-12, 2024, Tallinn, Estonia}, volume 297 of
  {\em LIPIcs}, pages 37:1--37:20. Schloss Dagstuhl - Leibniz-Zentrum f{\"{u}}r
  Informatik, 2024.

\bibitem[DL21]{DellL21}
Holger Dell and John Lapinskas.
\newblock Fine-grained reductions from approximate counting to decision.
\newblock {\em {ACM} Trans. Comput. Theory}, 13(2):8:1--8:24, 2021.

\bibitem[DLM22]{DellLM22}
Holger Dell, John Lapinskas, and Kitty Meeks.
\newblock Approximately counting and sampling small witnesses using a colorful
  decision oracle.
\newblock {\em {SIAM} J. Comput.}, 51(4):849--899, 2022.

\bibitem[DLM24]{DellLM24}
Holger Dell, John Lapinskas, and Kitty Meeks.
\newblock Nearly optimal independence oracle algorithms for edge estimation in
  hypergraphs.
\newblock In Karl Bringmann, Martin Grohe, Gabriele Puppis, and Ola Svensson,
  editors, {\em 51st International Colloquium on Automata, Languages, and
  Programming, {ICALP} 2024, July 8-12, 2024, Tallinn, Estonia}, volume 297 of
  {\em LIPIcs}, pages 54:1--54:17. Schloss Dagstuhl - Leibniz-Zentrum f{\"{u}}r
  Informatik, 2024.

\bibitem[EG04]{eisenbrand2004complexity}
Friedrich Eisenbrand and Fabrizio Grandoni.
\newblock On the complexity of fixed parameter clique and dominating set.
\newblock {\em Theoretical Computer Science}, 326(1-3):57--67, 2004.

\bibitem[ELRS17]{eden2017approximately}
Talya Eden, Amit Levi, Dana Ron, and C~Seshadhri.
\newblock Approximately counting triangles in sublinear time.
\newblock {\em SIAM Journal on Computing}, 46(5):1603--1646, 2017.

\bibitem[ER18]{EdenR18}
Talya Eden and Will Rosenbaum.
\newblock Lower bounds for approximating graph parameters via communication
  complexity.
\newblock In Eric Blais, Klaus Jansen, Jos{\'{e}} D.~P. Rolim, and David
  Steurer, editors, {\em Approximation, Randomization, and Combinatorial
  Optimization. Algorithms and Techniques, {APPROX/RANDOM} 2018, August 20-22,
  2018 - Princeton, NJ, {USA}}, volume 116 of {\em LIPIcs}, pages 11:1--11:18.
  Schloss Dagstuhl - Leibniz-Zentrum f{\"{u}}r Informatik, 2018.

\bibitem[ERS20]{EdenRS20}
Talya Eden, Dana Ron, and C.~Seshadhri.
\newblock On approximating the number of k-cliques in sublinear time.
\newblock {\em {SIAM} J. Comput.}, 49(4):747--771, 2020.

\bibitem[FGP20]{Fichtenberger2020SamplingAS}
Hendrik Fichtenberger, Mingze Gao, and Pan Peng.
\newblock Sampling arbitrary subgraphs exactly uniformly in sublinear time.
\newblock {\em ArXiv}, abs/2005.01861, 2020.

\bibitem[FKR24]{FischerKR24}
Nick Fischer, Marvin K{\"{u}}nnemann, and Mirza Redzic.
\newblock The effect of sparsity on \emph{k}-dominating set and related
  first-order graph properties.
\newblock In David~P. Woodruff, editor, {\em Proceedings of the 2024 {ACM-SIAM}
  Symposium on Discrete Algorithms, {SODA} 2024, Alexandria, VA, USA, January
  7-10, 2024}, pages 4704--4727. {SIAM}, 2024.

\bibitem[JX22]{jin2022tight}
Ce~Jin and Yinzhan Xu.
\newblock Tight dynamic problem lower bounds from generalized bmm and omv.
\newblock In {\em Proceedings of the 54th Annual ACM SIGACT Symposium on Theory
  of Computing}, pages 1515--1528, 2022.

\bibitem[KKR04]{kaufman2004tight}
Tali Kaufman, Michael Krivelevich, and Dana Ron.
\newblock Tight bounds for testing bipartiteness in general graphs.
\newblock {\em SIAM Journal on computing}, 33(6):1441--1483, 2004.

\bibitem[M{\"{u}}l06]{muller}
Moritz M{\"{u}}ller.
\newblock Randomized approximations of parameterized counting problems.
\newblock In Hans~L. Bodlaender and Michael~A. Langston, editors, {\em
  Parameterized and Exact Computation, Second International Workshop, {IWPEC}
  2006, Z{\"{u}}rich, Switzerland, September 13-15, 2006, Proceedings}, volume
  4169 of {\em Lecture Notes in Computer Science}, pages 50--59. Springer,
  2006.

\bibitem[NP85]{nesetril}
Jaroslav Ne{\v{s}}et{\v{r}}il and Svatopluk Poljak.
\newblock On the complexity of the subgraph problem.
\newblock {\em Comment. Math. Univ. Carol.}, 26(2):415--419, 1985.

\bibitem[PR02]{parnas2002testing}
Michal Parnas and Dana Ron.
\newblock Testing the diameter of graphs.
\newblock {\em Random Structures \& Algorithms}, 20(2):165--183, 2002.

\bibitem[PW10]{puatracscu2010possibility}
Mihai P{\u{a}}tra{\c{s}}cu and Ryan Williams.
\newblock On the possibility of faster sat algorithms.
\newblock In {\em Proceedings of the twenty-first annual ACM-SIAM symposium on
  Discrete Algorithms}, pages 1065--1075. SIAM, 2010.

\bibitem[Ren21]{RenMM2021}
Hanlin Ren.
\newblock {Matrix Multiplication Exponent Calculator}.
\newblock \url{https://github.com/hanlin-ren/matmulexp}, August 2021.

\bibitem[T{\v{e}}t22]{tvetek2022approximate}
Jakub T{\v{e}}tek.
\newblock Approximate triangle counting via sampling and fast matrix
  multiplication.
\newblock In {\em 49th International Colloquium on Automata, Languages, and
  Programming (ICALP 2022)}. Schloss Dagstuhl-Leibniz-Zentrum f{\"u}r
  Informatik, 2022.

\bibitem[{Vas}18]{vsurvey}
Virginia {Vassilevska Williams}.
\newblock On some fine-grained questions in algorithms and complexity.
\newblock In {\em Proceedings of the ICM}, volume~3, pages 3431--3472. World
  Scientific, 2018.

\bibitem[VV86]{ValiantV86}
Leslie~G. Valiant and Vijay~V. Vazirani.
\newblock {NP} is as easy as detecting unique solutions.
\newblock {\em Theor. Comput. Sci.}, 47(3):85--93, 1986.

\bibitem[VW09]{vassilevska2009finding}
Virginia Vassilevska and Ryan Williams.
\newblock Finding, minimizing, and counting weighted subgraphs.
\newblock In {\em Proceedings of the forty-first annual ACM symposium on Theory
  of computing}, pages 455--464, 2009.

\bibitem[VW18]{focsy}
Virginia {Vassilevska Williams} and R.~Ryan Williams.
\newblock Subcubic equivalences between path, matrix, and triangle problems.
\newblock {\em J. {ACM}}, 65(5):27:1--27:38, 2018.

\end{thebibliography}

\section{Proof of \Cref{lemma:simp B}} \label{sec:proof of B}
This section is dedicated to the proof of \Cref{lemma:simp B}, which follows the same lines as in \cite{CEW24}, and we repeat it here with only minor modifications.
\LemCoreSimpB* 

For this subsection, we use $q\triangleq p^k$.
We now prove the following lemma on the probability that the algorithm $\algCorezB{V,\Lambda}$ outputs a good approximation for the number of hyperedges in $G$, following the line of proof in \cite{CEW24}.
\begin{lemma}[Induction Hypothesis]\label{lemma2:ih}
    Given a hypergraph $G$ and $\eps$, we set $p=1/2$, and $\eps'=\neweps$.
Fix $\K$ where $\K=o(n)$.
Define the following events:
    \begin{align*}
        \Emia\triangleq\set {\algCorezB{V,\Lambda}=m_V(1\pm \eps')^{\DD} \pm \DD[\Lambda] \cdot(2\K\cdot\Lambda/\eps') }\;.
\end{align*}
We have that $\Pr{\Emia}\geq 1- \rPz$.
\end{lemma}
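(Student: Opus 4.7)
The natural plan is to prove the statement by induction on $\DD[\Lambda]$, using the recursive structure of $\algCorezNPB$. For the base case, when $\Lambda\leq 1$ (so $\DD[\Lambda]=0$), the algorithm returns $\algPrev(V,\eps')$ directly; by \Cref{thm:detect to count} this yields $m_V\cdot(1\pm \eps')$ with probability at least $1-1/n^4$, which is stronger than the claimed bound (here the additive $\Qd\Lambda/\eps'$ slack is not needed). Since $1/n^4\leq \rPz$, the base case is established.

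For the inductive step ($\Lambda>1$), I would introduce three ``good'' events that capture the three sources of randomness at this level of the recursion. Let $\Em{\mathrm{find}}$ be the event that the call $\FindHalg(V,\Lambda)$ returns a set $\vl$ containing all $\Lambda$-heavy vertices and no $\llow$-light vertices; by \Cref{thm2:find heavy} this holds with probability $\geq 1-1/n^4$. Let $\Em{\mathrm{count}}$ be the event that $\htt_\Lambda=m_V(\vl)\cdot(1\pm\eps')$; conditioned on $\Em{\mathrm{find}}$, every vertex of $\vl$ is $\llow$-heavy (with $\llow = \Lambda/2$ here in the Cialg call), so \Cref{prop:count main} yields this with probability $\geq 1-\abs{\vl}/n^5\geq 1-1/n^4$. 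Finally let $\Em{\mathrm{samp}}$ be the event bounding the deviation $\lvert m_U-q\cdot m_{V\setminus\vl}\rvert$ where $q=p^k$. Since every vertex of $V\setminus\vl$ has degree at most $\Lambda$ in $G[V\setminus\vl]$, the random variable $m_U$ is a sum of indicators with variance $\mathsf{Var}[m_U]\leq O(q\cdot m_{V\setminus\vl}\cdot \Lambda)$ (controlled by the number of edge pairs sharing a vertex, i.e.\ $\sum_v d_v^2\leq k\cdot m\cdot \Lambda$). Chebyshev's inequality then gives that with probability $\geq 1-O(1/\Qd)$,
\[
\lvert m_U-q\cdot m_{V\setminus\vl}\rvert\leq \K\cdot q\cdot \Lambda/\eps'.
\]

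Conditioning on all three events plus the inductive hypothesis $\Em{\mathrm{ind}}$ for the recursive call $\algCorezB{U,\Lambda q}$, which (noting $\DD[\Lambda q]=\DD[\Lambda]-1$) gives
\[
\htt_U = m_U\cdot (1\pm\eps')^{\DD[\Lambda]-1}\pm (\DD[\Lambda]-1)\cdot (2\K\cdot \Lambda q/\eps'),
\]
I would then substitute and expand. Writing $m_V=m_V(\vl)+m_{V\setminus\vl}$ (valid on $\Em{\mathrm{find}}$), the output $\htt_\Lambda+\htt_U/q$ becomes
\[
m_V(\vl)(1\pm\eps')+m_{V\setminus\vl}(1\pm\eps')^{\DD[\Lambda]-1}\pm\frac{\K\Lambda}{\eps'}(1\pm\eps')^{\DD[\Lambda]-1}\pm (\DD[\Lambda]-1)\cdot 2\K\Lambda/\eps'.
\]
Using $(1\pm\eps')\subseteq(1\pm\eps')^{\DD[\Lambda]}$ on the heavy term and bounding the stray $(1\pm\eps')^{\DD[\Lambda]-1}$ factor on the Chebyshev term by $2$ (since $(1+\eps')^{\DD[\Lambda]}\leq 2$ by choice of $\eps'$), the additive errors collapse to at most $\DD[\Lambda]\cdot 2\K\Lambda/\eps'$ as required. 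The union bound over $\overline{\Em{\mathrm{find}}},\overline{\Em{\mathrm{count}}},\overline{\Em{\mathrm{samp}}},\overline{\Em{\mathrm{ind}}}$ yields a failure probability at most $O(1/n^4)+O(1/\Qd)+\rPz\cdot(\text{IH factor})$, and the geometric growth of the additive slack $\Qd$ ensures this telescopes to $\rPz$.

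The principal obstacle, as in \cite{CEW24}, will be the careful bookkeeping of the error accumulation: the multiplicative factor must stay $(1\pm\eps')^{\DD[\Lambda]}$ rather than blow up to $(1\pm\eps')^{2\DD[\Lambda]}$, which requires that the Chebyshev deviation on $m_U$ be folded into the \emph{additive} budget $\DD\cdot 2\K\Lambda/\eps'$ rather than into a new multiplicative factor. This is why the Chebyshev tail is chosen at scale $\K\Lambda/\eps'$ (matching the per-level additive budget) rather than at scale proportional to $m_{V\setminus\vl}$, and why the base case must tolerate a $(1\pm\eps')$ factor but zero additive error.
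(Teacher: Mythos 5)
Your overall scaffolding — induction on recursion depth, decomposition into heavy/sample/IH events, variance via counting shared-vertex edge pairs, interval algebra for the final composition — matches the paper's proof. However, there is a genuine gap in the sampling event. You define $\Em{\mathrm{samp}}$ as $\lvert m_U - q\cdot m_{V\setminus\vl}\rvert \leq \K q\Lambda/\eps'$, a purely additive threshold. With $\mathrm{Var}[m_U] = \Theta(q\cdot m_{V\setminus\vl}\cdot\Lambda)$, Chebyshev at this threshold gives a failure probability of order $m_{V\setminus\vl}\,\eps'^2/(q\K^2\Lambda)$, which is \emph{not} uniformly small: it grows linearly with $m_{V\setminus\vl}$, and nothing in the setup caps $m_{V\setminus\vl}$ at $O(q\K^2\Lambda/\eps'^2)$ (a graph where every vertex is $\Lambda$-light can still have $m_{V\setminus\vl}$ as large as $\Theta(n\Lambda)$, far beyond that bound). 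The paper sidesteps this by taking a \emph{hybrid} threshold: $\Ems$ allows $m_U/q = m_{V'}(1\pm\eps')\pm\K\Lambda/\eps'$, i.e., a multiplicative $\eps'$-deviation \emph{in addition to} the additive one. Then the threshold $\sigma = q(m_{V'}\eps' + \K\Lambda/\eps')$ satisfies $\sigma^2 \geq 2q^2 m_{V'}\K\Lambda$ by AM--GM on the cross term, and Chebyshev yields the uniform bound $k/(2q\K)$ independent of $m_{V'}$. Your final paragraph explicitly argues for dropping the multiplicative part (``\ldots chosen at scale $\K\Lambda/\eps'$\ldots rather than at scale proportional to $m_{V\setminus\vl}$''), and that is precisely the step that fails. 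Note that keeping the multiplicative $(1\pm\eps')$ costs nothing: it composes with the IH's $(1\pm\eps')^{\DD-1}$ on $m_U$ to give exactly $(1\pm\eps')^{\DD}$, which is what the statement budgets for, so there is no tension here.

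Two smaller points. Your base-case comparison is inverted: with $\DD[\Lambda]=0$ the event asserts the output is exactly $m_V$ (since $(1\pm\eps')^0 = 1$ and the additive slack is $0$), so $m_V(1\pm\eps')$ is \emph{weaker} than the claimed bound, not stronger; the paper's own treatment of this corner (asserting exactness) is also somewhat glossed, so this is not unique to you, but the direction of the comparison should be flipped. And ``geometric growth of the additive slack'' is a misstatement: the additive budget accumulates \emph{linearly} in $\DD[\Lambda]$, one $2\K\Lambda/\eps'$ increment per level.
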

Next, we prove \Cref{lemma:simp B} using \Cref{lemma2:ih}.
\begin{proof}[Proof of \Cref{lemma:simp B} using \Cref{lemma2:ih}]
    Set $\K=\Qd/\DD$. We get that
    \begin{align*}
        \Pr{\Emia}\geq 1- \rPz=1-\frac{4k\DD^2}{q\Qd}
        \geq 1-\rPzB\;,
    \end{align*}
    which completes the proof.
\end{proof}
We prove \Cref{lemma2:ih} using induction on the depth of the recursion.
\paragraph*{Base Case: ($\DD=0$).}
In this case we simply return $m_V$.
The induction hypothesis holds trivially, as the algorithm always outputs the exact count of hyperedges in $G$.
\paragraph*{Induction Hypothesis:}
The induction hypothesis is that $\Pr{\Emia}\geq 1- \rPz$.

~\\We also define
\begin{align*}
    \Emib\triangleq \set{\algCorezB{U,\Lambda\cdot q}=m_U\apmp^{\DDF} \pm q\cdot\DDF\cdot (2\K\Lambda/\eps')}\;.
\end{align*}
For a subset of vertices $U\subseteq V$, we use the notation $m_U$ to denote the number of hyperedges in $G$ that are contained in $U$.
The statement $\Pr{\Emib}\geq 1-\frac{12\DDF}{\K q}$, is the induction hypothesis on the random subset of vertices $U$, which is defined in \Cref{line:5}.
To prove the step, we assume the induction hypothesis holds on $U$,
and show that the induction hypothesis also holds on $V$.

\paragraph*{Step:}
We first define some notation and events.
We have three subset of vertices: $U\subseteq V'\subseteq V$, where $V$ is the vertex set of $G$, $V'$ is the set of vertices without the $\Lambda$-heavy vertices, and $U$ is a uniformly random subset of $V'$, obtained by keeping each vertex from $V'$ independently with probability $p$.
Let $\htt_V$ denote the output of $\algCorezB{V,\Lambda}$, and let $\htt_U$ denote the output of $\algCorezB{U,\Lambda \cdot q}$.
Recall that we use $m_\Lambda$ to denote $m_V(V_\Lambda)$, i.e., the number of hyperedges (contained in $V$) that contain at least one vertex from $V_\Lambda$.
Let $\htt_\Lambda$ denote the output of $\Cialg(V_\Lambda, \Lambda)$.
Note that $\DD\geq \DDF+1$ (unless $\DD=0$, in which case $\DD=\DDF=0$).
\paragraph{Intuition.}
We compute two values, $\htt_\Lambda$ and $\htt_U$.
We then output $\htt_v=\htt_\Lambda + \htt_U/q$ as an approximation for $m_V$.
By the guarantees of the two procedures $\FindHalg$ and $\Cialg$,
we have that $\htt_\Lambda$ is a good approximation for $m_\Lambda$. What is left is to show that $\htt_U/q$ is a good approximation for $m_{V'}$. We split this into two parts. First, we show that $m_U/q$ is ``close'' to the value of $m_{V'}$ (See \Cref{cor42:e3}).
Next, we use the induction hypothesis, to show that $\htt_U$ is a good approximation of $m_U$. We need to show that the composition of these approximations also serves as a good approximation for $m_V$.

Let $\Emh$ denote the event that ${V_\Lambda}$ contains a superset of the $\Lambda$-heavy vertices, without any $\Lambda/\clog $-light vertices, and that $\htt_\Lambda=\apmp m_\Lambda$.
Note that $\Pr{\Emh}\geq \fpr{2}{4}$, which follows from the guarantees of the two procedures $\FindHalg$ and $\Cialg$.
Let $\Ems\triangleq\set{m_U/q=m_{V'}\apmp\pm \K \cdot \Lambda/\eps'}$. This denotes the event that the random variable $t_U$ is not too far from its expectation.
We prove that $\Pr{\Ems}\geq 1-\frac{6}{\K q }$.
Using those facts, we can complete the proof of the lemma by showing that
\begin{align}
    \set{\Emib\cap \Ems\cap \Emh}\subseteq \Emia\;. \label{eq:iterval art}
\end{align}
\begin{proof}[Proof of \Cref{eq:iterval art}]
\newcommand{\sm}{\sigma_{\times}}
    \newcommand{\sa}{\sigma_{+}}
Define three intervals $\II_V$, $\II_{V'}$ and $\II_U$ as follows:
    \begin{align*}
        \II_V    & \triangleq m_V(1\pm \eps')^{\DD} \pm 2\DD\K \cdot \Lambda/\eps'            \\
        \II_{V'} & \triangleq m_{V'}(1\pm \eps')^{\DD} \pm 2\DD\K \cdot \Lambda/\eps'         \\
        \II_U    & \triangleq m_U(1\pm \eps')^{\DDF} \pm 2\DDF\K \cdot (\Lambda\cdot q)/\eps'
    \end{align*}
    Recall that
    \begin{align*}
        \Emia & \triangleq \set{\htt_V=m_V(1\pm \eps')^{\DD} \pm 2\DD\K \cdot \Lambda/\eps' }              \\
        \Emib & \triangleq \set{\htt_U=m_U\apmp^{\DDF} \pm 2\DDF\cdot \K \cdot  (\Lambda\cdot q)/\eps'}\;,
    \end{align*}
    where the event $\Emia$ can also be written as $\set{\htt_V\in \II_V}$ and the event $\Emib$ can be written as $\set{\htt_U\in \II_U}$, which follows by their definitions.
Recall that $\haT_V=\htt_\Lambda+\htt_U/q$.
    We prove the following claim:
If $\htt_\Lambda$ falls inside the interval $t_\Lambda\apmp$, and $\htt_U/q$ falls inside the interval $\II_{V'}$, then $\haT_V$ falls inside the interval $\II_V$.
To prove this, we show that $t_\Lambda\apmp + \II_{V'}\subseteq \II_V$:
    \begin{align*}
        m_\Lambda(1\pm \eps') + \II_{V'} & = m_\Lambda(1\pm \eps') +m_{V'}(1 \pm \eps')^{\DD}    \pm \brak{2\DD \K\cdot \Lambda/\eps'}                  \\
                                         & \subseteq  (m_\Lambda + m_{V'}) (1\pm \eps')^{\DD} \;\quad \pm \brak{2\DD \K\cdot \Lambda/\eps'}             \\
                                         & =  m_V (1\pm \eps')^{\DD}                       \;\quad\quad\quad\quad \pm \brak{2\DD \K\cdot \Lambda/\eps'}
        = \II_V\;.
    \end{align*}

    In what follows, we show that each of the events $\set{\htt_\Lambda\in m_\Lambda\apmp}$, and $\set{\htt_U/q\in \II_{V'}}$ contains the event $\set{\Emib\cap \Ems\cap \Emh}$. In other words, we show that if the event $\set{\Emib\cap \Ems\cap \Emh}$ occurs, then so do the events $\set{\htt_\Lambda\in m_\Lambda\apmp}$ and $\set{\htt_U/q\in \II_{V'}}$.

    By definition, the event that $\set{\htt_\Lambda\in m_\Lambda\apmp}$ contains the event $\Emh$.
    We show that event that $\set{\htt_U/q\in \II_{V'}}$ contains the event $\set{\Emib\cap \Ems}$, and complete the proof.
    We assume that the event $\set{\Emib\cap \Ems\cap \Emh}$ occurs, and show that the event $\set{\htt_U/q\in \II_{V'}}$ occurs:
    \begin{align*}
        \dfrac{\htt_U}{q}
         & \overset{(i)}{\in} \dfrac{\II_U}{q}=
        \dfrac{m_U\apmp^{\DDF}\pm \brak{2\DDF\K q \Lambda/\eps'}}{q}                                                               \\
         & = \dfrac{m_U\apmp^{\DDF}}{q}                                                   & \pm\brak{\DDF\cdot(2\K \Lambda/\eps')} \\
         & \overset{(ii)}{=}\; \sbrak{m_{V'} \apmp\pm \K \cdot \Lambda/\eps'}\apmp^{\DDF} & \pm\brak{\DDF\cdot(2\K \Lambda/\eps')} \\
         & \subseteq m_{V'}\apmp^{\DD} \pm \brak{\K\Lambda/\eps' }\cdot \apmp^{\DDF}      & \pm\brak{\DDF\cdot(2\K \Lambda/\eps')} \\
         & \overset{(iii)}{\subseteq} m_{V'}\apmp^{\DD} \pm \brak{2\K\Lambda/\eps'}       & \pm\brak{\DDF\cdot(2\K \Lambda/\eps')} \\
         & = m_{V'}\apmp^{\DD} \pm \brak{2\K\Lambda/\eps'}(1+ \DDF)                                                                \\
         & = m_{V'}\apmp^{\DD} \pm\brak{\DD\cdot2\K \Lambda/\eps'} = \II_{V'}\;.
\end{align*}
    The $(i)$ inequality follows since we assumed the event ${\Emib}$ occurs.
    The $(ii)$ inequality follows since we assumed the event ${\Ems}$ occurs.
    The $(iii)$ transition is true because we chose sufficiently small $\eps'$ such that $(1\pm\eps')^{\DDF}$ is contained in both the interval $[-1/2,3/2]$ and the interval $[1-\eps/2,1+\eps/2]$.
    This completes the proof.
\end{proof}
\newcommand{\zza}{1-\frac{k}{2q\cdot \K}}
\newcommand{\zzb}{1-\frac{4k\cdot \DDF+1}{q\cdot \K}}

To complete the proof of \Cref{lemma2:ih},
we are left with proving that
\begin{align*}
    \Pr{\Emib\cap \Ems\cap \Emh}\geq 1- \rPz\;,
\end{align*}
which proves that $\set{\Emia\geq \rPz}$ and therefore completes the proof of \Cref{lemma2:ih}.
To show this, we need the following claim.
\begin{claim}\label{claim:order}
    Define two events as follows. $A_1=\set{\Ems\mid \Emh}, $ and $A_2=\set{\Emib\mid \Emh}$.
    Then, the following inequalities hold:
    \begin{enumerate}[label=(\arabic*)]
        \item $\Pr{A_1}\geq \zza$.
        \item $\Pr{A_2}\geq \zzb$.
        \item $\Pr{A_1\cap A_2}\geq 1- {\frac{4k\cdot \DD-3k + 1}{q\cdot  \K}}$.
    \end{enumerate}
\end{claim}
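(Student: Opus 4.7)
The plan is to prove the three bounds in order, handling (1) by a direct Chebyshev-style concentration argument, (2) by invoking the induction hypothesis on the recursive subcall, and (3) by combining the first two via a union bound and arithmetic in $\DD$ and $\DDF$. Throughout, I will exploit the fact that once we condition on $\Emh$, the set $V'=V\setminus V_\Lambda$ has maximum degree below $\Lambda/\clog$ (since every vertex of degree at least $\Lambda/\clog$ was placed in $V_\Lambda$), which is what makes the sampling step well-concentrated.

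For part (1), condition on $\Emh$ and write $m_U=\sum_{e\in E_{V'}} X_e$, where $X_e$ is the indicator that hyperedge $e\in G[V']$ survives into $G[U]$. By linearity $\Exp{m_U}=q\cdot m_{V'}$, and the covariance $\mathrm{Cov}(X_e,X_f)$ is nonzero only when $e\cap f\neq\emptyset$. A standard second-moment computation, bounding contributions by the number of hyperedge pairs sharing $s\geq 1$ vertices, gives $\Var{m_U}\leq O\!\left(q\cdot m_{V'}\cdot \Lambda/\clog\right)$, where the factor $\Lambda/\clog$ comes from the degree bound forced by $\Emh$. Applying Chebyshev with deviation $q\cdot \K\Lambda/\eps'$, and plugging in the definitions of $\eps'$, $\clog$, and our assumption that $\K=o(n)$, yields $\Pr{\Ems^c\mid \Emh}\leq k/(2q\K)$, which is exactly the desired bound. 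The main care needed here is tracking that $m_{V'}\leq m_V$ and that the ratio $\Lambda/(\clog\cdot (\K\Lambda/\eps')^2)$ produces the claimed $1/(q\K)$ scaling.

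For part (2), observe that given $\Emh$, the set $V_\Lambda$ (and hence $V'$) is determined, and the random set $U=V[p]-V_\Lambda$ is obtained by independent coin flips on $V'$. Conditioned further on $U$, the recursive call $\algCorezB{U,\Lambda q}$ uses fresh randomness and has recursion depth $\DDF=\DD-1$ (the base case $\DD=0$ is handled trivially). The induction hypothesis applied to this recursive call, with the same parameter $\K$, gives
\[
\Pr{\Emib\mid \Emh, U}\;\geq\; 1-\frac{4k\cdot \DDF}{q\cdot \K}.
\]
Integrating over the distribution of $U$ conditional on $\Emh$ preserves this bound, and the extra additive $1/(q\K)$ in the statement provides slack that absorbs any low-order terms from the conditioning, giving $\Pr{A_2}\geq 1-(4k\DDF+1)/(q\K)$.

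For part (3), a straightforward union bound yields
\[
\Pr{A_1\cap A_2}\;\geq\; 1-\Pr{A_1^c}-\Pr{A_2^c}\;\geq\; 1-\frac{k/2+4k\DDF+1}{q\cdot \K}.
\]
Using $\DDF=\DD-1$ gives $k/2+4k\DDF+1=4k\DD-7k/2+1$, which is strictly smaller than $4k\DD-3k+1$, so the claimed bound follows. The main obstacle will be the variance computation in part (1): I need to confirm that the dominant contribution to $\Var{m_U}$ — the $s=1$ overlap term, of order $k^2\,m_{V'}\,\Delta'\,p^{2k-1}$ — produces the right dependence on $\Lambda$, $\clog$, and $q$ so that the Chebyshev tail collapses to $k/(2q\K)$. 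The other two parts are essentially bookkeeping, combining the induction hypothesis with a union bound.
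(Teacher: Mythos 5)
Your parts (2) and (3) are sound. Part (2) takes a slightly different route from the paper — you condition on $U$ and invoke the induction hypothesis on the recursive subcall with fresh randomness, whereas the paper lower-bounds $\Pr{\Emib\mid\Emh}$ by $\Pr{\Emib\cap\Emh}\geq 1-\Pr{\BEmib}-\Pr{\BEmh}$ and absorbs $\Pr{\BEmh}\leq 2/n^4$ into the additive $1/(q\K)$ using $\K=o(n)$ — both approaches are valid, though your stated reason for the $+1/(q\K)$ slack (``absorbs any low-order terms from the conditioning'') does not quite identify where it is actually needed; in the paper's argument it is needed precisely to absorb $\Pr{\BEmh}$, while in your conditioning argument it is not needed at all. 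Part (3) is just arithmetic bookkeeping and you do it correctly (in fact your union bound with $k/2$ gives something strictly stronger than the claimed $\frac{4k\DD-3k+1}{q\K}$, which is fine).

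The genuine problem is in your part (1), specifically the degree bound you extract from $\Emh$. You assert that conditioning on $\Emh$ forces every vertex of $V'=V\setminus V_\Lambda$ to have degree below $\Lambda/\clog$, ``since every vertex of degree at least $\Lambda/\clog$ was placed in $V_\Lambda$.'' That is not what $\Emh$ guarantees. The event $\Emh$ says $V_\Lambda$ contains \emph{all} $\Lambda$-heavy vertices and \emph{no} $(\Lambda/\clog)$-light vertices; it is silent about vertices with degree in the intermediate range $[\Lambda/\clog,\Lambda)$, which may land in either $V_\Lambda$ or $V'$. The only degree bound one can draw for $V'$ under $\Emh$ is $\Delta_{V'}\leq\Lambda$. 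Consequently the correct variance bound (via the paper's Claim on the variance, $\Var{m_U}\leq k\Delta_{V'}\Exp{m_U}$) is $\Var{m_U}\leq k\Lambda\cdot q\,m_{V'}$, not $O(q\,m_{V'}\Lambda/\clog)$. If you actually ran the Chebyshev computation with your claimed bound, you would obtain roughly $O(1/(q\K\clog))$ rather than the quoted $k/(2q\K)$, so the numbers would not line up with the claim you are trying to prove; with the corrected bound $\Delta_{V'}\leq\Lambda$ and the deviation $\sigma = q(m_{V'}\eps'+\K\Lambda/\eps')$ (so $\sigma^2\geq 2q^2 m_{V'}\K\Lambda$), Chebyshev gives exactly $\frac{k\Lambda\cdot q m_{V'}}{2q^2 m_{V'}\K\Lambda}=\frac{k}{2q\K}$, which is the intended bound. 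This is a direction-of-error that does not accidentally save your argument: a too-strong degree hypothesis is simply false, even if plugging it in would have produced a (gratuitously) stronger conclusion.
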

Proving this claim and using it to prove \Cref{lemma2:ih} mostly involves algebraic manipulations. The only non-trivial part is to show $(1)$, which we prove in \Cref{cor42:e3}.
We first prove \Cref{lemma2:ih} using \Cref{claim:order}.
\begin{proof}[Proof of \Cref{lemma2:ih} using \Cref{claim:order}]
    \begin{align*}
             & \Pr{\Emib\cap \Ems\cap \Emh}                             \\
        \geq & \Pr{\Emib\cap \Ems\mid \Emh}\cdot \Pr{\Emh}              \\
        =    & \Pr{A_1\cap A_2\mid \Emh}\cdot \Pr{\Emh}                 \\
        \geq & (1- \frac{4k\DD -3k + 1}{ q\cdot \K })\cdot (\fpr{2}{4})
        \geq  1- \rPz\;.
    \end{align*}
\end{proof}
Next, we prove that $(3)$ holds, given that $(1)$ and $(2)$ hold.
\begin{align*}
    \Pr{A_1\cap A_2}
    \geq 1-(\Pr{\overline{A_1}} + \Pr{\overline{A_2}})
    \geq 1- \frac{k + 4k\DDF + 1}{q\cdot \K }
    = 1- \frac{4k\DD -3k + 1}{ q\cdot \K }\;.
\end{align*}
The first inequality follows by union bound.
The first inequality follows by parts $(1)$ and $(2)$ of \Cref{claim:order}.
The last inequality follows as $\DD= \DDF+1$.
We are left with proving $(1)$ and $(2)$, which we prove in \Cref{claim:z00,cor42:e3} respectively.
\begin{claim}\label{claim:z00}
    $\Pr{A_2}\geq \zzb$.
\end{claim}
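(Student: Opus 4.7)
The task is to upper-bound $\Pr{\overline{\Emib} \mid \Emh}$ by $(4k\DDF + 1)/(q\K)$, where $\Emh$ is decided by the earlier calls $\FindHalg$ and $\Cialg$ in Lines 1--2 of \Cref{alg:template recB}, and $\Emib$ concerns the recursive call on the random subset $U = V[p] - V_\Lambda$. The plan is to apply the induction hypothesis (\Cref{lemma2:ih}) to that recursive call and combine it with a conditional-independence argument that lets the bound pass through the conditioning on $\Emh$.

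First, the algorithm's randomness decomposes into four independent blocks: $R_1$ used by $\FindHalg$, $R_2$ used by $\Cialg$, $R_3$ used to sample $U$, and $R_4$ used by the recursive call. Because $V_\Lambda$ depends only on $R_1$, the event $\Emh$ is a function of $(R_1, R_2)$ alone, whereas $\Emib$ is a function of $(R_1, R_3, R_4)$. Hence, conditional on the value of $V_\Lambda$ (equivalently, on $R_1$), the events $\Emh$ and $\Emib$ depend on disjoint blocks of the remaining randomness and are independent. This yields $\Pr{\Emib \mid \Emh, V_\Lambda} = \Pr{\Emib \mid V_\Lambda}$, and averaging over the distribution of $V_\Lambda$ conditional on $\Emh$ gives $\Pr{A_2} = \Pr{\Emib \mid \Emh} \geq \min_{V_\Lambda} \Pr{\Emib \mid V_\Lambda}$, where the minimum is taken over realizations of $V_\Lambda$ consistent with $\Emh$.

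Second, for each such realization of $V_\Lambda$ (and hence each distribution over $U$), the plan is to invoke the induction hypothesis on the hypergraph $G[U]$ with heaviness threshold $\Lambda q$. The depth $\DDL{\Lambda q} = \DDF$ is strictly smaller than $\DD$, so the induction is legitimate, and for each realization of $U$ it gives $\Pr{\Emib \mid U} \geq 1 - 4k\DDF/(q\K)$ -- this is exactly the definition of $\Emib$ instantiated on the subgraph $G[U]$. Averaging over $U$ conditional on $V_\Lambda$ preserves this bound, and combining with the previous step yields $\Pr{A_2} \geq 1 - 4k\DDF/(q\K) \geq 1 - (4k\DDF+1)/(q\K)$, as required.

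The main obstacle is purely conceptual rather than computational: one must verify that $U$ is generated using only the value of $V_\Lambda$ together with an independent sampling stream, and that the recursive call operates on fresh randomness disjoint from $R_1, R_2, R_3$. Both facts are immediate from the pseudocode in \Cref{alg:template recB}. Note that the ``$+1$'' slack in the stated bound is not actually needed for the argument itself; it is absorbed purely to keep the algebra in part (3) of \Cref{claim:order} clean.
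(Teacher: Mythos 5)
Your proof is correct, but it takes a genuinely different route from the paper's. The paper's argument is a two-line union bound: it writes $\Pr{\Emib \mid \Emh} = \Pr{\Emib \cap \Emh}/\Pr{\Emh} \geq \Pr{\Emib \cap \Emh} = 1 - \Pr{\BEmib \cup \BEmh} \geq 1 - \Pr{\BEmib} - \Pr{\BEmh}$, then plugs in $\Pr{\Emib} \geq 1 - 4k\DDF/(q\K)$ from the induction hypothesis and $\Pr{\Emh} \geq 1 - 2/n^4$, and absorbs the $2/n^4$ term into the ``$+1$'' using $\K = o(n)$. Your argument instead establishes conditional independence of $\Emib$ and $\Emh$ given $V_\Lambda$, which lets the conditioning on $\Emh$ vanish entirely and yields the cleaner bound $1 - 4k\DDF/(q\K)$ with no slack term. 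Both are valid. What each buys: the paper's union bound is shorter, and crucially does not require any claim about how the algorithm's randomness factorizes, so it remains correct under any reasonable implementation; your route requires verifying from the pseudocode that $\FindHalg$, $\Cialg$, the sampling of $U$, and the recursive call draw from disjoint randomness streams, which is true here but is an extra structural hypothesis. In exchange, your bound is genuinely tighter by $1/(q\K)$. One small misstatement at the end: you say the ``$+1$'' in the claim is there ``purely to keep the algebra in part (3) clean.'' That is not quite right for the paper's proof: there, the ``$+1$'' is doing real work, namely absorbing the $2/n^4$ contribution from $\Pr{\BEmh}$ that the union bound incurs. It is only under \emph{your} conditional-independence argument that the ``$+1$'' becomes dispensable.
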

\begin{proof}
    \begin{align*}
        \Pr{A_2}=\Pr{\Emib \mid \Emh}
        =    & \Pr{\Emib \cap \Emh}/\Pr{\Emh}           \\
        \geq & \Pr{\Emib \cap \Emh}                     \\
        =    & 1-\Pr{\BEmib \cup \BEmh}                 \\
        \geq & 1-(\Pr{\BEmib} + \Pr{\BEmh})             \\
        \geq & 1- (\frac{4k\DDF}{\K q} + \frac{2}{n^4})
        \geq \zzb\;.
    \end{align*}
    The penultimate inequality follows as $\Pr{\Emib}\geq 1- \frac{4k\DDF}{\K q}$ by the induction hypotheses, and $\Pr{\Emh}\geq \fpr{2}{4}$.
    The last inequality follows from the fact that $\K =o(n)$.
\end{proof}
\newcommand{\tauH}{\tau_{H}}
\newcommand{\tHp}{t_{H}}
\newcommand{\tFp}{t_{F}}
\begin{claim}\label{cor42:e3}
    $\Pr{A_1}\geq 1-\frac{k\cdot}{2\K q}$.
\end{claim}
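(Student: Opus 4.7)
The plan is to prove the concentration bound on $m_U/q$ around $m_{V'}$ by computing the variance of $m_U$ under the sampling of $U$ and applying Chebyshev's inequality, conditioning throughout on $\Emh$.

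First, I would condition on $\Emh$ and exploit the key structural consequence: since $\Emh$ asserts that $V_\Lambda$ contains every $\Lambda$-heavy vertex, every vertex $v \in V' = V \setminus V_\Lambda$ must satisfy $d(v) < \Lambda$ in $G[V]$, and hence also in $G[V']$. This bound on the maximum degree in $G[V']$ is what makes the variance calculation work. Next, I would write $m_U = \sum_{e \in E_{V'}} X_e$ where $X_e$ is the indicator that $e \subseteq U$. Since each vertex of $V'$ is included in $U$ independently with probability $p$ and each hyperedge has $k$ vertices, $\Exp{X_e} = p^k = q$, giving $\Exp{m_U} = q \cdot m_{V'}$.

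The heart of the proof is bounding $\Var{m_U}$. For two distinct hyperedges $e, e' \in E_{V'}$ sharing $j \geq 1$ vertices, $\Exp{X_e X_{e'}} = p^{2k-j}$, so $\Cov(X_e, X_{e'}) \leq p^{2k-j} \leq p^k = q$ (since $j \leq k$). Disjoint pairs contribute zero covariance. Thus
\[
\Var{m_U} \leq m_{V'}\cdot q + q \cdot \#\{(e,e') : e \neq e',\ e \cap e' \neq \emptyset\}.
\]
The number of ordered intersecting pairs is at most $\sum_{v \in V'} d(v)\bigl(d(v)-1\bigr) \leq \Lambda \cdot \sum_v d(v) = k\Lambda m_{V'}$, where I use $d(v) \leq \Lambda$ under $\Emh$ together with the handshaking identity for $k$-uniform hyperedges. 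Combining, $\Var{m_U} \leq 2 q k \Lambda m_{V'}$.

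Finally, I would apply Chebyshev's inequality with deviation $t = q\bigl(\eps' m_{V'} + \K \Lambda/\eps'\bigr)$. By AM-GM, $t^2 \geq 4 q^2 \K \Lambda m_{V'}$, so
\[
\Pr{|m_U - q m_{V'}| \geq t \mid \Emh} \leq \frac{2 q k \Lambda m_{V'}}{4 q^2 \K \Lambda m_{V'}} = \frac{k}{2 q \K}.
\]
Dividing the complement event by $q$ gives exactly $\Ems$. I expect the main obstacle to be the careful variance bookkeeping -- in particular, the observation that $\Cov(X_e, X_{e'}) \leq q$ uniformly in the intersection size (which is why the looser bound counting merely the \emph{existence} of intersection, rather than weighted by $j$, suffices), and the subsequent AM-GM step that converts the two-term tolerance in $\Ems$ into the tight $k/(2q\K)$ bound rather than a factor of $2$ worse. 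Everything else is a direct conditional-probability calculation and does not require additional structural input beyond the degree bound supplied by $\Emh$.
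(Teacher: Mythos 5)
Your proof is correct and follows essentially the same route as the paper: condition on $\Emh$ to obtain $\Delta_{V'}\leq\Lambda$, bound $\Var{m_U}$ via the number of intersecting hyperedge pairs and the handshaking identity, and apply Chebyshev with deviation $\sigma=q(\eps' m_{V'}+\K\Lambda/\eps')$ followed by AM-GM. The only (harmless) difference is constant-factor bookkeeping: the paper's separate variance claim gives $\Var{m_U}\leq qk\Lambda m_{V'}$ via an injective encoding of intersecting pairs and pairs it with the weaker inequality $\sigma^2\geq 2ab$, while you derive $\Var{m_U}\leq 2qk\Lambda m_{V'}$ inline (splitting off the diagonal term) and compensate with the full $\sigma^2\geq 4ab$, so the constants cancel to the same $k/(2q\K)$.
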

Recall that $A_1=\set{\Ems\mid \Emh}$, which is the event that the random variable $m_U$ is not too far from its expectation, assuming that the event $\Emh$ occurs - which is the event that $V'$ does not contain any $\Lambda$-heavy vertices.

To prove \Cref{cor42:e3} we will use also the following claim.

\begin{claim}\label{claim:variance}
    Let $G$ be a $k$-partite hypergraph with vertex sets $V_1\sqcup V_2\sqcup \ldots\sqcup V_k$, and a set of hyperedges $ E$, where every hyperedge $e\in E$ intersects every vertex set $V_i$ in exactly one vertex.
    We denote the number of vertices by $n$ and the number of hyperedges by $m$.
    Let $P=\set{p_1,p_2,\ldots,p_k}$ be some sampling vector with $p_i\in[0,1]$ for every $i\in[k]$.
    Let $q=\prod_{i=1}^k p_i$.
    Let $U$ be a random subset of vertices, where each vertex $v\in V_i$ is sampled independently with probability $p_i$.
    Let $X$ denote the number of hyperedges in $G[U]$.
    Then, $\Exp{X}=m\cdot q$ and $\Var{X}\leq k\cdot \Delta\cdot \Exp{X}$, where $\Delta$ is maximum degree of a vertex in $G$.
\end{claim}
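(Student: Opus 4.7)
The expectation is immediate: write $X = \sum_{e \in E} \mathbbm{1}\{e \subseteq U\}$, and observe that a hyperedge $e$ contains exactly one vertex from each $V_i$, so by independence of the sampling across vertices, $\Pr{e \subseteq U} = \prod_{i=1}^k p_i = q$, and $\Exp{X} = mq$ follows by linearity of expectation.

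For the variance, I would expand $\Var{X} = \sum_{e,e'\in E} \mathrm{Cov}(X_e, X_{e'})$ with $X_e := \mathbbm{1}\{e \subseteq U\}$, and split the double sum by the set $I(e,e') := \{i \in [k] : V_i \cap e \cap e' \neq \emptyset\}$ of shared colors. When $I(e,e') = \emptyset$, the variables $X_e$ and $X_{e'}$ depend on disjoint sets of vertices and are independent, so their covariance vanishes. When $I(e,e') \neq \emptyset$ (which includes the diagonal $e=e'$, where $I=[k]$), I would observe that in the multiset $e \cup e'$ each shared color contributes one vertex and each unshared color contributes two distinct vertices, so that
\[
\Pr{X_e \cdot X_{e'} = 1} \;=\; \prod_{i \in I(e,e')} p_i \cdot \prod_{i \notin I(e,e')} p_i^2 \;=\; q \cdot \prod_{i \notin I(e,e')} p_i \;\leq\; q,
\]
since every $p_i \in [0,1]$. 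Consequently, $\mathrm{Cov}(X_e, X_{e'}) \leq \Pr{X_e \cdot X_{e'} = 1} \leq q$ uniformly across all pairs with $e \cap e' \neq \emptyset$, regardless of how many vertices they share.

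It remains to count these pairs. For a fixed hyperedge $e$, every $e'$ with $e \cap e' \neq \emptyset$ lies in $\bigcup_{v \in e}\{e' \in E : v \in e'\}$, a union of $|e|=k$ sets each of size at most $\Delta$ (by the maximum-degree hypothesis). A union bound therefore gives at most $k\Delta$ such hyperedges, including $e$ itself. Summing,
\[
\Var{X} \;\leq\; \sum_{e \in E}\ \sum_{e' \,:\, e \cap e' \neq \emptyset} q \;\leq\; m \cdot k\Delta \cdot q \;=\; k\Delta\cdot \Exp{X},
\]
which is the claimed inequality. There is no real obstacle in this argument; the only slightly subtle step is the uniform bound $\mathrm{Cov}(X_e, X_{e'}) \leq q$ across all intersection patterns, which is what allows a single union bound over the $k$ vertices of each fixed $e$ to absorb the contributions from pairs sharing many vertices (which individually have larger covariance but whose factor $\prod_{i \notin I(e,e')} p_i$ is correspondingly smaller).
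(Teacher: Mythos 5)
Your proof is correct and takes essentially the same approach as the paper: both split the pairs of hyperedges into disjoint and intersecting ones, bound the contribution of each intersecting pair by $q$, and count the number of intersecting ordered pairs by $km\Delta$. The only cosmetic difference is that you phrase the argument via covariances (so the disjoint pairs drop out identically) while the paper bounds $\Exp{X^2}$ and subtracts $\Exp{X}^2$; the split and the counting argument are the same.
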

\begin{proof}
    Computing the expectation of $X$ is immediate; $\Exp{X}=\sum_{e\in E}\Pr{e\in G[U]}=\sum_{e\in E}\prod_{i=1}^k p_i=m\cdot q$.
    Next, we compute $\Exp{X^2}$.
    \newcommand{\CM}{\mathcal{M}}
    Let $\CM$ denote the set of ordered pairs of hyperedges.
    We partition $\CM$ into two sets, $\CM_1$ and $\CM_2$, where $\CM_1$ contains the ordered pairs of hyperedges that are disjoint in vertices, and $\CM_2$ contains the ordered pairs of hyperedges that share at least one vertex.
    We have
    \begin{align*}
        \Exp{X^2}
        = \sum_{(e_1,e_2)\in\CM}\Pr{e_1, e_2\in G[U]}
    \end{align*}
    We partition this sum into two parts, one for the ordered pairs in $\CM_1$ and one for the ordered pairs in $\CM_2$.
    We have
    \begin{align*}
        \sum_{(e_1,e_2)\in\CM_1}\Pr{e_1, e_2\in G[U]}
        =q^2\cdot \abs{\CM_1}\leq q^2\cdot m^2=\Exp{X}^2\;.
    \end{align*}
    We are left with bounding the sum over the ordered pairs in $\CM_2$.
    We have
    \begin{align*}
        \sum_{(e_1,e_2)\in\CM_2}\Pr{e_1, e_2\in G[U]}
        \leq q\abs{\CM_2}\;.
    \end{align*}
Next, we demonstrate that the cardinality of $\CM_2$ is bounded by $k \cdot m \cdot \Delta$. We represent each pair of intersecting hyperedges from $\CM_2$ as a tuple within the space $[k] \times [m] \times [\Delta]$. To identify the first hyperedge in the pair, we use $m$ possibilities. Once the first hyperedge is established, we have $k$ choices to determine the first (in lexicographic order) vertex $u$ common to both hyperedges. Given vertex $u$, we then have $\deg(u)$ choices to identify the second hyperedge that includes $u$, where $\deg(u) \leq \Delta$. This completes the proof.
\end{proof}

We are now ready to prove \Cref{cor42:e3}.

\begin{proof}[Proof of \Cref{cor42:e3}]
    Recall that $\Ems\triangleq\set{m_U/q=m_{H}\apmp\pm \K \cdot \Lambda/\eps'}$. Assuming that $\Emh$ occurs, we have that $\Delta_{V'}$ which is
    the maximum degree over $V'$, is at most $\Lambda$.
    Let $\sigma=q(m_{V'}\eps + \K \Lambda/\eps)$.
    We can write $\set{A_1}$ as $\set{\abs{m_U/q-m_{V'}}\leq \sigma\mid \Emh}$:
    \begin{align*}
        \set{A_1}
         & =\set{m_U/q=m_{V'}\apmp\pm \K \cdot \Lambda/\eps'\mid \Emh}                  \\
         & =\set{\abs{m_U/q-m_{V'}}\leq m_{V'}\eps' \pm \K\cdot \Lambda/\eps'\mid \Emh} \\
         & =\set{\abs{m_U/q-m_{V'}}\leq \sigma\mid \Emh}\;.
    \end{align*}
    We apply Chebyshev's inequality on $m_U$ and get that
    \begin{align*}
        \Pr{\abs{m_U/q-m_{V'}} \leq \sigma}
        \leq \frac{\Var{m_U}}{\sigma^2}
        \leq \frac{k\cdot \Delta_{V'}\cdot m_{V'}\cdot q}{2q^{2}\cdot m_{V'}\cdot \K\cdot \Lambda}
        \leq \frac{k}{2q\K}\;.
\end{align*}
    The first inequality follows from Chebyshev's inequality, and the second inequality follows from the fact that $\Var{m_U}\leq k\cdot \Delta_{V'}\cdot m_{V'}$, as shown in \Cref{claim:variance}.
    The next inequality follows from the fact that $\Delta_{V'}\leq \Lambda$, and reordering the terms. This completes the proof.
\end{proof}

 \section{The Median Trick}
\label{app:median}
Here we prove a claim on the median of $r$ samples of a random variable $Y$,
which we use in the proof of \Cref{lemma:alg trunk}.
\begin{claim}\label{claim:med trick0}
    Let $Y$ be a random variable, and let $[a,b]$ be some interval, where $\Pr{Y\in [a,b]}\geq 2/3$.
    Define $Z$ to be the median of $r=\rr$ independent samples of $Y$. Then, $\Pr{Z\in [a,b]}\geq 1-\frac{1}{n^5}$.
\end{claim}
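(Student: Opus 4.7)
The plan is to reduce the claim to a standard Chernoff bound on a sum of independent Bernoulli random variables. Let $Y_1,\ldots,Y_r$ denote the $r$ independent samples of $Y$, and for each $i\in[r]$ define the indicator $X_i=\mathbf{1}[Y_i\in[a,b]]$. By assumption $X_i$ are i.i.d.\ Bernoulli variables with success probability $p\geq 2/3$, so $S\triangleq\sum_{i=1}^r X_i$ satisfies $\Exp{S}\geq 2r/3$.

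The key structural observation is that if $Z\notin[a,b]$ then at most $r/2$ of the samples can lie in $[a,b]$. Indeed, if $Z<a$, then at least $\lceil r/2\rceil$ of the samples are $\leq Z$ and hence strictly less than $a$, so at most $\lfloor r/2\rfloor$ samples can be in $[a,b]$; the argument for $Z>b$ is symmetric. Hence $\Pr{Z\notin[a,b]}\leq \Pr{S\leq r/2}$.

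Since $r/2\leq (3/4)\Exp{S}$, I apply the multiplicative Chernoff bound with deviation parameter $1/4$, yielding
\[
\Pr{S\leq r/2}\;\leq\;\Pr{S\leq (1-1/4)\Exp{S}}\;\leq\;\exp\!\brak{-\Exp{S}/32}\;\leq\;\exp\!\brak{-r/48}.
\]
Plugging in $r=\rr$ and converting bases gives $\exp(-400\log n/48)\leq 1/n^5$ for all sufficiently large $n$, so $\Pr{Z\in[a,b]}\geq 1-1/n^5$ as desired.

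The argument is completely routine, so there is no real obstacle; the only minor care needed is in the deterministic step that relates the median lying in $[a,b]$ to a majority of the samples lying in $[a,b]$, and in verifying that the constant $400$ in the definition of $r$ is large enough to drive the Chernoff tail below $1/n^5$ (with room to spare, as the computation above shows the bound is actually closer to $n^{-12}$).
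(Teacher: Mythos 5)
Your proposal is correct and takes essentially the same route as the paper: define indicators $X_i=\mathbf{1}[Y_i\in[a,b]]$, note that the median lies in $[a,b]$ whenever a strict majority of samples do, and apply a multiplicative Chernoff bound to $\sum_i X_i$. The only (minor) difference is that you spell out the deterministic median-to-majority step, which the paper simply asserts; the Chernoff computation and the final $\exp(-r/48)\leq n^{-5}$ estimate coincide.
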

\begin{proof}Let $Y_i$ be the $i$-th sample. We define an indicator random variable $X_i$ to be $1$ if $Y_i\in[a,b]$ for $i\in [r]$.
    Let $X=\sum_{i\in[r]} X_i$.
    We claim that if more than half of the samples fall inside the interval $[a,b]$ then so is the median of all samples. In other words, the rest of the samples do not affect the median.
    Therefore, $\Pr{Z\in[a,b]}\geq \Pr{X>\frac{r}{2}}$.
    We lower bound the probability that $\Pr{X>\frac{r}{2}}$.
    Note that $X$ is the sum of $r$ independent $p$-Bernoulli random variables with $p\geq 2/3$. Then,
    \begin{align*}
        \Pr{X\leq\frac{r}{2}}
        \leq \Pr{\Bin{r}{\frac{2}{3}}\leq \frac{r}{2}}
        = \Pr{\Bin{r}{\frac{2}{3}}\leq \frac{2r}{3}\cdot(1-\frac{1}{4})}\;,
\end{align*}
    and by the following version of Chernoff's inequality:
    $\Pr{X\leq (1-\delta)\Exp{X}}\leq \exp\brak{-\frac{\delta^2\cdot \Exp{X}}{2}}$, we have
    \begin{align*}
        \Pr{\Bin{r}{\frac{2}{3}}\leq \frac{2r}{3}\cdot(1-\frac{1}{4})}
        \leq \exp\brak{-\frac{r}{48}}
        \leq \exp\brak{-8\log n}
        \leq n^{-5}\;.
    \end{align*}
\end{proof}

\section{Some Matrix Multiplication Tools}\label{sec:appendix}
Recall that we write $\MM{n^a,n^b,n^c}$ or $n^{\omega(a,b,c)}$ for the running time of the matrix multiplication algorithm that multiplies $n^a\times n^b$ by $n^b\times n^c$ matrices.
\begin{claim}\label{claim:a2}
    For any $p_1,p_2,p_3\in[0,1]$ we have
    $\MM{n p_1,n p_2,n p_3}\leq \MM{n,n ,n \cdot(p_1\cdot p_2\cdot p_3)}$.
\end{claim}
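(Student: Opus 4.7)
Translating into exponents via $a_i \triangleq \log_n(n p_i) = 1 + \log_n p_i \in (-\infty, 1]$ and setting $s \triangleq a_1 + a_2 + a_3$, the claim becomes
\begin{equation*}
\omega(a_1, a_2, a_3) \leq \omega(1,\, 1,\, s - 2).
\end{equation*}
By the symmetry of $\omega$ in its three arguments we may reorder so that $a_1 \leq a_2 \leq a_3 \leq 1$. The approach is to deform the triple $(a_1, a_2, a_3)$ into $(1, 1, s-2)$ by two applications of \Cref{thm:MM balance is slower}, each one pushing a single coordinate up to $1$ at the cost of pulling another coordinate down by the same amount.

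In the main case $s \geq 2$, the first move applies \Cref{thm:MM balance is slower} with $(x, y, z) = (1,\, a_1 + a_2 - 1,\, a_3)$ and $\eps = 1 - a_2$. Since $a_3 \leq 1$ we have $a_1 + a_2 \geq s - 1 \geq 1$, so $y \geq 0$; the conditions $x > y$ and $\eps \in [0, x-y]$ reduce to $a_1 \leq 1$, which holds. A direct check gives $(x - \eps, y + \eps, z) = (a_2, a_1, a_3)$, whose $\omega$ value coincides with $\omega(a_1, a_2, a_3)$ by symmetry. Hence \Cref{thm:MM balance is slower} yields
\begin{equation*}
\omega(a_1, a_2, a_3) \leq \omega(1,\, a_1 + a_2 - 1,\, a_3).
\end{equation*}
The second move applies \Cref{thm:MM balance is slower} with $(x, y, z) = (1,\, s - 2,\, 1)$ and $\eps = 1 - a_3$. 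Here $s \geq 2$ ensures $y \geq 0$, while $a_1, a_2 \leq 1$ yields $\eps \leq 3 - s = x - y$, and $(x - \eps, y + \eps, z) = (a_3,\, a_1 + a_2 - 1,\, 1)$, whose $\omega$ value equals $\omega(1,\, a_1 + a_2 - 1,\, a_3)$ by symmetry. Combining the two steps gives $\omega(a_1, a_2, a_3) \leq \omega(1, 1, s - 2)$, as desired.

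For the degenerate case $s < 2$, the would-be third exponent $s - 2$ is negative, corresponding to a dimension $n p_1 p_2 p_3 < 1$. Here the right hand side is naturally read as $\MM{n, n, \max(1, np_1 p_2 p_3)} = \Theta(n^2)$, i.e., as $\omega(1, 1, 0) = 2$. On the left, the naive upper bound $\omega(a, b, c) \leq a + b + c$ gives $\omega(a_1, a_2, a_3) \leq s < 2$, so the inequality still holds.

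The one subtle point, and therefore the main obstacle, is the degenerate regime; it is a purely notational issue about how to interpret $\MM{n, n, c}$ when $c < 1$, and once the convention above is fixed the proof is essentially two invocations of \Cref{thm:MM balance is slower}.
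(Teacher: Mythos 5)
Your proof is correct and follows essentially the same route as the paper's: two applications of \Cref{thm:MM balance is slower}, each trading mass between coordinates to push one coordinate up to $1$. The main difference is that you set the parameters $(x,y,\eps)$ so the hypotheses of \Cref{thm:MM balance is slower} ($y \geq 0$, $x > y$, $\eps \in [0,x-y]$) are actually satisfied, and you explicitly flag the degenerate regime $s = a_1 + a_2 + a_3 < 2$ (equivalently $n p_1 p_2 p_3 < 1$), where the right-hand dimension drops below $1$ and the claim holds for the trivial reason that the naive bound already gives $\omega(a_1,a_2,a_3) \leq s < 2 = \omega(1,1,0)$. The paper's own write-up glosses over this edge case and has its signs reversed (it writes $\eps = \ell_i \leq 0$ and $x = 1+\ell_1+\ell_3 \leq 1 = y$, which do not meet the claim's hypotheses as stated, though the intended swap is clear); your version is the correct bookkeeping of the same idea.
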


\begin{proof}[Proof of \Cref{claim:a2}]
    Let $p_i=n^{\ell_i}$ for $i=1,2,3$.
    We want to show that $\omega(1+\ell_1,1+\ell_2,1+\ell_3)\leq \omega(1,1,1+\ell_1+\ell_2+\ell_3)$. This follows from two applications of \Cref{thm:MM balance is slower} which states that for every $y,x,z\geq 0$, $x>y$ and every $\eps \in [0,(x-y)]$, $\omega(x-\eps,y+\eps,z)\leq \omega(x,y,z)$.

    First, we apply \Cref{thm:MM balance is slower} with $\eps=\ell_1$, $x=1+\ell_1+\ell_3, y=1,z=1+\ell_2$:
    \[\omega(1+\ell_1,1+\ell_2,1+\ell_3)=\omega(1+\ell_3, 1+\ell_1, 1+\ell_2,)\leq \omega(1+\ell_1+\ell_3,1,1+\ell_2)=\omega(1+\ell_1+\ell_3,1+\ell_2,1).\]

    We then apply  \Cref{thm:MM balance is slower} with $\eps=\ell_2$, $x=1+\ell_1+\ell_2+\ell_3, y=1,z=1$:
    \[\omega(1+\ell_1+\ell_3,1+\ell_2,1)\leq \omega(1+\ell_1+\ell_2+\ell_3,1,1)=\omega(1,1,1+\ell_1+\ell_2+\ell_3),\]
    and putting it all together we get $\omega(1+\ell_1,1+\ell_2,1+\ell_3)\leq \omega(1,1,1+\ell_1+\ell_2+\ell_3)$.

\end{proof}
\end{document}